\crefname{theorem}{Thm.}{Thms.}
\crefname{lemma}{Lem.}{Lemmas}
\crefname{corollary}{Cor.}{Cors.}
\crefname{figure}{Fig.}{Figs.}
\crefname{definition}{Defn.}{Defns.}
\crefname{table}{Tab.}{Tabs.}
\crefname{example}{Ex.}{Exs.}
\crefname{item}{item}{items}
\crefname{footnote}{footnote}{footnotes}
\crefname{observation}{Obs.}{Obs.}
\crefname{remark}{Remark}{Remarks}
\crefname{proposition}{Prop.}{Props.}
\crefname{fact}{Fact}{Facts}
\crefname{challenge}{Challenge}{Challenges}
\crefname{principle}{Principle}{Principles}
\crefname{counterexample}{Ex.}{Exs.}
\def\cb@checkPdfxy#1#2#3#4#5{%
\cb@@findpdfpoint{#1}{#2}%
\ifdim#3sp=\cb@pdfx\relax      
\ifdim#4sp=\cb@pdfy\relax      
\ifdim#5=\cb@pdfz\relax
\else
\cb@error
\fi
\else
\cb@error
\fi
\else
\cb@error
\fi
}}
\newcommand{\revision}[1]{}
\newif\iflong
\renewcommand{\mathbb}[1]{\vvmathbb{#1}}
\newtheoremstyle{acmremark}%
  {.5\baselineskip\@plus.2\baselineskip\@minus.2\baselineskip}
  {.5\baselineskip\@plus.2\baselineskip\@minus.2\baselineskip}
  {\itshape}
  {\parindent}
  {\itshape}
  {.}
  {.5em}
  {\thmname{#1}\thmnumber{ #2}\thmnote{ {(#3)}}}
  \theoremstyle{acmplain}%
  \newtheorem*{theorem*}{Theorem}%
  \newtheorem*{lemma*}{Lemma}%
  \newtheorem*{proposition*}{Proposition}%
  \theoremstyle{acmdefinition}%
  \newtheorem{counterexample}[theorem]{Counterexample}%
  \theoremstyle{acmremark}%
  \newtheorem{remark}[theorem]{Remark}%
  \theoremstyle{acmplain}%
\setlist{nosep,leftmargin=\parindent}
\newsavebox{\fminipagebox}
\NewDocumentEnvironment{fminipage}{m O{\fboxsep}}
 {\par\kern#2\noindent\begin{lrbox}{\fminipagebox}
  \begin{minipage}{#1}\ignorespaces}
 {\end{minipage}\end{lrbox}%
  \makebox[#1]{%
    \kern\dimexpr-\fboxsep-\fboxrule\relax
    \fbox{\usebox{\fminipagebox}}%
    \kern\dimexpr-\fboxsep-\fboxrule\relax
  }\par\kern#2
 }
\newcolumntype{H}{>{\setbox0=\hbox\bgroup}c<{\egroup}@{}}
\newcommand{\m}[1]{\mathsf{#1}}
\newcommand{\maxp}[1]{\mathopen{|[} #1 \mathclose{]|}}
\newcommand{\aone}{\underline{1}}
\newcommand{\azero}{\underline{0}}
\newcommand{\aord}{\sqsubseteq}
\newcommand{\prob}{\ensuremath{\bbP}}
\newcommand{\expe}{\ensuremath{\bbE}}
\newcommand{\vari}{\ensuremath{\bbV}}
\newcommand{\lang}{\textsc{Appl}}
\newcommand{\reason}[1]{\dagger~\text{#1}~\dagger}
\newcommand{\many}[1]{\overrightarrow{#1}}
\newcommand{\bind}{\mathbin{\gg\!=}}
\newcommand{\iskip}{\kw{skip}}
\newcommand{\itick}[1]{\kw{tick}(#1)}
\newcommand{\iassign}[2]{#1 \coloneqq #2}
\newcommand{\isample}[2]{#1 \sim #2}
\newcommand{\iprob}[3]{\kw{if}~\kw{prob}(#1)~\kw{then}~#2~\kw{else}~#3~\kw{fi}}
\newcommand{\icond}[3]{\kw{if}~#1~\kw{then}~#2~\kw{else}~#3~\kw{fi}}
\newcommand{\iloop}[2]{\kw{while}~#1~\kw{do}~#2~\kw{od}}
\newcommand{\iinvoke}[1]{\kw{call}~#1}
\newcommand{\ctop}{\kw{true}}
\newcommand{\cneg}[1]{\kw{not}~#1}
\newcommand{\cconj}[2]{#1~\kw{and}~#2}
\newcommand{\cbin}[3]{#2 \mathrel{#1} #3}
\newcommand{\cle}[2]{\cbin{\le}{#1}{#2}}
\newcommand{\eadd}[2]{#1 + #2}
\newcommand{\emul}[2]{#1 \times #2}
\newcommand{\kstop}{\kw{Kstop}}
\newcommand{\kseq}[2]{\kw{Kseq}~#1~#2}
\newcommand{\kloop}[3]{\kw{Kloop}~#1~#2~#3}
\newcommand{\twr}[1]{{\color{ACMGreen}\emph{Tom: #1}}}
\begin{document}

\title{Central Moment Analysis for Cost Accumulators in Probabilistic Programs}
\iflong
\subtitle{Technical Report}
\fi


\author{Di Wang}
\affiliation{
  \institution{Carnegie Mellon University}
  \country{USA}
}

\author{Jan Hoffmann}
\affiliation{
  \institution{Carnegie Mellon University}
  \country{USA}
}

\author{Thomas Reps}
\affiliation{
  \institution{University of Wisconsin}
  \country{USA}
}


\begin{abstract}
For probabilistic programs, it is usually not possible to automatically
derive exact information about their properties, such as the distribution
of states at a given program point.
Instead, one can attempt to derive approximations, such as upper bounds
on \emph{tail probabilities}.
Such bounds can be obtained via concentration inequalities, which rely
on the \emph{moments} of a distribution, such as
the expectation (the first \emph{raw} moment) or
the variance (the second \emph{central} moment).
%
Tail bounds obtained using central moments are often
tighter than the ones obtained using raw moments, but
automatically analyzing central moments is more challenging.
%
%

This paper presents an analysis for probabilistic programs that
automatically derives symbolic upper and lower bounds on
variances, as well as higher central moments, of \emph{cost accumulators}.
To overcome the challenges of higher-moment analysis, it generalizes
analyses for expectations with an algebraic abstraction that
simultaneously analyzes different moments, utilizing relations between
them.
%
%
A key innovation is the notion of \emph{moment-polymorphic recursion}, and a
practical derivation system that handles recursive functions.
%

%
%

The analysis has been implemented using a template-based technique
that reduces the inference of polynomial bounds to linear programming.
Experiments with our prototype central-moment analyzer show that,
despite the analyzer's upper/lower bounds on various quantities,
it obtains tighter tail bounds than an existing system that uses only raw
moments, such as expectations.
\end{abstract}


\begin{CCSXML}
<ccs2012>
<concept>
<concept_id>10003752.10003753.10003757</concept_id>
<concept_desc>Theory of computation~Probabilistic computation</concept_desc>
<concept_significance>500</concept_significance>
</concept>
<concept>
<concept_id>10003752.10010124.10010138.10010143</concept_id>
<concept_desc>Theory of computation~Program analysis</concept_desc>
<concept_significance>500</concept_significance>
</concept>
<concept>
<concept_id>10003752.10003790.10003794</concept_id>
<concept_desc>Theory of computation~Automated reasoning</concept_desc>
<concept_significance>500</concept_significance>
</concept>
</ccs2012>
\end{CCSXML}

\ccsdesc[500]{Theory of computation~Probabilistic computation}
\ccsdesc[500]{Theory of computation~Program analysis}
\ccsdesc[500]{Theory of computation~Automated reasoning}

\keywords{Probabilistic programs, central moments, cost analysis, tail bounds}  

\maketitle

\tikzexternaldisable

%
%
%

\section{Introduction}
\label{Se:Intro}

Probabilistic programs~\cite{JCSS:Kozen81,book:MM05,FOSE:GHN14} can be used to manipulate \emph{uncertain} quantities modeled by probability distributions and random control flows.
Uncertainty arises naturally in Bayesian networks, which capture statistical dependencies (e.g., for diagnosis of diseases~\cite{JAMIA:JC10}), and cyber-physical systems, which are subject to sensor errors and peripheral disturbances (e.g., airborne collision-avoidance systems~\cite{DASC:MJ16}).
Researchers have used probabilistic programs to implement and analyze
randomized algorithms~\cite{ESOP:BEG18}, cryptographic protocols~\cite{POPL:BGZ09}, and machine-learning algorithms~\cite{NATURE:Ghahramani15}.
A probabilistic program propagates uncertainty through a computation, and produces a distribution over results.

%
In general, it is not tractable to compute the result distributions of probabilistic programs automatically and precisely: Composing simple distributions can quickly complicate the result distribution, and randomness in the control flow can easily lead to state-space explosion.
Monte-Carlo simulation~\cite{book:MonteCarlo16} is a common approach to study the result distributions, but the technique does not provide formal guarantees, and can sometimes be inefficient~\cite{ESOP:BKK18}.

In this paper, we focus on a specific yet important kind of uncertain quantity: \emph{cost accumulators}, which are program variables that can only be incremented or decremented through the program execution.
Examples of cost accumulators include
termination time~\cite{ESOP:KKM16,CAV:CFG16,POPL:CFN16,ESOP:BKK18,LICS:OKK16},
rewards in Markov decision processes (MDPs)~\cite{book:Puterman94},
position information in control systems~\cite{TACAS:BGP16,ATVA:BKS19,PLDI:SCG13},
and cash flow during bitcoin mining~\cite{PLDI:WFG19}.
Recent work~\cite{PLDI:NCH18,TACAS:KUH19,TACAS:BGP16,PLDI:WFG19} has proposed successful static-analysis approaches that leverage
\emph{aggregate} information of a cost accumulator $X$, such as $X$'s \emph{expected} value $\expe[X]$ (i.e., $X$'s ``first moment'').
The intuition why it is beneficial to compute aggregate information---in lieu of distributions---is that aggregate measures like expectations 
\emph{abstract} distributions to a single number, while still indicating non-trivial properties.
Moreover, expectations are transformed by statements in a probabilistic program in a manner similar to the weakest-precondition transformation of formulas in a non-probabilistic program~\cite{book:MM05}.

One important kind of aggregate information is \emph{moments}.
In this paper, we show how to obtain \emph{central} moments (i.e., $\expe[(X-\expe[X])^k]$ for any $k \ge 2$),
whereas most previous work focused on \emph{raw} moments (i.e., $\expe[X^k]$ for any $k \ge 1$).
Central moments can provide more information about distributions.
For example, the \emph{variance} $\vari[X]$ (i.e., $\expe[(X-\expe[X])^2]$, $X$'s ``second central moment'') indicates how $X$ can deviate from its mean,
the \emph{skewness} (i.e., $\frac{ \expe[(X-\expe[X])^3] }{ (\vari[X])^{\sfrac{3}{2}} }$, $X$'s ``third standardized moment'') indicates how lopsided the distribution of $X$ is,
and the \emph{kurtosis} (i.e., $\frac{ \expe[(X - \expe[X])^4 ] }{ (\vari[X])^2 }$, $X$'s ``fourth standardized moment'') measures the heaviness of the tails of the distribution of $X$.
One application of moments is to answer queries about \emph{tail bounds}, e.g., the assertions about probabilities of the form $\prob[X \ge d]$, via \emph{concentration-of-measure} inequalities from probability theory~\cite{book:Dubhashi09}.
With central moments, we find an opportunity to obtain more precise tail bounds of the form $\prob[X \ge d]$, and become able to derive bounds on tail probabilities of the form $\prob[|X - \expe[X]| \ge d]$.

Central moments $\expe[(X-\expe[X])^k]$ can be seen as polynomials of raw moments $\expe[X],\cdots,\expe[X^k]$, e.g., the variance $\vari[X] = \expe[(X-\expe[X])^2]$ can be rewritten as $\expe[X^2]-\expe^2[X]$, where $\expe^k[X]$ denotes $(\expe[X])^k$.
To derive bounds on central moments, we need both \emph{upper} and \emph{lower} bounds on the raw moments, because of the presence of \emph{subtraction}.
For example, to upper-bound $\vari[X]$, a static analyzer needs to have an \emph{upper} bound on $\expe[X^2]$ and a \emph{lower} bound on $\expe^2[X]$.

In this work, we present and implement the first fully automatic analysis for deriving symbolic \emph{interval}
bounds on higher central moments for cost accumulators in probabilistic programs with general recursion
and continuous distributions.
One challenge is to support \emph{inter}procedural reasoning to reuse analysis results for functions.
Our solution makes use of a ``lifting'' technique from the natural-language-processing community.
That technique derives an algebra for second moments from an algebra for first moments~\cite{EMNLP:LE09}.
We generalize the technique to develop \emph{moment semirings},
and use them to derive a novel \emph{frame} rule to handle function calls with \emph{moment-polymorphic recursion} (see \cref{Se:SoundnessCriteria}).

Recent work has successfully automated inference of upper~\cite{PLDI:NCH18} or lower bounds~\cite{PLDI:WFG19} on the expected cost of probabilistic programs.
\citet{TACAS:KUH19} developed a system to derive upper bounds on higher \emph{raw} moments of program runtimes.
\revision{5}
\begin{changebar}
However, even in combination, existing approaches can\emph{not} solve tasks such as
deriving a lower bound on the second raw moment of runtimes, or
deriving an upper bound on the variance of accumulators that count live heap cells.
\end{changebar}
\cref{Fi:MotivatingExample}(a) summarizes the features of related work on moment inference for probabilistic programs.
To the best of our knowledge, our work is the first moment-analysis tool that supports all of the listed programming and analysis features.
\cref{Fi:MotivatingExample}(b) and (c) compare our work with related work in terms of tail-bound analysis
on a concrete program (see \cref{Se:TailBound}).
The bounds are derived for the cost accumulator \id{tick} in a random-walk program that we will present in \cref{Se:Overview}.
It can be observed that for $d \ge 20$, the most precise tail bound for \id{tick} is the one obtained via an upper bound on the variance $\vari[\id{tick}]$ ($\id{tick}$'s second central moment).

\begin{figure}
\centering
\begin{subfigure}{\columnwidth}
  \centering
  \begin{small}
  \begin{tabular}{c|c|c|c|c|c}
    \hline
    feature & \cite{TACAS:BGP16} & \cite{PLDI:NCH18} & \cite{TACAS:KUH19} & \cite{PLDI:WFG19} & this work \\ \hline
    loop & & \checkmark & \checkmark & \checkmark & \checkmark \\
    recursion & & \checkmark & & & \checkmark \\
    continuous distributions & \checkmark & & \checkmark & \checkmark & \checkmark \\
    non-monotone costs & \checkmark & & & \checkmark & \checkmark \\
    higher moments & \checkmark & & \checkmark & & \checkmark \\
    interval bounds & \checkmark & & & \checkmark & \checkmark \\ \hline
  \end{tabular}  
  \end{small}
  \vspace{-5pt}
  \caption*{(a)}
  \vspace{8pt}
\end{subfigure}
\begin{subfigure}{\columnwidth}
  \centering
  \begin{small}
  \begin{tabular}{@{\hspace{2pt}}c@{\hspace{2pt}}|@{\hspace{2pt}}c@{\hspace{2pt}}|@{\hspace{2pt}}c@{\hspace{2pt}}|@{\hspace{2pt}}c@{\hspace{2pt}}}
    \hline
    & \cite{PLDI:NCH18,PLDI:WFG19} & { \cite{TACAS:KUH19}} & {this work} \\ \hline
    Derived  & \multirow{2}{*}{$\expe[\id{tick}] \le 2d + 4$} & $\expe[\id{tick}^2] \le {}$ & $\vari[\id{tick}] \le {} $ \\
    bound & & $4d^2+22d+28$ & $22d+28$ \\ \hline
    Moment type & raw & raw & central \\ \hline
    Concentration & {Markov} & {Markov} & \multirow{2}{*}{Cantelli} \\
    inequality & ($\text{degree} = 1$) & ($\text{degree} = 2$) & \\ \hline
    Tail bound  & \multirow{2}{*}{ $\approx \frac{1}{2}$} & \multirow{2}{*}{$\approx \frac{1}{4} $} & \multirow{2}{*}{$\xrightarrow{d \to \infty} 0 $ } \\
    $\prob[\id{tick} \ge 4d]$ &  & & \\ \hline
  \end{tabular}
  \end{small}
  \vspace{-5pt}
  \caption*{(b)}
  \vspace{2pt}
\end{subfigure}
\begin{subfigure}{\columnwidth}
  \centering
  \small
  \input{motivation-tail-bounds}
  \vspace{-18pt}
  \caption*{(c)}
  \vspace{-2pt}
\end{subfigure}
\caption{(a) Comparison in terms of supporting features. (b) Comparison in terms of moment bounds for the running example. (c) Comparison in terms of derived tail bounds.}
\label{Fi:MotivatingExample}
\vspace{-10pt}
\end{figure}

\emph{Our work incorporates ideas known from the literature:}
\begin{itemize}[-]
  \item Using the expected-potential method (or ranking super-martingales) to derive upper bounds on the expected program runtimes or monotone costs~\cite{PLDI:NCH18,TACAS:KUH19,CAV:CFG16,CAV:CS13,POPL:CFN16,POPL:FH15}.
  \item Using the \emph{Optional Stopping Theorem} from probability theory to ensure the soundness of lower-bound inference for probabilistic programs~\cite{PLDI:WFG19,POPL:HKG20,CAV:BEF16,misc:SO19}.
  \item Using \emph{linear programming} (LP) to efficiently automate the (expected) potential method for (expected) cost analysis~\cite{ICFP:WKH20,POPL:HJ03,APLAS:HH10}.
\end{itemize}
%
\emph{The contributions of our work are as follows:}
\begin{itemize}
  \item We develop moment semirings to compose the moments for a cost accumulator from two computations, and to enable interprocedural reasoning about higher moments.
  
  \item
  
  We instantiate moment semirings with the symbolic interval domain, use that to develop a derivation system for interval bounds on higher central moments for cost accumulators, and automate the derivation via LP solving.
  %
  %
  
  \item
  We prove the soundness of our derivation system for programs that satisfy the criterion of our recent extension to the Optional Stopping Theorem,
  and develop an algorithm for checking this criterion automatically.

  

  \item
We implemented our analysis
and evaluated it
on a broad suite of benchmarks from the literature.
%
%
Our experimental results show that on a variety of examples, our analyzer is able to use higher central moments to obtain tighter tail bounds on program runtimes than the system of~\citet{TACAS:KUH19}, which uses only upper bounds on raw moments.
\end{itemize}
%



\section{Overview}
\label{Se:Overview}

In this section, we demonstrate the expected-potential method for both first-moment analysis (previous work) and higher central-moment analysis (this work) (\cref{Se:ExpectedPotential}),
and discuss the challenges to supporting interprocedural reasoning and to ensuring the soundness of our approach (\cref{Se:SoundnessCriteria}).
%

\begin{example}\label{Exa:RecursiveRandomWalk}
  The program in \cref{Fi:RecursiveRandomWalk} implements a bounded, biased random walk.
  The main function consists of a single statement ``$\iinvoke{\mathsf{rdwalk}}$'' that invokes a recursive function.
  The variables $x$ and $d$ represent the current position and the ending position of the random walk, respectively.
  We assume that $d>0$ holds initially.
  In each step, the program samples the length of the current move from a uniform distribution on the interval $[{-1},2]$.
  The statement $\itick{1}$ adds one to a cost accumulator that counts the number of steps before
  the random walk ends. 
  We denote this accumulator by \id{tick} in the rest of this section.
  The program terminates with probability one and its expected accumulated cost is bounded by
  $2d+4$.
  %
\end{example}

\begin{figure}
  \centering
  \begin{subfigure}{0.49\columnwidth}
  \centering
  \begin{small}
  \begin{pseudo}
    \kw{func} $\mathsf{rdwalk}$() \kw{begin} \\+
      ${ \color{ACMDarkBlue} \{ \; 2(d-x)+4 \;\} }$ \\
      \kw{if} $\cbin{<}{x}{d}$ \kw{then} \\+
        ${\color{ACMDarkBlue} \{\;  2(d-x)+4 \;\}  }$ \\
        $\isample{t}{ \kw{uniform}({-1},2)}$; \\
        ${\color{ACMDarkBlue} \{\;  2(d - x - t) + 5  \;\} }$ \\
        $ \iassign{x}{  \eadd{x}{t} }$; \\
        ${\color{ACMDarkBlue} \{\;  2(d - x) + 5 \;\} }$ \\
        \kw{call} $\mathsf{rdwalk}$; \\
        ${\color{ACMDarkBlue} \{ \; 1 \; \} }$ \\
        \kw{tick}(1) \\
        ${\color{ACMDarkBlue} \{ \; 0 \; \} } $ \\-
      \kw{fi} \\-
    \kw{end}
  \end{pseudo}
  \end{small}
  \end{subfigure}
  \begin{subfigure}{0.49\columnwidth}
  \centering
  \begin{small}
  \begin{pseudo}
    {\color{gray} \# $\mathsf{XID} \defeq \{x,d,t\}$ } \\
    {\color{gray} \# $\mathsf{FID} \defeq \{\mathsf{rdwalk} \}$ } \\
    {\color{gray} \# \emph{pre-condition}: $\{d > 0 \}$ } \\
    \kw{func} $\mathsf{main}$() \kw{begin} \\+
      $\iassign{x}{0}$; \\
      \kw{call} $\mathsf{rdwalk}$ \\-
    \kw{end}
  \end{pseudo}
  \end{small}
  \end{subfigure}
  \caption{A bounded, biased random walk, implemented using recursion. The annotations show the derivation of an \emph{upper} bound on the expected accumulated cost.}
  \label{Fi:RecursiveRandomWalk}
\end{figure}

\subsection{The Expected-Potential Method for Higher-Moment Analysis}
\label{Se:ExpectedPotential}

Our approach to higher-moment analysis is inspired by the \emph{expected-potential method}~\cite{PLDI:NCH18}, which is also known as \emph{ranking super-martingales}~\cite{CAV:CS13,PLDI:WFG19,TACAS:KUH19,CAV:CFG16}, for expected-cost bound analysis of probabilistic programs.
%

The classic \emph{potential method} of amortized analysis~\cite{JADM:Tarjan85} can be automated to derive symbolic cost bounds for non-probabilistic programs~\cite{POPL:HJ03,APLAS:HH10}.
The basic idea is to define a \emph{potential function} $\phi: \Sigma \to \bbR^+$ that maps program states $\sigma \in \Sigma$ to nonnegative numbers,
where we assume each state $\sigma$ contains a cost-accumulator component $\sigma.\alpha$.
If a program executes with initial state $\sigma$ to final state $\sigma'$,
then it holds that $\phi(\sigma) \ge (\sigma'.\alpha - \sigma.\alpha) + \phi(\sigma')$,
where $(\sigma'.\alpha - \sigma.\alpha)$ describes the accumulated cost from $\sigma$ to $\sigma'$.
%
%
The potential method also enables \emph{compositional} reasoning: if a statement $S_1$ executes from $\sigma$ to $\sigma'$ and a statement $S_2$ executes from $\sigma'$ to $\sigma''$, then we have $\phi(\sigma) \ge (\sigma'.\alpha - \sigma.\alpha) + \phi(\sigma')$ and $\phi(\sigma') \ge (\sigma''.\alpha - \sigma'.\alpha) + \phi(\sigma'')$; therefore, we derive $\phi(\sigma) \ge (\sigma''.\alpha - \sigma.\alpha) + \phi(\sigma'')$ for the sequential composition $S_1;S_2$.
For non-probabilistic programs, the initial potential provides an \emph{upper} bound on the accumulated cost.
%

This approach has been adapted to reason about expected costs of probabilistic programs~\cite{PLDI:NCH18,PLDI:WFG19}.
To derive upper bounds on the \emph{expected} accumulated cost of a program $S$ with initial state $\sigma$, one needs to take into consideration the \emph{distribution} of all possible executions.
More precisely, the potential function should satisfy the following property:
\begin{equation}\label{Eq:ExpectedPotentialInequality}
\phi(\sigma) \ge \expe_{\sigma' \sim \interp{S}(\sigma)}[ C(\sigma,\sigma') + \phi(\sigma')  ],
\end{equation}
where the notation $\expe_{x \sim \mu}[f(x)]$ represents the expected value of $f(x)$, where $x$ is drawn from the distribution $\mu$, $\interp{S}(\sigma)$ is the distribution over final states of executing $S$ from $\sigma$, and $C(\sigma,\sigma') \defeq \sigma'.\alpha - \sigma.\alpha$ is the execution cost from $\sigma$ to $\sigma'$.

\begin{example}\label{Exa:FstMomRecursiveRandomWalk}
  \cref{Fi:RecursiveRandomWalk} annotates the $\mathsf{rdwalk}$ function from \cref{Exa:RecursiveRandomWalk} with the derivation of an upper bound on the expected accumulated cost.
  %
  %
  %
  The annotations, taken together, define an expected-potential function $\phi : \Sigma \to \bbR^+$ where a program state $\sigma \in \Sigma$ consists of a program point and a valuation for program variables.
  To justify the upper bound $2(d-x)+4$ for the function $\mathsf{rdwalk}$, one has to show that the potential right before the $\itick{1}$ statement should be at least $1$.
  This property is established by \emph{backward} reasoning on the function body:
  \begin{itemize}
    \item For $\iinvoke{\mathsf{rdwalk}}$, we apply the ``induction hypothesis'' that the expected cost of the function $\mathsf{rdwalk}$ can be upper-bounded by $2(d-x)+4$.
    Adding the $1$ unit of potential need by the tick statement, we obtain $2(d-x)+5$ as the pre-annotation of the function call.
    
    \item For $\iassign{x}{\eadd{x}{t}}$, we substitute $x$ with $\eadd{x}{t}$ in the post-annotation of this statement to obtain the pre-annotation.
    
    \item For $\isample{t}{\kw{uniform}({-1},2)}$, because its post-annotation is $2(d-x-t)+5$, we compute its pre-annotation as
    \begin{center}\small      
    $
    \begin{aligned}
     & \expe_{t \sim \kw{uniform}({-1},2)}[2(d-x-t)+5] \\
    ={} & 2(d-x)+5 - 2 \cdot \expe_{t \sim \kw{uniform}({-1},2)}[t] \\
    ={} & 2(d-x)+5-2 \cdot \sfrac{1}{2} = 2(d-x)+4, 
    \end{aligned}
    $
    \end{center}
    which is exactly the upper bound we want to justify.
  \end{itemize}
\end{example}

\paragraph{Our approach}
%
In this paper, we focus on derivation of higher central moments.
Observing that a central moment $\expe[(X-\expe[X])^k]$ can be rewritten as a polynomial of raw moments $\expe[X],\cdots,\expe[X^k]$,
we reduce the problem of bounding central moments to reasoning about upper and lower bounds on raw moments.
For example, the variance can be written as $\vari[X] = \expe[X^2] - \expe^2[X]$, so it suffices to analyze the \emph{upper} bound of
the second moment $\expe[X^2]$ and the \emph{lower} bound on the square of the first moment $\expe^2[X]$.
For higher central moments, this approach requires both upper and lower bounds on higher raw moments.
For example, consider the fourth central moment of a \emph{nonnegative} random variable $X$:
\begin{small}
$
  \expe[(X-\expe[X])^4] = \expe[X^4]-4\expe[X^3]\expe[X]+6\expe[X^2]\expe^2[X] - 3\expe^4[X].
$
\end{small}
Deriving an upper bound on the fourth central moment requires lower bounds on the first ($\expe[X]$) and third ($\expe[X^3]$) raw moments.
%

We now sketch the development of \emph{moment semirings}.
We first consider only the upper bounds on higher moments of \emph{nonnegative} costs.
To do so, we extend the range of the expected-potential function $\phi$ to real-valued vectors $(\bbR^+)^{m+1}$, where $m \in \bbN$ is the degree of the target moment.
We update the potential inequality \eqref{Eq:ExpectedPotentialInequality} as follows:
\begin{equation}\label{Eq:OverApproxIntervalPotentialInequality}
  \phi(\sigma) \ge \expe_{\sigma' \sim \interp{S}(\sigma)} [ \many{\langle C(\sigma,\sigma')^k\rangle_{0\le k\le m}} \otimes \phi(\sigma')    ],
\end{equation}
where $\many{\tuple{v_k}_{0 \le k \le m}}$ denotes an $(m+1)$-dimensional vector, the order $\le$ on vectors is defined pointwise, and $\otimes$ is some \emph{composition} operator.
Recall that $\interp{S}(\sigma)$ denotes the distribution over final states of executing $S$ from $\sigma$, and $C(\sigma,\sigma')$ describes the cost for the execution from $\sigma$ to $\sigma'$.
Intuitively, for $\phi(\sigma) = \many{ \tuple{\phi(\sigma)_k}_{0 \le k  \le m}}$ and each $k$, the component $\phi(\sigma)_k$ is an upper bound on the $k$-th moment of the cost for the computation starting from $\sigma$.
The $0$-th moment is the \emph{termination probability} of the computation, and we assume it is always one for now.
We \emph{cannot} simply define $\otimes$ as pointwise addition because, for example, $(a+b)^2 \neq a^2+b^2$ in general.
If we think of $b$ as the cost for some probabilistic computation, and we prepend a constant cost $a$ to the computation, then by linearity of expectations, we have $\expe[(a+b)^2] = \expe[a^2+2ab+ b^2] = a^2 + 2 \cdot a \cdot \expe[b] + \expe[b^2]$, i.e., reasoning about the second moment requires us to keep track of the first moment.
Similarly, we should have
\begin{equation*}
  \phi(\sigma)_2 \ge \expe_{\sigma' \sim \interp{S}(\sigma)}[ C(\sigma,\sigma')^2 + 2 \cdot C(\sigma,\sigma') \cdot \phi(\sigma')_1 + \phi(\sigma')_2 ],
\end{equation*}
for the second-moment component,
where $\phi(\sigma')_1$ and $\phi(\sigma')_2$ denote $\expe[b]$ and $\expe[b^2]$, respectively.
Therefore, the composition operator $\otimes$ for second-moment analysis (i.e., $m=2$) should be defined as
\begin{equation}\label{Eq:SecondMomentComposition}
  \tuple{ 1, r_1,s_1 } \otimes \tuple{ 1, r_2,s_2 } \defeq \tuple{ 1, r_1 + r_2, s_1 + 2r_1r_2 + s_2 }.
\end{equation}

\begin{figure}
\centering
  \begin{small}
  \begin{pseudo}
    \kw{func} $\mathsf{rdwalk}$() \kw{begin} \\+
      ${\color{ACMDarkBlue} \{ \; \langle 1, \; 2(d-x) + 4, \; 4(d - x)^2 + 22(d - x) + 28 \rangle \; \} }$ \\
      \kw{if} $\cbin{<}{x}{d}$ \kw{then} \\+
        ${\color{ACMDarkBlue} \{\; \langle 1, \; 2(d - x) + 4, \; 4(d - x)^2 + 22(d -x ) + 28 \rangle \; \}  }$ \\
        $\isample{t}{ \kw{uniform}({-1},2)}$; \\
        ${\color{ACMDarkBlue} \{  \; \langle 1, \; 2(d-x-t)+5, \; 4(d-x-t)^2 + 26(d - x - t) + 37 \rangle \;  \} }$ \\
        $ \iassign{x}{  \eadd{x}{t} }$; \\
        ${\color{ACMDarkBlue} \{\; \langle 1, \; 2(d - x) + 5, \; 4(d  - x)^2 + 26(d-x) + 37 \rangle \; \} }$ \\
        \kw{call} $\mathsf{rdwalk}$; \\
        ${\color{ACMDarkBlue} \{  \; \tuple{1, \; 1, \; 1}  \; \} }$ \\
        \kw{tick}(1) \\
        ${\color{ACMDarkBlue} \{ \; \tuple{1, \; 0, \; 0} \; \} } $ \\-
      \kw{fi} \\-
    \kw{end}
  \end{pseudo}
  \end{small}
\caption{Derivation of an \emph{upper} bound on the first and second moment of the accumulated cost.}
\label{Fi:AnnotatedRecursiveRandomWalk}
\vspace{-10pt}
\end{figure}

\begin{example}\label{Exa:SndMomRecursiveRandomWalk}
  \cref{Fi:AnnotatedRecursiveRandomWalk} annotates the $\mathsf{rdwalk}$ function from \cref{Exa:RecursiveRandomWalk} with the derivation of an upper bound on both the first and second moment of the accumulated cost.
  %
  %
  To justify the first and second moment of the accumulated cost for the function $\mathsf{rdwalk}$,
  we again perform backward reasoning:
  \begin{itemize}
    \item For $\itick{1}$, it transforms a post-annotation $a$ by $\lambda a. (\tuple{1,1,1} \!\otimes\! a)$; thus, the pre-annotation is $\tuple{1,1,1} \otimes \tuple{1,0,0} = \tuple{1,1,1}$.
  
    \item For $\iinvoke{\mathsf{rdwalk}}$, we apply the ``induction hypothesis'', i.e., the upper bound shown on line 2. We use the $\otimes$ operator to compose the induction hypothesis with the post-annotation of this function call:
    \begin{center}\small
    $
    \begin{aligned}
      & \tuple{ 1, 2(d\!-\!x)\!+\!4 ,4(d\!-\!x)^2\!+\!22(d\!-\!x)\!+\!28 } \otimes \tuple{1,1,1} \\[-3pt]
      ={} & \tuple{1, 2(d\!-\!x)\!+\!5, (4(d\!-\!x)^2\!+\!22(d\!-\!x)\!+\!28)\!+\!2\!\cdot\! (2(d\!-\!x)\!+\!4) \!+\! 1  } \\[-3pt]
      ={} & \tuple{1, 2(d\!-\!x)\!+\!5, 4(d\!-\!x)^2\!+\!26(d\!-\!x)\!+\!37 }.
    \end{aligned}
    $
    \end{center}
    
    \item For $\iassign{x}{\eadd{x}{t}}$, we substitute $x$ with $x+t$ in the post-annotation of this statement to obtain the pre-annotation.
    
    \item For $\isample{t}{\kw{uniform}({-1},2)}$, because the post-annotation involves both $t$ and $t^2$, we compute from the definition of uniform distributions that 
    \[
    \expe_{t \sim \kw{uniform}({-1},2)}[t]=\sfrac{1}{2}, \quad \expe_{t \sim \kw{uniform}({-1},2)}[t^2] = 1.
    \]
    Then the upper bound on the second moment is derived as follows:
    \begin{center}\small
    $
    \begin{aligned}
      & \expe_{t \sim \kw{uniform}({-1},2)}[4(d\!-\!x\!-\!t)^2\!+\!26(d\!-\!x\!-\!t)\!+\!37] \\
      ={} & (4(d\!-\!x)^2\!+\!26(d\!-\!x)\!+\!37)  \!-\! (8(d\!-\!x)\!+\!26) \!\cdot\! \expe_{t \sim \kw{uniform}({-1},2)}[t ] \\[-3pt]
       & + 4 \!\cdot\! \expe_{t \sim \kw{uniform}({-1},2)}[t^2] \\
      ={} & 4(d\!-\!x)^2 \!+\!22(d\!-\!x) \!+\!28, 
    \end{aligned}
    $
    \end{center}
    which is the same as the desired upper bound on the second moment of the accumulated cost for the function $\mathsf{rdwalk}$. (See \cref{Fi:AnnotatedRecursiveRandomWalk}, line 2.)
  \end{itemize}
  %
\end{example}

We generalize the composition operator $\otimes$ to moments with arbitrarily high degrees, via a family of algebraic structures, which we name \emph{moment semirings} (see \cref{Se:MomentSemirings}). 
These semirings are \emph{algebraic} in the sense that they can be instantiated with any partially ordered semiring, not just $\bbR^+$.

\paragraph{Interval bounds}
Moment semirings not only provide a general method to analyze higher moments, but also enable reasoning about upper and lower bounds on moments \emph{simultaneously}.
The simultaneous treatment is also essential for analyzing programs with \emph{non-monotone} costs (see \cref{Se:InferenceRules}).

We instantiate moment semirings with the standard interval semiring $\calI = \{ [a,b] \mid a \le b \}$.
%
The algebraic approach allows us to systematically incorporate the interval-valued bounds, by \emph{reinterpreting} operations in \cref{Eq:SecondMomentComposition} under $\calI$:
\begin{center}\small
\vspace{-8pt}
$
\begin{aligned}
  & \tuple{ [1,1], [r_1^{\m{L}},r_1^{\m{U}}], [s_1^{\m{L}},s_1^{\m{U}}] } \otimes \tuple{ [1,1] , [r_2^{\m{L}},r_2^{\m{U}}], [s_2^{\m{L}},s_2^{\m{U}}] } \\
  {} \defeq {} & \langle [1,1], [r_1^{\m{L}},r_1^{\m{U}}] +_\calI  [r_2^{\m{L}},r_2^{\m{U}}], \\[-3pt]
  & \hphantom{\langle} [s_1^{\m{L}},s_2^{\m{U}}]+_\calI 2 \cdot ([r_1^{\m{L}},r_1^{\m{U}}] \cdot_\calI [r_2^{\m{L}},r_2^{\m{U}}]) +_\calI [s_2^{\m{L}},s_2^{\m{U}}] \rangle \\
   ={} & \langle [1,1], [r_1^{\m{L}}+r_2^{\m{L}},r_1^{\m{U}}+r_2^{\m{U}}], [s_1^{\m{L}}+2 \cdot \min S+s_2^{\m{L}}, s_1^{\m{U}}+2 \cdot \max S+s_2^{\m{U}}] \rangle , 
\end{aligned}
$
\end{center}
where $S \defeq \{ r_1^{\m{L}}r_2^{\m{L}}, r_1^{\m{L}}r_2^{\m{U}}, r_1^{\m{U}}r_2^{\m{L}}, r_1^{\m{U}}r_2^{\m{U}} \}$.
We then update the potential inequality \cref{Eq:OverApproxIntervalPotentialInequality} as follows:
\begin{equation*}
  \phi(\sigma) \sqsupseteq \expe_{\sigma' \sim \interp{S}(\sigma)} [ \many{\langle [C(\sigma,\sigma')^k,C(\sigma,\sigma')^k] \rangle_{0\le k \le m}} \otimes \phi(\sigma')    ],
\end{equation*}
where the order $\sqsubseteq$ is defined as pointwise interval inclusion.

\begin{example}\label{Exa:RandomWalkVar}
Suppose that the interval bound on the first moment of the accumulated cost of the $\m{rdwalk}$ function from \cref{Exa:RecursiveRandomWalk} is $[2(d-x),2(d-x)+4]$.
We can now derive the upper bound on the variance $\vari[\id{tick}] \le 22d+28$ shown in \cref{Fi:MotivatingExample}(b) (where we substitute $x$ with $0$ because the main function initializes $x$ to $0$ on line 5 in \cref{Fi:RecursiveRandomWalk}):
\begin{center}\small 
$
\begin{aligned}
  \vari[\id{tick}] & = \expe[\id{tick}^2] - \expe^2[\id{tick}] \\[-3pt]
  & \le (\text{upper bnd. on $\expe[\id{tick}^2]$}) \!- \! (\text{lower bnd. on $\expe[\id{tick}]$})^2 \\[-3pt]
  & = (4d^2+22d+28) - (2d)^2 = 22d+28.
\end{aligned}
$
\end{center}

In \cref{Se:TailBound}, we describe how we use moment bounds to derive the tail bounds shown in \cref{Fi:MotivatingExample}(c).
\end{example}

\subsection{Two Major Challenges}
\label{Se:SoundnessCriteria}

\paragraph{Interprocedural reasoning}
Recall that in the derivation of \cref{Fi:AnnotatedRecursiveRandomWalk}, we use the $\otimes$ operator to compose the upper bounds on moments for $\iinvoke{\m{rdwalk}}$ and its post-annotation $\tuple{1,1,1}$.
However, this approach does \emph{not} work in general, because the post-annotation might be symbolic (e.g., $\tuple{1,x,x^2}$) and the callee might mutate referenced program variables (e.g., $x$).
One workaround is to derive a \emph{pre}-annotation for each possible \emph{post}-annotation of a recursive function, i.e., the moment annotations for a recursive function is \emph{polymorphic}.
This workaround would \emph{not} be effective for non-tail-recursive functions:
for example, we need to reason about the $\m{rdwalk}$ function in \cref{Fi:AnnotatedRecursiveRandomWalk} with \emph{infinitely} many post-annotations $\tuple{1,0,0}$, $\tuple{1,1,1}$, $\tuple{1,2,4}$, \ldots, i.e., $\tuple{1,i,i^2}$ for all $i \in \bbZ^+$.

Our solution to \emph{moment-polymorphic recursion} is to
introduce a \emph{combination} operator $\oplus$ in a way that
if $\phi_1$ and $\phi_2$ are two expected-potential functions, then
\[
\phi_1(\sigma) \oplus \phi_2(\sigma) \!\ge\! \expe_{\sigma'  \!\sim\! \interp{S}(\sigma)}[ \many{ \tuple{C(\sigma,\sigma')^k}_{0 \!\le\! k \!\le\! m} } \otimes (\phi_1(\sigma') \oplus \phi_2(\sigma')) ].
\]
We then use the $\oplus$ operator to derive a \emph{frame} rule:
\begin{mathpar}\small
\inferrule
{ {\color{ACMDarkBlue} \{\; Q_1 \; \} } ~ S ~ {\color{ACMDarkBlue} \{ \; Q_1' \; \} } \\
  {\color{ACMDarkBlue} \{\; Q_2 \; \} } ~ S ~ {\color{ACMDarkBlue} \{ \; Q_2' \; \} }
}
{ {\color{ACMDarkBlue} \{\; Q_1 \oplus Q_2 \; \} } ~ S ~ {\color{ACMDarkBlue} \{ \; Q_1' \oplus Q_2' \; \} } }
\end{mathpar}
We define $\oplus$ as pointwise addition, i.e., for second moments,
\begin{equation}\label{Eq:SndMomentCombine}
\tuple{p_1,r_1,s_1} \oplus \tuple{p_2,r_2,s_2} \defeq \tuple{p_1+p_2,r_1+r_2,s_1+s_2},
\end{equation}
and because the $0$-th-moment (i.e., termination-probability) component is no longer guaranteed to be one, we redefine $\otimes$ to consider the termination probabilities:
\begin{equation}\label{Eq:SndMomentExtend}
\tuple{p_1,r_1,s_1} \otimes \tuple{p_2,r_2,s_2} \defeq \tuple{p_1 p_2, p_2r_1+p_1r_2, p_2s_1+2r_1r_2 + p_1s_2}.
\end{equation}

\revision{3}
\begin{changebar}
\begin{remark}
  As we will show in \cref{Se:MomentSemirings},
  the composition operator $\otimes$ and combination operator $\oplus$ form a \emph{moment semiring};
  consequently, we can use \emph{algebraic} properties of semirings (e.g., distributivity) to aid higher-moment analysis.
  For example, a vector $\tuple{0,r_1,s_1}$ whose termination-probability component is zero does not seem to make sense, because moments with respect to a zero distribution should also be zero.
  However, by distributivity, we have
  \begin{center}\small
  $
  \begin{aligned}
  & \tuple{1,r_3,s_3} \otimes \tuple{1,r_1+r_2,s_1+s_2} \\
  ={} & \tuple{1,r_3,s_3} \otimes (\tuple{0,r_1,s_1} \oplus \tuple{1,r_2,s_2}) \\
  ={} & (\tuple{1,r_3,s_3} \otimes \tuple{0,r_1,s_1}) \oplus ( \tuple{1,r_3,s_3} \oplus \tuple{1,r_2,s_2}).
  \end{aligned}
  $
  \end{center}
  If we think of $\tuple{1,r_1+r_2,s_1+s_2}$ as a post-annotation of a computation whose moments are bounded by $\tuple{1,r_3,s_3}$,
  the equation above indicates that we can use $\oplus$ to \emph{decompose} the post-annotation into subparts, and then reason about each subpart separately.
  This fact inspires us to develop a decomposition technique for moment-polymorphic recursion.
\end{remark}
\end{changebar}

\begin{example}\label{Exa:MomentPolymorphism}
  With the $\oplus$ operator and the frame rule, we only need to analyze the $\m{rdwalk}$ function from \cref{Exa:RecursiveRandomWalk} with \emph{three} post-annotations: $\tuple{1,0,0}$, $\tuple{0,1,1}$, and $\tuple{0,0,2}$, which form a kind of ``elimination sequence.''
  \revision{4}
  \begin{changebar}
  We construct this sequence in an \emph{on-demand} manner;
  the first post-annotation is the identity element $\tuple{1,0,0}$ of the moment semiring.
  \end{changebar}
  
  For post-annotation $\tuple{1,0,0}$, as shown in \cref{Fi:AnnotatedRecursiveRandomWalk}, we need to know the moment bound for $\m{rdwalk}$ with the post-annotation $\tuple{1,1,1}$.
  \revision{4}
  \begin{changebar}
  Instead of reanalyzing $\m{rdwalk}$ with the post-annotation $\tuple{1,1,1}$,
  we use the $\oplus$ operator to compute the ``difference'' between it and the previous post-annotation $\tuple{1,0,0}$.
  \end{changebar}
  Observing that $\tuple{1,1,1} = \tuple{1,0,0} \oplus \tuple{0,1,1}$, we now analyze $\m{rdwalk}$ with $\tuple{0,1,1}$ as the post-annotation:
  \begin{small}
  \begin{pseudo}
    $\iinvoke{\m{rdwalk}}$; {\color{ACMDarkBlue} $\{ \;  \tuple{0, \; 1, \; 3} \;  \}$   } {\color{gray}\# $=\tuple{1, \; 1, \; 1} \otimes \tuple{0, \; 1, \; 1} $}  \\
    $\itick{1}$ {\color{ACMDarkBlue} $\{ \; \tuple{0,\; 1, \;1} \; \}$ }
  \end{pseudo}
  \end{small}
  Again, because $\tuple{0,1,3} = \tuple{0,1,1} \oplus \tuple{0,0,2}$, we need to further analyze $\m{rdwalk}$ with $\tuple{0,0,2}$ as the post-annotation:
  \begin{small}
  \begin{pseudo}
    $\iinvoke{\m{rdwalk}}$; {\color{ACMDarkBlue} $\{ \;  \tuple{0, \; 0, \; 2} \;  \}$   } {\color{gray}\# $=\tuple{1, \; 1, \; 1} \otimes \tuple{0, \; 0, \; 2}$} \\
    $\itick{1}$ {\color{ACMDarkBlue} $\{ \; \tuple{0,\; 0, \;2} \; \}$ }
  \end{pseudo}
  \end{small}
  With the post-annotation $\tuple{0,0,2}$, we can now reason \emph{monomorphically} without analyzing any new post-annotation!
  We can perform a succession of reasoning steps similar to what we have done in \cref{Exa:FstMomRecursiveRandomWalk} to justify the following bounds (``unwinding'' the elimination sequence):
  \revision{4}
  \begin{changebar}
  \begin{itemize}
    \item ${\color{ACMDarkBlue} \{  \tuple{ 0,  0 , 2 }  \} } ~ {\m{rdwalk}} ~ { \color{ACMDarkBlue} \{  \tuple{0,  0,  2}  \} }$:
    Directly by backward reasoning with the post-annotation $\tuple{0,0,2}$.
    
    \item ${\color{ACMDarkBlue} \{  \tuple{ 0,  1 , 4(d-x) + 9  }  \} } ~ {\m{rdwalk}} ~ { \color{ACMDarkBlue} \{  \tuple{0,  1, 1}  \} }$:
    To analyze the recursive call with post-annotation $\tuple{0,1,3}$, we use the frame rule with the post-call-site annotation $\tuple{0,0,2}$ to derive $\tuple{0,1,4(d-x)+11}$ as the pre-annotation:
    \begin{small}
    \begin{pseudo}
    {\color{ACMDarkBlue} $\{ \; \tuple{0, \;1,\; 4(d-x)+ 11} \; \}$ } {\color{gray}\# $=\tuple{0, \; 1, \; 4(d-x)+9} \oplus \tuple{0, \; 0, \; 2}$} \\
    $\iinvoke{\m{rdwalk}}$; \\
    {\color{ACMDarkBlue} $\{ \;  \tuple{0, \; 1, \; 3} \;  \}$   } {\color{gray}\# $=\tuple{0, \; 1, \; 1} \oplus \tuple{0, \; 0, \; 2}$} \\
    \end{pseudo}
    \end{small}
    
    \item ${\color{ACMDarkBlue} \{  \tuple{ 1,  2(d\!-\!x)\!+\!4 , 4(d\!-\!x)^2 \!+\! 22(d\!-\!x)\!+\!28  }  \} } ~ {\m{rdwalk}} ~{ \color{ACMDarkBlue} \{  \tuple{1,  0, 0}  \} }$: To analyze the recursive call with post-annotation $\tuple{1,1,1}$, we use the frame rule with the post-call-site annotation $\tuple{0,1,1}$ to derive $\tuple{1,2(d-x)+5,4(d-x)^2+26(d-x)+37}$ as the pre-annotation:
    \begin{small}
    \begin{pseudo}
    {\color{ACMDarkBlue} $\{ \; \tuple{1, 2(d\!-\!x)\!+\!5, 4(d\!-\!x)^2\!+\! 26(d\!-\!x)\!+\!37} \; \}$ } \\
    {\color{gray}\# $=\tuple{1, 2(d\!-\!x)\!+\!4, 4(d\!-\!x)^2\!+\!22(d\!-\!x)\!+\!28} \oplus \tuple{0, 1,  4(d\!-\!x)\!+\!9}$} \\
    $\iinvoke{\m{rdwalk}}$; \\
    {\color{ACMDarkBlue} $\{ \;  \tuple{1, \; 1, \; 1} \;  \}$   } {\color{gray}\# $=\tuple{1, \; 0, \; 0} \oplus \tuple{0, \; 1, \; 1}$} \\
    \end{pseudo}
    \end{small}
  \end{itemize}
  \end{changebar}
\end{example}

In \cref{Se:InferenceRules}, we present an automatic inference system for the expected-potential method that is extended with interval-valued bounds on higher moments, with support for moment-polymorphic recursion.

\paragraph{Soundness of the analysis}
Unlike the classic potential method, the expected-potential method is \emph{not} always sound when reasoning about the moments for cost accumulators in probabilistic programs.

\begin{figure}
  \centering
  \begin{small}
  \begin{pseudo}
    \kw{func} $\m{geo}$() \kw{begin} {\color{ACMDarkBlue} $\{ \; \tuple{1, \; 2^x } \;\}$ } \\+
    $\iassign{x}{\eadd{x}{1}}$; {\color{ACMDarkBlue} $\{ \; \tuple{1, \; 2^{x-1} } \; \}$ } \\
    {\color{gray}\# expected-potential method for \emph{lower} bounds: } \\
    {\color{gray}\# $2^{x-1} < \sfrac{1}{2} \cdot (2^x+1) + \sfrac{1}{2} \cdot 0$ } \\
    \kw{if} $\kw{prob}(\sfrac{1}{2})$ \kw{then} {\color{ACMDarkBlue} $\{ \; \tuple{1, \; 2^{x} +1  } \;\}$ } \\+
      $\itick{1}$; {\color{ACMDarkBlue} $\{ \; \tuple{1, \; 2^x} \; \}$ }  \\
      $\iinvoke{\m{geo}}$ {\color{ACMDarkBlue} $\{ \; \tuple{1, \; 0} \}$ } \\-
    \kw{fi} \\-
    \kw{end}
  \end{pseudo}
  \end{small}
  \caption{A purely probabilistic loop with annotations for a \emph{lower} bound on the first moment of the accumulated cost.}
  \label{Fi:ProbLoop}
  \vspace{-10pt}
\end{figure}

\begin{counterexample}\label{Exa:UnsoundPotentialFunction}
  Consider the program in \cref{Fi:ProbLoop} that describes a purely probabilistic loop that exits the loop with probability $\sfrac{1}{2}$ in each iteration.
  The expected accumulated cost of the program should be \emph{one}~\cite{POPL:HKG20}.
  However, the annotations in \cref{Fi:ProbLoop} justify a potential function $2^x$ as a \emph{lower} bound on the expected accumulated cost, no matter what value $x$ has at the beginning, which is apparently \emph{unsound}.  
\end{counterexample}

Why does the expected-potential method fail in this case?
\revision{9}
\begin{changebar}
The short answer is that dualization only works for some problems:
upper-bounding the sum of nonnegative ticks is equivalent to
lower-bounding the sum of nonpositive ticks;
lower-bounding the sum of nonnegative ticks---the issue in
\cref{Fi:ProbLoop}---is equivalent to upper-bounding the
sum of nonpositive ticks; 
however, the two kinds of problems are \emph{inherently different}~\cite{POPL:HKG20}.
\end{changebar}
\Omit{
\citet{POPL:HKG20} presented a thorough study of the difficulty of lower-bound inference for probabilistic programs (which, also implies the difficulty of handling non-monotone costs, because upper-bounding the sum of nonnegative ticks is \emph{equivalent} to lower-bounding the sum of nonpositive ticks).
}
Intuitively, the classic potential method for bounding the costs of non-probabilistic programs is a \emph{partial}-correctness\Omit{ reasoning} method, i.e., derived upper/lower bounds are sound if the analyzed program terminates~\cite{SP:NDF17}.
With probabilistic programs, many programs do not terminate \emph{definitely}, but only \emph{almost surely}, i.e., they terminate with probability one, but have some execution traces that are non-terminating.
The programs in \cref{Fi:RecursiveRandomWalk,Fi:ProbLoop} are both almost-surely terminating.
For the expected-potential method, the potential at a program state can be seen as an \emph{average} of potentials needed for all possible computations that continue from the state.
If the program state can lead to a non-terminating execution trace, the potential associated with that trace might be problematic, and as a consequence, the expected-potential method might fail.

Recent research~\cite{POPL:HKG20,PLDI:WFG19,CAV:BEF16,misc:SO19} has employed the \emph{Optional Stopping Theorem} (OST) from probability theory to address this soundness issue.
The classic OST provides a collection of \emph{sufficient} conditions for reasoning about expected gain \emph{upon termination} of stochastic processes, where the expected gain at any time is \emph{invariant}.
By constructing a stochastic process for executions of probabilistic programs and setting the expected-potential function as the invariant, one can apply the OST to justify the soundness of the expected-potential function.
%
%
In a companion paper~\cite{Companion}, we study and propose an extension to the classic OST with a \emph{new} sufficient condition that is suitable for reasoning about higher moments;
in this work, we prove the soundness of our central-moment inference for programs that satisfy this condition, and develop an algorithm to check this condition automatically (see \cref{Se:SemanticOptionalStopping}).
%




\begin{figure}
\centering
\begin{small}
\begin{align*}
  S & \Coloneqq \iskip \mid \itick{c} \mid \iassign{x}{E} \mid \isample{x}{D} \mid \iinvoke{f} \mid \iloop{L}{S}  \\
  & \mid \iprob{p}{S_1}{S_2} \mid \icond{L}{S_1}{S_2}  \mid S_1 ; S_2 \\
  L & \Coloneqq \ctop \mid \cneg{L} \mid \cconj{L_1}{L_2} \mid \cle{E_1}{E_2} \\
  E & \Coloneqq x \mid c \mid \eadd{E_1}{E_2} \mid \emul{E_1}{E_2} \\
  D & \Coloneqq \kw{uniform}(a,b) \mid \cdots
\end{align*}
\end{small}
\vspace{-10pt}
\caption{Syntax of \lang{}, where $p \in [0,1]$, $a,b,c \in \bbR$, $a < b$, $x\in \mathsf{VID}$ is a variable, and $f \in \mathsf{FID}$ is a function identifier.}
\label{Fi:Syntax}
\vspace{-10pt}
\end{figure}

\section{Derivation System for Higher Moments}
\label{Se:DerivationSystem}

In this section, we describe the inference system used by our analysis.
We first present a probabilistic programming language (\cref{Se:ProbLang}).
We then introduce \emph{moment semirings} to compose higher moments for a cost accumulator from two computations (\cref{Se:MomentSemirings}).
We use moment semirings to develop our derivation system, which is presented as a declarative program logic (\cref{Se:InferenceRules}).
Finally, we sketch how we reduce the inference of a derivation to LP solving (\cref{Se:Automation}).

\subsection{A Probabilistic Programming Language}
\label{Se:ProbLang}

This paper uses an imperative arithmetic probabilistic programming language \lang{} that supports general recursion and continuous distributions, where program variables are real-valued.
We use the following notational conventions.
Natural numbers $\bbN$ \emph{exclude} $0$, i.e., $\bbN \defeq \{ 1,2,3,\cdots\} \subseteq \bbZ^+ \defeq \{0,1,2,\cdots\}$.
%
%
%
The \emph{Iverson brackets} $[\cdot]$ are defined by $[\varphi]=1$ if $\varphi$ is true and otherwise $[\varphi]=0$.
We denote updating an existing binding of $x$ in a finite map $f$ to $v$ by $f[x \mapsto v]$.
We will also use the following standard notions from probability theory: $\sigma$-algebras, measurable spaces, measurable functions, random variables, probability measures, and expectations.
\iflong
\Cref{Se:AppendixProbTheory} provides a review of those notions.
\else
We include a review of those notions in the technical report~\cite{Techreport}.
\fi

\cref{Fi:Syntax} presents the syntax of \lang{}, where the metavariables $S$, $L$, $E$, and $D$ stand for statements, conditions, expressions, and distributions, respectively.
Each distribution $D$ is associated with a probability measure $\mu_D \in \bbD(\bbR)$.
We write $\bbD(X)$ for the collection of all probability measures on the measurable space $X$.
%
For example, $\kw{uniform}(a,b)$ describes a uniform distribution on the interval $[a,b]$, and its corresponding probability measure is the integration of its density function $\mu_{\kw{uniform}(a,b)}(O) \defeq \int_O \frac{[a \le x \le b]}{b - a} dx $.
The statement ``$\isample{x}{D}$'' is a \emph{random-sampling} assignment, which draws from the distribution $\mu_D$ to obtain a sample value and then assigns it to $x$.
The statement ``$\iprob{p}{S_1}{S_2}$'' is a \emph{probabilistic-branching} statement, which executes $S_1$ with probability $p$, or $S_2$ with probability $(1-p)$.

The statement ``$\iinvoke{f}$'' makes a (possibly recursive) call to the function with identifier $f \in \mathsf{FID}$.
In this paper, we assume that the functions only manipulate states that consist of global program variables.
The statement $\itick{c}$, where $c \in \bbR$ is a constant, is used to define the \emph{cost model}.
It adds $c$ to an anonymous global cost accumulator.
Note that our implementation supports local variables, function parameters, return statements, as well as accumulation of non-constant costs;
the restrictions imposed here are not essential, and are introduced solely to simplify the presentation.

We use a pair $\tuple{\scrD, S_{\mathsf{main}}}$ to represent an \lang{} program, where $\scrD$ is a finite map from function identifiers to their bodies and $S_{\mathsf{main}}$ is the body of the main function.
We present an operational semantics for \lang{} in \cref{Se:ExpectedCostBoundAnalysis}. 

\subsection{Moment Semirings}
\label{Se:MomentSemirings}

As discussed in \cref{Se:ExpectedPotential}, we want to design a \emph{composition} operation $\otimes$ and a \emph{combination} operation $\oplus$ to compose and combine higher moments of accumulated costs such that
\begin{align*}
  \phi(\sigma) & \!\sqsupseteq\! \expe_{\sigma' \!\sim\! \interp{S}(\sigma)} \![ \many{ \tuple{ C(\sigma,\sigma')^k}_{0\!\le\! k \!\le\! m}} \!\otimes\! \phi(\sigma')    ], \\
  \phi_1(\sigma) \!\oplus\! \phi_2(\sigma) & \!\sqsupseteq\! \expe_{\sigma' \!\sim\! \interp{S}(\sigma)} \! [ \many{ \tuple{C(\sigma,\sigma')^k}_{0 \!\le\! k \!\le\! m} } \!\otimes\! (\phi_1(\sigma') \!\oplus\! \phi_2(\sigma')) ],
\end{align*}
where the expected-potential functions $\phi,\phi_1,\phi_2$ map program states to interval-valued vectors, $C(\sigma,\sigma')$ is the cost for the computation from $\sigma$ to $\sigma'$, and $m$ is the degree of the target moment.
In \cref{Eq:SndMomentExtend,Eq:SndMomentCombine}, we gave a definition of $\otimes$ and $\oplus$ suitable for first and second moments, respectively.
In this section, we generalize them to reason about upper and lower bounds of higher moments.
Our approach is inspired by the work of~\citet{EMNLP:LE09}, which develops a method to ``lift'' techniques for first moments to those for second moments.
Instead of restricting the elements of semirings to be vectors of numbers, we propose \emph{algebraic} moment semirings that can also be instantiated with vectors of intervals, which we need for the interval-bound analysis that was demonstrated in \cref{Se:ExpectedPotential}.

\begin{definition}\label{De:MomSemiring}
  The $m$-th order \emph{moment semiring} $\calM_\calR^{(m)} = (|\calR|^{m+1},\oplus, \otimes, \azero, \aone)$ is parametrized by a partially ordered semiring $\calR = (|\calR|,{\le}, {+},{\cdot},0 ,1 )$,
  where
  \begin{align}
    \many{\tuple{u_k}_{0 \le k \le m} } \!\oplus\! \many{\tuple{v_k}_{0 \le k \le m}} & \defeq \many{\tuple{u_k + v_k}_{0 \le k \le m}}, \\
    \many{\tuple{u_k}_{0 \le k \le m}} \!\otimes\! \many{\tuple{v_k}_{0 \le k \le m}} & \defeq \many{ \tuple{ \textstyle \sum_{i=0}^k  \binom{k}{i} \!\times\! (u_i \cdot v_{k-i}) }_{0 \le k \le m} }, \label{Eq:MomentGenerating} 
  \end{align}
  $\binom{k}{i}$ is the binomial coefficient; the scalar product $n \times u$ is an abbreviation for $\sum_{i=1}^n u$, for $n \in \bbZ^+, u \in \calR$; $\azero \defeq \tuple{0,0,\cdots,0}$; and $\aone \defeq \tuple{1 , 0 ,\cdots, 0 }$.
  We define the partial order $\aord$ as the pointwise extension of the partial order $\le$ on $\calR$.
\end{definition}

\revision{9}
\begin{changebar}
Intuitively, the definition of $\otimes$ in \cref{Eq:MomentGenerating} can be seen as the multiplication of two moment-generating functions for distributions with moments $\many{\tuple{u_k}_{0 \le k \le m}}$ and $\many{\tuple{v_k}_{0 \le k \le m}}$, respectively.
\end{changebar}
We prove a composition property for moment semirings.

\begin{lemma}\label{Lem:EisnerBasicProperty}
  For all $u,v \in \calR$, it holds that
  \[
  \many{\tuple{(u+v)^k}_{0 \le k \le m}} = \many{\tuple{u^k}_{0 \le k \le m}} \otimes \many{\tuple{v^k}_{0 \le k \le m}},
  \]
  where $u^n$ is an abbreviation for $\prod_{i=1}^n u$, for $n \in \bbZ^+, u \in \calR$.
\end{lemma}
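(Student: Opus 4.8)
The plan is to observe that the asserted identity is a vector equation, so it holds iff it holds componentwise, and then to recognize each component as an instance of the binomial theorem in $\calR$. Concretely, by the definition of $\otimes$ in \cref{Eq:MomentGenerating}, the $k$-th component of $\many{\tuple{u^k}_{0 \le k \le m}} \otimes \many{\tuple{v^k}_{0 \le k \le m}}$ is $\sum_{i=0}^{k} \binom{k}{i} \times (u^i \cdot v^{k-i})$, whereas the $k$-th component of the left-hand side is simply $(u+v)^k$. Hence it suffices to prove, for every fixed exponent $k$ with $0 \le k \le m$, the scalar identity
\[
(u+v)^k \;=\; \sum_{i=0}^{k} \binom{k}{i} \times \bigl(u^i \cdot v^{k-i}\bigr).
\]
Because the index $k$ ranges independently over $\{0,\dots,m\}$, there is no need to induct on the moment-semiring order $m$; establishing this one scalar identity for an arbitrary $k$ finishes the lemma. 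This also matches the intuition noted after \cref{De:MomSemiring}: $\many{\tuple{u^k}_{0 \le k \le m}}$ is the moment vector of a point mass at $u$, and $\otimes$ multiplies moment-generating functions, so the product should be the moment vector of a point mass at $u+v$.

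First I would prove the scalar identity by induction on $k$. The base case $k=0$ is immediate: both sides equal $1$, since $u^0 = v^0 = 1$ and $\binom{0}{0}=1$. For the step, I would write $(u+v)^{k+1} = (u+v)^k \cdot (u+v)$, substitute the inductive hypothesis, and expand using the distributive laws of $\calR$ into a sum of terms multiplied on the right by $u$ and a sum of terms multiplied on the right by $v$. Re-indexing the first sum and merging the two via Pascal's rule $\binom{k}{i-1} + \binom{k}{i} = \binom{k+1}{i}$ then yields $\sum_{i=0}^{k+1} \binom{k+1}{i} \times (u^i \cdot v^{(k+1)-i})$. The only semiring facts used are distributivity and the compatibility of the integer scalar product with $+$ and $\cdot$, namely $(a \times w) + (b \times w) = (a+b) \times w$, which is exactly what lets the binomial coefficients add across the two sums.

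The main obstacle is commutativity of multiplication. In the inductive step, collecting $u^i \cdot v^{k-i} \cdot u$ into $u^{i+1} \cdot v^{k-i}$ requires $v^{k-i} \cdot u = u \cdot v^{k-i}$, i.e., that $u$ and $v$ commute under $\cdot$; a general (partially ordered) semiring need not be multiplicatively commutative, and the binomial theorem genuinely fails without this. I would therefore either take $\calR$ to be commutative or invoke commutativity explicitly as a side condition. This is harmless for the uses in this paper, since the base semirings that instantiate $\calM_\calR^{(m)}$---the nonnegative reals $\bbR^+$ and the interval semiring $\calI$---are both commutative. With commutativity in hand, everything else is routine bookkeeping with binomial coefficients and the iterated-sum notation, which I would leave implicit rather than expand term by term.
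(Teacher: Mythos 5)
Your proposal is correct and follows essentially the same route as the paper's own proof: reduce the vector identity componentwise to the scalar binomial theorem in $\calR$ and prove that by induction on $k$ using Pascal's rule. Your explicit remark that the inductive step needs $u$ and $v$ to commute is a point the paper's proof silently elides (it multiplies by $(u+v)$ and regroups $v \cdot (u^i \cdot v^{k-i})$ into $u^i \cdot v^{k-i+1}$ without comment), and your observation that the instantiating semirings are commutative is the right way to discharge it.
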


\subsection{Inference Rules}
\label{Se:InferenceRules}

%
%
We present the derivation system as a declarative program logic that uses moment semirings to enable compositional reasoning and moment-polymorphic recursion.

\paragraph{Interval-valued moment semirings}
Our derivation system infers upper and lower bounds simultaneously, rather than separately,
which is essential for \emph{non-monotone} costs.
Consider a program ``$\itick{-1} ; S$'' and suppose that we have $\tuple{1,2,5}$ and $\tuple{1,{-2},5}$ as the upper and lower bound on the first two moments of the cost for $S$, respectively.
If we only use the upper bound, we derive $\tuple{1,{-1},1} \otimes \tuple{1,2,5} = \tuple{1,1,2}$, which is \emph{not} an upper bound on the moments of the cost for the program;
if the \emph{actual} moments of the cost for $S$ are $\tuple{1,0,5}$,
then the \emph{actual} moments of the cost for ``$\itick{-1}; S$'' are $\tuple{1,{-1},1} \otimes \tuple{1,0,5} = \tuple{1,{-1}, 4} \not\le \tuple{1,1,2}$.
%
%
Thus, in the analysis, we instantiate moment semirings with the interval domain $\calI$.
For the program ``$\itick{-1}; S$'', its interval-valued bound on the first two moments is
$
  \tuple{[1,1], [{-1},{-1}],[1,1]} \otimes \tuple{[1,1], [{-2},2], [5,5]} 
  =  \tuple{[1,1], [{-3},1], [ 2, 10]}. 
$

\paragraph{Template-based expected-potential functions}
The basic approach to automated inference using potential functions is to introduce a \emph{template} for the expected-potential functions.
Let us fix $m \in \bbN$ as the degree of the target moment.
Because we use $\calM_\calI^{(m)}$-valued expected-potential functions whose range is vectors of intervals, the templates are vectors of intervals whose ends are represented \emph{symbolically}.
In this paper, we represent the ends of intervals by \emph{polynomials} in $\bbR[\mathsf{VID}]$ over program variables.
%

More formally, we lift the interval semiring $\calI$ to a \emph{symbolic} interval semiring $\calP\calI$ by representing the ends of the $k$-th interval by polynomials in $\bbR_{kd}[\mathsf{VID}]$ up to degree $kd$ for some fixed $d \in \bbN$.
%
%
%
Let $\calM_{\calP\calI}^{(m)}$ be the $m$-th order moment semiring instantiated with the symbolic interval semiring.
Then the potential annotation is represented as $Q = \many{\tuple{[L_k,U_k]}_{0 \le k \le m}} \in \calM_{\calP\calI}^{(m)}$, where $L_k$'s and $U_k$'s are polynomials in $\bbR_{kd}[\mathsf{VID}]$.
$Q$ defines an $\calM_\calI^{(m)}$-valued expected-potential function $\phi_Q(\sigma) \defeq \many{ \tuple{[ \sigma(L_k),\sigma(U_k) ]}_{0 \le k \le m} }$, where $\sigma$ is a program state, and $\sigma(L_k)$ and $\sigma(U_k)$ are $L_k$ and $U_k$ evaluated over $\sigma$, respectively.

\paragraph{Inference rules}
We formalize our derivation system for moment analysis in a Hoare-logic style.
The judgment has the form $\Delta \vdash_h \{\Gamma;Q\}~S~\{\Gamma';Q'\}$, where
$S$ is a statement, $\{\Gamma;Q\}$ is a precondition, $\{\Gamma';Q'\}$ is a postcondition,
$\Delta = \tuple{\Delta_k}_{0 \le k \le m}$ is a context of function specifications,
and $h \in \bbZ^+$ specifies some restrictions put on $Q,Q'$ that we will explain later.
The \emph{logical context} $\Gamma\!:\!(\mathsf{VID} \!\to\! \bbR) \!\to\! \set{\top,\bot}$ is a predicate\Omit{on program states} that describes reachable states at a program point. 
%
%
The \emph{potential annotation} $Q \in \calM_{\calP\calI}^{(m)}$ specifies a map from program states to the moment semiring that is used to define interval-valued expected-potential functions.
The semantics of the triple $\{\cdot;Q\}~S~\{\cdot;Q'\}$ is that if the rest of the computation after executing $S$ has its moments of the accumulated cost bounded by $\phi_{Q'}$, then the whole computation has its moments of the accumulated cost bounded by $\phi_Q$.
The parameter $h$ restricts all $i$-th-moment components in $Q,Q'$, such that $i < h$, to be $[0,0]$.
We call such potential annotations \emph{$h$-restricted};
\revision{6}
\begin{changebar}
this construction is motivated by an observation from \cref{Exa:MomentPolymorphism}, where we illustrated the benefits of carrying out interprocedural analysis using an ``elimination sequence'' of annotations for recursive function calls, where the successive annotations have a greater number of zeros, filling from the left.
\end{changebar}
\emph{Function specifications} are valid pairs of pre- and post-conditions for all declared functions in a program.
For each $k$, such that $0 \!\le\! k \!\le\! m$, and each function $f$,
a valid specification $(\Gamma;Q,\Gamma';Q') \in \Delta_k(f)$ is justified by the judgment $\Delta \vdash_k \{\Gamma;Q\}~\scrD(f)~\{\Gamma';Q'\}$, where $\scrD(f)$ is the function body of $f$, and $Q,Q'$ are $k$-restricted.
%
%
To perform context-sensitive\Omit{interprocedural} analysis, a function can have multiple specifications.

\begin{figure}
\begin{mathpar}\small
  \Rule{Q-Tick}{    Q = \many{\tuple{[c^k,c^k]}_{0 \le k \le m}} \otimes Q' }{ \Delta \vdash_h \{ \Gamma;Q \}~\itick{c}~\{ \Gamma;Q' \} }
  \and
  \Rule{Q-Sample}{ \Gamma =\Forall{x \in \mathrm{supp}(\mu_D)} \Gamma' \\\\  Q = \expe_{x \sim \mu_D}[Q'] }{ \Delta \vdash_h \{  \Gamma;Q   \}~\isample{x}{D}~\{ \Gamma' ;Q' \} }
  \\
  \Rule{Q-Loop}{ \Delta \vdash_h \{ \Gamma \wedge L;Q \}~S~\{ \Gamma;Q \}  }{ \Delta \!\vdash_h\! \{ \Gamma;Q \} ~\iloop{L}{S}~ \{ \Gamma \!\wedge\! \neg L; Q \} }
  \enskip
  \Rule{Q-Call-Mono}
  { (\Gamma;Q , \Gamma';Q') \in \Delta_m(f)
  }
  { \Delta \!\vdash_m\! \{ \Gamma; Q \} ~\iinvoke{f}~ \{ \Gamma'; Q' \} }
  \\
  \Rule{Q-Seq}{ \Delta \vdash_h \{\Gamma;Q \}~S_1~\{\Gamma';Q' \} \\\\ \Delta \vdash_h \{ \Gamma';Q' \}~S_2~\{\Gamma'';Q''\}  }{ \Delta \vdash_h \{ \Gamma;Q \}~S_1;S_2~\{\Gamma'';Q''\} }
  \enskip
  \Rule{Q-Call-Poly}{ h < m \\  \Delta_h(f) = (\Gamma;Q_1,\Gamma';Q_1') \\\\ \Delta \vdash_{h+1} \{ \Gamma;Q_2\}~\scrD(f)~\{ \Gamma';Q_2' \}   }{ \Delta \vdash_h \{ \Gamma; Q_1 \oplus Q_2 \}~\iinvoke{f}~\{ \Gamma'; Q_1' \oplus Q_2' \} }
  \\
  \Rule{Q-Prob}{  \Delta \vdash_h \{ \Gamma;Q_1 \}~S_1~\{ \Gamma';Q' \} \\ \Delta \vdash_h \{ \Gamma;Q_2 \} ~S_2~\{ \Gamma';Q' \} \\ Q = {P \oplus R} \\ P = \tuple{[p,p],[0,0],\cdots,[0,0]} \otimes Q_1 \\ R = \tuple{[1-p,1-p],[0,0],\cdots,[0,0]} \otimes Q_2 }{ \Delta \vdash_h \{ \Gamma;Q \} ~\iprob{p}{S_1}{S_2}~\{ \Gamma'; Q' \} }
  \and
\iflong
  \Rule{Q-Weaken}{ \Delta \!\vdash_h\! \{ \Gamma_0;Q_0 \}~S~\{ \Gamma_0';Q_0' \} \enskip \Gamma \models \Gamma_0 \enskip \Gamma_0' \models \Gamma' \enskip \Gamma \models Q \sqsupseteq Q_0 \enskip \Gamma_0' \models Q_0' \sqsupseteq Q'  }{ \Delta \vdash_h \{\Gamma;Q \}~S~\{\Gamma';Q'\} }
\fi
\end{mathpar}
\caption{Selected inference rules of the derivation system.}
\label{Fi:InferenceRules}
\vspace{-8pt}
\end{figure}

\cref{Fi:InferenceRules} presents some of the inference rules.
The rule \textsc{(Q-Tick)} is the only rule that deals with costs in a program.
%
%
To accumulate the moments of the cost, we use the $\otimes$ operation in the moment semiring $\calM_{\calP\calI}^{(m)}$.
The rule \textsc{(Q-Sample)} accounts for sampling statements.
Because ``$x \!\sim\! D$'' randomly assigns a value to $x$ in the support of distribution $D$, we quantify $x$ out universally from the logical context.
To compute $Q = \expe_{x \sim \mu_D}[Q']$, where $x$ is drawn from distribution $D$, we assume the moments for $D$ are well-defined and computable,
and substitute $x^i$, $i \in \bbN$ with the corresponding moments in $Q'$.
We make this assumption because every component of $Q'$ is a polynomial over program variables.
For example, if $D = \kw{uniform}({-1},2)$, we know the following facts
\[
\expe_{x \sim \mu_D}[x^0] \!=\! 1, \expe_{x \sim \mu_D}[x^1] \!=\! \sfrac{1}{2},\expe_{x \sim \mu_D}[x^2] \!=\! 1,\expe_{x \sim \mu_D}[x^3] \!=\! \sfrac{5}{4}.
\]
Then for $Q' = \langle [1,1], [1+x^2,xy^2+x^3y] \rangle$, by the linearity of expectations, we compute $Q=\expe_{x \sim \mu_D}[Q']$ as follows:
\begin{center}\small
\vspace{-8pt}
$
\begin{aligned}
  \expe_{x \sim \mu_D}[Q'] & = \langle [1,1], [\expe_{x \sim \mu_D}[1+x^2], \expe_{x \sim \mu_D}[xy^2+x^3y] \rangle \\[-3pt]
  & = \langle [1,1], [1 + \expe_{x \sim \mu_D}[x^2], y^2 \expe_{x \sim \mu_D}[x] + y \expe_{x \sim \mu_D}[x^3]] \rangle \\[-3pt]
  & = \langle [1,1], [2, \sfrac{1}{2} \cdot  y^2+\sfrac{5}{4} \cdot y] \rangle. 
\end{aligned}
$
\end{center}
The other probabilistic rule \textsc{(Q-Prob)} deals with probabilistic branching.
Intuitively, if the moments of the
execution of $S_1$ and $S_2$ are $q_1$ and $q_2$, respectively, and those of the accumulated cost of the computation after the branch statement is bounded by $\phi_{Q'}$,
then the moments for the whole computation should be bounded by a ``weighted average'' of $(q_1 \otimes \phi_{Q'})$ and $(q_2 \otimes \phi_{Q'})$, with respect to the branching probability $p$.
We implement the weighted average by the combination operator $\oplus$ applied to
$
\tuple{ [p,p],[0,0],\cdots,[0,0] } \otimes q_1 \otimes \phi_{Q'}
$
and
$\tuple{ [1-p,1-p],[0,0],\cdots,[0,0] } \otimes q_2 \otimes \phi_{Q'}$,
because the $0$-th moments denote probabilities.

The rules \textsc{(Q-Call-Poly)} and \textsc{(Q-Call-Mono)} handle function calls.
Recall that in \cref{Exa:MomentPolymorphism}, we use the $\oplus$ operator to combine multiple potential functions for a function to reason about recursive function calls.
The restriction parameter $h$ is used to ensure that the derivation system only needs to reason about \emph{finitely} many post-annotations for each call site.
In rule \textsc{(Q-Call-Poly)}, where $h$ is smaller than the target moment $m$,
we fetch the pre- and post-condition $Q_1,Q_1'$ for the function $f$ from the specification context $\Delta_h$.
We then combine it with a \emph{frame} of $(h+1)$-restricted potential annotations $Q_2,Q_2'$ for the function $f$.
%
%
The frame is used to account for the interval bounds on the moments for the computation after the function call for most non-tail-recursive programs.
When $h$ reaches the target moment $m$, we use the rule \textsc{(Q-Call-Mono)} to reason \emph{moment-monomorphically}, because setting $h$ to $m+1$ implies that the frame can only be $\tuple{[0,0],[0,0],\cdots,[0,0]}$.

\iflong
The structural rule \textsc{(Q-Weaken)} is used to strengthen the pre-condition and relax the post-condition.
The entailment relation $\Gamma \models \Gamma'$ states that the logical implication $\Gamma \implies \Gamma'$ is valid.
In terms of the bounds on higher moments for cost accumulators, if the triple $\{\cdot;Q\}~S~\{\cdot;Q'\}$ is valid, then we can safely widen the intervals in the pre-condition $Q$ and narrow the intervals in the post-condition $Q'$.
\fi

\begin{figure}
  \centering
  \begin{small}
  \begin{pseudo}
    \kw{func} $\mathsf{rdwalk}$() \kw{begin} \\+
      ${\color{ACMDarkBlue} \{ \; x \!<\! d \!+\! 2; \; \langle [1,1], \; [2(d \!-\! x), 2(d\!-\!x)\!+\!4], }$ \\
      ${\color{ACMDarkBlue} \enskip [4(d \!-\! x)^2 + 6(d \!-\! x) \!-\! 4, 4(d\!-\!x)^2\!+\!22(d\!-\!x)\!+\!28] \rangle  \; \} }$ \\
      \kw{if} $\cbin{<}{x}{d}$ \kw{then} \\+
        ${\color{ACMDarkBlue} \{ \;  x  \!<\!  d;  \; \langle [1,1], \; [2(d \!-\! x), 2(d\!-\!x)\!+\!4], }$ \\
        ${\color{ACMDarkBlue} \enskip [4(d \!-\! x)^2 \!+\! 6(d\!-\! x) \!-\! 4, 4(d\!-\!x)^2\!+\!22(d\!-\!x)\!+\!28 ] \rangle \; \}  }$ \\
        $\isample{t}{ \kw{uniform}({-1},2)}$; \\
        ${\color{ACMDarkBlue} \{ \; x  \!<\!  d  \!\wedge\!  t  \!\le\!  2 ; \; \langle [1,1], \; [2(d \!-\! x \!-\! t) \!+\! 1, 2(d\!-\!x\!-\!t)\!+\!5], }$ \\
        ${\color{ACMDarkBlue} \enskip [4(d \!-\! x \!-\! t)^2 + 10(d \!-\! x \!-\! t) \!-\! 3, 4(d\!-\!x\!-\!t)^2\!+\!26(d\!-\!x\!-\!t)\!+\!37) ] \rangle \;  \} }$ \\
        $ \iassign{x}{  \eadd{x}{t} }$; \\
        ${\color{ACMDarkBlue} \{ \; x  \!<\!  d  \!+\!  2; \; \langle [1,1], \; [2(d \!-\! x) \!+\! 1, 2(d\!-\!x)\!+\!5], }$ \\
        ${\color{ACMDarkBlue} \enskip [4(d \!-\! x)^2 \!+\! 10(d \!-\! x) \!-\! 3, 4(d\!-\!x)^2\!+\!26(d\!-\!x)\!+\!37 ] \rangle \; \} }$ \\
        \kw{call} $\mathsf{rdwalk}$; \\
        ${\color{ACMDarkBlue} \{ \; \top; \; \tuple{ [1,1], \; [1,1], \; [1,1] } \; \} }$ \\
        \kw{tick}(1) \\
        ${\color{ACMDarkBlue} \{ \; \top; \; \tuple{[1,1],\;[0,0],\;[0,0]} \; \} } $ \\-
      \kw{fi} \\-
    \kw{end}
  \end{pseudo}
  \end{small}
  \caption{The $\mathsf{rdwalk}$ function with annotations for the interval-bounds on the first and second moments.}
  \label{Fi:AnnotatedRunningExample}
  \vspace{-10pt}
\end{figure}

\begin{example}\label{Exa:Derivation}
\cref{Fi:AnnotatedRunningExample} presents the logical context and the complete potential annotation for the first and second moments for the cost accumulator $\id{tick}$ of the $\mathsf{rdwalk}$ function from \cref{Exa:RecursiveRandomWalk}.
Similar to the reasoning in \cref{Exa:MomentPolymorphism}, we can justify the derivation using moment-polymorphic recursion and the moment bounds for $\m{rdwalk}$ with post-annotations
$\tuple{[0,0],[1,1],[1,1]}$ and $\tuple{[0,0],[0,0],[2,2]}$.
%
\end{example}

\subsection{Automatic Linear-Constraint Generation}
\label{Se:Automation}

We adapt existing techniques~\cite{CAV:CHR17,PLDI:NCH18} to automate our inference system by
(i) using an abstract interpreter to infer logical contexts,
(ii) generating templates and linear constraints by inductively applying the derivation rules to the analyzed program, and
(iii) employing an off-the-shelf LP solver to discharge the linear constraints.
During the generation phase, the coefficients of monomials in the polynomials from the ends of the intervals in every qualitative context $Q \in \calM_{\calP\calI}^{(m)}$ are recorded as symbolic names, and the inequalities among those coefficients---derived from the inference rules in \cref{Fi:InferenceRules}---are emitted to the LP solver.

\iflong

\paragraph{Generating linear constraints}
\cref{Fi:LPGeneration} demonstrates the generation process for some of the bounds in \cref{Fi:AnnotatedRecursiveRandomWalk}.
Let $B_k$ be a vector of \emph{monomials} over program variables $\mathsf{VID}$ of degree up to $k$.
Then a polynomial $\sum_{b \in B_k} q_b \cdot b$, where $q_b \in \bbR$ for all $b \in B_k$, can be represented as a vector of its coefficients $(q_b)_{b \in B_k}$.
We denote coefficient vectors by uppercase letters, while we use lowercase letters as names of the coefficients.
We also assume that the degree of the polynomials for the $k$-th moments is up to $k$.

\begin{figure*}
\begin{mathpar}\small
  \inferrule*[right=\textsc{(Q-Tick)}]
  { p^{tk}_1 = u^{tk}_1 \\
    q^{tk}_1 = u^{tk}_1 + v^{tk}_1 \\
    q^{tk}_x = v^{tk}_x \\
    q^{tk}_N = v^{tk}_N \\
    q^{tk}_r = v^{tk}_r \\\\
    t^{tk}_1 = w^{tk}_1 + 2v^{tk}_1 + u^{tk}_1 \\
    t^{tk}_x = w^{tk}_x + 2v^{tk}_x \\
    t^{tk}_{x^2} = w^{tk}_{x^2} \\
    \cdots
  }
  { \Delta \vdash \{ \top; (P^{tk},Q^{tk},T^{tk}) \}~\itick{1}~\{\top; (U^{tk},V^{tk},W^{tk}) \} }
  \and
  \inferrule*[right=\textsc{(Q-Sample)}]
  { p^{sa}_1 = u^{sa}_1 \\
    q^{sa}_1 = v^{sa}_1 + \sfrac{1}{2} \cdot v^{sa}_r \\
    q^{sa}_x = v^{sa}_x \\
    q^{sa}_N = v^{sa}_N \\
    q^{sa}_r = 0 \\\\
    t^{sa}_1 = w^{sa}_1 + \sfrac{1}{2} \cdot w^{sa}_r + 1 \cdot w^{sa}_{r^2} \\
    t^{sa}_x = w^{sa}_x + \sfrac{1}{2} \cdot w^{sa}_{r \cdot x} \\
    t^{sa}_{x^2} = w^{sa}_{x^2} \\
    \cdots
  }
  { \Delta \vdash \{ x<N; (P^{sa},Q^{sa},T^{sa}) \}~\isample{r}{\kw{uniform}({-1},2) }~\{x < N \wedge r \le 2; (U^{sa},V^{sa},W^{sa}) \} }
\end{mathpar}
\caption{Generate linear constraints, guided by inference rules.}
\label{Fi:LPGeneration}
\end{figure*}

For \textsc{(Q-Tick)}, we generate constraints that correspond to the composition operation $\otimes$ of the moment semiring.
For example, the second-moment component should satisfy
\begin{align*}
\sum_{b \in B_2} t^{tk}_b \cdot b & = \text{the second-moment component of}~ \big( (1,1,1) \\
& \otimes (\sum_{b \in B_0} u^{tk}_b \cdot b,\sum_{b \in B_1} v^{tk}_b \cdot b,\sum_{b \in B_2} w^{tk}_b \cdot b) \big) \\
& = \sum_{b \in B_2} w^{tk}_b \cdot b + 2 \cdot \sum_{b \in B_1} v^{tk}_b \cdot b + \sum_{b \in B_0} u^{tk}_b \cdot b.
\end{align*}
Then we extract $t^{tk}_1 = w^{tk}_1 + 2v^{tk}_1 + u^{tk}_1$ for $b=1$, and $t^{tk}_x = w^{tk}_x + 2v^{tk}_x$ for $b=x$, etc.
For \textsc{(Q-Sample)}, we generate constraints to perform ``partial evaluation'' on the polynomials by substituting $r$ with the moments of $\kw{uniform}({-1},2)$.
As we discussed in \cref{Se:InferenceRules}, let $D$ denote $\kw{uniform}({-1},2)$, then $\expe_{r\sim D}[w^{sa}_r \cdot r] = w^{sa}_r \cdot \expe_{r \sim D}[r] = \sfrac{1}{2} \cdot w^{sa}_r$, $\expe_{r \sim D}[w^{sa}_{r^2} \cdot r^2] = w^{sa}_{r^2} \cdot \expe_{r \sim D}[r^2] = 1 \cdot w^{sa}_{r^2}$.
Then we generate a constraint $t^{sa}_1 = w^{sa}_1 + \sfrac{1}{2} \cdot w^{sa}_r + 1 \cdot w^{sa}_{r^2}$ for $t^{sa}_1$.

The loop rule \textsc{(Q-Loop)} involves constructing loop \emph{invariants} $Q$, which is in general a non-trivial problem for automated static analysis.
Instead of computing the loop invariant $Q$ explicitly, our system represents $Q$ directly as a template with unknown coefficients,
then uses $Q$ as the post-annotation to analyze the loop body and obtain a pre-annotation,
and finally generates linear constraints that indicate the pre-annotation equals to $Q$.

The structural rule \textsc{(Q-Weaken)} can be applied at any point during the derivation.
In our implementation, we apply it where the control flow has a branch, because different branches might have different costs.
To handle the judgment $\Gamma \models Q \sqsupseteq Q'$, i.e., to generate constraints that ensure one interval is always contained in another interval, where the ends of the intervals are polynomials,
we adapt the idea of \emph{rewrite functions} \cite{PLDI:NCH18,CAV:CHR17}.
Intuitively, to ensure that $[L_1,U_1] \sqsupseteq [L_2,U_2]$, i.e., $L_1 \le L_2$ and $U_2 \le U_1$, under the logical context $\Gamma$, we generate constraints indicating that there exist two polynomials $T_1,T_2$ that are always nonnegative under $\Gamma$, such that $L_1=L_2+T_1$ and $U_1=U_2-T_2$.
Here, $T_1$ and $T_2$ are like slack variables, except that because all quantities are polynomials, they are too (i.e., slack polynomials).
In our implementation, $\Gamma$ is a set of linear constraints over program variables of the form $\calE \ge 0$ , then we can represent $T_1,T_2$ by \emph{conical} combinations (i.e., linear combinations with nonnegative scalars) of expressions $\calE$ in $\Gamma$.

\else

\begin{example}\label{Exa:LinearConstraintGeneration}
  We demonstrate linear-constraint generation for the upper bound on the first moment for the sampling statement $\isample{x}{\kw{uniform}({-1},2)}$ with a pre-annotation $\tuple{[0,0], [0,q_{x^2} \cdot x^2 + q_{x} \cdot x + q_{1} \cdot 1]}$ and a post-annotation $\tuple{[0,0], [0,q_{x^2}' \cdot x^2 + q_{x}' \cdot x + q_{1}' \cdot 1]}$,
  where we use polynomials of $x$ up to degree 2 as the templates, and $q_{x^2},q_x,q_1,q_{x^2}',q_x',q_1'$ are unknown numeric coefficients.
  By \textsc{(Q-Sample)}, we generate constraints to perform ``partial evaluation'' on the polynomials by substituting $x$ with the moments of $\kw{uniform}({-1},2)$.
  Let $D$ denote $\kw{uniform}({-1},2)$. Because
  \[
  \expe_{x \sim \mu_D}[q_{x^2}' \cdot x^2 + q_{x}' \cdot x + q_{1}' \cdot 1] = (q_{x^2}' \cdot 1 + q_{x}' \cdot \sfrac{1}{2} + q_{1}' \cdot 1),
  \]
  we generate these linear constraints:
  \[
  q_{x^2}=0, \quad q_x=0, \quad q_1 = q'_{x^2} + q'_x \cdot \sfrac{1}{2} + q_1'.
  \]
\end{example}

\revision{2}
\begin{changebar}
Constraint generation for other inference rules is similar to the process we describe in \cref{Exa:LinearConstraintGeneration}.
For example, let us consider the loop rule \textsc{(Q-Loop)}.
Instead of computing the loop invariant $Q$ explicitly, our system represents $Q$ directly as a template with unknown coefficients,
then uses $Q$ as the post-annotation to analyze the loop body and obtain a pre-annotation,
and finally generates linear constraints that indicate that the pre-annotation equals $Q$.
\end{changebar}
Details of the linear-constraint generation of our system are included in the technical report~\cite{Techreport}.

\fi

\iflong
\paragraph{Solving linear constraints}
\fi
The LP solver not only finds assignments to the coefficients that satisfy the constraints, it can also optimize a linear objective function.
In our central-moment analysis, we construct an objective function that tries to minimize imprecision.
For example, let us consider upper bounds on the variance.
We randomly pick a concrete valuation of program variables that satisfies the pre-condition (e.g., $d>0$ in \cref{Fi:RecursiveRandomWalk}), and then substitute program variables with the concrete valuation in the polynomial for the upper bound on the variance (obtained from bounds on the raw moments).
The resulting linear combination of coefficients, which we set as the objective function, stands for the variance under the concrete valuation.
Thus, minimizing the objective function produces the most precise upper bound on the variance under the specific concrete valuation.
Also, we can extract a \emph{symbolic} upper bound on the variance using the assignments to the coefficients.
Because the derivation of the bounds only uses the given pre-condition, the symbolic bounds apply to all valuations\Omit{of program variables} that satisfy the pre-condition.



\section{Soundness of Higher-Moment Analysis}
\label{Se:SemanticOptionalStopping}

%
%
%
%
%
%
%
%
%
%

%
In this section, we study the soundness of our derivation system for higher-moment analysis.
We first present a Markov-chain semantics for the probabilistic programming language \lang{} to reason about how \emph{stepwise} costs contribute to the \emph{global} accumulated cost (\cref{Se:ExpectedCostBoundAnalysis}).
We then formulate higher-moment analysis with respect to the semantics and prove the soundness of our derivation system for higher-moment analysis based on a recent extension to the \emph{Optional Stopping Theorem} (\cref{Se:AnExtendedOST}).
%
%
%
Finally, we sketch the algorithmic approach for ensuring the soundness of our analysis (\cref{Se:AutomaticTerminationCheck}).
%

\subsection{A Markov-Chain Semantics}
\label{Se:ExpectedCostBoundAnalysis}

\paragraph{Operational semantics}
We start with a small-step operational semantics with continuations, which we will use later to construct the Markov-chain semantics.
We follow a distribution-based approach~\cite{ICFP:BLG16,JCSS:Kozen81}
to define an operational cost semantics for \lang{}.
Full details of the semantics are included in
\iflong
\Cref{Se:OperationalCostSemantics}.
\else
the technical report~\cite{Techreport}.
\fi
A \emph{program configuration} $\sigma \in \Sigma$ is a quadruple $\tuple{\gamma,S,K,\alpha}$ where $\gamma : \mathsf{VID} \to \bbR$ is a program state that maps variables to values, $S$ is the statement being executed, $K$ is a continuation that describes what remains to be done after the execution of $S$, and $\alpha \in \bbR$ is the global cost accumulator.
%
%
%
%
An execution of an \lang{} program $\tuple{\scrD,S_\mathsf{main}}$ is initialized with $\tuple{\lambda\_. 0, S_{\mathsf{main}}, \kstop,0}$, and the termination configurations have the form $\tuple{\_,\iskip,\kstop,\_}$,
where $\kstop$ is an empty continuation.  

Different from a standard semantics where each program configuration steps to at most one new configuration, a probabilistic semantics may pick several different new configurations.
The evaluation relation for \lang{} has the form $\sigma \mapsto \mu$ where $\mu \in \bbD(\Sigma)$ is a probability measure over configurations.
Below are two example rules.
The rule \textsc{(E-Prob)} constructs a distribution whose support has exactly two elements, which stand for the two branches of the probabilistic choice.
We write $\delta(\sigma)$ for the \emph{Dirac measure} at $\sigma$, defined as $\lambda A.[\sigma\in A]$ where $A$ is a measurable subset of $\Sigma$.
We also write $p \cdot \mu_1 + (1-p)\cdot \mu_2$ for a convex combination of measures $\mu_1$ and $\mu_2$ where $p \in [0,1]$, defined as $\lambda A. p \cdot \mu_1(A) + (1-p) \cdot \mu_2(A)$.
The rule \textsc{(E-Sample)} ``pushes'' the probability distribution of $D$ to a distribution over post-sampling program configurations.
\begin{mathpar}\small
  \inferrule*[right=(E-Prob)]{S=\iprob{p}{S_1}{S_2} }{ \tuple{\gamma, S, K,\alpha } \mapsto p \cdot \delta(\tuple{\gamma,S_1,K,\alpha}) + (1-p) \cdot \delta(\tuple{\gamma,S_2,K,\alpha}) }
  \and
  \inferrule*[right=(E-Sample)]{ }{ \tuple{\gamma, x\!\sim\!D, K,\alpha} \mapsto \lambda A. \mu_D(\{ r \mid \tuple{\gamma[x \!\mapsto\! r],\iskip,K,\alpha} \in A\}) }  
\end{mathpar}

\begin{example}\label{Exa:EvaluationRelation}
  Suppose that a random sampling statement is being executed, i.e., the current configuration is
  \begin{center}\small
  $\tuple{\{ t \mapsto t_0 \}, (\isample{t}{\kw{uniform}(-1,2)}),K_0,\alpha_0}.$
  \end{center}
  The probability measure for the uniform distribution is $\lambda O. \int_O \frac{[-1 \le x \le 2]}{3}   dx$.
  \revision{7}
  \begin{changebar}
  Thus, by the rule \textsc{(E-Sample)}, we derive the post-sampling probability measure over configurations:
  \begin{center}\small
  $
  \lambda A. \int_{\bbR} [\tuple{ \{ t \mapsto r \}, \iskip, K_0,\alpha_0  } \in A] \cdot \frac{[-1 \le r \le 2]}{3} dr.
  $
  \end{center}
  \end{changebar}
\end{example}

\paragraph{A Markov-chain semantics}
In this work, we harness Markov-chain-based reasoning~\cite{ESOP:KKM16,LICS:OKK16} to develop a Markov-chain cost semantics for \lang{}, based on the evaluation relation $\sigma \mapsto \mu$.
An advantage of this approach is that it allows us to study how the cost of every single evaluation step contributes to the accumulated cost at the exit of the program.
Details of this semantics are included in 
\iflong
\Cref{Se:TraceBasedCostSemantics}.
\else
the technical report~\cite{Techreport}.
\fi

Let $(\Omega,\calF,\prob)$ be the probability space where $\Omega \defeq \Sigma^{\bbZ^+}$ is the set of all \emph{infinite} traces over program configurations, $\calF$ is a $\sigma$-algebra on $\Omega$, and $\prob$ is a probability measure on $(\Omega,\calF)$ obtained by the evaluation relation $\sigma \mapsto \mu$ and the initial configuration $\tuple{\lambda\_.0, S_{\mathsf{main}},\kstop,0}$.
Intuitively, $\prob$ specifies the probability distribution over all possible executions of a probabilistic program.
The probability of an assertion $\theta$ with respect to $\prob$, written $\prob[\theta]$, is defined as $\prob(\{ \omega \mid \theta(\omega)~\text{is true} \})$.

To formulate the accumulated cost at the exit of the program, we define the \emph{stopping time} $T:\Omega \to \bbZ^+ \cup \{\infty\}$ of a probabilistic program as a random variable on the probability space $(\Omega,\calF,\prob)$ of program traces:
\[
T(\omega) \defeq \inf \{ n \in \bbZ^+ \mid \omega_n = \tuple{\_,\iskip,\kstop,\_} \},
\]
i.e., $T(\omega)$ is the number of evaluation steps before the trace $\omega$ reaches some termination configuration $\tuple{\_,\iskip,\kstop,\_}$.
We define the accumulated cost $A_T : \Omega \to \bbR$ with respect to the stopping time $T$ as
\[
A_T(\omega) \defeq A_{T(\omega)}(\omega),
\]
where $A_n : \Omega \to \bbR$ captures the accumulated cost at the $n$-th evaluation step for $n \in \bbZ^+$, which is defined as
\[
A_n(\omega) \defeq \alpha_n ~\text{where}~\omega_n =\tuple{\_,\_,\_,\alpha_n}.
\]
%
The $m$-th moment of the accumulated cost is given by the expectation $\expe[A_T^m]$ with respect to $\prob$.

\subsection{Soundness of the Derivation System}
\label{Se:AnExtendedOST}

%
Proofs for this section are included in 
\iflong
\Cref{Se:MarkovChainSemantics,Se:ProofHigherMoments,Se:SoundnessProof}.
\else
the technical report~\cite{Techreport}.
\fi

\paragraph{The expected-potential method for moment analysis}
We fix a degree $m \in \bbN$ and let $\calM_\calI^{(m)}$ be the $m$-th order moment semiring instantiated with the interval semiring $\calI$.
We now define $\calM_\calI^{(m)}$-valued expected-potential functions.

\begin{definition}\label{De:IntervalRankingFunction}
  A measurable map $\phi : \Sigma \to \calM_\calI^{(m)}$ is said to be an \emph{expected-potential function} if
  \begin{enumerate}[(i)]
    \item $\phi(\sigma) = \aone$ if $\sigma=\tuple{\_,\iskip,\kstop,\_}$, and
    \item $\phi(\sigma) \sqsupseteq \expe_{\sigma' \sim {\mapsto}(\sigma)}[\many{\big([(\alpha'-\alpha)^k,(\alpha'-\alpha)^k]\big)_{0 \le k \le m}} \otimes \phi(\sigma')]$ where $\sigma=\tuple{\_,\_,\_,\alpha},\sigma'=\tuple{\_,\_,\_,\alpha'}$ for all $\sigma \in \Sigma$.
  \end{enumerate}
\end{definition}

Intuitively, $\phi(\sigma)$ is an interval bound on the moments for the accumulated cost of the computation that \emph{continues from} the configuration $\sigma$.
%
We define $\mathbf{\Phi}_n$ and $\mathbf{Y}_n$, where $n \in \bbZ^+$, to be $\calM_\calI^{(m)}$-valued random variables on the probability space $(\Omega,\calF,\prob)$ of the Markov-chain semantics as
\[
  \mathbf{\Phi}_n(\omega) \defeq \phi(\omega_n), 
  \mathbf{Y}_n(\omega) \defeq \many{ \tuple{ [A_n(\omega)^k, A_n(\omega)^k] }_{0 \le k \le m} } \otimes \mathbf{\Phi}_n(\omega)  .
\]
In the definition of $\mathbf{Y}_n$, we use $\otimes$ to compose the powers of the accumulated cost at step $n$ and the expected potential function that stands for the moments of the accumulated cost for the rest of the computation.

\begin{lemma}\label{Lem:MomentInvariant}
  By the properties of potential functions, we can prove that $\expe[\mathbf{Y}_{n+1} \mid \mathbf{Y}_n] \sqsubseteq \mathbf{Y}_n$ almost surely, for all $n \in \bbZ^+$.
\end{lemma}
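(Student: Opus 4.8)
The plan is to condition on the program configuration at step $n$, exploit the Markov property of the trace semantics to reduce the conditional expectation to a single one-step expectation, and then invoke the defining inequality~(ii) of \cref{De:IntervalRankingFunction}. Let $\calF_n$ be the $\sigma$-algebra generated by the first $n+1$ configurations $\omega_0,\dots,\omega_n$; both $A_n$ and $\mathbf{\Phi}_n$, and hence $\mathbf{Y}_n$, are $\calF_n$-measurable. I would first prove the stronger statement $\expe[\mathbf{Y}_{n+1}\mid \calF_n]\sqsubseteq \mathbf{Y}_n$ almost surely, from which the claimed $\expe[\mathbf{Y}_{n+1}\mid \mathbf{Y}_n]\sqsubseteq \mathbf{Y}_n$ follows by the tower property, using that conditional expectation is monotone for the pointwise interval order $\sqsubseteq$ (interval inclusion is a conjunction of two scalar inequalities on the endpoints, each preserved by conditional expectation).

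The heart of the argument is the one-step estimate. Fixing $\omega_n=\sigma=\tuple{\_,\_,\_,\alpha}$, the construction of $\prob$ from the evaluation relation $\sigma\mapsto\mu$ makes the conditional law of $\omega_{n+1}$ given $\calF_n$ exactly ${\mapsto}(\sigma)$; writing $\sigma'=\tuple{\_,\_,\_,\alpha'}$ for the next configuration we have $A_{n+1}=\alpha'$ and
\[
\expe[\mathbf{Y}_{n+1}\mid\calF_n]=\expe_{\sigma'\sim{\mapsto}(\sigma)}\!\big[\many{\tuple{[(\alpha')^k,(\alpha')^k]}_{0\le k\le m}}\otimes\phi(\sigma')\big].
\]
I would then apply \cref{Lem:EisnerBasicProperty} with $u=\alpha$ and $v=\alpha'-\alpha$ (each embedded as a degenerate interval, so interval arithmetic coincides with scalar arithmetic) to factor $\many{\tuple{[(\alpha')^k,(\alpha')^k]}_{0\le k\le m}}=\many{\tuple{[\alpha^k,\alpha^k]}_{0\le k\le m}}\otimes\many{\tuple{[(\alpha'-\alpha)^k,(\alpha'-\alpha)^k]}_{0\le k\le m}}$, and use associativity of $\otimes$ to rewrite the integrand as $\many{\tuple{[\alpha^k,\alpha^k]}_{0\le k\le m}}\otimes\big(\many{\tuple{[(\alpha'-\alpha)^k,(\alpha'-\alpha)^k]}_{0\le k\le m}}\otimes\phi(\sigma')\big)$. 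Since $\alpha$ is determined by the conditioning, the degenerate factor is constant and can be pulled out of the expectation; applying inequality~(ii) to the remaining inner expectation and then monotonicity of $\otimes$ yields
\[
\expe[\mathbf{Y}_{n+1}\mid\calF_n]=\many{\tuple{[\alpha^k,\alpha^k]}_{0\le k\le m}}\otimes\expe_{\sigma'}[\cdots]\sqsubseteq\many{\tuple{[\alpha^k,\alpha^k]}_{0\le k\le m}}\otimes\phi(\sigma)=\mathbf{Y}_n.
\]

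The two facts used in the last display are where I expect the real work to lie, and both are delicate in the interval moment semiring precisely because costs may be negative (non-monotone): (a) that a constant degenerate-interval factor commutes with expectation, i.e. $\expe[W\otimes\mathbf{Z}]=W\otimes\expe[\mathbf{Z}]$, and (b) that $\otimes$ is monotone for $\sqsubseteq$, i.e. $\mathbf{Z}\sqsubseteq\mathbf{Z}'$ implies $W\otimes\mathbf{Z}\sqsubseteq W\otimes\mathbf{Z}'$. Via the definition of $\otimes$ in \cref{De:MomSemiring} as a nonnegative-integer combination of products in $\calI$, both reduce to the corresponding endpoint-level statements for interval multiplication by a possibly negative scalar; I would discharge them by a short case split on the sign of $\alpha^k$ (equivalently $\alpha$), observing that when the scalar is negative, multiplication swaps the two interval endpoints yet still preserves inclusion and still commutes with the (linear) endpointwise expectation. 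A secondary obligation is integrability of $\mathbf{Y}_{n+1}$, so that the conditional expectations are well defined; this follows from measurability of $\phi$ together with the standing assumption that the relevant moments are finite.
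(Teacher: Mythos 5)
Your proposal is correct and follows essentially the same route as the paper's proof: it factors $A_{n+1}$ as $A_n\otimes\many{\tuple{[(\alpha_{n+1}-\alpha_n)^k,(\alpha_{n+1}-\alpha_n)^k]}_{0\le k\le m}}$ via \cref{Lem:EisnerBasicProperty}, pulls the $\calF_n$-measurable degenerate-interval factor out of the conditional expectation (the paper's \cref{Lem:CondExpeMoveOutside}, proved by the same sign case-split you describe), applies condition~(ii) of \cref{De:IntervalRankingFunction}, and concludes by monotonicity of $\otimes$ (\cref{Lem:MomentMonoidMonotone} with \cref{Lem:IntervalSemiringMonotone}). The only addition is your explicit tower-property step relating conditioning on $\calF_n$ to conditioning on $\mathbf{Y}_n$, which the paper leaves implicit.
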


We call $\{\mathbf{Y}_n\}_{n \in \bbZ^+}$ a \emph{moment invariant}.
Our goal is to establish that $\expe[\mathbf{Y}_T] \sqsubseteq \expe[\mathbf{Y}_0]$, i.e., the initial interval-valued potential $\expe[\mathbf{Y}_0] = \expe[\aone \otimes \mathbf{\Phi}_0] = \expe[\mathbf{\Phi}_0]$ brackets the higher moments of the accumulated cost $\expe[\mathbf{Y}_T] = \expe[\many{ \tuple{ [A_T^k, A_T^k] }_{0 \le k \le m}  } \otimes \aone] = \many{\tuple{ [\expe[A_T^k], \expe[A_T^k] ]  }_{0 \le k \le m} }$.
%

\paragraph{Soundness}
The soundness of the derivation system is proved with respect to the Markov-chain semantics.
%
Let $\norm{\many{\tuple{[a_k,b_k]}_{0 \le k \le m} }}_\infty \defeq \max_{0 \le k \le m} \{ \max\{ |a_k|,|b_k| \} \} $.

\begin{theorem}\label{The:Soundness}
  Let $\tuple{\scrD,S_\mathsf{main}}$ be a probabilistic program.
  Suppose $\Delta \vdash \{\Gamma;Q\}~S_{\mathsf{main}}~\{\Gamma';\aone\}$, where $Q \in \calM_{\calP\calI}^{(m)}$ and the ends of the $k$-th interval in $Q$ are polynomials in $\bbR_{kd}[\mathsf{VID}]$.
  Let $\{\mathbf{Y}_n\}_{n \in \bbZ^+}$ be the moment invariant extracted from the Markov-chain semantics with respect to the derivation of $\Delta \vdash \{\Gamma;Q\}~S_{\mathsf{main}}~\{\Gamma';\aone\}$.
  If the following conditions hold:
  \begin{enumerate}[(i)]
    \item\label{Item:SoundnessTermination} $\expe[T^{md}] < \infty$, and
    \item\label{Item:SoundnessBounded} there exists $C \ge 0$ such that for all $n \in \bbZ^+$, $\norm{\mathbf{Y}_n}_\infty \le C \cdot (n+1)^{md}$ almost surely,
  \end{enumerate}
  Then
  $\many{\tuple{[\expe[A_T^k], \expe[A_T^k] ]}_{0 \le k \le m}} \aord \phi_Q(\lambda\_.0)$.
\end{theorem}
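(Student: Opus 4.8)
The plan is to recognize the desired conclusion as an instance of optional stopping applied to the moment invariant $\{\mathbf{Y}_n\}$, whose supermartingale property (in the interval order) is already furnished by \cref{Lem:MomentInvariant}. First I would decompose the interval-valued statement into two scalar ones. Writing the $k$-th component of $\mathbf{Y}_n$ as $[\mathbf{Y}_n^{k,\mathsf{L}},\mathbf{Y}_n^{k,\mathsf{U}}]$, the relation $\expe[\mathbf{Y}_{n+1}\mid\mathbf{Y}_n]\sqsubseteq\mathbf{Y}_n$ unfolds, for each $k$, into $\expe[\mathbf{Y}_{n+1}^{k,\mathsf{U}}\mid\mathbf{Y}_n]\le\mathbf{Y}_n^{k,\mathsf{U}}$ and $\expe[\mathbf{Y}_{n+1}^{k,\mathsf{L}}\mid\mathbf{Y}_n]\ge\mathbf{Y}_n^{k,\mathsf{L}}$; that is, each upper-endpoint process $\{\mathbf{Y}_n^{k,\mathsf{U}}\}$ is a supermartingale and each lower-endpoint process $\{\mathbf{Y}_n^{k,\mathsf{L}}\}$ is a submartingale with respect to the natural filtration of the Markov-chain semantics.

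Next I would run the standard optional-stopping argument on each scalar endpoint process, using the two hypotheses exactly where they are needed. Fix an upper-endpoint supermartingale $X_n \defeq \mathbf{Y}_n^{k,\mathsf{U}}$. Optional stopping at the bounded stopping time $n\wedge T$ is elementary and yields $\expe[X_{n\wedge T}]\le\expe[X_0]$ for every $n$. Condition~(\ref{Item:SoundnessTermination}) gives $\expe[T^{md}]<\infty$, hence $T<\infty$ almost surely, so $X_{n\wedge T}\to X_T$ pointwise. To exchange limit and expectation I would dominate: by condition~(\ref{Item:SoundnessBounded}), $|X_{n\wedge T}|\le\norm{\mathbf{Y}_{n\wedge T}}_\infty\le C\,(n\wedge T+1)^{md}\le C\,(T+1)^{md}$, and the bound $Z\defeq C\,(T+1)^{md}$ is integrable because expanding $(T+1)^{md}$ by the binomial theorem and using $\expe[T^{j}]\le 1+\expe[T^{md}]<\infty$ for every $j\le md$ shows $\expe[Z]<\infty$. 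Dominated convergence then gives $\expe[X_T]=\lim_n\expe[X_{n\wedge T}]\le\expe[X_0]$. The same argument applied to $\{-\mathbf{Y}_n^{k,\mathsf{L}}\}$ (again a supermartingale) yields $\expe[\mathbf{Y}_T^{k,\mathsf{L}}]\ge\expe[\mathbf{Y}_0^{k,\mathsf{L}}]$.

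Finally I would identify the two boundary values and assemble. At $n=0$ the initial configuration $\tuple{\lambda\_.0,S_{\mathsf{main}},\kstop,0}$ is deterministic with $A_0=0$, so $\mathbf{Y}_0=\many{\tuple{[0^k,0^k]}_{0\le k\le m}}\otimes\mathbf{\Phi}_0=\aone\otimes\mathbf{\Phi}_0=\phi(\omega_0)=\phi_Q(\lambda\_.0)$ (using that the moment invariant is extracted from the derivation whose precondition at $S_{\mathsf{main}}$ is $Q$); in particular $\expe[\mathbf{Y}_0]=\phi_Q(\lambda\_.0)$. At the stopping time $\omega_T$ has the form $\tuple{\_,\iskip,\kstop,\_}$, so \cref{De:IntervalRankingFunction}(i) gives $\mathbf{\Phi}_T=\aone$ and hence $\mathbf{Y}_T=\many{\tuple{[A_T^k,A_T^k]}_{0\le k\le m}}\otimes\aone=\many{\tuple{[A_T^k,A_T^k]}_{0\le k\le m}}$, whose two endpoints in component $k$ coincide and equal $A_T^k$. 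Combining the two endpoint inequalities from the previous paragraph gives, for each $k$, the chain $\mathbf{Y}_0^{k,\mathsf{L}}\le\expe[A_T^k]\le\mathbf{Y}_0^{k,\mathsf{U}}$, which is precisely $[\expe[A_T^k],\expe[A_T^k]]\sqsubseteq[\mathbf{Y}_0^{k,\mathsf{L}},\mathbf{Y}_0^{k,\mathsf{U}}]$, i.e. the claimed $\many{\tuple{[\expe[A_T^k],\expe[A_T^k]]}_{0\le k\le m}}\aord\phi_Q(\lambda\_.0)$.

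The main obstacle is the limit exchange in the second paragraph: the naive optional-stopping theorem does not apply, since the endpoint processes need be neither uniformly bounded nor have bounded increments (costs and program variables, and therefore the potential, may grow with $n$). The role of the hypotheses is precisely to license a dominated-convergence argument instead, and the delicate point is that the degree $md$ must be matched on both sides---the polynomial growth rate $(n+1)^{md}$ of the moment invariant in~(\ref{Item:SoundnessBounded}) and the moment order $md$ of the stopping time in~(\ref{Item:SoundnessTermination})---so that $Z=C(T+1)^{md}$ is simultaneously a dominator and integrable. Checking that this matching is correct, and that the interval partial order is closed under the limits taken (it is, being a conjunction of two closed scalar inequalities), is where the care lies; the underlying abstract statement is the extended Optional Stopping Theorem of the companion paper, which this proof instantiates.
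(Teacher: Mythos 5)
Your proposal is correct and follows essentially the same route as the paper: the paper proves this by instantiating its interval-valued extension of the Optional Stopping Theorem (\cref{The:IntervalOST}), whose proof likewise splits $\mathbf{Y}_n$ into real-valued endpoint components, uses the supermartingale/submartingale consequences of \cref{Lem:MomentInvariant}, dominates $|\mathbf{Y}_{n}| = |\mathbf{Y}_{\min(T,n)}| \le C\,(T+1)^{md}$ (integrable since $\expe[T^{md}]<\infty$), and passes to the limit by dominated convergence before identifying $\expe[\mathbf{Y}_0]=\phi_Q(\lambda\_.0)$ and $\mathbf{Y}_T=\many{\tuple{[A_T^k,A_T^k]}_{0\le k\le m}}$. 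You have merely inlined that lemma's argument, and your matching of the degree $md$ between the two hypotheses is exactly the point the paper's proof relies on.
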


The intuitive meaning of $\many{\tuple{[\expe[A_T^k], \expe[A_T^k]]}_{0 \le k \le m}} \aord \phi_Q(\lambda\_.0)$ is that the moment $\expe[A_T^k]$ of the accumulated cost upon program termination is bounded by the interval in the $k$\textsuperscript{th}-moment component of $\phi_Q(\lambda\_.0)$, where $Q$ is the quantitative context and $\lambda\_.0$ is the initial state.

As we discussed in \cref{Se:SoundnessCriteria} and \cref{Exa:UnsoundPotentialFunction}, the expected-potential method is \emph{not} always sound for deriving bounds on higher moments for cost accumulators in probabilistic programs.
The extra conditions \cref{The:Soundness}\ref{Item:SoundnessTermination} and \ref{Item:SoundnessBounded} impose
constraints on the analyzed program and the expected-potential function, which allow us to reduce the soundness to the \emph{optional stopping problem} from probability theory.


\paragraph{Optional stopping}
Let us represent the moment invariant $\{\mathbf{Y}_n\}_{n \in \bbZ^+}$ as
\[
\{ \tuple{[L^{(0)}_n, U^{(0)}_n], [L^{(1)}_n, U^{(1)}_n], \cdots, [L^{(m)}_n, U^{(m)}_n]  }  \}_{n \in \bbZ^+},
\]
where $L^{(k)}_n, U^{(k)}_n : \Omega \to \bbR$ are real-valued random variables on the probability space $(\Omega,\calF,\prob)$ of the Markov-chain semantics, for $n \in \bbZ^+$, $0 \le k \le m$.
We then have the observations below as direct corollaries of \cref{Lem:MomentInvariant}:
\begin{itemize}
  \item For any $k$, the sequence $\{U_n^{(k)}\}_{n \in \bbZ^+}$ satisfies $\expe[U_{n+1}^{(k)} \mid U_n^{(k)}] \le U_n^{(k)}$ almost surely, for all $n \in \bbZ^+$, and we want to find sufficient conditions for $\expe[U_T^{(k)}] \le \expe[U_0^{(k)}]$.
  \item For any $k$, the sequence $\{L_n^{(k)}\}_{n \in \bbZ^+}$ satisfies $\expe[L_{n+1}^{(k)} \mid L_n^{(k)}] \ge L_n^{(k)}$ almost surely, for all $n \in \bbZ^+$, and we want to find sufficient conditions for $\expe[L_T^{(k)}] \ge \expe[L_0^{(k)}]$.
\end{itemize}

These kinds of questions can be reduced to \emph{optional stopping problem} from probability theory.
Recent research~\cite{PLDI:WFG19,POPL:HKG20,CAV:BEF16,misc:SO19} has used the \emph{Optional Stopping Theorem} (OST) from probability theory to establish \emph{sufficient} conditions for the soundness for analysis of probabilistic programs.
However, the classic OST turns out to be \emph{not} suitable for higher-moment analysis.
We extend OST with a \emph{new} sufficient condition that allows us to prove \cref{The:Soundness}.
We discuss the details of our extended OST in a companion paper~\cite{Companion};
in this work, we focus on the derivation system for central moments.
\iflong
We also include a brief discussion on OST in \Cref{Se:ProofHigherMoments}.
\fi

\subsection{An Algorithm for Checking Soundness Criteria}
\label{Se:AutomaticTerminationCheck}

\paragraph{Termination Analysis}
We reuse our system for automatically deriving higher moments, which we developed in \cref{Se:InferenceRules,Se:Automation}, for checking if $\expe[T^{md}] < \infty$ (\cref{The:Soundness}\ref{Item:SoundnessTermination}).
To reason about termination time, we assume that every program statement increments the cost accumulator by one.
For example, the inference rule 
\textsc{(Q-Sample)} becomes
\begin{mathpar}\small
  \inferrule
  { \Gamma = \Forall{x \in \mathrm{supp}(\mu_D)} \Gamma' \\
    Q = \tuple{1,1,\cdots,1} \otimes \expe_{x \sim \mu_D}[Q']
  }
  { \Delta \vdash \{\Gamma;Q\}~\isample{x}{D}~\{\Gamma';Q'\} }
\end{mathpar}
However, we cannot apply \cref{The:Soundness} for the soundness of the termination-time analysis, because that would introduce a circular dependence.
Instead, we use a different proof technique to reason about $\expe[T^{md}]$, taking into account the \emph{monotonicity} of the runtimes.
Because upper-bound analysis of higher moments of runtimes has been studied by~\citet{TACAS:KUH19}, we skip the details, but include them in
\iflong
\Cref{Se:TerminationAnalysis}.
\else
the technical report~\cite{Techreport}.
\fi

\paragraph{Boundedness of $\norm{Y_n}_\infty$, $n \in \bbZ^+$}
To ensure that the condition in \cref{The:Soundness}\ref{Item:SoundnessBounded} holds, we check if the analyzed program satisfies the \emph{bounded-update} property: every (deterministic or probabilistic) assignment to a program variable updates the variable with a change bounded by a constant $C$ almost surely.
Then the absolute value of every program variable at evaluation step $n$ can be bounded by $C \cdot n = O(n)$.
Thus, a polynomial up to degree $\ell \in \bbN$ over program variables can be bounded by $O(n^\ell)$ at evaluation step $n$.
As observed by~\citet{PLDI:WFG19}, bounded updates are common in practice.
%


\section{Tail-Bound Analysis}
\label{Se:TailBound}

One application of our central-moment analysis is\Omit{ to use the bounds on higher (central) moments with \emph{concentration-of-measure} inequalities} to bound the probability that the accumulated cost deviates from some given quantity.
In this section, we sketch how we produce the tail bounds shown in \cref{Fi:MotivatingExample}(c).

There are a lot of \emph{concentration-of-measure} inequalities in probability theory~\cite{book:Dubhashi09}.
Among those, one of the most important is \emph{Markov's inequality}:

\begin{proposition}\label{Prop:Markov}
	If $X$ is a nonnegative random variable and $a > 0$, then $\prob[X \ge a] \le \frac{\expe[X^k]}{a^k}$ for any $k \in \bbN$.
\end{proposition}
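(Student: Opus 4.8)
Markov's inequality is a classical, one-line result, so the proof proposal is about recalling the standard argument rather than inventing anything.

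The plan is to bound the probability $\prob[X \ge a]$ by an expectation of an indicator and then relate that indicator to the random variable $X^k$. First I would fix $k \in \bbN$ and $a > 0$, and observe the pointwise inequality $[X \ge a] \le \frac{X^k}{a^k}$, which holds because $X$ is nonnegative: on the event $\{X \ge a\}$ the left side is $1$ and the right side is at least $1$ (since $X^k \ge a^k$ when $X \ge a \ge 0$), while on the complement the left side is $0$ and the right side is nonnegative. The crucial point enabling the step $X \ge a \iff X^k \ge a^k$ is that $t \mapsto t^k$ is monotone nondecreasing on the nonnegative reals, which is exactly where the hypothesis that $X$ is nonnegative (and $a>0$) is used.

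Next I would take expectations of both sides of the pointwise inequality. By monotonicity of expectation, $\expe[[X \ge a]] \le \expe[\frac{X^k}{a^k}]$. The left-hand side is by definition $\prob[X \ge a]$, and by linearity of expectation the right-hand side equals $\frac{\expe[X^k]}{a^k}$, since $a^k$ is a positive constant. Combining these two facts yields $\prob[X \ge a] \le \frac{\expe[X^k]}{a^k}$, which is the claimed bound.

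There is essentially no obstacle here: the only subtlety worth stating explicitly is the nonnegativity requirement, which is needed both to ensure the event equivalence $\{X \ge a\} = \{X^k \ge a^k\}$ and to guarantee that $X^k/a^k$ dominates the indicator everywhere (a negative $X$ could have $X^k$ negative for odd $k$, breaking the bound). If one wanted to be fully rigorous about measurability, one would also note that $X^k$ is measurable as a composition of the measurable $X$ with the continuous power map, so $\expe[X^k]$ is well defined (possibly $+\infty$, in which case the inequality holds trivially). No further machinery is required.
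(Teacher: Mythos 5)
Your proof is correct and is the standard argument: the pointwise bound $[X \ge a] \le X^k/a^k$ (valid by monotonicity of $t \mapsto t^k$ on the nonnegative reals) followed by taking expectations. The paper gives no proof of its own here — it cites Markov's inequality as a classical result — so there is nothing to compare against beyond noting that your argument is the textbook one and handles the relevant subtleties (nonnegativity, possibly infinite $\expe[X^k]$) correctly.
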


Recall that \cref{Fi:MotivatingExample}(b) presents upper bounds on the raw moments $\expe[\id{tick}] \le 2d+4$ and $\expe[\id{tick}^2] \le 4d^2+22d+28$ for the cost accumulator $\id{tick}$.
With Markov's inequality, we derive the following tail bounds:
{\small
\begin{align}
  \prob[\id{tick} \ge 4d] & \le \frac{\expe[\id{tick}]}{4d} \le \frac{2d+4}{4d} \xrightarrow{d \to \infty} \frac{1}{2}, \label{Eq:FstMomMarkov} \\
  \prob[\id{tick} \ge 4d] & \le \frac{\expe[\id{tick}^2]}{(4d)^2} \le \frac{4d^2+22d+28}{16d^2} \xrightarrow{d \to \infty} \frac{1}{4}. \label{Eq:SndMomMarkov}
\end{align}
}
Note that \eqref{Eq:SndMomMarkov} provides an asymptotically more precise bound on $\prob[\id{tick} \ge 4d]$ than \eqref{Eq:FstMomMarkov} does, when $d$ approaches infinity.

Central-moment analysis can obtain an even more precise tail bound.
As presented in \cref{Fi:MotivatingExample}(b), our analysis derives $\vari[\id{tick}] \le 22d+28$ for the variance of \id{tick}.
We can now employ concentration inequalities that involve variances of random variables.
Recall \emph{Cantelli's inequality}:

\begin{proposition}\label{Prop:Cantelli}
	If $X$ is a random variable and $a > 0$, then $\prob[X \!-\! \expe[X] \!\ge\! a] \le \frac{\vari[X]}{\vari[X]+a^2}$ and $\prob[X \!-\! \expe[X] \!\le\! {-a}] \le \frac{\vari[X]}{\vari[X]+a^2}$.
\end{proposition}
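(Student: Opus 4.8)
The plan is to prove this one-sided bound (Cantelli's inequality) by reducing it to Markov's inequality (\cref{Prop:Markov} with $k=1$) applied to a cleverly shifted-and-squared version of the centered variable, and then optimizing over the shift parameter. First I would center the variable, setting $Y \defeq X - \expe[X]$, so that $\expe[Y] = 0$ and $\vari[Y] = \vari[X] =: \sigma^2$; the goal then becomes $\prob[Y \ge a] \le \frac{\sigma^2}{\sigma^2 + a^2}$.

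Next, for an arbitrary parameter $t \ge 0$, I would observe the event inclusion $\{Y \ge a\} \subseteq \{(Y+t)^2 \ge (a+t)^2\}$, which holds because $Y \ge a$ forces $Y + t \ge a + t > 0$, and squaring is monotone on the nonnegatives. Since $(Y+t)^2$ is a nonnegative random variable, \cref{Prop:Markov} gives $\prob[Y \ge a] \le \frac{\expe[(Y+t)^2]}{(a+t)^2}$. Expanding the square and using $\expe[Y] = 0$ yields $\expe[(Y+t)^2] = \sigma^2 + t^2$, so that $\prob[Y \ge a] \le \frac{\sigma^2 + t^2}{(a+t)^2}$ for every $t \ge 0$.

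The final step is to minimize the right-hand side over $t \ge 0$. A short calculus computation shows the minimizer is $t = \sigma^2/a$, at which point the bound evaluates to exactly $\frac{\sigma^2}{\sigma^2 + a^2}$, establishing the first inequality. The second inequality, $\prob[X - \expe[X] \le -a] \le \frac{\vari[X]}{\vari[X]+a^2}$, then follows immediately by applying the first inequality to the random variable $-X$, whose mean is $-\expe[X]$ and whose variance is again $\sigma^2$.

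I expect the main obstacle to be identifying the right auxiliary quantity: the naive approach of applying Chebyshev's inequality directly to $|Y|$ only yields the weaker, symmetric two-sided bound $\sigma^2/a^2$. The key insight is to introduce the free shift $t$ \emph{before} squaring, which breaks the symmetry between the two tails and lets the subsequent optimization produce the sharper one-sided constant $\frac{\sigma^2}{\sigma^2+a^2}$. Once the shifted-square trick is in place, the remaining steps—the Markov estimate, the second-moment expansion, and the one-variable minimization—are all routine.
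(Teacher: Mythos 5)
Your proof is correct: the shift-and-square reduction to Markov's inequality, the computation $\expe[(Y+t)^2]=\sigma^2+t^2$, the optimization at $t=\sigma^2/a$, and the reflection $X\mapsto -X$ for the lower tail are all sound (modulo the standing assumption that $X$ has finite variance, without which the bound is vacuous). Note, however, that the paper does not prove this proposition at all---it quotes Cantelli's inequality as a classical concentration-of-measure result from the probability literature---so there is no in-paper argument to compare against; what you have written is the standard textbook derivation, and it is fine.
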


With Cantelli's inequality, we obtain the following tail bound, where we assume $d \ge 2$:
\begin{equation}\label{Eq:VariCantelli}\small
\begin{split}
  & \prob[\id{tick} \ge 4d] = \prob[\id{tick} - (2d+4) \ge 2d - 4] \\
  {}\le{} & \prob[\id{tick} - \expe[\id{tick}] \ge 2d-4] \le \frac{\vari[\id{tick}]}{\vari[\id{tick}] + (2d-4)^2} \\
  ={} & 1 - \frac{(2d-4)^2}{\vari[\id{tick}] + (2d-4)^2} \le \frac{22d+28}{4d^2+ 6d+44} \xrightarrow{d \to \infty} 0.
\end{split}
\end{equation}
For all $d \ge 15$, \eqref{Eq:VariCantelli} gives a more precise bound than both \eqref{Eq:FstMomMarkov} and \eqref{Eq:SndMomMarkov}.
It is also clear from \cref{Fi:MotivatingExample}(c), where we plot the three tail bounds \eqref{Eq:FstMomMarkov}, \eqref{Eq:SndMomMarkov}, and \eqref{Eq:VariCantelli}, that the asymptotically most precise bound is the one obtained via variances.

In general, for higher central moments, we employ \emph{Chebyshev's inequality} to derive tail bounds:

\begin{proposition}\label{Prop:Chebyshev}
  If $X$ is a random variable and $a>0$, then $\prob[\abs{X - \expe[X]} \ge a] \le \frac{\expe[(X-\expe[X])^{2k}]}{ a^{2k} }$ for any $k \in \bbN$.
\end{proposition}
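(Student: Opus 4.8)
The plan is to reduce Chebyshev's inequality directly to Markov's inequality (\cref{Prop:Markov}), which is already available. The key observation is that the centering random variable $Y \defeq X - \expe[X]$ need not be nonnegative, but its even power $Y^{2k}$ always is, so the nonnegativity hypothesis of \cref{Prop:Markov} can be met after passing to $Y^{2k}$.

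First I would note that because $2k$ is even and $a > 0$, the map $y \mapsto y^{2k}$ is monotone on $[0,\infty)$, so the two tail events coincide: $\{\abs{Y} \ge a\} = \{Y^{2k} \ge a^{2k}\}$. Hence $\prob[\abs{X - \expe[X]} \ge a] = \prob[Y^{2k} \ge a^{2k}]$. Next, since $Y^{2k}$ is a nonnegative random variable and $a^{2k} > 0$, I would apply \cref{Prop:Markov} with its exponent set to $1$, to the variable $Y^{2k}$ at the threshold $a^{2k}$, obtaining $\prob[Y^{2k} \ge a^{2k}] \le \frac{\expe[Y^{2k}]}{a^{2k}}$. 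Unfolding $Y = X - \expe[X]$ then yields exactly the claimed bound $\frac{\expe[(X - \expe[X])^{2k}]}{a^{2k}}$.

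I expect no genuine obstacle: this is a one-line corollary of Markov's inequality. The only subtlety worth stating explicitly is that the evenness of the exponent does double duty—it keeps $Y^{2k}$ nonnegative (so \cref{Prop:Markov} applies) and makes $Y^{2k} = \abs{Y}^{2k}$ (so the two tail events match). For an odd exponent this argument would break, which is precisely why the statement is phrased in terms of the $2k$-th central moment rather than an arbitrary one, and why $k \in \bbN$ (so that $2k \ge 2$) suffices.
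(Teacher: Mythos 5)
Your derivation is correct: passing to $Y^{2k}=\abs{X-\expe[X]}^{2k}$ and applying \cref{Prop:Markov} with exponent $1$ at threshold $a^{2k}$ is the standard textbook proof of Chebyshev's inequality. The paper does not prove \cref{Prop:Chebyshev} at all --- it is quoted as a known concentration-of-measure inequality from the probability literature --- so there is nothing to diverge from, and your argument is a sound way to fill in the omitted justification.
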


In our experiments, we use Chebyshev's inequality to derive tail bounds from the fourth central moments.
We will show in \cref{Fi:ComparisonWithTACAS} that these tail bounds can be more precise than those obtained from both raw moments and variances.


\section{Implementation and Experiments}
\label{Se:ImplementationAndExperiments}


\paragraph{Implementation}
%
Our tool is implemented in OCaml, and consists of about 5,300 LOC.
The tool works on imperative arithmetic probabilistic programs using a CFG-based IR~\cite{PLDI:WHR18}.
The language supports recursive functions, continuous distributions, unstructured control-flow, and local variables.
To infer the bounds on the central moments for a cost accumulator in a program, the user needs to specify the order of the analyzed moment, and a maximal degree for the polynomials to be used in potential-function templates.
Using APRON~\cite{CAV:JM09}, we implemented an interprocedural numeric analysis to infer the logical contexts used in the derivation.
%
%
%
%
%
%
We use the off-the-shelf solver Gurobi~\cite{misc:GUROBI} for LP solving.

\paragraph{Evaluation setup}
We evaluated our tool to answer the following three research questions:
\begin{enumerate}
  \item
    \label{RQ:1} How does the raw-moment inference part of our tool compare to existing techniques for expected-cost bound analysis~\cite{PLDI:NCH18,PLDI:WFG19}?
  \item
    \label{RQ:2} How does our tool compare to the state of the art in tail-probability analysis (which is based only on higher \emph{raw} moments~\cite{TACAS:KUH19})?
  
  \revision{1}
  \begin{changebar}
  \item
    \label{RQ:3} How scalable is our tool? Can it analyze programs with many recursive functions?
  \end{changebar}
\end{enumerate}

For the first question, we collected a broad suite of challenging examples from related work~\cite{PLDI:NCH18,PLDI:WFG19,TACAS:KUH19} with different loop and recursion patterns, as well as probabilistic branching, discrete sampling, and continuous sampling.
Our tool achieved comparable precision and efficiency with the prior work on expected-cost bound analysis~\cite{PLDI:NCH18,PLDI:WFG19}.
%
The details are included in
\iflong
\Cref{Se:ExperimentDetails}.
\else
the technical report~\cite{Techreport}.
\fi

For the second question, we evaluated our tool on the complete benchmarks from Kura el al.~\cite{TACAS:KUH19}.
We also conducted a case study of a timing-attack analysis for a program provided by DARPA during engagements of the STAC program~\cite{misc:STAC}, where central moments are more useful than raw moments to bound the success probability of an attacker.
We include the case study in
\iflong
\Cref{Se:TimingAttackAnalysis}.
\else
the technical report~\cite{Techreport}.
\fi

\revision{1}
\begin{changebar}
For the third question, we conducted case studies on two sets of synthetic benchmark programs:
\begin{itemize}
  \item coupon-collector programs with $N$ coupons ($N \in [1,10^3]$), where each program is implemented as a set of tail-recursive functions, each of which represents a state of coupon collection, i.e., the number of coupons collected so far; and
  
  \item 
  random-walk programs with $N$ consecutive one-dimensional random walks ($N \in [1,10^3]$), each of which starts at a position that equals the number of steps taken by the previous random walk to reach the ending position (the origin).
  Each program is implemented as a set of non-tail-recursive functions, each of which represents a random walk.
  The random walks in the same program can have different transition probabilities.
\end{itemize}
The largest synthetic program has nearly 16,000 LOC.
We then ran our tool to derive an upper bound on the fourth (resp., second) central moment of the runtime for each coupon-collector (resp., random-walk) program.
\end{changebar}

The experiments were performed on a machine with an Intel Core i7 3.6GHz processor and 16GB of RAM under macOS Catalina 10.15.7.

\paragraph{Results.}
Some of the evaluation results to answer the second research question are presented in \cref{Ta:ComparisonWithTACAS}.
The programs (1-1) and (1-2) are coupon-collector problems with a total of two and four coupons, respectively.
The programs (2-1) and (2-2) are one-dimensional random walks with integer-valued and real-valued coordinates, respectively.
We omit three other programs here but include the full results in
\iflong
\Cref{Se:ExperimentDetails}.
\else
the technical report~\cite{Techreport}.
\fi
%
The table contains the inferred upper bounds on the moments for runtimes of these programs, and the running times of the analyses.
We compared our results with Kura et al.'s inference tool for raw moments~\cite{TACAS:KUH19}.
Our tool is as precise as, and sometimes more precise than the prior work on all the benchmark programs.
Meanwhile, our tool is able to infer an upper bound on the raw moments of degree up to four on all the benchmarks, while the prior work reports failure on some higher moments for the random-walk programs.
In terms of efficiency, our tool completed each example in less than 10 seconds, while the prior work took more than a few minutes on some programs.
One reason why our tool is more efficient is that we always reduce higher-moment inference with non-linear polynomial templates to efficient LP solving, but the prior work requires semidefinite programming (SDP) for polynomial templates.

\newcommand{\bd}[1]{{\textbf{#1}}}
\begin{table}
  \centering
  \caption{Upper bounds on the raw/central moments of runtimes, with comparison to~\citet{TACAS:KUH19}. ``T/O'' stands for timeout after 30 minutes. ``N/A'' means that the tool is not applicable. ``-'' indicates that the tool fails to infer a bound. Entries with more precise results or less analysis time are marked in bold. Full results are included in
  \iflong
  \Cref{Se:ExperimentDetails}.
  \else
  the technical report~\cite{Techreport}.
  \fi}
  \label{Ta:ComparisonWithTACAS}
  \resizebox{\columnwidth}{!}{%
  \begin{tabular}{@{\hspace{1pt}}c@{\hspace{1pt}}|@{\hspace{1pt}}c@{\hspace{1pt}}||@{\hspace{1pt}}c@{\hspace{1pt}}|@{\hspace{1pt}}c@{\hspace{1pt}}||@{\hspace{1pt}}c@{\hspace{1pt}}|@{\hspace{1pt}}c@{\hspace{1pt}}}
    \hline
    \multirow{2}{*}{program} & \multirow{2}{*}{moment} & \multicolumn{2}{c@{\hspace{1pt}}||@{\hspace{1pt}}}{this work} & \multicolumn{2}{c}{\citet{TACAS:KUH19}} \\ \hhline{~~----}
    & & upper bnd. & time (sec) & upper bnd. & time (sec) \\ \hline
    \multirow{5}{*}{(1-1)} & 2\textsuperscript{nd} raw & \bd{201} & 0.019 & \bd{201
    } & \bd{0.015}  \\ \hhline{~*{5}{-}}
    & 3\textsuperscript{rd} raw & \bd{3829} & \bd{0.019} & \bd{3829} & 0.020 \\ \hhline{~*{5}{-}}
    & 4\textsuperscript{th} raw & \bd{90705} & \bd{0.023} & \bd{90705} & 0.027 \\ \hhline{~*{5}{-}}
    & 2\textsuperscript{nd} central & \bd{32} & \bd{0.029} & N/A & N/A \\ \hhline{~*{5}{-}}
    & 4\textsuperscript{th} central & \bd{9728} & \bd{0.058} & N/A & N/A \\ \hline
    \multirow{5}{*}{(1-2)} & 2\textsuperscript{nd} raw & \bd{2357} & 1.068 & 3124 & \bd{0.037} \\ \hhline{~*{5}{-}}
    & 3\textsuperscript{rd} raw & \bd{148847} & 1.512 & 171932 & \bd{0.062} \\ \hhline{~*{5}{-}}
    & 4\textsuperscript{th} raw  & \bd{11285725} & 1.914 & 12049876 & \bd{0.096} \\ \hhline{~*{5}{-}}
    & 2\textsuperscript{nd} central & \bd{362} & \bd{3.346} & N/A & N/A \\ \hhline{~*{5}{-}}
    & 4\textsuperscript{th} central & \bd{955973} & \bd{9.801} & N/A & N/A \\ \hline
    \multirow{5}{*}{(2-1)} & 2\textsuperscript{nd} raw & \bd{2320} & \bd{0.016} & \bd{2320} & 11.380 \\ \hhline{~*{5}{-}}
    & 3\textsuperscript{rd} raw & \bd{691520} & \bd{0.018} & - & 16.056 \\ \hhline{~*{5}{-}}
    & 4\textsuperscript{th} raw & \bd{340107520} & \bd{0.021} & - & 23.414 \\ \hhline{~*{5}{-}}
    & 2\textsuperscript{nd} central & \bd{1920} & \bd{0.026} & N/A & N/A \\ \hhline{~*{5}{-}}
    & 4\textsuperscript{th} central & \bd{289873920} & \bd{0.049} & N/A & N/A \\ \hline
    \multirow{5}{*}{(2-2)} & 2\textsuperscript{nd} raw & \bd{8375} & \bd{0.022} & \bd{8375} & 38.463 \\ \hhline{~*{5}{-}}
    & 3\textsuperscript{rd} raw & \bd{1362813} & \bd{0.028} & - & 73.408  \\ \hhline{~*{5}{-}}
    & 4\textsuperscript{th} raw & \bd{306105209} & \bd{0.035} & - & 141.072 \\ \hhline{~*{5}{-}}
    & 2\textsuperscript{nd} central &  \bd{5875} & \bd{0.029} & N/A & N/A \\ \hhline{~*{5}{-}}
    & 4\textsuperscript{th} central & \bd{447053126} & \bd{0.086} & N/A & N/A \\ \hline
  \end{tabular}%
  }
  \vspace{-10pt}
\end{table}

Besides raw moments, our tool is also capable of inferring upper bounds on the central moments of runtimes for the benchmarks.
To evaluate the quality of the inferred central moments, \cref{Fi:ComparisonWithTACAS} plots the upper bounds of tail probabilities on runtimes $T$ obtained by~\citet{TACAS:KUH19}, and those by our central-moment analysis.
Specifically, the prior work uses Markov's inequality (\cref{Prop:Markov}), while we are also able to apply Cantelli's and Chebyshev's inequality (\cref{Prop:Cantelli,Prop:Chebyshev}) with central moments.
Our tool outperforms the prior work on programs (1-1) and (1-2)\Omit{, (2-3), and (2-5)}, and derives better tail bounds when $d$ is large on program (2-2)\Omit{ and (2-4)}, while it obtains similar curves on program (2-1).
%

\tikzexternalenable
\begin{figure}
  \centering
  \begin{tabular}{c@{\hspace{0.5ex}}c@{\hspace{0.5ex}}c@{\hspace{0.5ex}}c}
    \includegraphics{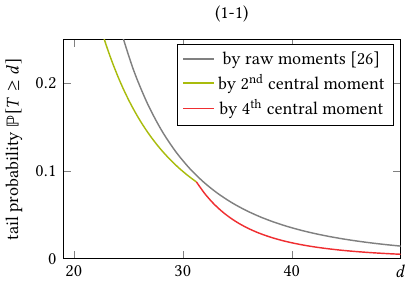}
    &
    \includegraphics{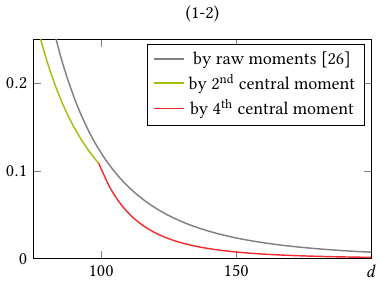}
    \\[-3pt]
    \includegraphics{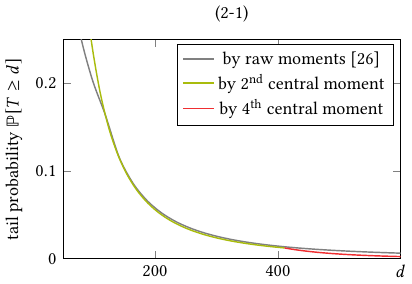}
    &
    \includegraphics{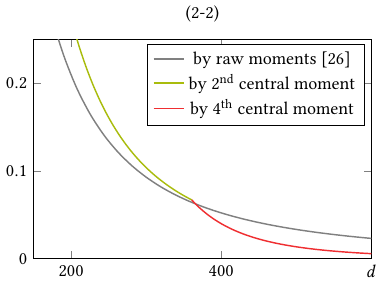}
  \end{tabular}
  \caption{Upper bound of the tail probability $\prob[T \ge d]$ as a function of $d$, with comparison to~\citet{TACAS:KUH19}. Each gray line is the minimum of tail bounds obtained from the raw moments of degree up to four inferred by~\citet{TACAS:KUH19}. Green lines and red lines are the tail bounds given by 2\textsuperscript{nd} and 4\textsuperscript{th} central moments inferred by our tool, respectively. We include the plots for other programs in
    \iflong
    \Cref{Se:ExperimentDetails}.
    \else
    the technical report~\cite{Techreport}.
    \fi}
  \label{Fi:ComparisonWithTACAS}
  \vspace{10pt}
\end{figure}
\tikzexternaldisable

\revision{1}
\begin{changebar}
\paragraph{Scalability.}
In \cref{Fi:Scalability}, we demonstrate the running times of our tool on the two sets of synthetic benchmark programs;
\cref{Fi:ScalabilityCoupons} plots the analysis times for coupon-collector programs as a function of the independent variable $N$ (the total number of coupons),
and \cref{Fi:ScalabilityRdwalks} plots the analysis times for random-walk programs as a function of $N$ (the total number of random walks).
The evaluation statistics show that our tool achieves good scalability in both case studies:
the runtime is almost a linear function of the program size, which is the number of recursive functions for both case studies.
Two reasons why our tool is scalable on the two sets of programs are
(i) our analysis is compositional and uses function summaries to analyze function calls, and
(ii) for a fixed set of templates and a fixed diameter of the call graph, the number of linear constraints generated by our tool grows linearly with the size of the program, and the LP solvers available nowadays can handle large problem instances efficiently.

\begin{figure}
\centering
\begin{subfigure}{0.49\columnwidth}
\centering
\includegraphics[width=\textwidth]{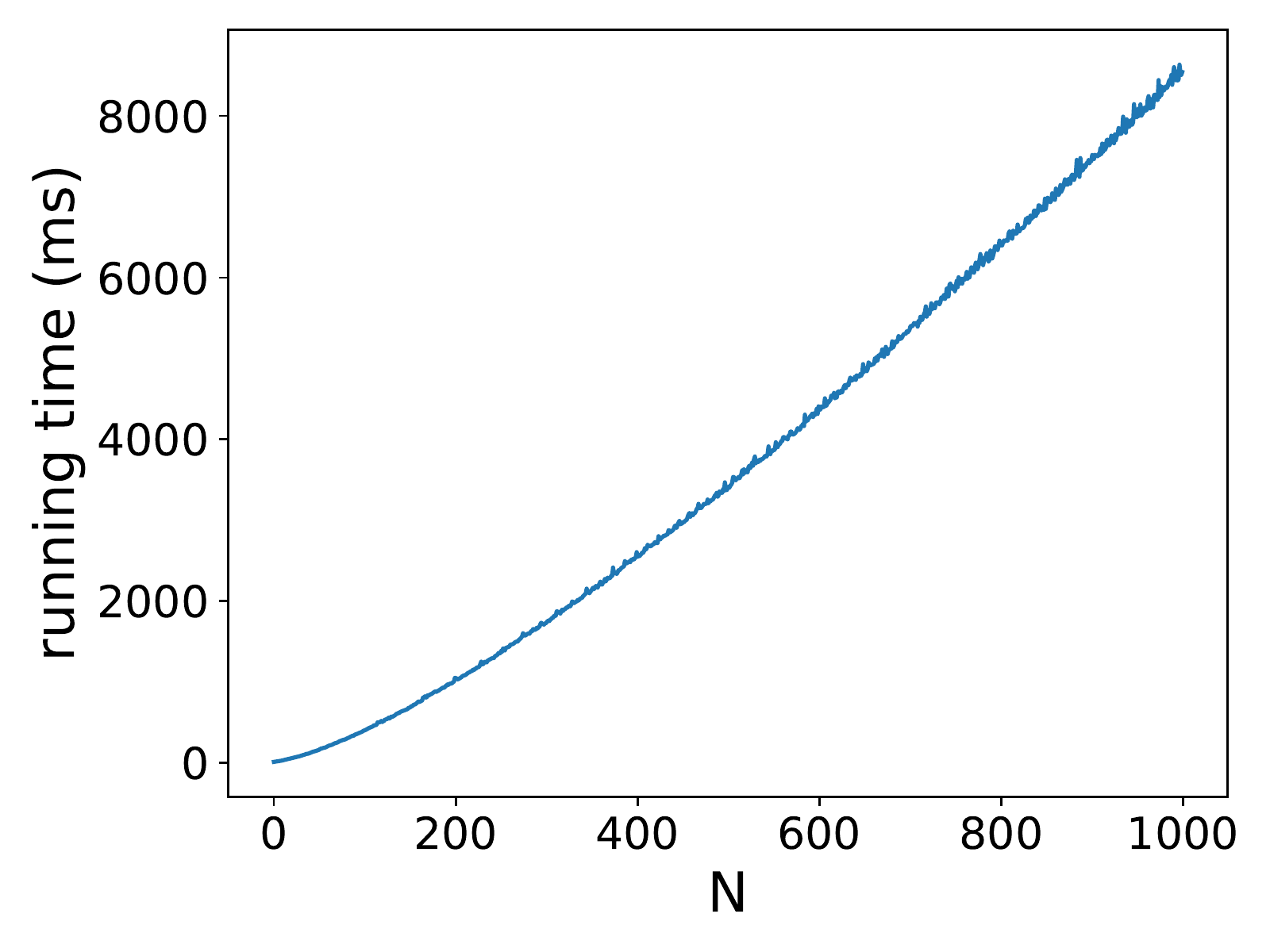}
\caption{Coupon Collector}\label{Fi:ScalabilityCoupons}
\end{subfigure}
\begin{subfigure}{0.49\columnwidth}
\centering
\includegraphics[width=\textwidth]{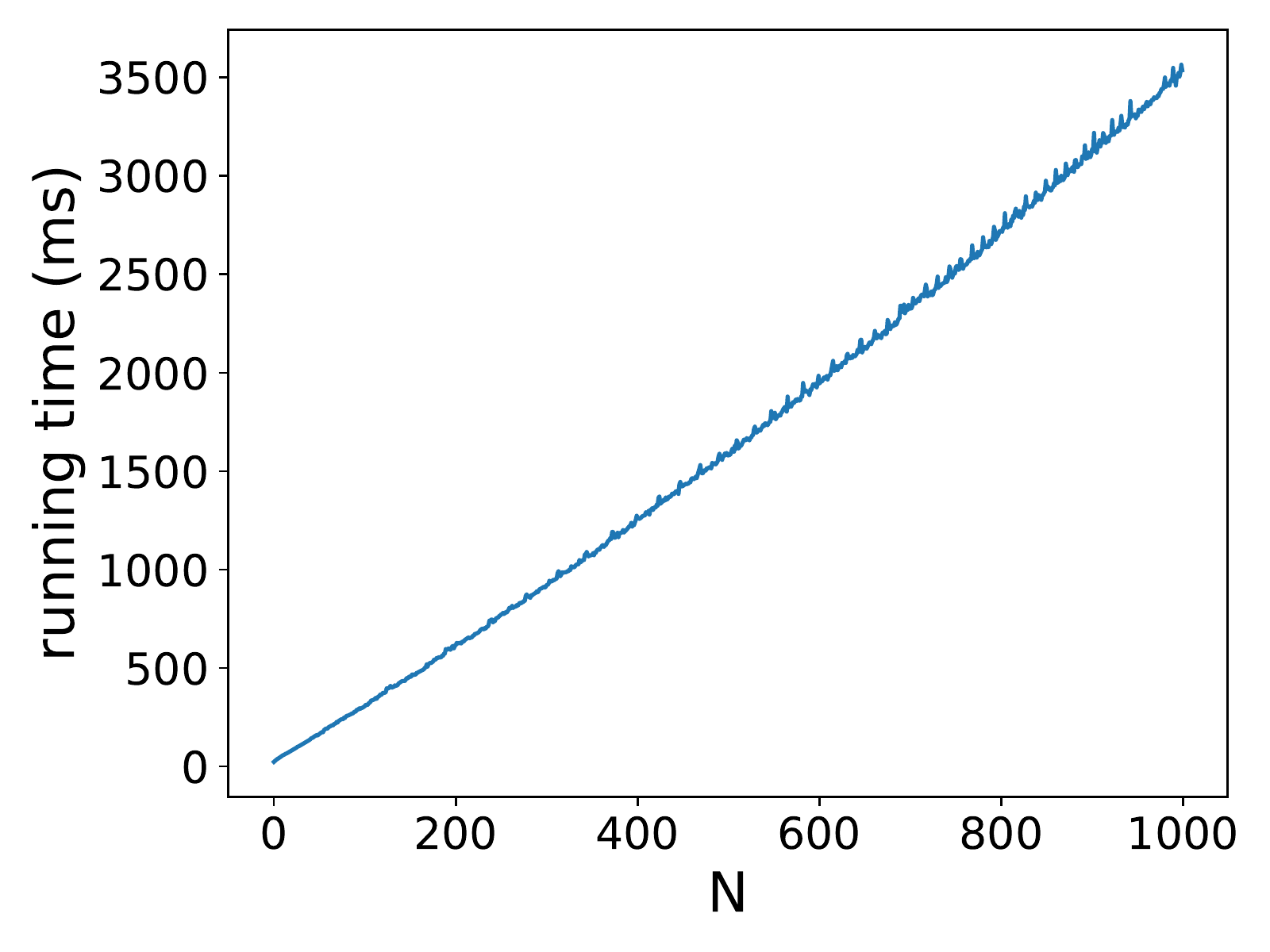}
\caption{Random Walk}\label{Fi:ScalabilityRdwalks}
\end{subfigure}
\caption{Running times of our tool on two sets of synthetic benchmark programs. Each figure plots the runtimes as a function of the size of the analyzed program.}\label{Fi:Scalability}
\end{figure}
\end{changebar}

\paragraph{Discussion.}
\begingroup
\setlength{\columnsep}{7pt}%
Higher central moments can also provide more information about the \emph{shape} of a distribution, e.g.,
the \emph{skewness} (i.e., $\frac{ \expe[(T-\expe[T])^3] }{ (\vari[T])^{\sfrac{3}{2}} }$) indicates how lopsided the distribution of $T$ is,
and the \emph{kurtosis} (i.e., $\frac{ \expe[(T - \expe[T])^4 ] }{ (\vari[T])^2 }$) measures the heaviness of the tails of the distribution of $T$.
We used our tool to analyze two variants of the random-walk program (2-1).
The two random walks have different transition probabilities and step length, but they have the same expected runtime $\expe[T]$.
\cref{Tab:SkewnessKurtosis} presents the skewness and kurtosis derived from the moment bounds inferred by our tool.
\begin{wraptable}{r}{4.2cm}
\centering
\vspace{-8pt}
\caption{Skewness \& kurtosis.}
\vspace{-5pt}
\label{Tab:SkewnessKurtosis}
\begin{small}
\begin{tabular}{@{\hspace{1pt}}c@{\hspace{1pt}}|@{\hspace{1pt}}c@{\hspace{1pt}}|@{\hspace{1pt}}c@{\hspace{1pt}}}
  \hline
  program & skewness & kurtosis \\ \hline
  \textsf{rdwalk-1} & 2.1362 & 10.5633 \\ \hline
  \textsf{rdwalk-2} &  2.9635 &  17.5823 \\ \hline
\end{tabular}
\end{small}
\vspace{-8pt}
\end{wraptable}
A positive skew indicates that the mass of the distribution is concentrated on the \emph{left}, and larger skew means the concentration is more \emph{left}.
A larger kurtosis, on the other hand, indicates that the distribution has \emph{fatter} tails.
Therefore, as the derived skewness and kurtosis indicate, the distribution of the runtime $T$ for \textsf{rdwalk-2} should be more left-leaning and have fatter tails than the distribution for \textsf{rdwalk-1}.
%
Density estimates for the runtime $T$, obtained by simulation, are shown in \cref{Fi:DensityEstimation}.

\endgroup

\newcommand{\gpmax}[2]{(#1) > (#2) ? (#1) : (#2)}
\newcommand{\df}[3]{ gamma(((#2)+(#3))/2.) / (gamma((#2)/2.)*gamma((#3)/2.)) * (((#2)/(#3))^((#2)/2.)) * ((#1)^((#2)/2.-1)) * ((1+((#2)/(#3))*(#1))^(-((#2)+(#3))/2.)) }
\newcommand{\dpearsonVI}[5]{ 1/abs(#5*#2/#3) * (\df{(#1-#4)/(#5*#2/#3)}{2*#2}{2*#3}) }
\newcommand{\dgamma}[3]{ 1./( (#3)^(#2) * gamma(#2) ) * (#1)^((#2)-1) * exp(-((#1)/(#3))) }
\newcommand{\dpearsonV}[4]{ \dgamma{1./(\gpmax{0.0}{sgn(#4) * (#1 - #3)})}{#2}{1./(abs(#4))} / (((#1)-(#3))^2) }
\newcommand{\dpearsonIII}[4]{ \dgamma{sgn(#4) * ((#1) - (#3))}{#2}{abs(#4)} }

\tikzexternalenable
\begin{figure}
\centering
\begin{subfigure}{0.49\columnwidth}
\centering
\includegraphics{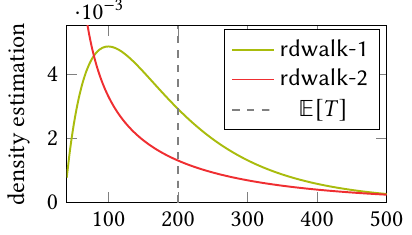}
\end{subfigure}
\begin{subfigure}{0.49\columnwidth}
\centering  
\includegraphics{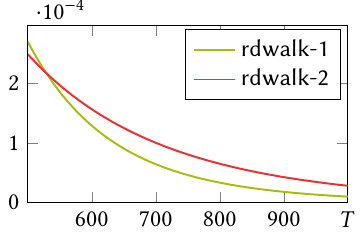}
\end{subfigure}
\caption{Density estimation for the runtime $T$ of two variants \textsf{rdwalk-1} and \textsf{rdwalk-2} of (2-1).}
\label{Fi:DensityEstimation}
\vspace{-10pt}
\end{figure}
\tikzexternaldisable

Our tool can also derive \emph{symbolic} bounds on higher moments.
The table below presents the inferred upper bounds on the variances for the random-walk benchmarks, where we replace the concrete inputs with symbolic pre-conditions.
%
%
\begin{center}\small
  \begin{tabular}{c|c|c}
    \hline
    program & pre-condition & upper bnd. on the variance \\ \hline
    (2-1) & $x \ge 0$ & $1920x$ \\ \hline
    (2-2) & $x \ge 0$ & $2166.6667x+1541.6667$ \\ \hline
  \end{tabular}%
\end{center}

\revision{8}
\begin{changebar}
\section{Conclusion}
\label{Se:Conclusion}

We have presented an automatic central-moment analysis for probabilistic programs that support general recursion and continuous distributions,
by deriving symbolic interval bounds on higher raw moments for cost accumulators.
We have proposed \emph{moment semirings} for compositional reasoning and \emph{moment-polymorphic recursion} for interprocedural analysis.
We have proved soundness of our technique using a recent extension to the Optional Stopping Theorem.
The effectiveness of our technique has been demonstrated with our prototype implementation and the analysis of a broad suite of benchmarks, as well as a tail-bound analysis.

In the future, we plan to go beyond arithmetic programs and add support for more datatypes, e.g., Booleans and lists.
We will also work on other kinds of uncertain quantities for probabilistic programs.
Another research direction is to apply our analysis to higher-order functional programs.
\end{changebar}


\begin{acks}
This article is based on research supported, in part, by a gift from Rajiv and Ritu Batra;
by ONR under grants N00014-17-1-2889 and N00014-19-1-2318;
by DARPA under AA contract FA8750-18-C0092;
and by the NSF under SaTC award 1801369, SHF awards 1812876 and 2007784, and CAREER
award 1845514.
Any opinions, findings, and conclusions or recommendations
expressed in this publication are those of the authors,
and do not necessarily reflect the views of the sponsoring
agencies.

We thank Jason Breck and John Cyphert for formulating, and providing
us with an initial formalization of the problem of timing-attack
analysis.
\end{acks}

\balance
\bibliography{db,misc}

\iflong
\clearpage
\appendix
\allowdisplaybreaks

\section{Preliminaries on Measure Theory}
\label{Se:AppendixProbTheory}

Interested readers can refer to textbooks and notes in the literature~\cite{book:Billingsley12,book:Williams91} for more details.

\subsection{Basics}
\label{Se:MeasureTheory}

A \emph{measurable space} is a pair $(S,\calS)$, where $S$ is a nonempty set, and $\calS$ is a \emph{$\sigma$-algebra} on $S$, i.e., a family of subsets of $S$ that contains $\emptyset$ and is closed under complement and countable unions.
The smallest $\sigma$-algebra that contains a family $\calA$ of subsets of $S$ is said to be \emph{generated} by $\calA$, denoted by $\sigma(\calA)$.
Every topological space $(S,\tau)$ admits a \emph{Borel $\sigma$-algebra}, given by $\sigma(\tau)$.
This gives canonical $\sigma$-algebras on $\bbR$, $\bbQ$, $\bbN$, etc.
A function $f : S \to T$, where $(S,\calS)$ and $(T,\calT)$ are measurable spaces, is said to be \emph{$(\calS,\calT)$-measurable}, if $f^{-1}(B) \in \calS$ for each $B \in \calT$.
If $T = \bbR$, we tacitly assume that the Borel $\sigma$-algebra is defined on $T$, and we simply call $f$ \emph{measurable}, or a \emph{random variable}.
Measurable functions form a vector space, and products, maxima, and limiting operations preserve measurability.

A \emph{measure} $\mu$ on a measurable space $(S,\calS)$ is a mapping from $\calS$ to $[0,\infty]$ such that
(i) $\mu(\emptyset) = 0$, and
(ii) for all pairwise-disjoint $\{A_n\}_{n \in \bbZ^+}$ in $\calS$, it holds that $\mu(\bigcup_{n \in \bbZ^+} A_i) = \sum_{n \in \bbZ^+} \mu(A_i)$.
The triple $(S,\calS,\mu)$ is called a \emph{measure space}.
A measure $\mu$ is called a \emph{probability} measure, if $\mu(S) = 1$.
We denote the collection of probability measures on $(S,\calS)$ by $\bbD(S,\calS)$.
For each $x \in S$, the \emph{Dirac measure} $\delta(x)$ is defined as $\lambda A. [x \in A]$.
For measures $\mu$ and $\nu$, we write $\mu + \nu$ for the measure $\lambda A. \mu(A) + \nu(A)$.
For measure $\mu$ and scalar $c \ge 0$, we write $c \cdot \mu$ for the measure $\lambda A. c \cdot \mu(A)$.

The \emph{integral} of a measurable function $f$ on $A \in \calS$ with respect to a measure $\mu$ on $(S,\calS)$ is defined following Lebesgue's theory and is denoted by $\mu(f;A)$, $\int_A f d\mu$, or $\int_A f(x) \mu(dx)$.
If $\mu$ is a probability measure, we call the integral as the \emph{expectation} of $f$, written $\expe_{x \sim \mu}[f; A]$, or simply $\expe[f; A]$ when the scope is clear in the context.
If $A = S$, we tacitly omit $A$ from the notations.
For each $A \in \calS$, it holds that $\mu(f; A) = \mu(f \mathrm{I}_A)$, where $\mathrm{I}_A$ is the indicator function for $A$. 
We denote the collection of \emph{integrable} functions on $(S,\calS,\mu)$ by $\calL^1(S,\calS,\mu)$.

A \emph{kernel} from a measurable space $(S,\calS)$ to another $(T,\calT)$ is a mapping from $S \times \calT$ to $[0,\infty]$ such that:
(i) for each $x \in S$, the function $\lambda B. \kappa(x,B)$ is a measure on $(T,\calT)$, and
(ii) for each $B \in \calT$, the function $\lambda x. \kappa(x,B)$ is measurable.
We write $\kappa : (S,\calS) \rightsquigarrow (T,\calT)$ to declare that $\kappa$ is a kernel from $(S,\calS)$ to $(T,\calT)$.
Intuitively, kernels describe measure transformers from one measurable space to another.
A kernel $\kappa$ is called a \emph{probability} kernel, if $\kappa(x,T) = 1$ for all $x \in S$.
We denote the collection of probability kernels from $(S,\calS)$ to $(T,\calT)$ by $\bbK((S,\calS),(T,\calT))$.
If the two measurable spaces coincide, we simply write $\bbK(S,\calS)$.
We can ``push-forward'' a measure $\mu$ on $(S,\calS)$ to a measure on $(T,\calT)$ through a kernel $\kappa: (S,\calS) \rightsquigarrow (T,\calT)$ by integration:\footnote{
We use a \emph{monad bind} notation $\bind$ here.
Indeed, the category of measurable spaces admits a monad with sub-probability measures~\cite{CATA:Giry82,ENTCS:Panangaden99}.
}
$
\mu \bind \kappa \defeq  \lambda B. \int_S  \kappa(x,B) \mu(dx).
$

\subsection{Product Measures}
\label{Se:ProductMeasures}

The \emph{product} of two measurable spaces $(S,\calS)$ and $(T,\calT)$ is defined as $(S,\calS) \otimes (T,\calT) \defeq (S \times T, \calS \otimes \calT)$, where $\calS \otimes \calT$ is the smallest $\sigma$-algebra that makes coordinate maps measurable, i.e., $\sigma( \{ \pi_1^{-1}(A) \mid A \in \calS\} \cup \{ \pi_2^{-1}(B) \mid B \in \calT \}  ) )$, where $\pi_i$ is the $i$-the coordinate map.
If $\mu_1$ and $\mu_2$ are two probability measures on $(S,\calS)$ and $(T,\calT)$, respectively, then there exists a unique probability measure $\mu$ on $(S,\calS) \otimes (T,\calT)$, by Fubini's theorem, called the \emph{product measure} of $\mu_1$ and $\mu_2$, written $\mu_1 \otimes \mu_2$, such that $\mu(A \times B) = \mu_1(A) \cdot \mu_2(B)$ for all $A \in \calS$ and $B \in \calT$.

If $\mu$ is a probability measure on $(S,\calS)$ and $\kappa : (S,\calS) \rightsquigarrow (T,\calT)$ is a probability kernel, then we can construct the a probability measure on $(S,\calS) \otimes (T,\calT)$ that captures all transitions from $\mu$ through $\kappa$:
$
\mu \otimes \kappa \defeq \lambda(A,B). \int_A \kappa(x,B) \mu (dx).
$
If $\mu$ is a probability measure on $(S_0,\calS_0)$ and $\kappa_i : (S_{i-1},\calS_{i-1}) \rightsquigarrow (S_i,\calS_i)$ is a probability kernel for $i=1,\cdots,n$,where $n \in \bbZ^+$, then we can construct a probability measure on $\bigotimes_{i=0}^n (S_i,\calS_i)$, i.e., the space of sequences of $n$ transitions by iteratively applying the kernels to $\mu$:
\begin{align*}
\mu \otimes \bigotimes\nolimits_{i=1}^{0} \kappa_i & \defeq \mu, \\
\mu \otimes \bigotimes\nolimits_{i=1}^{k+1} \kappa_i & \defeq (\mu \otimes \bigotimes\nolimits_{i=1}^{k} \kappa_i) \otimes \kappa_{k+1}, & 0 \le k < n.
\end{align*}

Let $(S_i,\calS_i)$, $i \in \calI$ be a family of measurable spaces.
Their product, denoted by $\bigotimes_{i \in \calI} (S_i,\calS_i) = (\prod_{i \in \calI} S_i, \bigotimes_{i \in \calI} \calS_i)$, is the product space with the smallest $\sigma$-algebra such that for each $i \in \calI$, the coordinate map $\pi_i$ is measurable.
The theorem below is widely used to construct a probability measures on an infinite product via probability kernels.

\begin{proposition}[Ionescu-Tulcea]\label{Prop:IonescuTulcea}
  Let $(S_i,\calS_i)$, $i \in \bbZ^+$ be a sequence of measurable spaces.
  Let $\mu_0$ be a probability measure on $(S_0,\calS_0)$.
  For each $i \in \bbN$, let $\kappa_i : \bigotimes_{k=0}^{i-1} (S_k,\calS_k) \rightsquigarrow (S_i,\calS_i)$ be a probability kernel.
  Then there exists a sequence of probability measures $\mu_i \defeq \mu_0 \otimes \bigotimes_{k=1}^i \kappa_k$, $i \in \bbZ^+$, and there exists a uniquely defined probability measure $\mu$ on $\bigotimes_{k=0}^\infty (S_k,\calS_k)$ such that $\mu_i(A) = \mu(A \times \prod_{k=i+1}^\infty S_k)$ for all $i \in \bbZ^+$ and $A \in \bigotimes_{k=0}^i \calS_i$.
\end{proposition}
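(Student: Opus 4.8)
The plan is to build $\mu$ as the Carath\'eodory extension of a finitely additive set function on the algebra of cylinder sets, with the nonstandard ingredient---and the crux of the argument---being the verification of continuity at $\emptyset$ by a coordinate-by-coordinate ``peeling'' of the kernels. First I would take the finite-dimensional measures $\mu_i \defeq \mu_0 \otimes \bigotimes_{k=1}^i \kappa_k$, whose existence on $\bigotimes_{k=0}^i (S_k,\calS_k)$ is guaranteed by the iterated construction $\mu \otimes \kappa$ recalled above, and observe that they form a \emph{projective} (consistent) family: for $j \le i$, the marginal of $\mu_i$ onto the first $j{+}1$ coordinates is exactly $\mu_j$. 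This follows because each $\kappa_k$ is a \emph{probability} kernel, so integrating out the last coordinate of $\mu_i = \mu_{i-1} \otimes \kappa_i$ against $\kappa_i(\cdot, S_i) = 1$ collapses $\mu_i$ to $\mu_{i-1}$, and then one induces downward.

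Next I would set $\Omega \defeq \prod_{k=0}^\infty S_k$ and let $\calA$ be the collection of \emph{cylinder sets} of the form $A \times \prod_{k=i+1}^\infty S_k$ with $A \in \bigotimes_{k=0}^i \calS_k$ for some $i \in \bbZ^+$. After padding any two cylinders to a common coordinate index, $\calA$ is seen to be an algebra, and $\sigma(\calA) = \bigotimes_{k=0}^\infty \calS_k$. I would define $\mu$ on $\calA$ by $\mu(A \times \prod_{k=i+1}^\infty S_k) \defeq \mu_i(A)$; the projective consistency just established makes this independent of the representing index $i$, and finite additivity is immediate from the additivity of each $\mu_i$.

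The main obstacle is to upgrade finite additivity to countable additivity on $\calA$, equivalently to show that whenever $\{B_n\}_n$ is a decreasing sequence in $\calA$ with $\mu(B_n) \ge \varepsilon > 0$ for all $n$, one has $\bigcap_n B_n \neq \emptyset$. After padding I may assume $B_n = C_n \times \prod_{k>n} S_k$ with $C_n \in \bigotimes_{k=0}^n \calS_k$. The idea is to write $\mu(B_n)$ as an iterated integral against $\mu_0,\kappa_1,\dots,\kappa_n$ and peel off one coordinate at a time. Define $g_n^{(0)}(x_0) \defeq \int_{S_1}\!\cdots\!\int_{S_n} [(x_0,\dots,x_n) \in C_n]\, \kappa_n(\cdot,dx_n)\cdots\kappa_1(x_0,dx_1)$, so that $\mu(B_n) = \int_{S_0} g_n^{(0)}\, d\mu_0$. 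Since the $B_n$ decrease, the $g_n^{(0)}$ are bounded by $1$ and decrease pointwise to some $g^{(0)}$, and the bounded convergence theorem gives $\int g^{(0)}\,d\mu_0 \ge \varepsilon$, so there is $x_0^\star$ with $g^{(0)}(x_0^\star) \ge \varepsilon$. Repeating the argument one level deeper---integrating $[(x_0^\star,x_1,\dots,x_n)\in C_n]$ against $\kappa_2,\dots,\kappa_n$ and observing that the resulting functions, once integrated against $\kappa_1(x_0^\star,\cdot)$, still exceed $\varepsilon$---yields $x_1^\star$ with the analogous lower bound, and inductively a point $(x_0^\star,x_1^\star,\dots) \in \Omega$. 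At stage $n$ the innermost function is simply $[(x_0^\star,\dots,x_n^\star)\in C_n]$, which must therefore equal $1$ for every $n$; hence the constructed point lies in every $C_n$ and so in $\bigcap_n B_n$, the desired contradiction. This step is delicate precisely because no topological or regularity hypotheses are available (unlike in Kolmogorov's extension theorem); the kernels themselves supply the compactness-like mechanism through the successive selection of the $x_k^\star$.

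Finally, with countable additivity in hand, Carath\'eodory's extension theorem produces a measure $\mu$ on $\sigma(\calA) = \bigotimes_{k=0}^\infty \calS_k$ extending the cylinder premeasure, and $\mu(\Omega) = \mu_0(S_0) = 1$ shows $\mu$ is a probability measure satisfying $\mu(A \times \prod_{k=i+1}^\infty S_k) = \mu_i(A)$ by construction. Uniqueness follows because $\calA$ is a $\pi$-system generating the product $\sigma$-algebra: by the $\pi$--$\lambda$ theorem, any two probability measures agreeing on $\calA$ agree on all of $\sigma(\calA)$.
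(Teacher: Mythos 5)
This proposition is the classical Ionescu--Tulcea extension theorem; the paper states it as a background fact imported from the probability-theory literature and gives no proof of its own, so there is nothing to compare against except the standard textbook argument. Your sketch \emph{is} that standard argument, and it is correct: the finite-dimensional measures $\mu_i$ are consistent because each $\kappa_i$ is a probability kernel (so marginalizing out the last coordinate collapses $\mu_i$ to $\mu_{i-1}$); the cylinder sets form an algebra generating the product $\sigma$-algebra; and the only genuinely hard step is continuity at $\emptyset$, which you correctly handle by the coordinate-by-coordinate peeling of the kernels rather than by any topological regularity (this is exactly what distinguishes Ionescu--Tulcea from Kolmogorov extension). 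Two points deserve slightly more care in a full write-up. First, the measurability of the iterated integrals $g_n^{(0)}$ (and of the deeper functions $g_n^{(k)}$) must be justified from the kernel property that $x \mapsto \kappa_i(x,B)$ is measurable; this is routine but should be said. Second, the selection of $x_0^\star$ with $g^{(0)}(x_0^\star)\ge\varepsilon$ is legitimate precisely because $\mu_0$ is a probability measure and $g^{(0)}\le 1$ (if $g^{(0)}<\varepsilon$ everywhere, then $\int(\varepsilon-g^{(0)})\,d\mu_0>0$, contradicting $\int g^{(0)}\,d\mu_0\ge\varepsilon$), and the same reasoning applies at each subsequent stage with $\kappa_k(x_{k-1}^\star,\cdot)$ in place of $\mu_0$; your phrasing asserts this without proof, but the assertion is true as stated. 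With those details filled in, the Carath\'eodory extension and the $\pi$--$\lambda$ uniqueness argument complete the proof exactly as you describe.
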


\subsection{Conditional Expectations}
\label{Se:ConditionalExpectations}

Let $(\Omega,\calF,\prob)$ be a measure space, where $\prob$ is a probability measure.
We write $\calL^1$ for $\calL^1(\Omega,\calF,\prob)$.
Let $X \in \calL^1$ be an integrable random variable, and $\calG \subseteq \calF$ be a sub-$\sigma$-algebra of $\calF$.
Then there exists a random variable $Y \in \calL^1$ such that:
(i) $Y$ is $\calG$-measurable, and
(ii) for each $G \in \calG$, it holds that $\expe[Y; G] = \expe[X; G]$.
Such a random variable $Y$ is said to be a version of \emph{conditional expectation} $\expe[X \mid \calG]$ of $X$ given $\calG$.
Conditional expectations admit almost-sure uniqueness.
On the other hand, if $X$ is a nonnegative random variable, the conditional expectation $\expe[X \mid \calG]$ is also defined and almost-surely unique.

Intuitively, for some $\omega \in \Omega$, $\expe[X \mid \calG](\omega)$ is the expectation of $X$ given the set of values $Z(\omega)$ for every $\calG$-measurable random variable $Z$.
For example, if $\calG = \{\emptyset,\Omega\}$, which contains no information, then $\expe[X \mid \calG](\omega) = \expe[X]$.
If $\calG = \calF$, which contains full information, then $\expe[X \mid \calG](\omega) = X(\omega)$.

We review some useful properties of conditional expectations below.

\begin{proposition}\label{Prop:ConditionalExpectations}
  Let $X \in \calL^1$, and $\calG,\calH$ be sub-$\sigma$-algebras of $\calF$.
  \begin{enumerate}[(1)]
    \item If $Y$ is any version of $\expe[X \mid \calG]$, then $\expe[Y] = \expe[X]$.
    
    \item If $X$ is $\calG$-measurable, then $\expe[X \mid \calG] = X$, a.s.
    
    \item If $\calH$ is a sub-$\sigma$-algebra of $\calG$, then $\expe[\expe(X \mid \calG) \mid \calH] = \expe[X \mid \calH]$, a.s.
    
    \item If $Z$ is $\calG$-measurable and $Z \cdot X \in \calL^1$, then $\expe[Z \cdot X \mid \calG] = Z \cdot \expe[X \mid \calG]$, a.s.
    On the other hand, if $X$ is a nonnegative random variable and $Z$ is a nonnegative $\calG$-measurable random variable, then the property also holds, with the understanding that both sides might be infinite. 
  \end{enumerate}
\end{proposition}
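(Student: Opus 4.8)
The plan is to derive all four properties directly from the two defining conditions of a conditional expectation---that $\expe[X \mid \calG]$ is $\calG$-measurable and satisfies the integral-matching identity $\expe[\expe[X \mid \calG]; G] = \expe[X; G]$ for every $G \in \calG$---together with the almost-sure uniqueness stated just above. The recurring technique is: to show that a candidate random variable $W$ equals $\expe[X' \mid \calG']$ a.s., I verify that $W$ is $\calG'$-measurable and that $\expe[W; G] = \expe[X'; G]$ for all $G \in \calG'$, and then invoke uniqueness. Properties (1)--(3) will follow from a single application of this recipe each.

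For property (1), I would instantiate the integral-matching identity at $G = \Omega \in \calG$, which gives $\expe[Y] = \expe[Y; \Omega] = \expe[X; \Omega] = \expe[X]$. For property (2), when $X$ is already $\calG$-measurable, $X$ itself satisfies both defining conditions of $\expe[X \mid \calG]$ (the integral identity being a tautology), so uniqueness yields $\expe[X \mid \calG] = X$ a.s. For property (3), with $\calH$ a sub-$\sigma$-algebra of $\calG$, I would show that $\expe[X \mid \calH]$ is itself a version of $\expe[\expe[X \mid \calG] \mid \calH]$: it is $\calH$-measurable by definition, and for each $H \in \calH$---which also lies in $\calG$ because $\calH \subseteq \calG$---chaining the two defining identities gives $\expe[\expe[X \mid \calG]; H] = \expe[X; H] = \expe[\expe[X \mid \calH]; H]$, and uniqueness closes the argument.

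Property (4) is the main obstacle, because the candidate $Z \cdot \expe[X \mid \calG]$ cannot be tested against the defining identity in one stroke; I would use measure-theoretic induction (the ``standard machine''). First, for $Z = \mathrm{I}_A$ with $A \in \calG$, the claim reduces to $\expe[\mathrm{I}_A \expe[X \mid \calG]; G] = \expe[\expe[X \mid \calG]; A \cap G] = \expe[X; A \cap G] = \expe[\mathrm{I}_A X; G]$ for all $G \in \calG$, using $A \cap G \in \calG$. Linearity of both the conditional expectation and the integral extends the identity to nonnegative $\calG$-measurable simple functions. For the nonnegative case I would take simple functions $Z_n \uparrow Z$ and pass to the limit on both sides via the conditional and ordinary monotone convergence theorems; this is the step where the nonnegativity hypothesis is essential to keep the limits well-defined (with both sides possibly infinite, as the statement allows). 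Finally, for general $Z$ and $X$ with $Z \cdot X \in \calL^1$, I would split into positive and negative parts $Z = Z^+ - Z^-$ and $X = X^+ - X^-$, apply the nonnegative case to each of the four products, and recombine by linearity; the hypothesis $Z \cdot X \in \calL^1$ guarantees the differences are almost surely finite so that no $\infty - \infty$ indeterminacy arises.
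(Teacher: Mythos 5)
Your proof is correct, and there is nothing to compare it against: the paper states \cref{Prop:ConditionalExpectations} without proof, as a review of standard facts deferred to the textbooks~\cite{book:Billingsley12,book:Williams91}, and your argument is exactly the canonical one found there---the uniqueness-of-versions recipe for (1)--(3) and the standard machine (indicators, simple functions, monotone convergence, positive/negative parts) for (4), with the $Z \cdot X \in \calL^1$ hypothesis correctly invoked to rule out $\infty - \infty$ since each of $Z^{\pm} X^{\pm}$ is dominated by $|Z \cdot X|$. The only detail worth making explicit is in your monotone-convergence step: you need $\expe[X \mid \calG] \ge 0$ a.s.\ when $X \ge 0$ (obtained by integrating over the $\calG$-measurable set $\{\expe[X \mid \calG] < 0\}$) so that $Z_n \cdot \expe[X \mid \calG] \uparrow Z \cdot \expe[X \mid \calG]$ almost surely, after which the ordinary monotone convergence theorem applied to both sides of the defining identity suffices, with no separate conditional MCT required.
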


\subsection{Convergence Theorems}
\label{Se:OtherUsefulResults}

We review two important convergence theorems for series of random variables.

\begin{proposition}[Monotone convergence theorem]\label{Prop:MON}
  If $\{f_n\}_{n \in \bbZ^+}$ is a non-decreasing sequence of nonnegative measurable functions on a measure space $(S,\calS,\mu)$, and $\{f_n\}_{n \in \bbZ^+}$ converges to $f$ pointwise, then $f$ is measurable and
  $
  \lim_{n \to \infty} \mu(f_n) = \mu(f) \le \infty.
  $
  
  Further, the theorem still holds if $f$ is chosen as a measurable function and ``$\{f_n\}_{n \in \bbZ^+}$ converges to $f$ pointwise'' holds \emph{almost everywhere}, rather than \emph{everywhere}.
\end{proposition}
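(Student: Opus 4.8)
The plan is to follow the standard two-sided argument for the Lebesgue integral. First I would record that measurability of $f$ is immediate: as noted in the preliminaries, limiting operations preserve measurability, so the pointwise limit $f = \lim_{n} f_n$ of measurable functions is measurable, and hence the integral $\mu(f)$ is well-defined as an element of $[0,\infty]$.

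Next I would establish the easy inequality $\lim_{n} \mu(f_n) \le \mu(f)$. Since $\{f_n\}_{n \in \bbZ^+}$ is non-decreasing and bounded above by $f$, monotonicity of the integral shows that $\{\mu(f_n)\}_{n \in \bbZ^+}$ is a non-decreasing sequence in $[0,\infty]$ with $\mu(f_n) \le \mu(f)$ for every $n$; consequently $\lim_{n} \mu(f_n)$ exists (possibly $+\infty$) and satisfies $\lim_{n} \mu(f_n) \le \mu(f)$.

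The substantive step is the reverse inequality. I would fix an arbitrary nonnegative simple function $s = \sum_{i} a_i \mathrm{I}_{B_i}$ with $0 \le s \le f$ and a constant $c \in (0,1)$, and set $A_n \defeq \{ x \in S \mid f_n(x) \ge c\, s(x) \}$. Because $f_n \uparrow f \ge s$, these sets increase to $S$ (trivially on $\{s = 0\}$, and on $\{s > 0\}$ because eventually $f_n(x) > c\, s(x)$). Monotonicity then yields $\mu(f_n) \ge \mu(f_n; A_n) \ge c\, \mu(s; A_n)$. Observing that $\lambda A.\, \mu(s; A)$ is itself a measure on $(S,\calS)$, continuity from below gives $\mu(s; A_n) \to \mu(s; S) = \mu(s)$, so $\lim_{n} \mu(f_n) \ge c\, \mu(s)$. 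Letting $c \to 1^{-}$ and then taking the supremum over all simple $s$ with $0 \le s \le f$ — which by the very definition of the Lebesgue integral equals $\mu(f)$ — produces $\lim_{n} \mu(f_n) \ge \mu(f)$. Combining the two inequalities gives the desired equality.

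The main obstacle is this reverse inequality, and within it the justification that $\mu(s; A_n) \to \mu(s)$: this is exactly where continuity from below of the measure $A \mapsto \mu(s; A)$ is used, and where the auxiliary factor $c < 1$ is introduced so that the increasing sets $A_n$ exhaust $S$ rather than merely approximating it. Finally, for the almost-everywhere refinement, I would let $N \in \calS$ be the null set on which pointwise convergence fails and apply the already-proved statement to $f_n \mathrm{I}_{S \setminus N}$ and $f \mathrm{I}_{S \setminus N}$; since integrals are unaffected by modification on the null set $N$, the conclusion $\lim_{n} \mu(f_n) = \mu(f)$ continues to hold.
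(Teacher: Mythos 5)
Your proof is correct: it is the classical simple-function argument for the monotone convergence theorem (measurability via closure under limits, the easy inequality by monotonicity, the reverse inequality via the sets $A_n = \{f_n \ge c\,s\}$ with $c<1$ and continuity from below of $A \mapsto \mu(s;A)$, and the almost-everywhere refinement by truncating on the exceptional null set). Note that the paper itself states this proposition as background from standard measure theory and gives no proof of its own, so there is nothing to compare against; your argument is the canonical textbook one, and all steps — including the role of the factor $c<1$ in making the $A_n$ exhaust $S$, and the observation that multiplying by $\mathrm{I}_{S\setminus N}$ leaves all integrals unchanged — are sound.
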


\begin{proposition}[Dominated convergence theorem]\label{Prop:DOM}
  If $\{f_n\}_{n \in \bbZ^+}$ is a sequence of measurable functions on a measure space $(S,\calS,\mu)$, $\{f_n\}_{n \in \bbZ^+}$ converges to $f$ pointwise, and $\{f_n\}_{n \in \bbZ^+}$ is dominated by a nonnegative integrable function $g$ (i.e., $|f_n(x)| \le g(x)$ for all $n \in \bbZ^+$, $x \in S$), then $f$ is integrable and
  $
  \lim_{n \to \infty} \mu(f_n) = \mu(f).
  $
  
  Further, the theorem still holds if $f$ is chosen as a measurable function and ``$\{f_n\}_{n \in \bbZ^+}$ converges to $f$ pointwise and is dominated by $g$'' holds \emph{almost everywhere}.
\end{proposition}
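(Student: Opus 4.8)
The plan is to reduce the Dominated Convergence Theorem to Fatou's lemma, which in turn follows from the Monotone Convergence Theorem (\cref{Prop:MON}) already available. First I would establish that $f$ is integrable: since $|f_n| \le g$ for every $n$ and $f_n \to f$ pointwise, passing to the limit gives $|f| \le g$, and because $g$ is integrable, $\mu(|f|) \le \mu(g) < \infty$, so $f \in \calL^1(S,\calS,\mu)$.

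The core of the argument applies Fatou's lemma to two auxiliary sequences of \emph{nonnegative} functions. Because $|f_n| \le g$, both $g + f_n \ge 0$ and $g - f_n \ge 0$ pointwise. I would first record Fatou's lemma---for nonnegative measurable $h_n$ one has $\mu(\liminf_n h_n) \le \liminf_n \mu(h_n)$---obtaining it by applying \cref{Prop:MON} to the non-decreasing sequence $\inf_{k \ge n} h_k$. Applying Fatou to $g + f_n$, and using $\liminf_n (g + f_n) = g + f$, yields $\mu(g) + \mu(f) \le \mu(g) + \liminf_n \mu(f_n)$; since $\mu(g)$ is finite I may cancel it to get $\mu(f) \le \liminf_n \mu(f_n)$. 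Symmetrically, applying Fatou to $g - f_n$ and using $\liminf_n(g - f_n) = g - \limsup_n f_n = g - f$ gives $\mu(g) - \mu(f) \le \mu(g) - \limsup_n \mu(f_n)$, hence $\limsup_n \mu(f_n) \le \mu(f)$.

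Combining the two inequalities sandwiches the limit: $\limsup_n \mu(f_n) \le \mu(f) \le \liminf_n \mu(f_n)$, which forces $\lim_{n \to \infty} \mu(f_n) = \mu(f)$. For the almost-everywhere strengthening, I would note that the set $N$ on which pointwise convergence or the domination bound fails is a $\mu$-null set; redefining every $f_n$, $f$, and $g$ to be $0$ on $N$ changes none of the integrals $\mu(f_n)$, $\mu(f)$, $\mu(g)$ and restores the everywhere hypotheses, so the conclusion carries over unchanged.

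The main obstacle is the cancellation of $\mu(g)$: this is precisely the step where integrability of the dominating function $g$ is indispensable, since subtracting $\mu(g)$ from both sides of the two Fatou inequalities is legitimate only when $\mu(g) < \infty$. A secondary point requiring care is the identity $\liminf_n(g - f_n) = g - f$, which rests on $\liminf_n(-f_n) = -\limsup_n f_n$ together with the pointwise convergence $f_n \to f$ (so that $\limsup_n f_n = f$); this is the only place the full two-sided convergence, rather than a one-sided bound, is used.
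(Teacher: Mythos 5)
Your proof is correct, and it is the classical textbook argument: derive Fatou's lemma from the Monotone Convergence Theorem via the non-decreasing sequence $\inf_{k \ge n} h_k$, apply it to the nonnegative sequences $g + f_n$ and $g - f_n$, and cancel the finite quantity $\mu(g)$ to sandwich $\lim_n \mu(f_n)$ between $\mu(f)$ and itself. Note that the paper itself gives no proof of this proposition --- it is stated as standard background from measure theory, with the reader referred to textbooks (Billingsley; Williams) --- so there is no in-paper argument to compare against; your write-up is exactly the proof those references give. You also handle the two delicate points correctly: you establish integrability of $f$ \emph{before} invoking linearity and cancellation (both of which require $\mu(g) < \infty$), and your almost-everywhere strengthening by redefining $f_n$, $f$, $g$ to vanish on the exceptional null set is the standard reduction, since modification on a null set changes no integrals.
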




\section{Operational Cost Semantics}
\label{Se:OperationalCostSemantics}

We follow a distribution-based approach~\cite{ICFP:BLG16,JCSS:Kozen81} to define reduction rules for the semantics of \lang{}.
A probabilistic semantics steps a program configuration to a probability distribution on configurations.
To formally describe these distributions, we need to construct a measurable space of program configurations.
Our approach is to construct a measurable space for each of the four components of configurations, and then use their product measurable space as the semantic domain.
\begin{itemize}
  \item Valuations $\gamma : \mathsf{VID} \to \bbR$ are finite real-valued maps, so we define $(V,\calV) \defeq (\bbR^{\mathsf{VID}},\calB(\bbR^{\mathsf{VID}}))$ as the canonical structure on a finite-dimensional space.
  
  \item The executing statement $S$ can contain real numbers, so we need to ``lift'' the Borel $\sigma$-algebra on $\bbR$ to program statements.
  Intuitively, statements with exactly the same structure can be treated as vectors of parameters that correspond to their real-valued components.
  Formally, we achieve this by constructing a metric space on statements and then extracting a Borel $\sigma$-algebra from the metric space.
  \cref{Fi:MetricsForSemantics} presents a recursively defined metric $d_S$ on statements, as well as metrics $d_E$, $d_L$, and $d_D$ on expressions, conditions, and distributions, respectively, as they are required by $d_S$.
  We denote the result measurable space by $(S,\calS)$.
  
  \item
  A \emph{continuation} $K$ is either an empty continuation $\kstop$, a loop continuation $\kloop{L}{S}{K}$, or a sequence continuation $\kseq{S}{K}$.
  Similarly, we construct a measurable space $(K,\calK)$ on continuations by extracting from a metric space.
  \cref{Fi:MetricsForSemantics} shows the definition of a metric $d_K$ on continuations.
  
  \item The cost accumulator $\alpha \in \bbR$ is a real number, so we define $(W,\calW) \defeq (\bbR,\calB(\bbR))$ as the canonical measurable space on $\bbR$.
\end{itemize}
Then the semantic domain is defined as the product measurable space of the four components: $(\Sigma,\calO) \defeq (V,\calV) \otimes (S,\calS) \otimes (K,\calK) \otimes (W,\calW)$.

\begin{figure*}
  \centering
  \small
  
  \noindent\hrulefill
  
  \begin{minipage}{0.49\textwidth}
  \begin{align*}
    d_E(x,x) & \defeq 0 \\
    d_E(c_1,c_2) & \defeq \abs{c_1-c_2} \\
    d_E(\eadd{E_{11}}{E_{12}},\eadd{E_{21}}{E_{22}}) & \defeq d_E(E_{11},E_{21}) + d_E(E_{12},E_{22}) \\
    d_E(\emul{E_{11}}{E_{12}},\emul{E_{21}}{E_{22}}) & \defeq d_E(E_{11},E_{21}) + d_E(E_{12},E_{22}) \\
    d_E(E_1,E_2) & \defeq \infty~\text{otherwise}
    \end{align*}
  \end{minipage}
  \vrule
  \begin{minipage}{0.49\textwidth}
  \begin{align*}
    d_L(\ctop,\ctop) & \defeq 0 \\
    d_L(\cneg{L_1},\cneg{L_2}) & \defeq d_L(L_1,L_2) \\
    d_L(\cconj{L_{11}}{L_{12}},\cconj{L_{21}}{L_{22}}) & \defeq d_L(L_{11},L_{21}) + d_L(L_{12},L_{22}) \\
    d_L(\cle{E_{11}}{E_{12}},\cle{E_{21}}{E_{22}}) & \defeq d_E(E_{11},E_{21}) + d_E(E_{12},E_{22}) \\
    d_L(L_1,L_2) & \defeq \infty~\text{otherwise}
  \end{align*}
  \end{minipage}
  
  \noindent\hrulefill
  
  \begin{minipage}{1.0\textwidth}
    \begin{align*}
      d_D(\kw{uniform}(a_1,b_1),\kw{uniform}(a_2,b_2)) & \defeq |a_1-a_2| + |b_1-b_2| \\
      d_D(D_1,D_2) & \defeq \infty~\text{otherwise}
    \end{align*}
  \end{minipage}
  
  \noindent\hrulefill
  
  \begin{minipage}{1.0\textwidth}
  \begin{align*}
    d_S(\iskip,\iskip) & \defeq 0 \\
    d_S(\itick{c_1},\itick{c_2}) & \defeq |c_1 - c_2| \\
    d_S(\iassign{x}{E_1},\iassign{x}{E_2}) & \defeq d_E(E_1,E_2) \\
    d_S(\isample{x}{D_1},\isample{x}{D_2}) & \defeq d_D(D_1,D_2) \\
    d_S(\iinvoke{f},\iinvoke{f} ) & \defeq 0 \\
    d_S(\iprob{p_1}{S_{11}}{S_{12}},\iprob{p_2}{S_{21}}{S_{22}}) & \defeq |p_1-p_2| + d_S(S_{11},S_{21}) + d_S(S_{12},S_{22}) \\
    d_S(\icond{L_1}{S_{11}}{S_{12}},\icond{L_2}{S_{21}}{S_{22}}) & \defeq d_L(L_1,L_2) + d_S(S_{11},S_{21}) + d_S(S_{12},S_{22}) \\
    d_S(\iloop{L_1}{S_1}, \iloop{L_2}{S_2}) & \defeq d_L(L_1,L_2)+d_S(S_1,S_2) \\
    d_S(S_{11};S_{12} , S_{21};S_{22}) & \defeq d_S(S_{11},S_{21}) + d_S(S_{12},S_{22}) \\
    d_S(S_1,S_2) & \defeq \infty ~\text{otherwise}
  \end{align*}
  \end{minipage}
  
  \noindent\hrulefill
  
  \begin{minipage}{1.0\textwidth}
  \begin{align*}
    d_K(\kstop, \kstop ) & \defeq 0 \\
    d_K(\kloop{L_1}{S_1}{K_1},\kloop{L_2}{S_2}{K_2}) & \defeq d_L(L_1,L_2) + d_S(S_1,S_2) + d_K(K_1,K_2) \\
    d_K(\kseq{S_1}{K_1},\kseq{S_2}{K_2}) & \defeq d_S(S_1,S_2) + d_K(K_1,K_2) \\
    d_K(K_1,K_2) & \defeq \infty ~\text{otherwise}
  \end{align*}
  \end{minipage}
  
  \noindent\hrulefill
  
  \caption{Metrics for expressions, conditions, distributions, statements, and continuations.}
  \label{Fi:MetricsForSemantics}
\end{figure*}

\cref{Fi:OperationalSemantics} presents the rules of the evaluation relation $\sigma \mapsto \mu$ for \lang{} where $\sigma$ is a configuration and $\mu$ is a probability distribution on configurations.
Note that in \lang{}, expressions $E$ and conditions $L$ are deterministic, so we define a standard big-step evaluation relation for them, written $\gamma \vdash E \Downarrow r$ and $\gamma \vdash L \Downarrow b$, where $\gamma$ is a valuation, $r \in \bbR$, and $b \in \{ \top, \bot \}$.
Most of the rules in \cref{Fi:OperationalSemantics}, except \textsc{(E-Sample)} and \textsc{(E-Prob)}, are also deterministic as they step to a Dirac measure.

\begin{figure*}
\flushleft{\small\fbox{$\gamma \vdash E \Downarrow r $}\;\;\;\;``the expression $E$ evaluates to a real value $r$ under the valuation $\gamma$''}
\begin{mathpar}\small
  \Rule{E-Var}{ \gamma(x) = r }{ \gamma \vdash x \Downarrow r }
  \and
  \Rule{E-Const}{ }{ \gamma \vdash c \Downarrow c }
  \and
  \Rule{E-Add}{ \gamma \vdash E_1 \Downarrow r_1 \\ \gamma \vdash E_2 \Downarrow r_2 \\ r = r_1 + r_2 }{ \gamma \vdash \eadd{ E_1 }{ E_2} \Downarrow r }
  \and
  \Rule{E-Mul}{ \gamma \vdash E_1 \Downarrow r_1 \\ \gamma \vdash E_2 \Downarrow r_2 \\ r = r_1 \cdot r_2}{\gamma \vdash \emul{E_1}{ E_2 }\Downarrow r}
\end{mathpar}
\flushleft{\small\fbox{$\gamma \vdash L \Downarrow b$}\;\;\;\;``the condition $L$ evaluates to a Boolean value $b$ under the valuation $\gamma$''}
\begin{mathpar}\small
  \Rule{E-Top}{ }{ \gamma \vdash \ctop \Downarrow \top }
  \and
  \Rule{E-Neg}{ \gamma \vdash L \Downarrow b }{ \gamma \vdash \cneg{L} \Downarrow \neg b }
  \and
  \Rule{E-Conj}{ \gamma \vdash L_1 \Downarrow b_1 \\ \gamma \vdash L_2 \Downarrow b_2 }{ \gamma \vdash \cconj{L_1}{L_2} \Downarrow b_1 \wedge b_2 }
  \and
  \Rule{E-Le}{ \gamma \vdash E_1 \Downarrow r_1 \\ \gamma \vdash E_2 \Downarrow r_2 }{ \gamma \vdash \cle{E_1}{E_2} \Downarrow [r_1 \le r_2] }
\end{mathpar}
\flushleft{\small\fbox{$\tuple{\gamma,S,K,\alpha} \mapsto \mu$}\;\;\;\;``the configuration $\tuple{\gamma,S,K,\alpha}$ steps to a probability distribution $\mu$ on $\tuple{\gamma',S',K',\alpha'}$'s''}
\begin{mathpar}\small
  \Rule{E-Skip-Stop}{ }{ \tuple{\gamma,\iskip,\kstop,\alpha} \mapsto \delta(\tuple{\gamma,\iskip,\kstop,\alpha}) }
  \and
  \Rule{E-Skip-Loop}{ \gamma \vdash L \Downarrow b }{ \tuple{\gamma, \iskip , \kloop{S}{L}{K} ,\alpha} \mapsto [b] \cdot \delta(\tuple{\gamma,S, \kloop{S}{L}{K} ,\alpha}) + [\neg b] \cdot \delta( \tuple{\gamma,\iskip, K,\alpha} ) }
  \and
  \Rule{E-Skip-Seq}{ }{ \tuple{\gamma,\iskip,\kseq{S}{K},\alpha} \mapsto \delta( \tuple{\gamma,S,K,\alpha} )  }
  \and
  \Rule{E-Tick}{  }{ \tuple{\gamma,\itick{c}  ,K,\alpha} \mapsto \delta(\tuple{\gamma, \iskip ,K,\alpha+c} ) }
  \and
  \Rule{E-Assign}{ \gamma \vdash E \Downarrow r }{ \tuple{\gamma, \iassign{x}{E} ,K,\alpha} \mapsto \delta( \tuple{\gamma[x \mapsto r], \iskip ,K,\alpha} ) }
  \and
  \Rule{E-Sample}{  }{ \tuple{\gamma,\isample{x}{D} ,K,\alpha} \mapsto \mu_D \bind \lambda r. \delta( \tuple{ \gamma[x \mapsto r],  \iskip ,K,\alpha } ) }
  \and
  \Rule{E-Call}{  }{ \tuple{\gamma, \iinvoke{f}  ,K,\alpha} \mapsto \delta( \tuple{\gamma , \scrD(f) , K,\alpha } ) }
  \and
  \Rule{E-Prob}{ }{ \tuple{\gamma, \iprob{p}{S_1}{S_2} ,K,\alpha} \mapsto p \cdot \delta(\tuple{\gamma,S_1,K,\alpha}) + (1-p) \cdot \delta(\tuple{\gamma,S_2,K,\alpha}) }
  \and
  \Rule{E-Cond}{  \gamma \vdash L \Downarrow b }{ \tuple{\gamma,\icond{L}{S_1}{S_2} ,K,\alpha} \mapsto [b] \cdot \delta(\tuple{\gamma,S_1,K,\alpha}) + [\neg b] \cdot \delta(\tuple{\gamma,S_2,K,\alpha}) }
  \and
  \Rule{E-Loop}{ }{ \tuple{\gamma,\iloop{L}{S},K,\alpha} \mapsto \delta( \tuple{ \gamma,\iskip,\kloop{L}{S}{K},\alpha } ) }
  \and
  \Rule{E-Seq}{  }{ \tuple{\gamma,S_1;S_2, K,\alpha } \mapsto \delta( \tuple{ \gamma,S_1,\kseq{S_2}{K},\alpha } ) }
\end{mathpar}  
\caption{Rules of the operational semantics of \lang{}.}
\label{Fi:OperationalSemantics}
\end{figure*}

The evaluation relation $\mapsto$ can be interpreted as a \emph{distribution transformer}.
Indeed, $\mapsto$ can be seen as a probability kernel.

\begin{lemma}\label{Lem:SoundnessForExpCond}
  Let $\gamma :\mathsf{VID} \to \bbR$ be a valuation.
  \begin{itemize}
    \item Let $E$ be an expression. Then there exists a unique $r \in \bbR$ such that $\gamma \vdash E \Downarrow r$.
    \item Let $L$ be a condition. Then there exists a unique $b \in \{ \top, \bot \}$ such that $\gamma \vdash L \Downarrow b$.
  \end{itemize}
\end{lemma}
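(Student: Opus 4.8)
The plan is to prove both statements by structural induction on the syntax given in \cref{Fi:Syntax}. Since expressions $E$ never mention conditions, whereas conditions $L$ refer to expressions only through the form $\cle{E_1}{E_2}$, I would first establish the claim for expressions by induction on $E$, and then the claim for conditions by induction on $L$, invoking the already-proved expression result in the $\cle{E_1}{E_2}$ case. In each case the argument splits into existence (totality) and uniqueness (determinism).

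For existence, the key observation is that the evaluation rules are syntax-directed: every constructor of $E$ (and of $L$) is the conclusion of exactly one rule. For a variable $x$, \textsc{(E-Var)} applies and $\gamma(x)$ is defined because $\gamma$ is a total map on $\mathsf{VID}$; for a constant $c$, \textsc{(E-Const)} applies directly. For $\eadd{E_1}{E_2}$ and $\emul{E_1}{E_2}$, the induction hypothesis supplies values $r_1,r_2$ for the subexpressions, and the side conditions $r = r_1 + r_2$ and $r = r_1 \cdot r_2$ are satisfiable in $\bbR$, so \textsc{(E-Add)} resp.\ \textsc{(E-Mul)} produces a witness. The condition cases are analogous: $\ctop$ uses \textsc{(E-Top)}; $\cneg{L}$ and $\cconj{L_1}{L_2}$ combine the induction hypothesis with the totality of $\neg$ and $\wedge$ on $\{\top,\bot\}$; and $\cle{E_1}{E_2}$ uses the expression result to obtain $r_1,r_2$ and then the Iverson bracket $[r_1 \le r_2]$ to supply a Boolean.

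For uniqueness, I would again use syntax-directedness: since each syntactic form matches only one rule, any two derivations $\gamma \vdash E \Downarrow r$ and $\gamma \vdash E \Downarrow r'$ must end with the same rule, so I may compare their premises. In the leaf cases the value is fixed outright ($\gamma(x)$, $c$, $\top$). In the compound cases the induction hypothesis forces the sub-values to coincide, and because $+$, $\cdot$, $\neg$, $\wedge$, and $[\cdot \le \cdot]$ are single-valued functions, the overall result is determined; hence $r = r'$ (resp.\ $b = b'$).

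This lemma is routine, and I do not anticipate a genuine obstacle; the only points deserving explicit mention are that $\gamma$ is a \emph{total} valuation (so \textsc{(E-Var)} never gets stuck) and that the defining rules are syntax-directed (which is exactly what makes both existence and uniqueness fall out of the induction hypotheses). A clean way to package the whole argument is to note that the rules simply define $r$ and $b$ by recursion on the term structure, so each relation $\Downarrow$ is the graph of a total function of $\gamma$ and the term; the lemma is then precisely the statement that these functions are well-defined.
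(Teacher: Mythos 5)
Your proposal is correct and matches the paper's approach: the paper's entire proof is ``By induction on the structure of $E$ and $L$,'' which is exactly the structural induction you carry out, with the syntax-directedness and totality observations made explicit. No gaps.
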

\begin{proof}
  By induction on the structure of $E$ and $L$.
\end{proof}

\begin{lemma}\label{Lem:EvaluationUniqueness}
  For every configuration $\sigma \in \Sigma$, there exists a unique $\mu\in\bbD(\Sigma,\calO)$ such that $\sigma \mapsto \mu$.
\end{lemma}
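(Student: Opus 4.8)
The plan is to prove the statement by a structural case analysis on the executing statement $S$, with a secondary case split on the continuation $K$ in the single case $S = \iskip$. The crucial observation is that the rules of \cref{Fi:OperationalSemantics} concluding a transition $\tuple{\gamma,S,K,\alpha} \mapsto \mu$ are \emph{syntax-directed}: the applicable rule is fixed by the outermost constructor of $S$ (one of \textsc{(E-Tick)}, \textsc{(E-Assign)}, \textsc{(E-Sample)}, \textsc{(E-Call)}, \textsc{(E-Prob)}, \textsc{(E-Cond)}, \textsc{(E-Loop)}, \textsc{(E-Seq)}), and when $S = \iskip$ by the outermost constructor of $K$ (one of \textsc{(E-Skip-Stop)}, \textsc{(E-Skip-Loop)}, \textsc{(E-Skip-Seq)}). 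These constructors partition all configurations, so for each $\sigma$ exactly one rule can fire; the cases are thus exhaustive and pairwise mutually exclusive, which reduces both existence and uniqueness to a rule-by-rule check.

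For existence, I would verify that the right-hand side prescribed by the matching rule is a genuine element of $\bbD(\Sigma,\calO)$. The rules \textsc{(E-Skip-Stop)}, \textsc{(E-Skip-Seq)}, \textsc{(E-Tick)}, \textsc{(E-Assign)}, \textsc{(E-Call)}, \textsc{(E-Loop)}, and \textsc{(E-Seq)} produce a single Dirac measure $\delta(\sigma')$, which is a probability measure by construction. The rules \textsc{(E-Skip-Loop)}, \textsc{(E-Prob)}, and \textsc{(E-Cond)} produce a convex combination $p \cdot \mu_1 + (1-p) \cdot \mu_2$ with weights in $\{0,1\}$ (resp.\ the branching probability in $[0,1]$); since $\mu_1,\mu_2$ are Dirac and the weights sum to one, the result is again a probability measure. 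The remaining rule \textsc{(E-Sample)} produces the push-forward $\mu_D \bind (\lambda r.\, \delta(\tuple{\gamma[x\mapsto r],\iskip,K,\alpha}))$, which is a probability measure provided $\mu_D$ is one (true by the assumption that each $D$ carries $\mu_D \in \bbD(\bbR)$) and the map $r \mapsto \delta(\tuple{\gamma[x\mapsto r],\iskip,K,\alpha})$ is a probability kernel from $\bbR$ to $(\Sigma,\calO)$.

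For uniqueness, within each case the right-hand side is pinned down once its premises are. The only rules with nontrivial premises are \textsc{(E-Assign)} (requiring $\gamma \vdash E \Downarrow r$) and \textsc{(E-Cond)}, \textsc{(E-Skip-Loop)} (requiring $\gamma \vdash L \Downarrow b$); by \cref{Lem:SoundnessForExpCond} the value $r$ (resp.\ $b$) is uniquely determined by $\gamma$, so the target configuration and hence $\mu$ is unique. All other rules have no choices in their premises, so, combined with mutual exclusivity, the $\mu$ with $\sigma \mapsto \mu$ is unique.

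The main obstacle is the measurability obligation in the \textsc{(E-Sample)} case, i.e.\ showing that $\lambda r.\, \delta(\tuple{\gamma[x\mapsto r],\iskip,K,\alpha})$ is indeed a probability kernel, so that the $\bind$ is well-defined. Here I would invoke the Borel structure on $(\Sigma,\calO)$ induced by the metrics of \cref{Fi:MetricsForSemantics}: it suffices to show that $r \mapsto \tuple{\gamma[x\mapsto r],\iskip,K,\alpha}$ is $(\calB(\bbR),\calO)$-measurable, since composing a measurable map with $\delta(\cdot)$ yields a probability kernel. As $\iskip$, $K$, and $\alpha$ are held fixed, only the valuation component varies, and $r \mapsto \gamma[x\mapsto r]$ is continuous (hence measurable) into $(V,\calV) = (\bbR^{\mathsf{VID}},\calB(\bbR^{\mathsf{VID}}))$ because it alters a single coordinate while fixing the rest; measurability into the product space $(\Sigma,\calO)$ then follows coordinatewise, the constant components being trivially measurable. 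This is the one place where the explicit metric-space construction of the configuration space is essential; the remaining cases are routine.
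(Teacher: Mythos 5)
Your proposal is correct and follows essentially the same route as the paper: a case analysis on the structure of $S$, with a further case split on $K$ when $S = \iskip$, invoking \cref{Lem:SoundnessForExpCond} to pin down the evaluated values of expressions and conditions. The extra detail you supply on existence and on the kernel/measurability obligation in the \textsc{(E-Sample)} case is a sound elaboration of what the paper leaves implicit (and partly defers to \cref{The:EvaluationIsKernel}).
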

\begin{proof}
  Let $\sigma = \tuple{\gamma,S,K,\alpha}$.
  Then by case analysis on the structure of $S$, followed by a case analysis on the structure of $K$ if $S = \iskip$.
  The rest of the proof appeals to \cref{Lem:SoundnessForExpCond}. 
\end{proof}

\begin{theorem}\label{The:EvaluationIsKernel}
  The evaluation relation $\mapsto$ defines a probability kernel on program configurations.
\end{theorem}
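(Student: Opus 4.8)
The plan is to exhibit the relation $\mapsto$ as a function and then verify the two defining conditions of a probability kernel on $(\Sigma,\calO)$. By \cref{Lem:EvaluationUniqueness}, $\mapsto$ is a well-defined total function $\sigma \mapsto \mu_\sigma$ into $\bbD(\Sigma,\calO)$, so setting $\kappa(\sigma,A) \defeq \mu_\sigma(A)$ gives a well-defined candidate; and for each fixed $\sigma$ the map $\lambda A.\,\kappa(\sigma,A) = \mu_\sigma$ is, by construction, a probability measure. Hence the entire burden is the measurability condition: for every $A \in \calO$, the map $\lambda \sigma.\,\kappa(\sigma,A)$ is $(\calO,\calB([0,1]))$-measurable.

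To establish this I would first reduce the class of test sets. The collection of $A \in \calO$ for which $\lambda \sigma.\,\kappa(\sigma,A)$ is measurable contains $\Sigma$ and is closed under proper differences (by linearity) and increasing countable unions (by the monotone-convergence theorem), so it is a Dynkin system; by the $\pi$-$\lambda$ theorem it suffices to check measurability for $A$ ranging over the $\pi$-system of measurable rectangles $A = A_V \times A_S \times A_K \times A_W$ that generate $\calO$. Next I would partition the source space $\Sigma$ according to which rule of \cref{Fi:OperationalSemantics} applies. Because the metrics of \cref{Fi:MetricsForSemantics} assign distance $\infty$ between statements, and between continuations, of different top-level shapes, each such block---statements/continuations of one fixed shape, parametrized by their real-valued components together with $\gamma$ and $\alpha$---is a measurable (indeed clopen) subset of $\Sigma$, so it remains to verify measurability of $\sigma \mapsto \mu_\sigma(A)$ restricted to each block.

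For the deterministic rules the restricted kernel is a Dirac measure $\mu_\sigma = \delta(g(\sigma))$, whence $\mu_\sigma(A) = \mathrm{I}_{g^{-1}(A)}(\sigma)$ and measurability reduces to measurability of the ``next-configuration'' map $g$. These maps are built from coordinate projections, the substitution $\gamma \mapsto \gamma[x \mapsto r]$, constant-cost updates $\alpha \mapsto \alpha + c$, and the deterministic evaluation of expressions and conditions. I would record as an auxiliary lemma---proved by structural induction over the finitely many real-parametrized shapes of $E$ and $L$, using that each shape is a measurable block and that addition and multiplication are continuous---that the evaluations $(\gamma,E) \mapsto r$ and $(\gamma,L) \mapsto b$ of \cref{Lem:SoundnessForExpCond} are \emph{jointly} measurable. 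Rule \textsc{(E-Prob)} contributes $p \cdot \delta(\cdots) + (1-p)\cdot\delta(\cdots)$, measurable in $\sigma$ since $p$, the two image maps, and hence their combination are.

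The main obstacle is rule \textsc{(E-Sample)}, where $\mu_\sigma = \mu_D \bind \lambda r.\,\delta(\tuple{\gamma[x \mapsto r],\iskip,K,\alpha})$ and the parameters of $D$ are themselves variable real components of $\sigma$. Here $\kappa(\sigma,A) = \int_\bbR [\tuple{\gamma[x \mapsto r],\iskip,K,\alpha} \in A]\,\mu_D(dr)$, and I must show this is jointly measurable in $(a,b,\gamma,K,\alpha)$ for $D = \kw{uniform}(a,b)$ (and likewise for any other admitted family). The plan is to argue that the integrand is jointly measurable in $r$ and $\sigma$ (again via the joint measurability of the substitution map) and that the density $\lambda (r,a,b).\,\frac{[a \le r \le b]}{b-a}$ is jointly measurable; then, treating first indicator integrands on rectangles---for which the integral factors into an elementary integral $\int [\gamma[x\mapsto r] \in A_V]\,\mu_{\kw{uniform}(a,b)}(dr)$ times indicators of the remaining coordinates---and extending by monotone convergence, measurability of the integral in the parameters follows by a parametrized Tonelli argument. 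Establishing this joint measurability of the sampling push-forward in the distribution parameters is the crux; once it is in hand, reassembling the finitely many blocks under the rectangle reduction completes the proof that $\kappa$ is a probability kernel.
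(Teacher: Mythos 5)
Your proposal is correct and follows essentially the same route as the paper's proof: both reduce to measurability of $\lambda\sigma.\,\kappa(\sigma,A)$, decompose $\Sigma$ into measurable blocks of configurations sharing a statement/continuation shape (the paper's ``skeletons''), exploiting that the metrics put different shapes at infinite distance, and then do a case analysis on the evaluation rules in which the deterministic rules reduce to preimages under a measurable next-configuration map and \textsc{(E-Sample)} reduces to measurability of the parametrized distribution family $\lambda(a,b).\,\mu_{\kw{uniform}(a,b)}$ as a (sub-probability) kernel in its real parameters. The only cosmetic differences are that you shrink the class of test sets $A$ to rectangles via a $\pi$-$\lambda$ argument whereas the paper instead shrinks the target sets $B$ to half-lines $(-\infty,t]$, and that you spell out the joint-measurability/Tonelli step for the sampling case that the paper merely asserts via the existence of the kernel $\kappa_D$.
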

\begin{proof}
  \cref{Lem:EvaluationUniqueness} tells us that $\mapsto$ can be seen as a function $\hat{\mapsto}$ defined as follows:
  \[
  \hat{\mapsto}(\sigma,A) \defeq \mu(A) \quad \text{where} \quad \sigma \mapsto \mu.
  \]
  It is clear that $\lambda A. \hat{\mapsto}(\sigma,A)$ is a probability measure.
  On the other hand, to show that $\lambda\sigma. \hat{\mapsto}(\sigma,A)$ is measurable for any $A \in \calO$, we need to prove that for $B \in \calB(\bbR)$, it holds that $\scrO(A,B) \defeq (\lambda\sigma. \hat{\mapsto}(\sigma,A))^{-1}(B) \in \calO$.
  
  We introduce \emph{skeletons} of programs to separate real numbers and discrete structures.
  \begin{align*}
    \hat{S} & \Coloneqq \iskip \mid \itick{\square_\ell} \mid \iassign{x}{\hat{E}} \mid \isample{x}{\hat{D}} \mid \iinvoke{f} \\
    & \mid \iprob{\square_\ell}{\hat{S}_1}{\hat{S}_2} \mid \icond{\hat{L}}{\hat{S}_1}{\hat{S}_2} \\
    & \mid \iloop{\hat{L}}{\hat{S}} \mid \hat{S}_1; \hat{S}_2 \\
    \hat{L} & \Coloneqq \ctop \mid \cneg{\hat{L}} \mid \cconj{\hat{L}_1}{\hat{L}_2} \mid \cle{\hat{E}_1}{\hat{E}_2} \\
    \hat{E} & \Coloneqq x \mid \square_\ell \mid \eadd{\hat{E}_1}{\hat{E}_2} \mid \emul{\hat{E}_1}{\hat{E}_2} \\
    \hat{D} & \Coloneqq \kw{uniform}(\square_{\ell_a},\square_{\ell_b}) \\
    \hat{K} & \Coloneqq \kstop \mid \kloop{\hat{L}}{\hat{S}}{\hat{K}} \mid \kseq{\hat{S}}{\hat{K}}
  \end{align*}
  The \emph{holes} $\square_\ell$ are placeholders for real numbers parameterized by \emph{locations} $\ell \in \m{LOC}$.
  We assume that the holes in a program structure are always pairwise distinct.
  Let $\eta : \m{LOC} \to \bbR$ be a map from holes to real numbers and $\eta(\hat{S})$ (resp., $\eta(\hat{L})$, $\eta(\hat{E})$, $\eta(\hat{D})$, $\eta(\hat{K})$) be the instantiation of a statement (resp., condition, expression, distribution, continuation) skeleton by substituting $\eta(\ell)$ for $\square_\ell$.
  One important property of skeletons is that the ``distance'' between any concretizations of two different skeletons is always infinity with respect to the metrics in \cref{Fi:MetricsForSemantics}.
  
  Observe that
  \[
  \scrO(A,B) = \bigcup_{\hat{S},\hat{K}} \scrO(A,B) \cap \{ \tuple{\gamma,\eta(\hat{S}),\eta(\hat{K}),\alpha} \mid \text{any}~\gamma,\alpha,\eta  \}
  \]
  and that $\hat{S},\hat{K}$ are countable families of statement and continuation skeletons.
  Thus it suffices to prove that every set in the union, which we denote by $\scrO(A,B) \cap \scrC(\hat{S},\hat{K})$ later in the proof, is measurable.
  Note that $\scrC(\hat{S},\hat{K})$ itself is indeed measurable.
  Further, the skeletons $\hat{S}$ and $\hat{K}$ are able to determine the evaluation rule for all concretized configurations.
  Thus we can proceed by a case analysis on the evaluation rules.
  
  To aid the case analysis, we define a deterministic evaluation relation $\xmapsto{\text{det}}$ by getting rid of the $\delta(\cdot)$ notations in the rules in \cref{Fi:OperationalSemantics} except probabilistic ones \textsc{(E-Sample)} and \textsc{(E-Prob)}.
  Obviously, $\xmapsto{\text{det}}$ can be interpreted as a measurable function on configurations.
  
  \begin{itemize}
    \item If the evaluation rule is deterministic, then we have
    \begin{align*}
      & \scrO(A,B) \cap \scrC(\hat{S},\hat{K}) \\
      ={} & \{ \sigma \mid \sigma \mapsto \mu, \mu(A) \in B \} \cap \scrC(\hat{S},\hat{K}) \\
      ={} & \{ \sigma \mid \sigma \xmapsto{\text{det}} \sigma', [\sigma' \in A] \in B \} \cap \scrC(\hat{S},\hat{K}) \\
      ={} &
      \begin{dcases*}
          \scrC(\hat{S},\hat{K}) & if $\{0,1\} \subseteq B$ \\
          {\xmapsto{\text{det}}}^{-1}(A) \cap \scrC(\hat{S},\hat{K}) & if $1 \in B$ and $0 \not\in B$ \\
          {\xmapsto{\text{det}}}^{-1}(A^c) \cap \scrC(\hat{S},\hat{K}) & if $0 \in B$ and $1 \not\in B$ \\
          \emptyset & if $\{0,1\} \cap B = \emptyset$.
      \end{dcases*} 
    \end{align*}
    The sets in all the cases are measurable, so is the set $\scrO(A,B) \cap \scrC(\hat{S},\hat{K})$.
    
    \item \textsc{(E-Prob)}: Consider $B$ with the form $(-\infty,t]$ with $t \in \bbR$.
    If $t \ge 1$, then $\scrO(A,B) = \Sigma$.
    Otherwise, let us assume $t < 1$.
    Let $\hat{S} = \iprob{\square}{\hat{S}_1}{\hat{S}_2}$.
    Then we have
    \begin{align*}
      & \scrO(A,B) \cap \scrC(\hat{S},\hat{K}) \\
      ={} & \{ \sigma \mid \sigma \mapsto \mu, \mu(A) \in B \} \cap \scrC(\hat{S},\hat{K}) \\
      ={} & \{ \sigma \mid \sigma \mapsto p \cdot \delta(\sigma_1) + (1-p) \cdot \delta(\sigma_2),\\
      & \quad p \cdot [\sigma_1 \in A] + (1-p) \cdot [\sigma_2 \in A] \in B \} \cap \scrC(\hat{S},\hat{K}) \\
      ={} & \scrC(\hat{S},\hat{K}) \cap 
       \{  \tuple{\gamma,\iprob{p}{S_1}{S_2},K,\alpha} \mid \\
      & \quad p \cdot [\tuple{\gamma,S_1,K,\alpha} \in A] + (1-p) \cdot [\tuple{\gamma,S_2,K,\alpha} \in A] \le t  \}  \\
      ={} & \scrC(\hat{S},\hat{K}) \cap {} \\
      & \{ \tuple{\gamma,\iprob{p}{S_1}{S_2},K,\alpha} \mid \\
      & \quad (\tuple{\gamma,S_1,K,\alpha} \in A, \tuple{\gamma,S_2,K,\alpha}\not\in A, p \le t)  \vee {} \\
      & \quad (\tuple{\gamma,S_2,K,\alpha} \in A, \tuple{\gamma,S_1,K,\alpha}\not\in A, 1 - p \le t) \}. 
    \end{align*}
    The set above is measurable because $A$ and $A^c$ are measurable, as well as $\{ p \le t\}$ and $\{ p \ge 1 - t \}$ are measurable in $\bbR$.
    
    \item \textsc{(E-Sample)}: Consider $B$ with the form $(-\infty,t]$ with $t \in \bbR$.
    Similar to the previous case, we assume that $t < 1$.
    Let $\hat{S} = \isample{x}{\kw{uniform}(\square_{\ell_a},\square_{\ell_b})}$, without loss of generality.
    Then we have
    \begin{align*}
      & \scrO(A,B) \cap \scrC(\hat{S},\hat{K}) \\
      ={} & \{\sigma \mid \sigma \mapsto \mu, \mu(A) \in B \} \cap \scrC(\hat{S},\hat{K}) \\
      ={} & \{ \sigma \mid \sigma \mapsto \mu_D \bind \kappa_{\sigma}, \int \kappa_{\sigma}(r)(A) \mu_D(dr) \le t \} \cap \scrC(\hat{S},\hat{K}) \\
      ={} & \scrC(\hat{S},\hat{K}) \cap  \{ \sigma \mid \sigma \mapsto \mu_D \bind \kappa_{\sigma}, \\
      & \quad \mu_D(\{ r \mid \tuple{\gamma[x \mapsto r], \iskip,K,\alpha} \in A \}) \le t\} \\
      ={} & \scrC(\hat{S},\hat{K}) \cap \{ \tuple{\gamma,\isample{x}{\kw{uniform}(a,b)},K,\alpha} \mid a<b, \\
      & \quad \mu_{\kw{uniform}(a,b)}(\{ r \mid \tuple{\gamma[x \mapsto r], \iskip,K,\alpha} \in A \}) \le t  \}, 
    \end{align*}
    where $\kappa_{\tuple{\gamma, S ,K,\alpha}} \defeq \lambda r. \delta(\tuple{\gamma[x \mapsto r], \iskip, K, \alpha})$.
    For fixed $\gamma,K,\alpha$, the set $\{r \mid \tuple{\gamma[x \mapsto r],\iskip,K,\alpha} \in A\}$ is measurable in $\bbR$.
    For the distributions considered in this paper, there is a sub-probability kernel $\kappa_D : \bbR^{\mathrm{ar}(D)} \rightsquigarrow \bbR$.
    For example, $\kappa_{\kw{uniform}}(a,b)$ is defined to be $\mu_{\kw{uniform}(a,b)}$ if $a < b$, or $\mathbf{0}$ otherwise.
    Therefore, $\lambda(a,b). \kappa_{\kw{uniform}}(a,b)(\{r \mid \tuple{\gamma[x \mapsto r],\iskip,K,\alpha} \in A\})$ is measurable, and its inversion on $(-\infty,t]$ is a measurable set on distribution parameters $(a,b)$.
    Hence the set above is measurable.
  \end{itemize}
\end{proof}

\section{Trace-Based Cost Semantics}
\label{Se:TraceBasedCostSemantics}

To reason about moments of the accumulated cost, we follow the Markov-chain-based reasoning~\cite{ESOP:KKM16,LICS:OKK16} to develop a trace-based cost semantics for \lang{}.
Let $(\Omega,\calF) \defeq \bigotimes_{n=0}^\infty (\Sigma,\calO)$ be a measurable space of infinite traces on program configurations.
Let $\{\calF_n\}_{n \in \bbZ^+}$ be a filtration, i.e., an increasing sequence $\calF_0 \subseteq \calF_1 \subseteq \cdots \subseteq \calF$ of sub-$\sigma$-algebras in $\calF$, generated by coordinate maps $X_n(\omega) \defeq \omega_n$ for $n \in \bbZ^+$.
Let $\tuple{\scrD,S_{\mathsf{main}}}$ be an \lang{} program.
Let $\mu_0 \defeq \delta(\tuple{ \lambda\_.0 ,S_{\mathsf{main}},\kstop ,0 })$ be the initial distribution.
Let $\prob$ be the probability measure on infinite traces induced by \cref{Prop:IonescuTulcea} and \cref{The:EvaluationIsKernel}.
Then $(\Omega,\calF,\prob)$ forms a probability space on infinite traces of the program.



\section{Trace-Based Reasoning on Expectations}
\label{Se:MarkovChainSemantics}

Recall that we define a \emph{stopping time} $T : \Omega \to \bbZ^+ \cup \{\infty\}$ as
\[
T(\omega) \defeq \inf\{n \in \bbZ^+ \mid \omega_n = \tuple{\_,\iskip,\kstop,\_} \},
\]
and random variables $\{A_n\}_{n \in \bbZ^+}$, $\{\Phi_n\}_{n \in \bbZ^+}$, $\{Y_n\}_{n \in \bbZ^+}$ as
\begin{align*}
  A_n(\omega) \defeq \alpha_n, 
  \Phi_n(\omega)  \defeq \phi(\omega_n), 
  Y_n(\omega) \defeq A_n(\omega) + \Phi_n(\omega),  
\end{align*}
where $\omega_n=\tuple{\_,\_,\_,\alpha_n}$, and $\phi : \Sigma \to \bbR$ is an expected-potential function for expected-cost bound analysis.
Taking the stopping time into consideration, we define the stopped version for these random variables as $A_T(\omega) \defeq A_{T(\omega)}(\omega)$, $\Phi_T(\omega) \defeq \Phi_{T(\omega)}(\omega)$, $Y_T(\omega) \defeq Y_{T(\omega)}(\omega)$.
Note that $A_T,\Phi_T,Y_T$ are not guaranteed to be well-defined everywhere.

\begin{lemma}\label{Lem:StoppedProcessConvergeAS}
 If $\prob[T < \infty] = 1$, i.e., the program terminates \emph{almost surely}, then $A_T$ is well-defined almost surely and $\prob[\lim_{n \to \infty} A_n = A_T] = 1$.
 Further, if $\{ A_n\}_{n \in \bbZ^+}$ is pointwise non-decreasing, then $\lim_{n \to \infty} \expe[A_n] = \expe[A_T]$.
\end{lemma}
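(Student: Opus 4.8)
The plan is to exploit the key structural fact that termination configurations are \emph{absorbing}: by rule \textsc{(E-Skip-Stop)}, any configuration $\tuple{\gamma,\iskip,\kstop,\alpha}$ steps to $\delta(\tuple{\gamma,\iskip,\kstop,\alpha})$, i.e.\ the transition kernel $\mapsto$ is a Dirac mass at the same configuration. Hence once a trace reaches such a configuration its cost component never changes again. This single observation trivializes the stopped-process analysis, because it means the process $\{A_n\}$ itself stabilizes and we never need to separately track a stopped version.

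First I would show that $\prob$-almost surely, whenever $\omega_n$ is a termination configuration we have $\omega_{n+1}=\omega_n$; this follows from \textsc{(E-Skip-Stop)} together with the Ionescu--Tulcea construction of $\prob$ (\cref{Prop:IonescuTulcea,The:EvaluationIsKernel}), which places $\prob$-mass only on traces consistent with the support of $\mapsto$. A short induction on $n \ge T(\omega)$ then gives $A_n(\omega)=A_{T(\omega)}(\omega)$ for all such $n$, so on the event $\{T<\infty\}$ the sequence $\{A_n(\omega)\}_n$ is eventually constant and equal to $A_T(\omega)=A_{T(\omega)}(\omega)$. Since $\prob[T<\infty]=1$ by hypothesis, this simultaneously shows that $A_T$ is well-defined almost surely and that $\lim_{n\to\infty}A_n=A_T$ holds almost surely, which are the first two conclusions. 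Measurability of $A_T$ comes for free, as it agrees almost surely with the pointwise limit of the measurable functions $A_n$.

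For the third conclusion I would invoke the Monotone Convergence Theorem (\cref{Prop:MON}). The sequence $\{A_n\}$ is assumed pointwise non-decreasing, and since the initial configuration $\tuple{\lambda\_.0,S_{\mathsf{main}},\kstop,0}$ has cost $0$ we have $A_0=0$ almost surely, so $A_n\ge 0$ almost surely for every $n$. Combined with the almost-sure convergence $A_n\nearrow A_T$ established above, the almost-everywhere form of \cref{Prop:MON} applies directly and yields $\lim_{n\to\infty}\expe[A_n]=\expe[A_T]$.

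The main difficulty here is one of care rather than depth: because $T$ can be infinite on a $\prob$-null set, $A_T$ need not be defined there, so every statement about $A_T$ and about the limit must be qualified as holding almost surely, and I would lean throughout on the fact (noted in \cref{Prop:MON}) that monotone convergence is insensitive to modifications on null sets. A secondary point worth spelling out explicitly is that the nonnegativity required by \cref{Prop:MON} is not assumed directly but is recovered from the monotonicity hypothesis together with $A_0=0$; without the non-decreasing assumption the conclusion $\lim_n\expe[A_n]=\expe[A_T]$ can genuinely fail, which is exactly why part~(3) is stated conditionally.
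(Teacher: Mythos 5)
Your proposal is correct and follows essentially the same route as the paper's own proof: the absorbing-state observation (which the paper glosses as ``by the property of the operational semantics'') gives $A_n(\omega)=A_{T(\omega)}(\omega)$ for $n\ge T(\omega)$, hence $\prob[\lim_n A_n = A_T]\ge\prob[T<\infty]=1$, and the monotone part uses $A_0=0$ plus the Monotone Convergence Theorem exactly as you describe. Your additional care in justifying absorption via \textsc{(E-Skip-Stop)} and the Ionescu--Tulcea construction, and in invoking the almost-everywhere form of \cref{Prop:MON}, only makes explicit what the paper leaves implicit.
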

\begin{proof}
  By the property of the operational semantics, for $\omega \in \Omega$ such that $T(\omega)<\infty$, we have $A_n(\omega) = A_T(\omega)$ for all $n \ge T(\omega)$.
  Then we have
  \begin{align*}
    & \prob[\lim_{n \to \infty} A_n = A_T] \\
    ={} & \prob( \{ \omega \mid \lim_{n \to \infty} A_n(\omega) = A_T(\omega) \}) \\
    {}\ge{} & \prob(\{ \omega \mid \lim_{n \to \infty} A_n(\omega) = A_T(\omega) \wedge T(\omega) < \infty\}) \\
    ={} & \prob(\{ \omega \mid A_{T(\omega)}(\omega) = A_T(\omega) \wedge T(\omega) < \infty \}) \\
    ={} & \prob(\{ \omega \mid T(\omega) < \infty \}) \\
    ={} & 1. 
  \end{align*}
  
  Now let us assume that $\{A_n\}_{n \in \bbZ^+}$ is pointwise non-decreasing.
  By the property of the operational semantics, we know that $A_0 = 0$.
  Therefore, $A_n$'s are nonnegative random variables, and their expectations $\expe[A_n]$'s are well-defined.
  By \cref{Prop:MON}, we have $\lim_{n \to \infty} \expe[A_n] = \expe[\lim_{n \to \infty} A_n]$.
  We then conclude by the fact that $\lim_{n \to \infty} A_n = A_T$, a.s., which we just proved.
\end{proof}

We reformulate the martingale property using the filtration $\{\calF_n\}_{n \in \bbZ^+}$.

\begin{lemma}\label{Lem:RankingFunctionMartingale}
  For all $n \in \bbZ^+$, it holds that
  \[
  \expe[Y_{n+1} \mid \calF_n] = Y_n, \text{a.s.},
  \]
  i.e., the expectation of $Y_n$ conditioned on the execution history is an invariant for $n \in \bbZ^+$.
\end{lemma}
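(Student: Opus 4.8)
The plan is to prove the identity by reducing the conditional expectation given the whole history $\calF_n$ to a single application of the one-step kernel $\mapsto$ at the current configuration $\omega_n$, and then invoking the defining one-step property of the expected-potential function $\phi$. First I would unfold $Y_{n+1} = A_{n+1} + \Phi_{n+1}$, where $A_{n+1}(\omega) = \alpha_{n+1}$ and $\Phi_{n+1}(\omega) = \phi(\omega_{n+1})$, and observe that both are measurable functions of the single coordinate $X_{n+1}(\omega) = \omega_{n+1}$; in particular $f(\sigma') \defeq \alpha' + \phi(\sigma')$, with $\sigma' = \tuple{\gamma',S',K',\alpha'}$, is measurable because $\phi$ is measurable and $\alpha'$ is a coordinate projection.

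The main obstacle is the Markov step: establishing that $\expe[f(X_{n+1}) \mid \calF_n] = \int_\Sigma f(\sigma')\,{\mapsto}(\omega_n, d\sigma')$ almost surely. This is where the construction of $\prob$ is used. Since $\prob$ is the Ionescu--Tulcea measure (\cref{Prop:IonescuTulcea}) built from the initial Dirac measure $\mu_0$ and the time-homogeneous probability kernel $\mapsto$ (a genuine probability kernel by \cref{The:EvaluationIsKernel}, depending only on the last coordinate), the regular conditional distribution of $X_{n+1}$ given $\calF_n$ is exactly ${\mapsto}(X_n, \cdot)$. I would prove this from the definition of conditional expectation: it suffices to check, for every $G$ in the $\pi$-system of cylinder sets $A_0 \times \cdots \times A_n \times \Sigma \times \cdots$ generating $\calF_n$, that $\int_G f(X_{n+1})\,d\prob = \int_G \left(\int_\Sigma f(\sigma')\,{\mapsto}(X_n,d\sigma')\right)d\prob$; both sides unfold to the same iterated integral via the product formula $\mu_{n+1} = \mu_0 \otimes \bigotimes_{k=1}^{n+1}\kappa_k$ and Fubini, after which a Dynkin/monotone-class argument extends the identity from cylinders to all of $\calF_n$. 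Integrability of $f(X_{n+1})$ (needed for the conditional expectation to be defined) follows from the standing assumptions making the potential well-defined, or one argues first for nonnegative and bounded $f$ and then splits into positive and negative parts.

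Given the Markov step, the remainder is routine algebra. Writing $\omega_n = \tuple{\_,\_,\_,\alpha_n}$, I would decompose $\alpha' = \alpha_n + (\alpha' - \alpha_n)$ under the one-step integral, so that
\[
\expe[Y_{n+1}\mid\calF_n] = \alpha_n + \int_\Sigma \big((\alpha'-\alpha_n) + \phi(\sigma')\big)\,{\mapsto}(\omega_n,d\sigma'),
\]
using that $\alpha_n = A_n$ is $\calF_n$-measurable to pull it outside (\cref{Prop:ConditionalExpectations}). The integral is precisely $\expe_{\sigma'\sim{\mapsto}(\omega_n)}[C(\omega_n,\sigma') + \phi(\sigma')]$, which equals $\phi(\omega_n) = \Phi_n$ by the defining one-step property of $\phi$. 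Hence $\expe[Y_{n+1}\mid\calF_n] = A_n + \Phi_n = Y_n$ almost surely, as required. I would note that if $\phi$ is only assumed to satisfy the one-step inequality rather than equality, the identical computation yields the supermartingale bound $\expe[Y_{n+1}\mid\calF_n] \le Y_n$; the terminated configurations $\tuple{\_,\iskip,\kstop,\_}$ are automatic, since there ${\mapsto}$ is the Dirac measure fixing the configuration and the step cost is zero.
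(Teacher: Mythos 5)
Your proof is correct and follows essentially the same route as the paper: decompose $Y_{n+1}$ into $A_n$ plus the one-step increment and next-state potential, pull out the $\calF_n$-measurable term, reduce the conditional expectation to an integral against the kernel ${\mapsto}(\omega_n,\cdot)$, and invoke the defining property of $\phi$. The only difference is that you spell out the Markov step (via Ionescu--Tulcea, cylinder sets, and a monotone-class argument), which the paper simply asserts in passing from $\expe[\cdot\mid\calF_n]$ to $\expe_{\sigma'\sim{\mapsto}(\omega_n)}[\cdot]$.
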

\begin{proof}
  We say that a sequence of random variables $\{X_n\}_{n \in \bbZ^+}$ is \emph{adapted} to a filtration $\{\calF_n\}_{n \in \bbZ^+}$ if for each $n \in \bbZ^+$, $X_n$ is $\calF_n$-measurable.
  Then $\{\Phi_n\}_{n \in \bbZ^+}$ and $\{A_n\}_{n \in\bbZ^+}$ are adapted to the coordinate-generated filtration $\{\calF_n\}_{n \in \bbZ^+}$ as $\Phi_n(\omega)$ and $A_n(\omega)$ depend on $\omega_n$.
  Then we have
  \begin{align*}
    & \expe[Y_{n+1} \mid \calF_n](\omega) \\
    ={} & \expe[A_{n+1} + \Phi_{n+1} \mid \calF_n](\omega) \\
    ={} & \expe[(A_{n+1} - A_n) +\Phi_{n+1} + A_n \mid \calF_n](\omega) \\
    ={} & \expe[(A_{n+1} - A_n) + \Phi_{n+1} \mid \calF_n](\omega) + A_n(\omega) \\
    ={} & \expe[(\alpha_{n+1} - \alpha_n) + \phi(\omega_{n+1}) \mid \calF_n] + A_n(\omega) \\
    ={} & \expe_{\sigma' \sim {\mapsto}(\omega_n)}[(\alpha'-\alpha_n) + \phi(\sigma')] + A_n(\omega) \\
    ={} & \phi(\omega_n) + A_n(\omega) \\
    ={} & \Phi_n(\omega) + A_n(\omega) \\
    ={} & Y_n(\omega), \text{a.s.} 
  \end{align*}
  Furthermore, we have the following corollary:
  \[
  \expe[Y_{n+1}] = \expe[\expe(Y_{n+1} \mid \calF_n)] = \expe[Y_n],
  \]
  for each $n \in \bbZ^+$, thus $\expe[Y_n] = \expe[Y_0]$ for all $n \in \bbZ^+$.
\end{proof}

Now we can prove soundness of the extended OST.

\begin{theorem}\label{The:ExtendedOST}
  If $\expe[|Y_n|] < \infty$ for all $n \in \bbZ^+$, then $\expe[Y_T]$ exists and $\expe[Y_T] = \expe[Y_0]$ in the following situation:
  
  There exist $\ell \in \bbN$ and $C \ge 0$ such that $\expe[T^\ell] < \infty$ and for all $n \in \bbZ^+$, $|Y_n| \le C \cdot (n+1)^{\ell}$ almost surely.
\end{theorem}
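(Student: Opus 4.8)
The plan is to recognize $\{Y_n\}_{n \in \bbZ^+}$ as a genuine martingale and then push the optional-stopping identity through a limit. By hypothesis $\expe[|Y_n|] < \infty$ for every $n$, and \cref{Lem:RankingFunctionMartingale} gives $\expe[Y_{n+1} \mid \calF_n] = Y_n$ almost surely; hence $\{Y_n\}$ is an $\{\calF_n\}$-martingale. The classical optional-stopping theorem for a \emph{bounded} stopping time, applied to $T \wedge n \le n$, then yields $\expe[Y_{T \wedge n}] = \expe[Y_0]$ for every $n \in \bbZ^+$ (integrability of the stopped variable is immediate, since $|Y_{T\wedge n}| \le \max_{0 \le j \le n}|Y_j|$). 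The entire argument therefore reduces to justifying the limit $\lim_{n \to \infty} \expe[Y_{T \wedge n}] = \expe[Y_T]$.

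First I would observe that $\expe[T^\ell] < \infty$ forces $\prob[T < \infty] = 1$, since $T = \infty$ on a set of positive measure would make $\expe[T^\ell]$ infinite. On the event $\{T < \infty\}$ the trajectory $n \mapsto Y_{T \wedge n}$ is eventually constant and equal to $Y_T$, so $Y_{T \wedge n} \to Y_T$ almost surely; in particular $Y_T$ is well-defined a.s.

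The crucial step is to produce an integrable dominating function so that the Dominated Convergence Theorem (\cref{Prop:DOM}) applies, and here the polynomial growth bound is exactly what is needed. Using $|Y_n| \le C(n+1)^\ell$ a.s. together with $T \wedge n \le T$, I obtain
\[
|Y_{T \wedge n}| \le C \cdot (T \wedge n + 1)^\ell \le C \cdot (T+1)^\ell \quad \text{a.s.}
\]
uniformly in $n$. It remains to check that $C(T+1)^\ell$ is integrable: expanding $(T+1)^\ell = \sum_{k=0}^\ell \binom{\ell}{k} T^k$ and noting that for the nonnegative integer-valued $T$ one has $T^k \le 1 + T^\ell$ for every $0 \le k \le \ell$, I conclude $\expe[T^k] \le 1 + \expe[T^\ell] < \infty$, whence $\expe[(T+1)^\ell] < \infty$.

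With the dominating function in hand, \cref{Prop:DOM} gives $\lim_{n\to\infty}\expe[Y_{T \wedge n}] = \expe[\lim_{n\to\infty} Y_{T \wedge n}] = \expe[Y_T]$, and in particular $Y_T \in \calL^1$ so $\expe[Y_T]$ exists. Combining with $\expe[Y_{T\wedge n}] = \expe[Y_0]$ for all $n$ closes the argument: $\expe[Y_T] = \expe[Y_0]$. The main obstacle is precisely this domination step — everything hinges on matching the growth exponent $\ell$ in $|Y_n| \le C(n+1)^\ell$ to the finite moment $\expe[T^\ell]$, which is the new sufficient condition that replaces the uniform-integrability or bounded-increment hypotheses of the classical OST and lets us handle the higher-moment invariants $Y_n$ whose magnitude grows polynomially in the number of steps.
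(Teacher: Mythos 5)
Your proof is correct and follows essentially the same route as the paper's: establish $\prob[T<\infty]=1$ and almost-sure convergence of the stopped process to $Y_T$, dominate by the integrable function $C\cdot(T+1)^\ell$, and conclude via the Dominated Convergence Theorem together with the martingale identity $\expe[Y_{T\wedge n}]=\expe[Y_0]$. The only cosmetic difference is that you invoke the bounded-stopping-time optional stopping theorem for $T\wedge n$, whereas the paper uses the absorption of the chain at termination to write $Y_n=Y_{\min(T,n)}$ directly and takes expectations of the martingale property; the decisive domination step is identical.
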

\begin{proof}
  By $\expe[T^\ell] < \infty$ where $\ell \ge 1$, we know that $\prob[T < \infty] = 1$.
  Then similar to the proof of \cref{Lem:StoppedProcessConvergeAS}, we know that $Y_T$ is well-defined almost surely and $\prob[\lim_{n \to \infty} Y_n = Y_T] = 1$.
  On the other hand, we have
  \begin{align*}
    |Y_n| & = |Y_{\min(T,n)}| \le C \cdot (\min(T,n)+1)^\ell \le C \cdot (T + 1)^\ell, \mathrm{a.s.}
  \end{align*}
  Recall that $\expe[T^\ell] < \infty$.
  Then $\expe[(T+1)^\ell] = \expe[T^\ell + O(T^{\ell-1})] < \infty$.
  By \cref{Prop:DOM}, with the function $g$ set to $\lambda\omega. C \cdot (T(\omega)+1)^\ell$, we know that $\lim_{n \to \infty} \expe[Y_n] = \expe[Y_T]$.
  By \cref{Lem:RankingFunctionMartingale}, we have $\expe[Y_n] = \expe[Y_0]$ for all $n\in\bbZ^+$ thus we conclude that $\expe[Y_0] = \expe[Y_T]$.
\end{proof}



\section{Trace-Based Reasoning on Moments}
\label{Se:ProofHigherMoments}

We start with the fundamental composition property for moment semirings.

\begin{lemma*}[\cref{Lem:EisnerBasicProperty}]
  For all $u,v \in \calR$, it holds that
  \[
  \begin{split}
 &  \tuple{ 1, (u + v), (u + v)^2, \cdots, (u + v)^m } \\
 = {} & \tuple{ 1, u, u^2, \cdots,u^m) \otimes (1, v, v^2, \cdots, v^m },
  \end{split}
  \]
  where $u^n$ is an abbreviation for $\prod_{i=1}^n u$ for $n \in \bbZ^+, u \in \calR$.
\end{lemma*}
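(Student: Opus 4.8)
The plan is to reduce the vector identity to a single scalar identity---the binomial theorem in the semiring $\calR$---checked component by component. By \cref{De:MomSemiring}, substituting $u_i = u^i$ and $v_j = v^j$ into \cref{Eq:MomentGenerating}, the $k$-th component of the right-hand side $\many{\tuple{u^k}_{0 \le k \le m}} \otimes \many{\tuple{v^k}_{0 \le k \le m}}$ is exactly $\sum_{i=0}^k \binom{k}{i} \times (u^i \cdot v^{k-i})$, while the $k$-th component of the left-hand side is $(u+v)^k$. Since two vectors in $\calM_\calR^{(m)}$ are equal iff they agree in every coordinate, the lemma is equivalent to proving, for every $0 \le k \le m$,
\[
(u+v)^k = \sum_{i=0}^k \binom{k}{i} \times (u^i \cdot v^{k-i}).
\]

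First I would establish this scalar identity by induction on $k$. The base case $k = 0$ is immediate, since both sides equal $1$. For the inductive step, I would write $(u+v)^{k+1} = (u+v) \cdot (u+v)^k$, apply the induction hypothesis, and then use left-distributivity to split the product into $u \cdot \sum_{i} \binom{k}{i} (u^i v^{k-i})$ and $v \cdot \sum_{i} \binom{k}{i} (u^i v^{k-i})$. Re-indexing the two resulting sums and merging the coefficient of each monomial $u^i v^{k+1-i}$ via Pascal's rule $\binom{k}{i-1} + \binom{k}{i} = \binom{k+1}{i}$ yields the desired expression at level $k+1$. Throughout, I would use that the scalar product $n \times (\cdot)$ is iterated semiring addition, so it commutes with multiplication on either side by distributivity (e.g.\ $w \cdot (n \times z) = n \times (w \cdot z)$), which is what lets the binomial coefficients be pulled cleanly through the semiring product.

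The one point requiring care---and the main obstacle---is that $\calR$ is only assumed to be a partially ordered semiring, so multiplication need not be commutative in general, whereas the standard binomial expansion requires $u$ and $v$ to commute (specifically, the step where $v \cdot u^i v^{k-i}$ must be rewritten as $u^i v^{k+1-i}$). I would resolve this by invoking commutativity of $\cdot$ in $\calR$, which holds for every instance used in this paper: the nonnegative reals $\bbR^+$ and the (symbolic) interval semirings $\calI$ and $\calP\calI$ all have commutative multiplication. Equivalently, the lemma can be read for commutative $\calR$, or restricted to pairs $u,v$ that commute; in either reading the Pascal-rule induction above goes through verbatim, and no properties of $\calR$ beyond associativity, distributivity, and the semiring units are needed.
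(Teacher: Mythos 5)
Your proposal matches the paper's own proof essentially verbatim: the paper reduces the claim componentwise to the scalar identity $(u+v)^k = \sum_{i=0}^k \binom{k}{i} \times (u^i \cdot v^{k-i})$ and proves it by induction on $k$ using exactly the split-reindex-Pascal argument you describe. Your explicit flagging of the commutativity requirement (needed to rewrite $v \cdot (u^i \cdot v^{k-i})$ as $u^i \cdot v^{k-i+1}$ in the second sum of the inductive step) is a point the paper's proof silently assumes, so your treatment is, if anything, slightly more careful.
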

\begin{proof}
  Observe that
  \begin{align*}
    RHS_k & = \sum_{i=0}^k \binom{k}{i} \times (u^i \cdot v^{k-i}).
  \end{align*}
  We prove by induction on $k$ that $(u + v)^k = RHS_k$.
  \begin{itemize}
    \item $k=0$: Then $(u + v)^0 = 1$.
    On the other hand, we have
    \[
    RHS_0 = \binom{0}{0} \times (u^0 \cdot v^0) = 1 \times (1 \cdot 1) = 1.
    \]
    
    \item Suppose that $(u + v)^k = RHS_k$.
    Then
    \begin{align*}
    & (u + v)^{k+1} \\
    ={} & (u + v) \cdot (u + v)^k \\
    ={} & (u + v) \cdot \sum_{i=0}^k \binom{k}{i} \times (u^i \cdot v^{k-i}) \\
    ={} & \sum_{i=0}^k \binom{k}{i} \times (u^{i+1} \cdot v^{k-i}) +  \sum_{i=0}^k \binom{k}{i} \times (u^i \cdot v^{k-i+1}) \\
    ={} & \sum_{i=1}^{k+1} \binom{k}{i-1} \times (u^i \cdot v^{k-i+1}) + \sum_{i=0}^k \binom{k}{i} \times (u^i \cdot v^{k-i+1}) \\
    ={} & \sum_{i=0}^{k+1} (\binom{k}{i-1} + \binom{k}{i}) \times (u^i \cdot v^{k-i+1}) \\
    ={} & \sum_{i=0}^{k+1} \binom{k+1}{i} \times (u^k \cdot v^{k-i+1}) \\
    ={} & RHS_{k+1}. 
    \end{align*}
  \end{itemize}
\end{proof}

We also show that $\otimes$ and $\oplus$ are monotone if the operations of the underlying semiring are monotone.

\begin{lemma}\label{Lem:MomentMonoidMonotone}
  Let $\calR=(|\calR|,{\le},{+},{\cdot},0,1)$ be a partially ordered semiring.
  If $+$ and $\cdot$ are monotone with respect to $\le$, then $\otimes$ and $\oplus$ in the moment semiring $\calM_\calR^{(m)}$ are also monotone with respect to $\aord$.
\end{lemma}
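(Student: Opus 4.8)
The plan is to reduce the claim to the componentwise monotonicity of the underlying operations $+$ and $\cdot$, exploiting the fact that $\aord$ is by definition the pointwise extension of $\le$. Concretely, I would fix two ordered pairs of vectors $\many{\tuple{u_k}_{0 \le k \le m}} \aord \many{\tuple{u_k'}_{0 \le k \le m}}$ and $\many{\tuple{v_k}_{0 \le k \le m}} \aord \many{\tuple{v_k'}_{0 \le k \le m}}$, which by definition of $\aord$ means $u_k \le u_k'$ and $v_k \le v_k'$ for every $k$, and then show that each component of the two results is ordered by $\le$.

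First I would dispatch $\oplus$. Since $\oplus$ is pointwise addition, its $k$-th component is $u_k + v_k$, and monotonicity of $+$ in both arguments gives $u_k + v_k \le u_k' + v_k'$ for each $k$; hence $\oplus$ is monotone with respect to $\aord$.

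The substantive case is $\otimes$, whose $k$-th component is $\sum_{i=0}^k \binom{k}{i} \times (u_i \cdot v_{k-i})$. I would argue in three layers. (i) From $u_i \le u_i'$ and $v_{k-i} \le v_{k-i}'$, monotonicity of $\cdot$ yields $u_i \cdot v_{k-i} \le u_i' \cdot v_{k-i}'$; if only argumentwise monotonicity is available, I would chain the two inequalities $u_i \cdot v_{k-i} \le u_i' \cdot v_{k-i} \le u_i' \cdot v_{k-i}'$. (ii) The scalar product $n \times w = \sum_{j=1}^n w$ is monotone in $w$, which follows by a short induction on $n$ using monotonicity of $+$, so $\binom{k}{i} \times (u_i \cdot v_{k-i}) \le \binom{k}{i} \times (u_i' \cdot v_{k-i}')$. (iii) Summing these $k+1$ inequalities, again by repeated use of monotonicity of $+$, gives the ordering of the $k$-th components. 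Carrying this out for every $k$ establishes $\many{\tuple{u_k}_{0 \le k \le m}} \otimes \many{\tuple{v_k}_{0 \le k \le m}} \aord \many{\tuple{u_k'}_{0 \le k \le m}} \otimes \many{\tuple{v_k'}_{0 \le k \le m}}$.

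The main obstacle is not depth but bookkeeping: I expect the only real care to lie in making the two auxiliary facts precise — that the scalar product $n \times (\cdot)$ and finite sums preserve $\le$ — since these are exactly where the hypothesis that $+$ is monotone is consumed, and in being explicit about whether ``$\cdot$ monotone'' is read jointly or argumentwise (the chaining in step (i) shows the weaker, argumentwise reading already suffices). Once those auxiliary monotonicity facts are in place, the result drops out componentwise with no further algebra.
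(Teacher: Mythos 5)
Your proposal is correct and follows essentially the same route as the paper's proof: both reduce the claim componentwise, using monotonicity of $\cdot$ on each term $\binom{k}{i}\times(u_i\cdot v_{k-i})$ and monotonicity of $+$ to propagate through the scalar products and the finite sum. The only cosmetic difference is that the paper argues monotonicity in one argument ``without loss of generality,'' whereas you chain the two argumentwise inequalities explicitly and spell out the auxiliary facts about $n\times(\cdot)$ and finite sums that the paper leaves implicit.
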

\begin{proof}
  It is straightforward to show $\oplus$ is monotone.
  For the rest of the proof, without loss of generality, we show that $\many{\tuple{u_k}_{0 \le k \le m}} \otimes \many{\tuple{v_k}_{0 \le k \le m}} \aord \many{\tuple{u_k}_{0 \le k \le m}} \otimes \many{\tuple{w_k}_{0 \le k \le m}}$ if $\many{\tuple{v_k}_{0 \le k \le m}} \aord \many{\tuple{w_k}_{0 \le k \le m}}$.
  By the definition of $\aord$, we know that $v_k \le w_k$ for all $k=0,1,\cdots,m$.
  Then for each $k$, we have
  \begin{align*}
    & (\many{\tuple{u_k}_{0 \le k \le m}} \otimes \many{\tuple{v_k}_{0 \le k \le m}})_k \\
     ={} & \sum_{i=0}^k \binom{k}{i} \times (u_i \cdot v_{k - i}) \\
    {}\le{} & \sum_{i=0}^k \binom{k}{i} \times (u_i \cdot w_{k - i}) \\
    ={} & (\many{\tuple{u_k}_{0 \le k \le m}} \otimes \many{\tuple{w_k}_{0 \le k \le m}})_k. 
  \end{align*}
  Then we conclude by the definition of $\aord$.
\end{proof}

As we allow potential functions to be \emph{interval}-valued, we show that the interval semiring $\calI$ satisfies the monotonicity required in \cref{Lem:MomentMonoidMonotone}.

\begin{lemma}\label{Lem:IntervalSemiringMonotone}
  The operations $+_\calI$ and $\cdot_\calI$ are monotone with respect to $\le_\calI$.
\end{lemma}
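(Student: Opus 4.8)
The plan is to unfold the definitions of $\le_\calI$, $+_\calI$, and $\cdot_\calI$ and verify monotonicity directly at the level of interval endpoints. Recall that the order $\le_\calI$ on the interval semiring is interval inclusion, i.e.\ $[a_1,b_1] \le_\calI [a_2,b_2]$ iff $a_2 \le a_1$ and $b_1 \le b_2$, and that $[a_1,b_1] +_\calI [a_2,b_2] = [a_1+a_2, b_1+b_2]$ while $[a_1,b_1] \cdot_\calI [a_2,b_2] = [\min S, \max S]$ with $S = \{a_1a_2, a_1b_2, b_1a_2, b_1b_2\}$. To establish monotonicity of a binary operation $\star \in \{+_\calI, \cdot_\calI\}$ it suffices, by commutativity and transitivity (so that I may fix one argument and compose), to show that $I_1 \le_\calI I_2$ implies $I_1 \star J \le_\calI I_2 \star J$.

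The key fact I would use is the standard set-image characterization of interval arithmetic: for a continuous binary operation $\circ$ on $\bbR$ (here $+$ or $\times$), the interval result equals the image $[a_1,b_1] \star [a_2,b_2] = \{ x \circ y \mid x \in [a_1,b_1],\ y \in [a_2,b_2] \}$. This holds because the rectangle $[a_1,b_1] \times [a_2,b_2]$ is compact and connected and $\circ$ is continuous, so its image is a compact interval, and the endpoints coincide with the formulas above because the extrema of a sum (affine) or of a product (bilinear) over a box are attained at corners. Granting this, monotonicity is immediate: if $I_1 \subseteq I_2$ as sets, then $\{x \circ y \mid x \in I_1,\ y \in J\} \subseteq \{x \circ y \mid x \in I_2,\ y \in J\}$, and inclusion of these image sets is exactly $I_1 \star J \le_\calI I_2 \star J$.

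For addition I would skip the image argument and compute endpoints directly: from $a_2 \le a_1$ and $b_1 \le b_2$ we get $a_2 + c \le a_1 + c$ and $b_1 + d \le b_2 + d$, giving $[a_1+c, b_1+d] \subseteq [a_2+c, b_2+d]$. The main obstacle is the multiplication case, because $\cdot_\calI$ is defined through a $\min$/$\max$ over the four corner products and so is \emph{not} monotone in the endpoints separately (the sign of the fixed factor matters); a naive endpoint manipulation does not go through. The cleanest route is the image characterization above, which sidesteps the sign case analysis entirely. I would therefore prove the image-set lemma once, read off both $+_\calI$ and $\cdot_\calI$ from it, and then conclude via \cref{Lem:MomentMonoidMonotone} that the instantiated moment semiring $\calM_\calI^{(m)}$ has monotone $\otimes$ and $\oplus$.
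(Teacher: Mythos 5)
Your proof is correct, but for the interesting case ($\cdot_\calI$) it takes a genuinely different route from the paper. The paper proves monotonicity of $\cdot_\calI$ by a direct three-way case analysis on the signs of the fixed interval $[c,d]$ (namely $0 \le c \le d$, $c < 0 \le d$, and $c \le d < 0$), in each case identifying which corner products can realize the minimum and maximum and comparing them using $a \ge a'$, $b \le b'$. You instead prove once that $[a,b] \cdot_\calI [c,d]$ equals the set image $\{xy \mid x \in [a,b],\, y \in [c,d]\}$ — justified by compactness and connectedness of the box plus the fact that the extrema of the bilinear map are attained at corners (affine in each variable separately) — after which monotonicity is an immediate consequence of monotonicity of set images under inclusion. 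Your reduction to one-sided monotonicity via commutativity and transitivity is also sound and matches what the paper implicitly does. What your approach buys is the elimination of the sign case analysis (the error-prone part of the paper's proof) and a statement that generalizes to any interval operation defined as an image of a continuous map; what it costs is an extra lemma identifying the corner formula with the image, which the paper never needs because it works with the $\min$/$\max$ formula directly. Both proofs correctly feed into \cref{Lem:MomentMonoidMonotone} to conclude monotonicity of $\otimes$ and $\oplus$ in $\calM_\calI^{(m)}$.
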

\begin{proof}
  It is straightforward to show $+_\calI$ is monotone.
  For the rest of the proof, it suffices to show that $[a,b] \cdot_\calI [c,d] \le_\calI [a',b'] \cdot_\calI [c,d]$ if $[a,b] \le_\calI [a',b']$, i.e., $[a,b] \subseteq [a',b']$ or $a \ge a', b\le b'$.
  
  We claim that $\min S_{a,b,c,d} \ge \min S_{a',b',c,d}$, i.e., $\min \{ ac, ad, bc, bd\} \ge \min \{ a'c,a'd, b'c,b'd \}$.
  \begin{itemize}
    \item If $0 \le c \le d$: Then $ac \le bc$, $ad \le bd$, $a'c \le b'c$, $a'd \le b' d$.
     It then suffices to show that $\min \{ ac, ad\} \ge \min\{ a'c, a'd\}$.
     Because $d \ge c \ge 0$ and $a \ge a'$, we conclude that $ac \ge a'c$ and $ad \ge a'd$.
     
     \item If $c < 0 \le d$: Then $ac \ge bc$, $ad \le bd$, $a'c \ge b'c$, $a'd \ge b'd$.
     It then suffices to show that $\min \{ bc,ad\} \ge \min\{ b'c,a'd\}$.
     Because $d \ge 0 > c$ and $a \ge a', b \le b'$, we conclude that $bc \ge b'c$ and $ad \le a'd$.
      
     \item If $c \le d < 0$: Then $ac \ge bc$, $ad \ge bd$, $a'c \ge b'c$, $a'd \ge b'd$.
     It then suffices to show that $\min\{ bc,bd\} \ge \min\{b'c,b'd\}$.
     Because $0 > d \ge c$ and $b \le b'$, we conclude that $bc \ge b'c$ and $bd \ge b'd$.
  \end{itemize}
  
  In a similar way, we can also prove that $\max S_{a,b,c,d} \le \max S_{a',b',c,d}$.
  Therefore, we show that $\cdot_\calI$ is monotone.
\end{proof}

\begin{lemma}\label{Lem:IntervalOmegaChain}
  If $\{[a_n,b_n]\}_{n \in \bbZ^+}$ is a montone sequence in $\calI$, i.e., $[a_0,b_0] \le_\calI [a_1,b_1] \le_\calI \cdots \le_\calI [a_n,b_n] \le_\calI \cdots$, and $[a_n,b_n] \le_\calI [c,d]$ for all $n \in \bbZ^+$.
  Let $[a,b] = \lim_{n \to \infty} [a_n,b_n]$ (the limit is well-defined by the monotone convergence theorem for series).
  Then $[a,b] \le_\calI [c,d]$.
\end{lemma}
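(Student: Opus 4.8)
The plan is to reduce the statement about intervals to two statements about real-valued endpoint sequences, exploiting the fact that the order $\le_\calI$ is nothing but interval inclusion. First I would unfold the definition recorded in the proof of \cref{Lem:IntervalSemiringMonotone}: $[a,b] \le_\calI [a',b']$ holds exactly when $[a,b] \subseteq [a',b']$, i.e., when $a' \le a$ and $b \le b'$. Applying this to the monotonicity hypothesis $[a_n,b_n] \le_\calI [a_{n+1},b_{n+1}]$ gives $a_{n+1} \le a_n$ and $b_n \le b_{n+1}$ for every $n$; that is, $\{a_n\}_{n \in \bbZ^+}$ is non-increasing and $\{b_n\}_{n \in \bbZ^+}$ is non-decreasing. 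Applying it to the bound $[a_n,b_n] \le_\calI [c,d]$ gives $c \le a_n$ and $b_n \le d$ for every $n$.

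Next I would invoke the monotone convergence theorem for bounded monotone real sequences: since $\{a_n\}$ is non-increasing and bounded below by $c$, the limit $a = \lim_{n\to\infty} a_n$ exists and equals $\inf_n a_n$; symmetrically, $b = \lim_{n\to\infty} b_n$ exists and equals $\sup_n b_n$. This confirms that $[a,b] = \lim_{n\to\infty}[a_n,b_n]$ is well-defined with endpoints computed componentwise, as asserted in the statement. Finally I would pass the weak inequalities to the limit: from $c \le a_n$ for all $n$ we obtain $c \le \inf_n a_n = a$, and from $b_n \le d$ for all $n$ we obtain $b = \sup_n b_n \le d$. Hence $c \le a$ and $b \le d$, which is precisely $[a,b] \subseteq [c,d]$, i.e., $[a,b] \le_\calI [c,d]$.

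There is essentially no hard step here. The only points requiring care are to make sure the interval order decomposes into the two endpoint inequalities in the correct directions (so that a monotone \emph{increasing} chain of intervals corresponds to a non-increasing lower-endpoint sequence and a non-decreasing upper-endpoint sequence), and to observe that the bounds $c \le a_n$ and $b_n \le d$ are closed (non-strict), so they are preserved under taking the infimum and supremum; strict inequalities would not survive this limiting argument, but the closed bounds do.
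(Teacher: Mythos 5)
Your proposal is correct and follows essentially the same route as the paper's proof: decompose $\le_\calI$ into the endpoint inequalities (non-increasing lower endpoints, non-decreasing upper endpoints), note that the closed bounds $c \le a_n$ and $b_n \le d$ persist in the limit, and reassemble. The paper's version is just a terser rendering of the same argument.
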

\begin{proof}
  By the definition of $\le_\calI$, we know that $\{a_n\}_{n \in \bbZ^+}$ is non-increasing and $\{b_n\}_{n \in \bbZ^+}$ is non-decreasing.
  Because $a_n \ge c$ for all $n \in \bbZ^+$, we conclude that $\lim_{n \to \infty} a_n \ge c$.
  Because $b_n \le d$ for all $n \in \bbZ^+$, we conclude that $\lim_{n \to \infty} b_n \le d$.
  Thus we conclude that $[a,b] \le_\calI [c,d]$.
\end{proof}

Recall that we extend the notions of $A_n,\Phi_n,Y_n$ with intervals for higher moments as follows:
\begin{align*}
  A_n(\omega) & \defeq \many{\tuple{[\alpha_n^k,\alpha_n^k]}_{0 \le k \le m}} ~ \text{where}~\omega_n =\tuple{\_,\_,\_,\alpha_n}, \\
  \Phi_n(\omega) & \defeq \phi(\omega_n), \\
  Y_n(\omega) & \defeq A_n(\omega) \otimes \Phi_n(\omega)  .
\end{align*}
Note that in the definition of $Y_n$, we use $\otimes$ to compose the powers of the accumulated cost at step $n$ and the potential function that stands for the moments of the accumulated cost of the rest of the computation.

We now extend some of the previous results on first moments to higher moments with intervals.

\begin{lemma}\label{Lem:CondExpeLinear}
  Let $X,Y : \Omega \to \calM_\calI^{(m)}$ be integrable.
  Then $\expe[X \oplus Y] = \expe[X] \oplus \expe[Y]$.
\end{lemma}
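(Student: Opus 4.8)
The plan is to reduce the claim to the ordinary linearity of expectation for real-valued integrable random variables, applied coordinatewise to each of the two interval endpoints. First I would fix an explicit representation of the $\calM_\calI^{(m)}$-valued random variables by writing $X(\omega) = \many{\tuple{[L^X_k(\omega), U^X_k(\omega)]}_{0 \le k \le m}}$ and $Y(\omega) = \many{\tuple{[L^Y_k(\omega), U^Y_k(\omega)]}_{0 \le k \le m}}$, where each endpoint $L^X_k, U^X_k, L^Y_k, U^Y_k : \Omega \to \bbR$ is a real-valued random variable. Integrability of $X$ and $Y$ means precisely that all of these endpoint processes lie in $\calL^1$, and the expectation of an $\calM_\calI^{(m)}$-valued random variable is understood coordinatewise and endpoint-wise, i.e. $\expe[X] = \many{\tuple{[\expe[L^X_k], \expe[U^X_k]]}_{0 \le k \le m}}$.

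Next I would unfold the definition of $\oplus$. By \cref{De:MomSemiring}, $\oplus$ is pointwise addition in the underlying semiring $\calI$, and interval addition satisfies $[a,b] +_\calI [c,d] = [a+c, b+d]$. Hence the $k$-th coordinate of $X \oplus Y$ is the interval $[L^X_k + L^Y_k, U^X_k + U^Y_k]$, whose endpoints are again in $\calL^1$ by closure of $\calL^1$ under sums. Taking expectations coordinatewise and applying the scalar linearity of expectation to each endpoint, the $k$-th coordinate of $\expe[X \oplus Y]$ equals $[\expe[L^X_k] + \expe[L^Y_k], \expe[U^X_k] + \expe[U^Y_k]]$, which is exactly the $k$-th coordinate of $\expe[X] \oplus \expe[Y]$ by the definitions of $+_\calI$ and $\oplus$. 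Since this holds for every $0 \le k \le m$, the two interval-vectors agree coordinatewise, which establishes the claim.

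I do not expect a genuine obstacle here: the result is a direct lift of real-valued linearity, and the reason it goes through cleanly is that $\oplus$ is pointwise addition, which respects the endpoint-wise order structure of intervals (lower endpoints add to lower endpoints, upper to upper). The only points requiring care are bookkeeping ones, namely fixing the convention that the expectation of an interval-vector is taken endpoint-wise, and invoking integrability so that every endpoint random variable lies in $\calL^1$ and scalar linearity applies. By contrast, the analogous statement for $\otimes$ would be substantially harder, since the interval product $\cdot_\calI$ involves $\min$/$\max$ of endpoint products and does \emph{not} commute with expectation; this is why the development only establishes the inequality $\expe[\mathbf{Y}_{n+1} \mid \mathbf{Y}_n] \sqsubseteq \mathbf{Y}_n$ of \cref{Lem:MomentInvariant} for $\otimes$-compositions, rather than an equality.
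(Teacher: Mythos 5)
Your proof is correct and follows essentially the same route as the paper's, which simply appeals to linearity of expectation together with the fact that $\oplus$ is the pointwise extension of $+_\calI$ and $+_\calI$ is the endpoint-wise extension of numeric addition. You merely spell out the coordinatewise/endpoint-wise bookkeeping explicitly, and your closing remark about why the analogous claim fails for $\otimes$ is accurate but not needed for the lemma.
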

\begin{proof}
  Appeal to linearity of expectations and the fact that $\oplus$ is the pointwise extension of $+_\calI$, as well as $+_\calI$ is the pointwise extension of numeric addition.
\end{proof}

\begin{lemma}\label{Lem:CondExpeMoveOutside}
  If $X:\Omega \to \calM_\calI^{(m)}$ is $\calG$-measurable and bounded, a.s., as well as $X(\omega)=\many{\tuple{[a_k(\omega),a_k(\omega)]}_{0 \le k \le m}}$ for all $\omega \in \Omega$, then $\expe[X \otimes Y \mid \calG] = X \otimes \expe[Y \mid \calG]$ almost surely.
\end{lemma}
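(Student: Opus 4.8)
The plan is to reduce the claimed identity to a collection of ordinary (scalar) conditional-expectation identities, one for each endpoint of each component, and then invoke the standard pull-out property of conditional expectation. Write the $k$-th component of $Y$ as the interval $[L_k,U_k]$ with $L_k \le U_k$ a.s., and write $X(\omega) = \many{\tuple{[a_k(\omega),a_k(\omega)]}_{0 \le k \le m}}$. Because $\oplus$ is the pointwise extension of $+_\calI$, and both the $\calM_\calI^{(m)}$-valued (conditional) expectation and the order $\aord$ act componentwise and endpointwise, it suffices to show, for each fixed $k$, that the lower and upper endpoints of the $k$-th component of $\expe[X \otimes Y \mid \calG]$ agree a.s.\ with those of $X \otimes \expe[Y \mid \calG]$. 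The only nontrivial ingredient is the behavior of the interval product $[a,a] \cdot_\calI [c,d]$ of a degenerate interval with a general interval, which is not linear but piecewise linear in the sign of $a$.

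The key step is an algebraic normalization of this product. First I would record that for a real number $a$ and an interval $[c,d]$ with $c \le d$, writing $a^+ \defeq \max(a,0)$ and $a^- \defeq \max(-a,0)$ (so $a = a^+ - a^-$), one has
\[
[a,a] \cdot_\calI [c,d] = [\,a^+ c - a^- d,\; a^+ d - a^- c\,],
\]
which is checked by the two cases $a \ge 0$ and $a < 0$. The virtue of this form is that it is valid uniformly over $\Omega$ with \emph{no} partition into sign regions, and it exhibits each endpoint as a linear combination of the endpoints $c,d$ with coefficients $a^+,a^-$ that are nonnegative. Since $X$ is $\calG$-measurable and bounded a.s., the random variables $a_i^{+},a_i^{-}$ are $\calG$-measurable, bounded, and nonnegative.

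With this in hand the assembly is routine. By the definition of $\otimes$ in \cref{Eq:MomentGenerating}, the $k$-th interval of $X \otimes Y$ is $\bigoplus_{i=0}^{k} \binom{k}{i}\times\big([a_i,a_i]\cdot_\calI[L_{k-i},U_{k-i}]\big)$, so its lower endpoint is $\sum_{i=0}^k \binom{k}{i}\big(a_i^{+}L_{k-i} - a_i^{-}U_{k-i}\big)$ and its upper endpoint is the analogous expression with $U$ and $L$ swapped. Each summand is integrable, because $a_i^{\pm}$ are bounded and $L_{k-i},U_{k-i}$ are integrable (assuming $Y$ integrable), so I would apply linearity of conditional expectation together with \cref{Prop:ConditionalExpectations}(4) to pull each $\calG$-measurable bounded factor $a_i^{\pm}$ outside $\expe[\cdot\mid\calG]$. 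This replaces $L_{k-i},U_{k-i}$ by $\expe[L_{k-i}\mid\calG],\expe[U_{k-i}\mid\calG]$ in the same linear expressions. Finally, using monotonicity of conditional expectation, $\expe[L_k\mid\calG]\le\expe[U_k\mid\calG]$ a.s., so these are exactly the (ordered) endpoints of the $k$-th component of $\expe[Y\mid\calG]$, and re-reading the normalized product formula backwards identifies the result with the $k$-th component of $X \otimes \expe[Y\mid\calG]$.

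The main obstacle is precisely the sign-dependence of interval multiplication: a degenerate-interval scalar does not commute with expectation in a single linear step, and a naive argument would force a case split on $\{a_i \ge 0\}$ versus $\{a_i<0\}$ as $\calG$-measurable events. The positive/negative-part decomposition is what sidesteps this, turning each endpoint into an honest linear functional of the endpoints of $Y$ with nonnegative $\calG$-measurable coefficients, after which the classical pull-out and linearity properties apply verbatim. A secondary point to verify cleanly is that the $\calM_\calI^{(m)}$-valued conditional expectation is indeed computed endpointwise and preserves the interval ordering, so that all the reductions above are legitimate; this mirrors the treatment of unconditional expectations in \cref{Lem:CondExpeLinear}.
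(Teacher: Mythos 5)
Your proposal is correct, and it follows the same overall reduction as the paper's proof: both arguments push the conditional expectation inside componentwise and endpointwise, use linearity, and reduce everything to the scalar identity $\expe[a_i\cdot Z\mid\calG]=a_i\cdot\expe[Z\mid\calG]$ for bounded $\calG$-measurable $a_i$ (\cref{Prop:ConditionalExpectations}(4)). The one place where you diverge is in how the sign-dependence of the interval product $[a_i,a_i]\cdot_\calI[b,c]$ is handled. The paper does exactly the "naive" thing you describe avoiding: it splits $\Omega$ pointwise into the ($\calG$-measurable) regions $\{a_i\ge 0\}$ and $\{a_i<0\}$, on each of which the product is linear in the endpoints, and verifies $\expe[[a_i,a_i]\cdot_\calI[b,c]\mid\calG]=[a_i,a_i]\cdot_\calI\expe[[b,c]\mid\calG]$ separately in the two cases. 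Your normalization $[a,a]\cdot_\calI[c,d]=[a^+c-a^-d,\;a^+d-a^-c]$ (which I checked is correct for $c\le d$) packages the same two cases into a single formula with nonnegative $\calG$-measurable coefficients, so the pull-out property applies uniformly with no partition. The two arguments are mathematically equivalent; yours is marginally cleaner in that each endpoint is exhibited as one honest linear functional of the endpoints of $Y$, while the paper's is marginally more direct in that it never introduces $a^\pm$. Neither version has a gap, modulo the integrability of $Y$, which both you and the paper leave implicit.
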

\begin{proof}
  Fix $\omega \in \Omega$. Let $Y(\omega) = \many{\tuple{[b_k(\omega),c_k(\omega)]}_{0 \le k \le m}}$.
  Then we have
  \begin{align*}
    & \expe[X \otimes Y \mid \calG](\omega) \\
     ={} & \expe[\many{\tuple{[a_k,a_k]}_{0 \le k \le m}} \otimes \many{\tuple{[b_k,c_k]}_{0 \le k \le m}} \mid \calG](\omega) \\
    ={} & \expe[\many{\tuple{ \textstyle \mathop{\sum_\calI}_{i=0}^k \binom{k}{i} \times_\calI ([a_i,a_i] \cdot_\calI [b_{k-i},c_{k-i}])}_{0 \le k \le m} } \mid \calG](\omega) \\
    ={} & \many{ \tuple{ \expe[ \textstyle \mathop{\sum_\calI}_{i=0}^k \binom{k}{i} \times_\calI ([a_i,a_i] \cdot_\calI [b_{k-i},c_{k-i}]) \mid \calG](\omega)  }_{0 \le k \le m}} \\
    ={} & \many{\tuple{ \textstyle \mathop{\sum_\calI}_{i=0}^k \binom{k}{i} \times_\calI  \expe[[a_i,a_i] \cdot_\calI [b_{k-i},c_{k-i}] \mid \calG](\omega) }_{0 \le k \le m} }. 
  \end{align*}
  On the other hand, we have
  \begin{align*}
    & X(\omega) \otimes \expe[Y \mid \calG](\omega) \\
    ={} & \tuple{X_0(\omega),\cdots,X_m(\omega)} \otimes \expe[\tuple{Y_0,\cdots,Y_m} \mid \calG](\omega) \\
    ={} & \tuple{X_0(\omega),\cdots,X_m(\omega)} \otimes \tuple{\expe[Y_0 \mid \calG](\omega), \cdots,\expe[Y_m\mid\calG](\omega)} \\
    ={} & \many{ \tuple{ \textstyle \mathop{\sum_\calI}_{i=0}^k \binom{k}{i} \times_\calI (X_i(\omega) \cdot_\calI \expe[Y_{k-i} \mid \calG](\omega) ) }_{0 \le k \le m} } \\
    ={} & \many{ \tuple{ \textstyle \mathop{\sum_\calI}_{i=0}^k \binom{k}{i} \! \times_\calI \! ([a_i(\omega),a_i(\omega)] \! \cdot_\calI \! \expe[[b_{k-i},c_{k-i}] \mid \calG](\omega) ) }_{0 \le k \le m} }. 
  \end{align*}
  Thus, it suffices to show that for each $i$, $\expe[[a_i,a_i] \cdot_\calI [b_{k-i},c_{k-i}] \mid \calG] = [a_i,a_i] \cdot_\calI \expe[[b_{k-i},c_{k-i}] \mid \calG]$ almost surely.
  
  For $\omega$ such that $a_i(\omega) \ge 0$:
  \begin{align*}
    & \expe[[a_i,a_i] \cdot_\calI [b_{k-i},c_{k-i}] \mid \calG](\omega) \\
    ={} & \expe[[a_ib_{k-i}, a_ic_{k-i}] \mid \calG](\omega) \\
    ={} & [\expe[a_ib_{k-i} \mid \calG](\omega), \expe[a_ic_{k-i} \mid \calG](\omega)] \\
    ={} & [a_i(\omega) \cdot \expe[b_{k-i} \mid \calG](\omega), a_i(\omega) \cdot \expe[c_{k-i} \mid \calG](\omega)], a.s., \\
    ={} & [a_i(\omega),a_i(\omega)] \cdot_\calI \expe[[b_{k-i},c_{k-i}] \mid \calG](\omega). 
  \end{align*}
  For $\omega$ such that $a_i(\omega) < 0$:
  \begin{align*}
    & \expe[[a_i,a_i] \cdot_\calI [b_{k-i},c_{k-i}] \mid \calG](\omega) \\
    ={} & \expe[[a_ic_{k-i}, a_ib_{k-i}] \mid \calG](\omega) \\
    ={} & [\expe[a_ic_{k-i} \mid \calG](\omega), \expe[a_ib_{k-i} \mid \calG](\omega)] \\
    ={} & [a_i(\omega) \cdot \expe[c_{k-i} \mid \calG](\omega), a_i(\omega) \cdot \expe[b_{k-i} \mid \calG](\omega)], a.s., \\
    ={} & [a_i(\omega),a_i(\omega)] \cdot_\calI \expe[[b_{k-i},c_{k-i}] \mid \calG](\omega). 
  \end{align*}
\end{proof}

\begin{lemma*}[\cref{Lem:MomentInvariant}]
  For all $n \in \bbZ^+$, it holds that
  \[
  \expe[Y_{n+1} \mid \calF_n] \aord Y_n, \text{a.s.}
  \]
\end{lemma*}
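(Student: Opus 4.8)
The plan is to mirror the first-moment argument of \cref{Lem:RankingFunctionMartingale}, but with the additive decomposition there replaced by the multiplicative one furnished by the moment-semiring operator $\otimes$. The starting observation is that $\alpha_{n+1} = \alpha_n + (\alpha_{n+1}-\alpha_n)$, where the second summand is the stepwise cost incurred between configurations $\omega_n$ and $\omega_{n+1}$. Applying \cref{Lem:EisnerBasicProperty} to the interval semiring $\calI$ with the degenerate intervals $u=[\alpha_n,\alpha_n]$ and $v=[\alpha_{n+1}-\alpha_n,\alpha_{n+1}-\alpha_n]$ (so that $u +_\calI v = [\alpha_{n+1},\alpha_{n+1}]$ and powers stay degenerate) yields the factorization
\[
A_{n+1}(\omega) = A_n(\omega) \otimes C_n(\omega), \qquad C_n(\omega) \defeq \many{\tuple{[(\alpha_{n+1}-\alpha_n)^k,(\alpha_{n+1}-\alpha_n)^k]}_{0\le k\le m}}.
\]
By associativity of $\otimes$ in $\calM_\calI^{(m)}$ this gives $Y_{n+1} = A_n \otimes (C_n \otimes \Phi_{n+1})$, which exposes the $\calF_n$-measurable factor $A_n$.

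First I would record that $A_n$ is bounded almost surely: the cost accumulator changes only through $\itick{c}$ statements, each adding one of the finitely many constants occurring in the program, so $\lvert\alpha_n\rvert$ is at most $n$ times the largest such constant. This is exactly the side condition required by \cref{Lem:CondExpeMoveOutside}, so I can pull the $\calF_n$-measurable, degenerate-interval-valued $A_n$ outside the conditional expectation:
\[
\expe[Y_{n+1} \mid \calF_n] = A_n \otimes \expe[C_n \otimes \Phi_{n+1} \mid \calF_n], \qquad \text{a.s.}
\]

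Next I would invoke the Markov property of the trace semantics: since $\prob$ is built from the kernel $\mapsto$ via Ionescu--Tulcea (\cref{Prop:IonescuTulcea}, \cref{The:EvaluationIsKernel}), conditioning on $\calF_n$ collapses to the single step out of $\omega_n = \tuple{\_,\_,\_,\alpha_n}$, giving
\[
\expe[C_n \otimes \Phi_{n+1} \mid \calF_n](\omega) = \expe_{\sigma'\sim{\mapsto}(\omega_n)}[\many{\tuple{[(\alpha'-\alpha_n)^k,(\alpha'-\alpha_n)^k]}_{0\le k\le m}} \otimes \phi(\sigma')].
\]
The right-hand side is precisely the quantity in the defining inequality \cref{De:IntervalRankingFunction}(ii), which asserts it is $\sqsubseteq \phi(\omega_n) = \Phi_n(\omega)$, i.e.\ $\aord \Phi_n(\omega)$. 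Composing on the left with the fixed $A_n$ and using monotonicity of $\otimes$ with respect to $\aord$ (\cref{Lem:MomentMonoidMonotone}, whose hypotheses hold for $\calI$ by \cref{Lem:IntervalSemiringMonotone}) then yields $\expe[Y_{n+1}\mid\calF_n] \aord A_n \otimes \Phi_n = Y_n$ almost surely, which is the claim.

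The main obstacle I anticipate is careful bookkeeping around the $\otimes$ factorization rather than any deep difficulty. One must verify that \cref{Lem:EisnerBasicProperty} is applied with \emph{degenerate} intervals (so the interval product introduces no spurious width), that the boundedness hypothesis of \cref{Lem:CondExpeMoveOutside} is genuinely satisfied (hence the remark about constant tick costs), that $C_n\otimes\Phi_{n+1}$ is integrable so the conditional expectation is well defined, and above all that monotonicity is used in the correct direction---left-composition by the fixed $A_n$ preserves $\aord$. Each step is routine once the right lemma is lined up, but dropping any one of them breaks the chain.
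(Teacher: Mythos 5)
Your argument is correct and follows essentially the same route as the paper's proof: factor $A_{n+1}=A_n\otimes C_n$ via \cref{Lem:EisnerBasicProperty} with degenerate intervals, pull the bounded $\calF_n$-measurable factor $A_n$ out of the conditional expectation using \cref{Lem:CondExpeMoveOutside}, collapse the remaining conditional expectation to the one-step kernel expectation, apply \cref{De:IntervalRankingFunction}(ii), and finish with the monotonicity of $\otimes$ from \cref{Lem:MomentMonoidMonotone} and \cref{Lem:IntervalSemiringMonotone}. The side conditions you flag (a.s.\ boundedness of $\alpha_n$ by $C\cdot n$, and the direction of monotonicity) are exactly the ones the paper checks.
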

\begin{proof}
  Similar to the proof of \cref{Lem:RankingFunctionMartingale}, we know that $\{A_n\}_{n \in \bbZ^+}$ and $\{\Phi_n\}_{n \in \bbZ^+}$ are adapted to $\{\calF_n\}_{n \in \bbZ^+}$.
  By the property of the operational semantics, we know that $\alpha_n(\omega) \le C \cdot n$ almost surely for some $C \ge 0$.
  Then using \cref{Lem:CondExpeMoveOutside}, we have
  \begin{align*}
    & \expe[Y_{n+1} \mid \calF_n](\omega) \\
    = {} &\expe[A_{n+1} \otimes \Phi_{n+1} \mid \calF_n](\omega) \\
    ={} &  \expe[A_n \! \otimes \! \many{\tuple{[(\alpha_{n+1}\!-\!\alpha_n)^k,(\alpha_{n+1}\!-\!\alpha_n)^k]}_{0 \le k \le m}} \!\otimes\! \Phi_{n+1} \mid \calF_n ](\omega) \\
    ={} & A_n(\omega) \!\otimes\! \expe[\many{\tuple{[(\alpha_{n+1}\!-\!\alpha_n)^k,(\alpha_{n+1}\!-\!\alpha_n)^k]}_{0 \!\le\! k \!\le\! m}} \!\otimes\! \Phi_{n+1} \!\mid\! \calF_n ](\omega), \\
    & a.s., \\
    ={} & A_n(\omega) \!\otimes\! \expe[\many{\tuple{[(\alpha_{n+1}\!-\!\alpha_n)^k,(\alpha_{n+1}\!-\!\alpha_n)^k]}_{0 \!\le\! k \!\le\! m}} \!\otimes\! \phi(\omega_{n+1}) \!\mid\! \calF_n ]. 
  \end{align*}
  Recall the property of the expected-potential function $\phi$ in \cref{De:IntervalRankingFunction}.
  Then by \cref{Lem:MomentMonoidMonotone} with \cref{Lem:IntervalSemiringMonotone}, we have
  \begin{align*}
    & \expe[Y_{n+1} \mid \calF_n](\omega) \\
     {}\aord{} &  A_n(\omega) \otimes \phi(\omega_n), a.s., \\
    ={} & A_n(\omega) \otimes \Phi_n(\omega) \\
    ={} & Y_n. 
  \end{align*}
  As a corollary, we have $\expe[Y_n] \aord \expe[Y_0]$ for all $n \in \bbZ^+$.
\end{proof}

Now we prove the following extension of OST to deal with interval-valued potential functions. 
Let $\norm{\many{\tuple{[a_k,b_k]}_{0 \le k \le m}}}_\infty \defeq \max_{0 \le k \le m} \{ \max\{ |a_k|,|b_k| \} \} $.

\begin{theorem}\label{The:IntervalOST}
If $\expe[\norm{Y_n}_\infty] < \infty$ for all $n \in \bbZ^+$, then $\expe[Y_T]$ exists and $\expe[Y_T] \aord \expe[Y_0]$ in the following situation:

There exist $\ell \in \bbN$ and $C \ge 0$ such that $\expe[T^\ell] < \infty$ and for all $n \in \bbZ^+$, $\norm{Y_n}_\infty \le C \cdot (n+1)^{\ell}$ almost surely.
\end{theorem}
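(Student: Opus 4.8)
The plan is to reduce the interval-valued claim to $2(m+1)$ ordinary real-valued optional-stopping statements, one for each endpoint of each moment component, and then reassemble. Writing the moment invariant as $\{\tuple{[L^{(0)}_n,U^{(0)}_n],\dots,[L^{(m)}_n,U^{(m)}_n]}\}_{n\in\bbZ^+}$ and recalling that $\aord$ on intervals is inclusion, the corollary of \cref{Lem:MomentInvariant} that $\expe[Y_{n+1}\mid\calF_n]\aord Y_n$ unpacks, for each $k$, into $\expe[U^{(k)}_{n+1}\mid\calF_n]\le U^{(k)}_n$ and $\expe[L^{(k)}_{n+1}\mid\calF_n]\ge L^{(k)}_n$ almost surely; that is, each $\{U^{(k)}_n\}_n$ is a supermartingale and each $\{L^{(k)}_n\}_n$ is a submartingale. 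Since $\expe[\norm{Y_n}_\infty]<\infty$ bounds $|U^{(k)}_n|,|L^{(k)}_n|\le\norm{Y_n}_\infty$, every endpoint is integrable. The goal $\expe[Y_T]\aord\expe[Y_0]$ then amounts to showing $\expe[U^{(k)}_T]\le\expe[U^{(k)}_0]$ and $\expe[L^{(k)}_T]\ge\expe[L^{(k)}_0]$ for all $k$.

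First I would establish almost-sure termination and stabilization. Because $\ell\ge 1$ and $\expe[T^\ell]<\infty$, we get $\prob[T<\infty]=1$. At a termination configuration $\tuple{\_,\iskip,\kstop,\_}$ we have $\phi=\aone$ by \cref{De:IntervalRankingFunction}(i) and the accumulator is frozen, so $Y_n(\omega)$ is constant for $n\ge T(\omega)$; hence $Y_n=Y_{\min(T,n)}$ for every $n$ and, on the probability-one event $\{T<\infty\}$, $\lim_{n\to\infty}Y_n=Y_T$. In particular each endpoint satisfies $U^{(k)}_n\to U^{(k)}_T$ and $L^{(k)}_n\to L^{(k)}_T$ almost surely.

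Next I would supply a dominating function and invoke dominated convergence. From the stopping identity, $|U^{(k)}_n|=|U^{(k)}_{\min(T,n)}|\le\norm{Y_{\min(T,n)}}_\infty\le C\cdot(\min(T,n)+1)^\ell\le C\cdot(T+1)^\ell$ almost surely, and likewise for $L^{(k)}_n$. Since $\expe[(T+1)^\ell]=\expe[T^\ell+O(T^{\ell-1})]<\infty$ (finiteness of $\expe[T^\ell]$ dominates all lower moments), \cref{Prop:DOM} applied with $g=\lambda\omega.\,C\cdot(T(\omega)+1)^\ell$ yields $\lim_{n\to\infty}\expe[U^{(k)}_n]=\expe[U^{(k)}_T]$ and $\lim_{n\to\infty}\expe[L^{(k)}_n]=\expe[L^{(k)}_T]$, and also shows $U^{(k)}_T,L^{(k)}_T$ are integrable, so $\expe[Y_T]$ exists. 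Finally, taking conditional expectations in the super/submartingale inequalities gives that $\{\expe[U^{(k)}_n]\}_n$ is non-increasing and $\{\expe[L^{(k)}_n]\}_n$ non-decreasing; combined with the limits just computed, $\expe[U^{(k)}_T]\le\expe[U^{(k)}_0]$ and $\expe[L^{(k)}_T]\ge\expe[L^{(k)}_0]$. Reassembling componentwise gives $[\expe[L^{(k)}_T],\expe[U^{(k)}_T]]\aord[\expe[L^{(k)}_0],\expe[U^{(k)}_0]]$ for every $k$, i.e., $\expe[Y_T]\aord\expe[Y_0]$.

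The main obstacle is that \cref{The:ExtendedOST} is stated for the \emph{martingale} case (the equality $\expe[Y_{n+1}\mid\calF_n]=Y_n$), whereas here the two endpoints obey only \emph{one-sided} inequalities in opposite directions. I therefore cannot cite it as a black box; instead I adapt its proof, carefully tracking that the upper endpoints form supermartingales (non-increasing expectations) while the lower endpoints form submartingales (non-decreasing expectations), so that the two bounds combine into interval inclusion rather than cancelling. The remaining steps—termination, stabilization at $\aone$, the stopped-process domination bound, and dominated convergence—are essentially identical to the proof of \cref{The:ExtendedOST} carried out separately for each $U^{(k)}$ and $L^{(k)}$.
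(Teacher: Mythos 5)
Your proposal is correct and follows essentially the same route as the paper's proof: reduce to the real-valued endpoint components of each interval, use $\expe[T^\ell]<\infty$ to get almost-sure termination and stabilization of $Y_n$ at $Y_T$, dominate each component by $C\cdot(T+1)^\ell$ so that dominated convergence gives $\expe[Y_n]\to\expe[Y_T]$, and combine with the expectation inequality from \cref{Lem:MomentInvariant}. The only cosmetic difference is that you derive monotonicity of the endpoint expectations directly from the super-/submartingale inequalities, whereas the paper passes the uniform bound $\expe[Y_n]\aord\expe[Y_0]$ through the limit via \cref{Lem:IntervalOmegaChain}; these are interchangeable.
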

\begin{proof}
  By $\expe[T^\ell] < \infty$ where $\ell \ge 1$, we know that $\prob[T < \infty] = 1$.
  Then similar to the proof of \cref{Lem:StoppedProcessConvergeAS}, we know that $Y_T$ is well-defined almost surely and $\prob[\lim_{n \to \infty} Y_n = Y_T] = 1$.
  On the other hand, $Y_n(\omega)$ can be treated as a vector of real numbers.
  Let $a_n : \Omega \to \bbR$ be a real-valued component in $Y_n$.
  Because $\expe[\norm{Y_n}_\infty] < \infty$ and $\norm{Y_n}_\infty \le C \cdot (n+1)^\ell$ almost surely, we know that $\expe[|a_n|] \le \expe[\norm{Y_n}_\infty] < \infty$ and $|a_n| \le \norm{Y_n}_\infty \le C \cdot (n+1)^\ell$ almost surely.
  Therefore,
  \[
  |a_n| = |a_{\min(T,n)}| \le C \cdot (\min(T,n) + 1)^\ell \le C \cdot (T+1)^\ell, \text{a.s.}
  \]
  Recall that $\expe[T^\ell] < \infty$.
  Then $\expe[(T+1)^\ell] = \expe[T^\ell + O(T^{\ell-1})] < \infty$.
  By \cref{Prop:DOM}, with the function $g$ set to $\lambda\omega. C \cdot (T(\omega) + 1)^\ell$, we know that $\lim_{n \to \infty} \expe[a_n] = \expe[a_T]$.
  Because $a_n$ is an arbitrary real-valued component in $Y_n$, we know that $\lim_{n \to \infty} \expe[Y_n] = \expe[Y_T]$.
  By \cref{Lem:MomentInvariant}, we know that $\expe[Y_n] \aord \expe[Y_0]$ for all $n \in \bbZ^+$.
  By \cref{Lem:IntervalOmegaChain}, we conclude that $\lim_{n \to \infty} \expe[Y_n] \aord \expe[Y_0]$, i.e., $\expe[Y_T] \aord \expe[Y_0]$.
\end{proof}

\paragraph{Discussion}
The classic OST from probability theory has been used to reason about probabilistic programs:
\begin{proposition}\label{Prop:OST}
  Let $\{X_n\}_{n \in \bbZ^+}$ be a super-martingale (resp., sub-martingale).
  If $\expe[|X_n|] < \infty$ for all $n \in \bbZ^+$, then $\expe[X_T]$ exists and $\expe[X_T] \le \expe[X_0]$ (resp., $\expe[X_T] \ge \expe[X_0]$) in each of the following situations:
  \begin{enumerate}[(a)]
    \item\label{Item:OSTBT} $T$ is almost-surely bounded;
    \item\label{Item:OSTPAST} $\expe[T]\!<\!\infty$ and for some $C \ge 0$, $\expe[|X_{n+1} -X_n| \mid X_n] \le C$ almost surely, for all $n \in \bbZ^+$;
    \item\label{Item:OSTAST} for some $C \ge 0$, $|X_n| \le C$ almost surely, for all $n \in \bbZ^+$.
  \end{enumerate}
\end{proposition}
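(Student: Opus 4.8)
The plan is to prove the super-martingale case; the sub-martingale case follows immediately by applying the result to $\{-X_n\}_{n \in \bbZ^+}$, which is a super-martingale, and observing that $\expe[-X_T] \le \expe[-X_0]$ is equivalent to $\expe[X_T] \ge \expe[X_0]$. I work on the probability space $(\Omega,\calF,\prob)$ with the filtration $\{\calF_n\}_{n \in \bbZ^+}$ to which $\{X_n\}$ is adapted, and write $\min(T,n)$ for the truncated stopping time. First I would establish the structural fact that the \emph{stopped process} $\{X_{\min(T,n)}\}_{n \in \bbZ^+}$ is again a super-martingale. The increment satisfies $X_{\min(T,n+1)} - X_{\min(T,n)} = [T > n] \cdot (X_{n+1} - X_n)$, and since $\{T > n\} = \{T \le n\}^c$ is $\calF_n$-measurable by the stopping-time property, pulling the indicator outside the conditional expectation (\cref{Prop:ConditionalExpectations}) gives $\expe[X_{\min(T,n+1)} - X_{\min(T,n)} \mid \calF_n] = [T > n] \cdot \expe[X_{n+1} - X_n \mid \calF_n] \le 0$ almost surely. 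Taking expectations and telescoping over $k = 0, \dots, n-1$, and using $X_{\min(T,0)} = X_0$, yields $\expe[X_{\min(T,n)}] \le \expe[X_0]$ for every $n \in \bbZ^+$. It then remains only to pass to the limit $n \to \infty$ on the left-hand side.

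The limit argument rests on two observations. In each of the three situations $T < \infty$ almost surely---directly in (a), because $\expe[T] < \infty$ forces it in (b), and as a hypothesis that must be assumed in (c) so that $X_T$ is even well-defined---so that $X_{\min(T,n)} \to X_T$ pointwise almost surely. It therefore suffices to exhibit an integrable dominating random variable and invoke the Dominated Convergence Theorem (\cref{Prop:DOM}) to get $\lim_{n \to \infty} \expe[X_{\min(T,n)}] = \expe[X_T]$; combined with the bound above this gives $\expe[X_T] \le \expe[X_0]$ and simultaneously shows $X_T$ is integrable. I would dispatch the cases as follows. For (a), $T \le N$ almost surely for some constant $N$, so $X_{\min(T,n)} = X_T$ for all $n \ge N$ and the conclusion is immediate with no convergence theorem needed. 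For (c), $|X_{\min(T,n)}| \le C$ almost surely, and the constant $C$ is integrable on the probability space, so \cref{Prop:DOM} applies directly. For (b), I would use the telescoping decomposition to bound $|X_{\min(T,n)}| \le |X_0| + \sum_{k=1}^{\min(T,n)} |X_k - X_{k-1}| \le |X_0| + C \cdot T$, and note that the dominating variable $|X_0| + C \cdot T$ is integrable because $\expe[|X_0|] < \infty$ and $\expe[T] < \infty$.

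The main obstacle is case (b): unlike (a) and (c) the stopped process is not uniformly bounded, so everything hinges on constructing the dominating variable $|X_0| + C \cdot T$ and verifying its integrability from $\expe[T] < \infty$. A subtlety to handle carefully is that the hypothesis is stated as a \emph{conditional} bound $\expe[|X_{n+1} - X_n| \mid X_n] \le C$ rather than a pointwise almost-sure bound on the increments; the clean telescoping domination above requires the pointwise form $|X_k - X_{k-1}| \le C$. If only the conditional bound is available, I would instead argue via uniform integrability: bound $\expe[|X_{\min(T,n)}| \,;\, \{T > N\}]$ by $\expe[(|X_0| + C \cdot T) \,;\, \{T > N\}]$, which tends to $0$ as $N \to \infty$ since $\expe[T] < \infty$, and combine this tail estimate with the convergence on $\{T \le N\}$ to justify the interchange of limit and expectation.
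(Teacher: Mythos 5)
The paper never proves this proposition: it is quoted verbatim as the classic Optional Stopping Theorem from the probability literature, and the proofs the paper actually gives are for its extension (\cref{The:ExtendedOST}, \cref{The:IntervalOST}), which follow the same skeleton you use---almost-sure convergence of the stopped quantity plus \cref{Prop:DOM}. Judged on its own, your skeleton is the standard one and is sound where it is complete: $\{X_{\min(T,n)}\}_{n}$ is a super-martingale because $\{T>n\}\in\calF_n$ lets you pull the indicator out of the conditional expectation (\cref{Prop:ConditionalExpectations}), telescoping gives $\expe[X_{\min(T,n)}]\le\expe[X_0]$, case (a) is immediate, and in case (c) you correctly flag that $T<\infty$ a.s.\ must be read into the hypothesis so that $X_T$ is defined (alternatively one defines $X_\infty$ by the martingale convergence theorem, available since $|X_n|\le C$).

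The genuine gap is in your fallback for case (b). You rightly observe that the hypothesis is a conditional bound, so the pointwise domination $|X_{\min(T,n)}|\le|X_0|+C\,T$ is unavailable---but the repair you propose fails for the same reason. The inequality $\expe[|X_{\min(T,n)}|\,;\,\{T>N\}]\le\expe[(|X_0|+C\,T)\,;\,\{T>N\}]$ does not follow from the conditional bound: expanding $|X_{\min(T,n)}|$ over increments produces terms $\expe\big[|X_{k+1}-X_k|\,\mathrm{I}_{\{T>k\}}\,\mathrm{I}_{\{T>N\}}\big]$ with $k<N$, and $\{T>N\}$ is \emph{not} $\calF_k$-measurable, so the tower property cannot convert the conditional increment bound into a bound on these terms. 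The correct move is to dominate by a single integrable random variable instead of estimating tails: set
\[
W \defeq |X_0| + \sum_{k=0}^{\infty} |X_{k+1}-X_k|\,\mathrm{I}_{\{T>k\}}.
\]
Then $|X_{\min(T,n)}|\le W$ pointwise for every $n$, and since $\{T>k\}\in\calF_k$, the tower property (with \cref{Prop:MON} to exchange sum and expectation) yields $\expe[W]\le\expe[|X_0|]+\sum_{k=0}^{\infty} C\,\prob[T>k]=\expe[|X_0|]+C\,\expe[T]<\infty$; now \cref{Prop:DOM} with $g=W$ closes case (b). One caveat you should make explicit: this computation needs the increment bound conditioned on the history $\calF_k$ (the standard reading), not literally on $\sigma(X_k)$ as the proposition's notation $\expe[|X_{n+1}-X_n|\mid X_n]$ suggests---conditioning only on $X_k$ does not control $\expe[|X_{k+1}-X_k|\,\mathrm{I}_{\{T>k\}}]$.
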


Note that from \cref{Item:OSTBT} to \cref{Item:OSTAST} in \cref{Prop:OST}, the constraint on the stopping time $T$ is getting weaker while the constraint on $\{X_n\}_{n \in \bbZ^+}$ is getting stronger.
However, the classic OST is not suitable for higher-moment analysis:
\cref{Item:OSTBT} cannot handle almost-surely terminating programs;
\cref{Item:OSTPAST} relies on bounded difference, which is too restrictive, because the change in second moments can be non-constant even if the change in first moments can be bounded by a universal constant;
and \cref{Item:OSTAST} basically requires the accumulated cost can be bounded by a universal constant, which is also not practical.

\begin{counterexample}\label{Exa:PolynomialBound}
  Recall the random-walk program in \cref{Fi:AnnotatedRecursiveRandomWalk}.
  Below, we present the pre- and post-condition of the random-sampling statement:
  \begin{center}
  \begin{small}
    \begin{pseudo}
      ${\color{ACMDarkBlue} \{  \; \langle 1,2(d\!-\!x)\!+\!4, 4(d\!-\!x)^2\!+\!22(d\!-\!x)\!+\!28 \rangle \; \}  }$ \\
      $\isample{t}{ \kw{uniform}({-1},2)}$; \\
      ${\color{ACMDarkBlue} \{  \; \langle 1, 2(d\!-\!x\!-\!t)\!+\!5,4(d\!-\!x\!-\!t)^2\!+\!26(d\!-\!x\!-\!t)\!+\!37 \rangle \; \}}$ 
    \end{pseudo}
  \end{small}
  \end{center}
  Suppose that we want to apply the classic OST to ensuring the second-moment component is sound.
  Let us use $\Phi_n'$ and $Y_n'$ to denote the over-approximations of the second moment in $\Phi_n$ and $Y_n$, respectively.
  It has been shown that for such a random walk, the stopping time $T$ is not bounded but $\expe[T] < \infty$~\cite{PLDI:NCH18,TACAS:KUH19}.
  Thus, the applicable OST criterion could be \cref{Prop:OST}\ref{Item:OSTPAST}, which requires some $C>0$ such that $|Y'_{n+1}-Y'_n| \le C$ for all $n \in \bbZ^+$.
  
  Consider the case where the $n$-th evaluation step of a trace $\omega$ executes the random-sampling statement.
  Let $\alpha_o \defeq A_n(\omega)$.
  By the definition that $Y_n(\omega) \defeq \many{ ([\alpha_o^k,\alpha_o^k])_{0 \le k \le m} } \otimes \Phi_n(\omega)$, we have\footnote{When we use a program variable $x$ in $Y_n(\omega)$, it represents $\gamma(x)$ where $\omega_n = \tuple{\gamma,\_,\_,\_}$.}
  \begin{align*}
    & Y_n'(\omega) \\
    ={} & \alpha_o^2 \!+\! 2 \cdot \alpha_o \!\cdot\! (2(d\!-\!x)\!+\!4) \!+\! (4(d\!-\!x)^2\!+\!22(d\!-\!x)\!+\!28), \\
    & Y_{n+1}'(\omega) \\
    ={} & \alpha_o^2 \!+\! 2 \!\cdot\! \alpha_o \!\cdot\! (2(d\!-\!x\!-\!t)\!+\!5) \!+\! (4(d\!-\!x\!-\!t)^2\!+\!26(d\!-\!x\!-\!t) \!+\!37),  
  \end{align*}
  thus
  \begin{align*}
    & Y_{n+1}'(\omega)\!-\!Y_n'(\omega) \\
    ={} & 2 \!\cdot\! \alpha_o \!\cdot\! (-2t\!+\!1) \!+\! (-4t(2d\!-\!2x\!-\!t) \!+\! 4(d\!-\!x) \!-\!26t\! +\!9), 
  \end{align*}
  where $t \in [{-1},2]$, $d$ is a constant, but $\alpha_o$ and $x$ are generally unbounded,
  thus $|Y_{n+1}'-Y_n'|$ cannot be bounded by a constant.
\end{counterexample}



\section{Soundness of Bound Inference}
\label{Se:SoundnessProof}

\begin{figure*}
\begin{mathpar}\small
  \Rule{Valid-Ctx}{ \Forall{f \in \mathrm{dom}(\Delta)}\Forall{(\Gamma;Q,\Gamma;Q') \in \Delta(f)} \Delta \vdash \{\Gamma;Q\}~\scrD(f)~\{\Gamma';Q'\} }{ \vdash \Delta }
  \and
  \Rule{Q-Skip}{ }{ \Delta \vdash \{ \Gamma;Q \}~ \iskip ~\{ \Gamma;Q \} }
  \and
  \Rule{Q-Tick}{    Q = \many{\tuple{[c^k,c^k]}_{0 \le k \le m}} \otimes Q' }{ \Delta \vdash \{ \Gamma;Q \}~\itick{c}~\{ \Gamma;Q' \} }
  \and
  \Rule{Q-Assign}{ \Gamma = [E/x]\Gamma' \\   Q = [E/x]Q' }{ \Delta \vdash \{ \Gamma;Q  \}~\iassign{x}{E}~\{ \Gamma';Q' \} }
  \and
  \Rule{Q-Sample}{ \Gamma =\Forall{x \in \mathrm{supp}(\mu_D)} \Gamma' \\  Q = \expe_{x \sim \mu_D}[Q'] }{ \Delta \vdash \{  \Gamma;Q   \}~\isample{x}{D}~\{ \Gamma' ;Q' \} }
  \and
  \Rule{Q-Loop}{ \Delta \vdash \{ \Gamma \wedge L;Q \}~S_1~\{ \Gamma;Q \}  }{ \Delta \vdash \{ \Gamma;Q \}~\iloop{L}{S_1}~\{ \Gamma \wedge \neg L; Q \} }
  \and
  \Rule{Q-Cond}{ \Delta \vdash \{ \Gamma \wedge L; Q \}~S_1~\{\Gamma';Q' \} \\\\ \Delta \vdash \{ \Gamma \wedge \neg L; Q \}~S_2~\{\Gamma';Q' \}  }{ \Delta \vdash \{ \Gamma;Q \}~\icond{L}{S_1}{S_2}~\{ \Gamma';Q' \} }
  \and
  \Rule{Q-Seq}{ \Delta \vdash \{\Gamma;Q \}~S_1~\{\Gamma';Q' \} \\\\ \Delta \vdash \{ \Gamma';Q' \}~S_2~\{\Gamma'';Q''\}  }{ \Delta \vdash \{ \Gamma;Q \}~S_1;S_2~\{\Gamma'';Q''\} }
  \and
  \Rule{Q-Call}{  \Forall{i} (\Gamma;Q_i,\Gamma';Q_i') \in \Delta(f)   }{ \Delta \vdash \{ \Gamma; \textstyle \bigoplus_i Q_i \}~\iinvoke{f}~\{ \Gamma'; \textstyle \bigoplus_i Q_i' \} }
  \and
  \Rule{Q-Prob}{  \Delta \vdash \{ \Gamma;Q_1 \}~S_1~\{ \Gamma';Q' \} \\ \Delta \vdash \{ \Gamma;Q_2 \} ~S_2~\{ \Gamma';Q' \} \\\\ Q = P \oplus R \\ P = \tuple{[p,p],[0,0],\cdots,[0,0]} \otimes Q_1 \\ R = \tuple{[1-p,1-p],[0,0],\cdots,[0,0]} \otimes Q_2 }{ \Delta \vdash \{ \Gamma;Q \} ~\iprob{p}{S_1}{S_2}~\{ \Gamma'; Q' \} }
  \and
  \Rule{Q-Weaken}{ \Delta \vdash \{ \Gamma_0;Q_0 \}~S~\{ \Gamma_0';Q_0' \} \\ \Gamma \models \Gamma_0 \\ \Gamma_0' \models \Gamma' \\ \Gamma \models Q \sqsupseteq Q_0 \\ \Gamma_0' \models Q_0' \sqsupseteq Q'  }{ \Delta \vdash \{\Gamma;Q \}~S~\{\Gamma';Q'\} }
  \and
  \Rule{Valid-Cfg}{ \gamma \models \Gamma \\ \Delta \vdash \{\Gamma;Q\}~S~\{\Gamma';Q'\} \\ \Delta \vdash \{ \Gamma'; Q' \}~K }{ \Delta \vdash \{ \Gamma; Q \}~\tuple{\gamma,S,K,\alpha} }
  \and
  \Rule{QK-Stop}{ }{ \Delta \vdash \{ \Gamma ; Q \}~\kstop }
  \and
  \Rule{QK-Loop}{  \Delta \vdash \{ \Gamma \wedge L; Q\}~S~\{\Gamma;Q\} \\ \Delta \vdash \{ \Gamma \wedge \neg L ;Q\}~K }{ \Delta \vdash \{ \Gamma;Q \}~\kloop{L}{S}{K} }
  \and
  \Rule{QK-Seq}{ \Delta \vdash \{ \Gamma;Q\}~S~\{\Gamma';Q'\} \\ \Delta \vdash \{\Gamma';Q'\}~K }{ \Delta \vdash \{ \Gamma;Q \}~\kseq{S}{K} }
  \and
  \Rule{QK-Weaken}{ \Delta \vdash \{\Gamma';Q'\}~K \\ \Gamma \models \Gamma' \\ \Gamma \models Q \sqsupseteq Q' }{ \Delta \vdash \{\Gamma;Q\}~K }
\end{mathpar}
\caption{Inference rules of the derivation system.}
\label{Fi:CompleteInferenceRules}
\end{figure*}

\cref{Fi:CompleteInferenceRules} presents the inference rules of the derivation system.
Note that we prove the soundness on a more declarative system where the two rules for function calls are merged into one.
%
%
%
The validity of a context $\Delta$ for function specifications is then established by the validity of all specifications in $\Delta$, denoted by $\vdash \Delta$.
%

In addition to rules of the judgments for statements and function specifications, we also include rules for continuations and configurations that are used in the operational semantics.
A continuation $K$ is valid with a pre-condition $\{\Gamma;Q\}$, written $\Delta \vdash \{\Gamma;Q\}~K$, if $\phi_Q$ describes a bound on the moments of the accumulated cost of the computation represented by $K$ on the condition that the valuation before $K$ satisfies $\Gamma$.
Validity for configurations, written $\Delta \vdash \{\Gamma;Q\}~\tuple{\gamma,S,K,\alpha}$, is established by validity of the statement $S$ and the continuation $K$, as well as the requirement that the valuation $\gamma$ satisfies the pre-condition $\Gamma$.
Here $\phi_Q$ also describes an interval bound on the moments of the accumulated cost of the computation that continues from the configuration $\tuple{\gamma,S,K,\alpha}$.

The rule \textsc{(Q-Weaken)} and \textsc{(QK-Weaken)} are used to strengthen the pre-condition and relax the post-condition.
In terms of the bounds on moments of the accumulated cost, if the triple $\{\cdot;Q\}~S~\{\cdot;Q'\}$ is valid, then we can safely widen the intervals in the pre-condition $Q$ and narrow the intervals in the post-condition $Q'$.
To handle the judgment $\Gamma \models Q \sqsupseteq Q'$ in constraint generation,
we adapt the idea of \emph{rewrite functions} \cite{PLDI:NCH18,CAV:CHR17}.
Intuitively, to ensure that $[L_1,U_1] \sqsupseteq_{\calP\calI} [L_2,U_2]$, i.e., $L_1 \le L_2$ and $U_2 \le U_1$, under the logical context $\Gamma$, we generate constraints indicating that there exist two polynomials $T_1,T_2$ that are always nonnegative under $\Gamma$, such that $L_1=L_2+T_1$ and $U_1=U_2-T_2$.
%
%
In our implementation, $\Gamma$ is a set of linear constraints over program variables of the form $\calE \ge 0$ , then we can represent $T_1,T_2$ by \emph{conical} combinations (i.e., linear combinations with nonnegative scalars) of expressions $\calE$ in $\Gamma$.


To reduce the soundness proof to the extended OST for interval-valued bounds, we construct an \emph{annotated transition kernel} from validity judgements $\vdash \Delta$ and $\Delta \vdash \{\Gamma;Q\}~S_{\mathsf{main}}~\{\Gamma';Q'\}$.

\begin{lemma}\label{Lem:AnnotatedKernel}
  Suppose $\vdash \Delta$ and $\Delta \vdash \{\Gamma;Q\}~S_{\mathsf{main}}~\{\Gamma';Q'\}$.
  An \emph{annotated program configuration} has the form $\tuple{\Gamma,Q ,\gamma,S,K,\alpha}$ such that $\Delta \vdash \{\Gamma;Q\}~\tuple{\gamma,S,K,\alpha}$.
  Then there exists a probability kernel $\kappa$ over annotated program configurations such that:
  
  For all $\sigma=\tuple{\Gamma,Q,\gamma,S,K,\alpha} \in \mathrm{dom}(\kappa)$, it holds that
  \begin{enumerate}[(i)]
    \item $\kappa$ is the same as the evaluation relation $\mapsto$ if the annotations are omitted, i.e.,
    \[
    \kappa(\sigma) \bind \lambda\tuple{\_,\_,\gamma',S',K',\alpha'}. \delta(\tuple{\gamma',S',K',\alpha'}) = {\mapsto}(\tuple{\gamma,S,K,\alpha}),
    \]
    and
    
    \item $\phi_Q(\gamma) \sqsupseteq \expe_{\sigma' \sim \kappa(\sigma)} [\many{ \tuple{[(\alpha'\!-\!\alpha)^k,(\alpha'\!-\!\alpha)^k]}_{0 \le k \le m}} \otimes \phi_{Q'}(\gamma') ]$ where $\sigma'=\tuple{\_,Q',\gamma',\_,\_,\alpha'}$.
  \end{enumerate}
\end{lemma}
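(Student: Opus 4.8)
The plan is to prove this as a single-step \emph{subject-reduction} (preservation) result for the derivation system of \cref{Fi:CompleteInferenceRules} with respect to the operational semantics of \cref{Fi:OperationalSemantics}. Concretely, I would define $\kappa$ on a valid annotated configuration $\tuple{\Gamma,Q,\gamma,S,K,\alpha}$ by first stepping the underlying configuration through $\mapsto$ --- which by \cref{The:EvaluationIsKernel} yields a probability measure over successor configurations --- and then attaching to each successor $\tuple{\gamma',S',K',\alpha'}$ a pre-condition annotation $\tuple{\Gamma',Q'}$ read off from the derivation of $\Delta \vdash \{\Gamma;Q\}~\tuple{\gamma,S,K,\alpha}$. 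Property (i) then holds by construction, since forgetting the annotations recovers exactly $\mapsto$. The heart of the argument is establishing, simultaneously, that every attached annotation again yields a \emph{valid} annotated configuration (so that $\kappa$ maps into its own domain) and that the local potential inequality (ii) holds. I would prove both by induction on the derivation of the configuration-validity judgment, i.e.\ on the statement derivation justifying $S$ together with the continuation derivation justifying $K$ via \textsc{(Valid-Cfg)}.

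The inductive case analysis mirrors the operational rules. For each non-structural rule the matching evaluation rule fixes the successors, and the premise of the inference rule hands us exactly the annotation to attach: \textsc{(Q-Tick)} gives $Q = \many{\tuple{[c^k,c^k]}_{0 \le k \le m}} \otimes Q'$ and attaches $Q'$ to the successor $\tuple{\gamma,\iskip,K,\alpha+c}$; \textsc{(Q-Sample)} gives $Q = \expe_{x \sim \mu_D}[Q']$ and attaches $Q'$ to each $\tuple{\gamma[x\mapsto r],\iskip,K,\alpha}$, with reachability $\gamma[x\mapsto r] \models \Gamma'$ supplied by the $\Forall{x \in \mathrm{supp}(\mu_D)}\Gamma'$ premise; \textsc{(Q-Prob)} decomposes $Q = P \oplus R$ and attaches $Q_1,Q_2$ to the two branches; and the \textsc{(QK-*)} continuation rules handle the $\iskip$-transitions in the same style. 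In every such case the inequality (ii) in fact holds with \emph{equality}, because evaluating the symbolic potential at $\gamma$ is a semiring homomorphism from the moment semiring over $\calP\calI$ to the one over $\calI$ (induced by polynomial evaluation at $\gamma$), which commutes with $\otimes$ and $\oplus$, and with $\expe_{x \sim \mu_D}[\cdot]$ by linearity of expectation; the algebra of composing the stepwise cost is precisely \cref{Lem:EisnerBasicProperty}. Validity of each successor follows by re-assembling the leftover statement/continuation derivations with \textsc{(Q-Skip)} and the appropriate \textsc{(QK-*)} rule.

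Two points require more care, and I expect the second to be the main obstacle. First, the structural rule \textsc{(Q-Weaken)} (and \textsc{(QK-Weaken)}) blocks naive syntactic inversion, since it can be applied at any point. I would absorb it into the induction: when the last rule is a weakening, the inductive hypothesis applies to the inner derivation, and inequality (ii) is preserved because weakening widens the pre-condition intervals and narrows the post-condition intervals, so monotonicity of $\otimes$ and $\oplus$ (\cref{Lem:MomentMonoidMonotone}, via \cref{Lem:IntervalSemiringMonotone}) keeps the $\sqsupseteq$ pointing the right way. Second, and most delicate, is the function-call rule \textsc{(Q-Call)}: stepping $\iinvoke{f}$ to the body $\scrD(f)$ requires a body judgment carrying the \emph{combined} annotation $\bigoplus_i Q_i$, whereas the context validity $\vdash \Delta$ (\textsc{(Valid-Ctx)}) only supplies the individual specifications $\Delta \vdash \{\Gamma;Q_i\}~\scrD(f)~\{\Gamma';Q_i'\}$. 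The key step is therefore a \emph{combination} (frame) lemma --- if each $\{\Gamma;Q_i\}~S~\{\Gamma';Q_i'\}$ is derivable then so is $\{\Gamma;\bigoplus_i Q_i\}~S~\{\Gamma';\bigoplus_i Q_i'\}$ --- which I would prove by a separate induction on the body derivation, using distributivity of the moment semiring (the decomposition property highlighted in the Remark in \cref{Se:SoundnessCriteria}). With that lemma in hand, the call case reduces to attaching $\bigoplus_i Q_i$ to $\tuple{\gamma,\scrD(f),K,\alpha}$ and retaining the continuation judgment at $\bigoplus_i Q_i'$. Finally, I would note that $\sigma \mapsto \kappa(\sigma)$ is genuinely a probability kernel: it factors through $\mapsto$, which is a kernel by \cref{The:EvaluationIsKernel}, and the annotation-attaching map is constant on each syntactic skeleton, hence measurable.
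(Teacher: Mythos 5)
Your proposal takes essentially the same route as the paper's proof: the paper constructs $\kappa$ by induction on the derivation of the statement judgment (with a nested induction on the continuation judgment in the \textsc{(Q-Skip)} case), attaches the successor annotations read off from the premises of each rule, absorbs \textsc{(Q-Weaken)}/\textsc{(QK-Weaken)} via monotonicity of $\otimes$ and $\oplus$, and discharges the call case with precisely the combination lemma you identify (\cref{Lem:TypingRelax}), proved by a separate induction on the body derivation. No gaps.
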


Before proving the soundness, we show that the derivation system for bound inference admits a \emph{relaxation} rule.

\begin{lemma}\label{Lem:TypingRelax}
  Suppose that $\vdash \Delta$.
  If $\Delta \vdash \{\Gamma;Q_1\}~S~\{\Gamma';Q_1'\}$ and $\Delta \vdash \{ \Gamma;Q_2 \}~S~\{\Gamma';Q_2'\}$, then the judgment $\Delta \vdash \{\Gamma;Q_1 \oplus Q_2 \}~S~\{\Gamma'; Q_1' \oplus Q_2' \}$ is derivable.
\end{lemma}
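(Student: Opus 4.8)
The plan is to prove the relaxation rule by structural induction, reducing every case to two algebraic facts about the moment semiring $\calM_{\calP\calI}^{(m)}$: that $\oplus$ is pointwise addition and hence commutes with every \emph{linear} operation appearing in the rules, and that $\otimes$ \emph{distributes} over $\oplus$ (the defining semiring law). Concretely, I would induct on the derivation of one of the two judgments, inverting both $\Delta\vdash\{\Gamma;Q_1\}~S~\{\Gamma';Q_1'\}$ and $\Delta\vdash\{\Gamma;Q_2\}~S~\{\Gamma';Q_2'\}$ at each step. The two judgments share the pre-context $\Gamma$ and post-context $\Gamma'$, and since the logical contexts are the reachable-state predicates attached to program points, all intermediate contexts line up automatically; the whole content of each case is then the manipulation of the potential annotations, whose post-components $Q_1',Q_2'$ are \emph{allowed to differ}, so the induction hypothesis applies even when subderivations carry different annotations.

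The leaf and linear cases go through directly. For \textsc{(Q-Skip)} the pre- and post-annotations coincide, so $Q_1\oplus Q_2$ serves as both. For \textsc{(Q-Tick)} we have $Q_i=\many{\tuple{[c^k,c^k]}_{0\le k\le m}}\otimes Q_i'$, and distributivity of $\otimes$ over $\oplus$ gives $\many{\tuple{[c^k,c^k]}_{0\le k\le m}}\otimes(Q_1'\oplus Q_2')=Q_1\oplus Q_2$, which is exactly the premise of \textsc{(Q-Tick)} for the combined annotations. The cases \textsc{(Q-Assign)} and \textsc{(Q-Sample)} use that the substitution $[E/x](\cdot)$ and the expectation $\expe_{x\sim\mu_D}[\cdot]$ act componentwise on polynomials and are therefore additive, so they commute with $\oplus$. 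The rule \textsc{(Q-Prob)} combines both facts: after the induction hypothesis is applied to $S_1$ and $S_2$, the weighted pre-annotation $\tuple{[p,p],[0,0],\dots}\otimes(\cdot)\oplus\tuple{[1-p,1-p],[0,0],\dots}\otimes(\cdot)$ distributes over the merged branch annotations. Finally \textsc{(Q-Call)} is immediate, since its conclusion is already an arbitrary $\bigoplus$ of valid specifications drawn from $\Delta(f)$, and the disjoint union of the two index families yields the required $Q_1\oplus Q_2$ and $Q_1'\oplus Q_2'$.

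The recursive rules \textsc{(Q-Seq)}, \textsc{(Q-Cond)} and \textsc{(Q-Loop)} follow by applying the induction hypothesis to the sub-statements and reassembling: for \textsc{(Q-Loop)} the shared invariant becomes $Q_1\oplus Q_2$, and for \textsc{(Q-Seq)} the two subderivations share the intermediate context while their intermediate annotations $R_1,R_2$ are merged by the induction hypothesis into $R_1\oplus R_2$. The main obstacle I anticipate is the structural rule \textsc{(Q-Weaken)}, which may be applied anywhere and breaks the clean pattern in which both derivations end in the rule dictated by $S$, since it widens the pre-annotation and narrows the post-annotation through the $\sqsupseteq$ side-conditions. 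I would handle it with monotonicity: by \cref{Lem:MomentMonoidMonotone} together with \cref{Lem:IntervalSemiringMonotone}, $\oplus$ and $\otimes$ are monotone with respect to interval inclusion, so from $Q_1\sqsupseteq Q_{01}$ and $Q_2\sqsupseteq Q_{02}$ one obtains $Q_1\oplus Q_2\sqsupseteq Q_{01}\oplus Q_{02}$ (and dually for the post-annotations). The cleanest way to make this bookkeeping uniform is to first put the system in syntax-directed form, absorbing \textsc{(Q-Weaken)} into each rule as $\sqsupseteq$-side-conditions; then both derivations end in the same rule, the induction hypothesis combines the sub-annotations exactly, and monotonicity re-establishes the absorbed weakening for the combined annotation. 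This reduces the entire argument to the distributivity and linearity identities above plus a single appeal to monotonicity.
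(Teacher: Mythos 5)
Your proposal is correct and follows essentially the same route as the paper's proof: a nested induction/inversion on the two derivations (normalized to have the same shape and logical contexts), with each case discharged by distributivity of $\otimes$ over $\oplus$, additivity of substitution and of $\expe_{x\sim\mu_D}[\cdot]$, and, for \textsc{(Q-Weaken)}, monotonicity of $\oplus$ via \cref{Lem:MomentMonoidMonotone} and \cref{Lem:IntervalSemiringMonotone}. Your suggestion to absorb \textsc{(Q-Weaken)} into a syntax-directed presentation is just a repackaging of the paper's assumption that weakening positions are inserted uniformly in both derivations.
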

\begin{proof}
  By nested induction on the derivation of the judgment $\Delta \vdash \{\Gamma;Q_1\}~S~\{\Gamma';Q_1'\}$, followed by inversion on $\Delta \vdash \{ \Gamma;Q_2 \}~S~\{\Gamma';Q_2'\}$.
  We assume the derivations have the same shape and the same logical contexts; in practice, we can ensure this by explicitly inserting weakening positions, e.g., all the branching points, and by doing a first pass to obtain logical contexts.
  
  \begin{itemize}
    \item $\small\Rule{Q-Skip}{ }{ \Delta \vdash \{\Gamma;Q_1\}~\iskip~\{\Gamma;Q_1\} }$
    
    By inversion, we have $Q_2=Q_2'$.
    By \textsc{(Q-Skip)}, we immediately have $\Delta \vdash \{\Gamma;Q_1 \oplus Q_2 \}~\iskip~\{\Gamma;Q_1 \oplus Q_2 \}$.
    
    \item $\small\Rule{Q-Tick}{ Q_1 = \many{\tuple{[c^k,c^k]}_{0 \le k \le m}} \otimes Q_1'  }{ \Delta \vdash \{\Gamma;Q_1\}~\itick{c}~\{\Gamma;Q_1'\} }$
    
    By inversion, we have $Q_2 = \many{\tuple{[c^k,c^k]}_{0 \le k \le m}} \otimes Q_2'$.
    By distributivity, we have $\many{\tuple{[c^k,c^k]}_{0 \le k \le m}} \otimes (Q_1' \oplus Q_2') = (\many{\tuple{[c^k,c^k]}_{0 \le k \le m}} \otimes Q_1') \oplus (\many{\tuple{[c^k,c^k]}_{0 \le k \le m}} \oplus  Q_2') = Q_1 \oplus Q_2$.
    Then we conclude by \textsc{(Q-Tick)}.
    
    \item $\small\Rule{Q-Assign}{ \Gamma = [E/x]\Gamma' \\ Q_1 = [E/x]Q_1' }{ \Delta \vdash \{\Gamma;Q_1\}~\iassign{x}{E}~\{\Gamma';Q_1'\} }$
    
    By inversion, we have $Q_2 = [E/x]Q_2'$.
    Then we know that $[E/x](Q_1' \oplus Q_2') = [E/x]Q_1' \oplus [E/x]Q_2' = Q_1 \oplus Q_2$.
    Then we conclude by \textsc{(Q-Assign)}.
    
    \item $\small\Rule{Q-Sample}{ \Gamma = \Forall{x \in \mathrm{supp}(\mu_D)} \Gamma' \\ Q_1 = \expe_{x \sim \mu_D}[Q_1'] }{ \Delta \vdash \{\Gamma;Q_1\}~\isample{x}{D}~\{\Gamma';Q_1'\} }$
    
    By inversion, we have $Q_2 = \expe_{x \sim \mu_D}[Q_2']$.
    By \cref{Lem:CondExpeLinear}, we know that $\expe_{x \sim \mu_D}[Q_1' \oplus Q_2'] = \expe_{x \sim \mu_D}[Q_1'] \oplus \expe_{x \sim \mu_D}[Q_2'] = Q_1 \oplus Q_2$.
    Then we conclude by \textsc{(Q-Sample)}.
    
    \item $\small\Rule{Q-Call}{ \Forall{i} (\Gamma;Q_{1i},\Gamma';Q_{1i}') \in \Delta(f)  }{ \Delta \vdash \{\Gamma; \textstyle \bigoplus_i Q_{1i} \}~\iinvoke{f}~\{\Gamma'; \textstyle \bigoplus_i Q_{1i}' \} }$
    
    By inversion, $Q_2 = \bigoplus_j Q_{2j}$ and $Q_2' = \bigoplus_j Q_{2j}'$ where $(\Gamma;Q_{2j}, \Gamma';Q_{2j}') \in \Delta(f)$ for each $j$.
    Then by \textsc{(Q-Call)}, we have
    $\Delta \vdash \{ \Gamma; \bigoplus_i Q_{1i} \oplus \bigoplus_j Q_{2j} \} ~ \iinvoke{f} ~ \{ \Gamma'; \bigoplus_i Q_{1i}' \oplus \bigoplus_j Q_{2j}' \}$.
    
    \item $\small\Rule{Q-Prob}{ \Delta \vdash \{\Gamma;Q_{11}\}~S_1~\{\Gamma';Q_1'\} \\ \Delta \vdash \{\Gamma;Q_{12}\}~S_2~\{\Gamma';Q_1'\} \\ Q_1 = P_1 \oplus R_1 \\ P_1 = \tuple{[p,p],[0,0],\cdots,[0,0]} \otimes Q_{11} \\ R_1 = \tuple{[1-p,1-p],[0,0],\cdots,[0,0]} \otimes Q_{12} }{ \Delta \vdash \{\Gamma;Q_1\}~\iprob{p}{S_1}{S_2}~\{\Gamma';Q_1'\} }$
    
    By inversion, we know that $Q_2 = P_2 \oplus R_2$ for some $Q_{21},Q_{22}$ such that
    $\Delta \vdash \{ \Gamma; Q_{21} \}~S_1~\{\Gamma';Q_{2}' \}$,
    $\Delta \vdash \{ \Gamma; Q_{22} \}~S_2~\{\Gamma';Q_2' \}$,
    $P_2 = \tuple{[p,p],[0,0],\cdots,[0,0]} \otimes Q_{21}$,
    and $R_2 = \tuple{[1-p,1-p],[0,0],\cdots,[0,0]} \otimes Q_{22}$.
    By induction hypothesis, we have $\Delta \vdash \{\Gamma;Q_{11} \oplus Q_{21} \}~S_1~\{\Gamma'; Q_1' \oplus Q_2' \}$ and $\Delta \vdash \{\Gamma; Q_{12} \oplus Q_{22} \}~S_2~\{\Gamma'; Q_1' \oplus Q_2' \}$.
    Then
    \begin{align*}
      & \tuple{[p,p],\cdots,[0,0]} \otimes (Q_{11} \oplus Q_{21}) \\
      ={} & (\tuple{[p,p],\cdots,[0,0]} \otimes Q_{11}) \oplus (\tuple{[p,p],\cdots,[0,0]} \oplus Q_{21}) \\
      ={} & P_1 \oplus P_2, \\
      & \tuple{[1-p,1-p],\cdots,[0,0]} \otimes (Q_{12} \oplus Q_{22}) \\
      ={} & (\tuple{[1\!-\!p,\!1\!-\!p],\!\cdots\!,\![0,0]} \!\otimes\! Q_{12}) \!\oplus\! (\tuple{[1\!-\!p,\!1\!-\!p],\!\cdots\!,\![0,0]} \!\otimes\! Q_{22}) \\
      ={} & R_1 \oplus R_2, \\
      & (P_1 \oplus P_2) \oplus (R_1 \oplus R_2) \\
      ={} & (P_1 \oplus R_1) \oplus (P_2 \oplus R_2) \\
      ={} & Q_1 \oplus Q_2. 
    \end{align*}
    Thus we conclude by \textsc{(Q-Prob)}.
    
    \item $\small\Rule{Q-Cond}{ \Delta \vdash \{\Gamma \wedge L; Q_1\}~S_1~\{\Gamma';Q_1'\} \\ \Delta \vdash \{\Gamma \wedge \neg L;Q_1\} ~S_2~\{\Gamma';Q_1'\}  }{ \Delta \vdash \{\Gamma;Q_1\}~\icond{L}{S_1}{S_2}~\{\Gamma';Q_1'\} }$
    
    By inversion, we have
    $\Delta \vdash \{ \Gamma \wedge L; Q_2 \}~S_1~\{\Gamma';Q_2'\}$, and
    $\Delta \vdash \{ \Gamma \wedge \neg L; Q_2\} ~ S_2 ~ \{\Gamma'; Q_2'\}$.
    By induction hypothesis, we have $\Delta \vdash \{\Gamma \wedge L; Q_1 \oplus Q_2 \}~S_1~\{\Gamma';Q_1' \oplus Q_2'\}$ and $\Delta \vdash \{\Gamma \wedge \neg L; Q_1 \oplus Q_2 \}~S_2~\{\Gamma'; Q_1' \oplus Q_2' \}$.
    Then we conclude by \textsc{(Q-Cond)}.
    
    \item $\small\Rule{Q-Loop}{ \Delta \vdash \{\Gamma \wedge L;Q_1\}~S~\{\Gamma;Q_1\} }{ \Delta \vdash \{\Gamma;Q_1\}~\iloop{L}{S}~\{\Gamma \wedge \neg L; Q_1\} }$
    
    By inversion, we have $\Delta \vdash \{ \Gamma \wedge L; Q_2 \} ~ S ~ \{\Gamma; Q_2\}$.
    By induction hypothesis, we have $\Delta \vdash \{\Gamma \wedge L; Q_1 \oplus Q_2 \}~S~\{\Gamma; Q_1 \oplus Q_2  \}$.
    Then we conclude by \textsc{(Q-Loop)}.
    
    \item $\small\Rule{Q-Seq}{ \Delta \vdash \{\Gamma;Q_1\}~S_1~\{\Gamma'';Q_1''\} \\ \Delta \vdash \{\Gamma'';Q_1''\}~S_2~\{\Gamma';Q_1'\} }{ \Delta \vdash \{\Gamma;Q_1\}~S_1;S_2~\{\Gamma';Q_1'\} }$
    
    By inversion, there exists $Q_2''$ such that
    $\Delta \vdash \{ \Gamma; Q_2 \}~S_1 ~ \{ \Gamma''; Q_2'' \}$, and
    $\Delta \vdash \{\Gamma'';Q_2'' \} ~S_2 ~ \{ \Gamma'; Q_2' \}$.
    By induction hypothesis, we have $\Delta \vdash \{\Gamma; Q_1 \oplus Q_2 \}~S_1~\{\Gamma''; Q_1'' \oplus Q_2'' \}$ and $\Delta \vdash \{\Gamma''; Q_1'' \oplus Q_2'' \}~S_2~\{\Gamma'; Q_1' \oplus Q_2' \}$.
    Then we conclude by \textsc{(Q-Seq)}.
    
    \item $\small\Rule{Q-Weaken}{ \Delta \vdash \{\Gamma_0;Q_0\}~S~\{\Gamma_0';Q_0'\} \\ \Gamma \models \Gamma_0 \\ \Gamma_0' \models \Gamma' \\ \Gamma \models Q_1 \sqsupseteq Q_0 \\ \Gamma_0' \models Q_0' \sqsupseteq Q_1' }{ \Delta \vdash \{\Gamma;Q_1\}~S~\{\Gamma';Q_1'\} }$
    
    By inversion, there exist $Q_3,Q_3'$ such that
    $\Gamma \models Q_2 \sqsupseteq Q_3$, $\Gamma_0' \models Q_3' \sqsupseteq Q_2'$, and
    $\Delta \vdash \{ \Gamma_0; Q_3 \} ~S~ \{ \Gamma_0'; Q_3' \}$.
    By induction hypothesis, we have $\Delta \vdash \{\Gamma_0; Q_0 \oplus Q_3 \}~S~\{\Gamma_0';Q_0' \oplus Q_3' \}$.
    To apply \textsc{(Q-Weaken)}, we need to show that $\Gamma \models Q_1 \oplus Q_2  \sqsupseteq Q_0 \oplus Q_3$ and $\Gamma_0' \models Q_0' \oplus Q_3' \sqsupseteq  Q_1' \oplus Q_2'$.
    Then appeal to \cref{Lem:IntervalSemiringMonotone,Lem:MomentMonoidMonotone}.
  \end{itemize}
\end{proof}

Now we can construct the annotated transition kernel to reduce the soundness proof to OST.

\begin{proof}[Proof of \cref{Lem:AnnotatedKernel}]
  Let $\nu \defeq {\mapsto}(\tuple{\gamma,S,K,\alpha})$.
  By inversion on $\Delta \vdash \{\Gamma;Q\}~\tuple{\gamma,S,K,\alpha}$, we know that $\gamma \models \Gamma$, $\Delta \vdash \{\Gamma;Q\}~S~\{\Gamma';Q'\}$, and $\Delta \vdash \{\Gamma';Q'\}~K$ for some $\Gamma',Q'$.
  We construct a probability measure $\mu$ as $\kappa(\tuple{\Gamma,Q,\gamma,S,K,\alpha})$ by induction on the derivation of $\Delta \vdash \{\Gamma;Q\}~S~\{\Gamma';Q'\}$.
  
  \begin{itemize}
    \item $\small\Rule{Q-Skip}{ }{ \Delta \vdash \{ \Gamma;Q \}~\iskip~\{\Gamma;Q\} }$
    
    By induction on the derivation of $\Delta \vdash \{\Gamma;Q\}~K$.
    
    \begin{itemize}
      \item $\small\Rule{QK-Stop}{ }{ \Delta \vdash \{ \Gamma;Q \}~\kstop }$
      
      We have $\nu = \delta(\tuple{\gamma,\iskip,\kstop,\alpha})$.
      Then we set $\mu = \delta(\tuple{\Gamma,Q,\gamma,\iskip,\kstop,\alpha })$.
      It is clear that $\phi_Q(\gamma) = \aone \otimes \phi_Q(\gamma)$.
      
      \item $\small\Rule{QK-Loop}{ \Delta \vdash \{\Gamma \wedge L; Q\}~S~\{\Gamma;Q\} \\ \Delta \vdash \{\Gamma \wedge \neg L; Q\}~K }{ \Delta \vdash \{\Gamma;Q\}~\kloop{L}{S}{K} }$
      
      Let $b \in \{\bot,\top\}$ be such that $\gamma \vdash L \Downarrow b$.
      
      If $b = \top$, then $\nu = \delta(\tuple{\gamma,S,\kloop{L}{S}{K},\alpha})$.
      We set $\mu = \delta(\tuple{ \Gamma \wedge L,Q,\gamma,S,\kloop{L}{S}{K},\alpha })$.
      In this case, we know that $\gamma \models \Gamma \wedge L$.
      By the premise, we know that $\Delta \vdash \{\Gamma\wedge L;Q\}~S~\{\Gamma;Q\}$.
      It then remains to show that $\Delta \vdash \{ \Gamma;Q\}~\kloop{L}{S}{K}$.
      By \textsc{(QK-Loop)}, it suffices to show that $\Delta \vdash \{\Gamma \wedge L; Q\}~S~\{\Gamma;Q\}$ and $\Delta \vdash \{\Gamma \wedge \neg L; Q\}~K$.
      Then appeal to the premise.
      
      If $b = \bot$, then $\mu = \delta(\tuple{\gamma,\iskip,K,\alpha})$.
      We set $\mu = \delta(\tuple{\Gamma \wedge \neg L, Q, \gamma,\iskip,K,\alpha})$.
      In this case, we know that $\gamma \models \Gamma \wedge \neg L$.
      By \textsc{(Q-Skip)}, we have $\Delta \vdash \{\Gamma \wedge \neg L;Q\}~\iskip~\{\Gamma \wedge \neg L; Q\}$.
      It then remains to show that $\Delta \vdash \{\Gamma \wedge \neg L; Q\}~K$.
      Then appeal to the premise.
      
      In both cases, $\gamma$ and $Q$ do not change, thus we conclude that $\phi_Q(\gamma) = \aone \otimes \phi_Q(\gamma)$.
      
      \item $\small\Rule{QK-Seq}{ \Delta \vdash \{\Gamma;Q\}~S~\{\Gamma';Q'\} \\ \Delta \vdash \{\Gamma';Q'\}~K }{ \Delta \vdash \{\Gamma;Q\}~\kseq{S}{K} }$
      
      We have $\nu = \delta(\tuple{\gamma,S,K,\alpha})$.
      Then we set $\mu = \delta(\tuple{\Gamma,Q,\gamma,S,K,\alpha})$.
      By the premise, we know that $\Delta \vdash \{\Gamma;Q\}~S~\{\Gamma';Q'\}$ and $\Delta \vdash \{\Gamma';Q'\}~K$.
      Because $\gamma$ and $Q$ do not change, we conclude that $\phi_Q(\gamma) = \aone \otimes \phi_Q(\gamma)$.
      
      \item $\small\Rule{QK-Weaken}{ \Delta \vdash \{\Gamma';Q'\}~K \\ \Gamma \models \Gamma' \\ \Gamma \models Q \sqsupseteq Q' }{ \Delta \vdash \{\Gamma;Q\}~K }$
      
      Because $\gamma \models \Gamma$ and $\Gamma \models \Gamma'$, we know that $\gamma \models \Gamma'$.
      Let $\mu'$ be obtained from the induction hypothesis on $\Delta \vdash \{\Gamma';Q'\}~K$.
      Then $\phi_{Q'}(\gamma) \sqsupseteq \expe_{\sigma' \sim \mu'}[\many{\tuple{[(\alpha'-\alpha)^k,(\alpha'-\alpha)^k]}_{0 \le k \le m}} \otimes \phi_{Q''}(\gamma')]$, where $\sigma'=\tuple{\_,Q'',\gamma',\_,\_,\alpha'}$.
      We set $\mu = \mu'$.
      Because $\Gamma \models Q \sqsupseteq Q'$ and $\gamma \models \Gamma$, we conclude that $\phi_Q(\gamma) \sqsupseteq \phi_{Q'}(\gamma)$.
    \end{itemize}

  \item $\small\Rule{Q-Tick}{ Q = \many{\tuple{[c^k,c^k]}_{0 \le k \le m}} \otimes Q' }{ \Delta \vdash \{\Gamma;Q\}~\itick{c}~\{\Gamma;Q'\} }$
  
  We have $\nu = \delta(\tuple{\gamma,\iskip,K,\alpha+c})$.
  Then we set $\mu = \delta(\tuple{\Gamma,Q',\gamma,\iskip,K,\alpha+c})$.
  By \textsc{(Q-Skip)}, we have $\Delta \vdash \{\Gamma;Q'\}~\iskip~\{\Gamma;Q'\}$.
  Then by the assumption, we have $\Delta \vdash \{\Gamma;Q'\}~K$.
  It remains to show that $\phi_Q(\gamma) \sqsupseteq \many{\tuple{[c^k,c^k]}_{0 \le k \le m}} \otimes \phi_{Q'}(\gamma)$.
  Indeed, we have $\phi_Q(\gamma) = \many{\tuple{[c^k,c^k]}_{0 \le k \le m}} \otimes \phi_{Q'}(\gamma)$ by the premise.
  
  \item $\small\Rule{Q-Assign}{ \Gamma = [E/x]\Gamma' \\ Q = [E/x]Q' }{ \Delta \vdash \{\Gamma;Q\}~\iassign{x}{E}~\{\Gamma';Q'\} }$
  
  Let $r \in \bbR$ be such that $\gamma \vdash E \Downarrow r$.
  We have $\nu = \delta(\tuple{\gamma[x \mapsto r],\iskip,K,\alpha})$.
  Then we set $\mu = \delta(\tuple{\Gamma',Q',\gamma[x \mapsto r],\iskip,K,\alpha})$.
  Because $\gamma \vdash \Gamma$, i.e., $\gamma \vdash [E/x]\Gamma'$, we know that $\gamma[x \mapsto r] \vdash \Gamma'$.
  By \textsc{(Q-Skip)}, we have $\Delta \vdash \{\Gamma';Q'\}~\iskip~\{\Gamma';Q'\}$.
  Then by the assumption, we have $\Delta \vdash \{\Gamma';Q'\}~K$.
  It remains to show that $\phi_Q(\gamma) = \aone \otimes \phi_{Q'}(\gamma[x \mapsto r])$.
  By the premise, we have $Q = [E/x]Q'$, thus $\phi_Q(\gamma) = \phi_{[E/x]Q'}(\gamma) = \phi_{Q'}(\gamma[x \mapsto r])$.
  
  \item $\small\Rule{Q-Sample}{ \Gamma = \Forall{x \in \mathrm{supp}(\mu_D)} \Gamma' \\ Q = \expe_{x \sim \mu_D}[Q'] }{ \Delta \vdash \{\Gamma;Q\}~\isample{x}{D}~\{\Gamma';Q'\} }$
  
  We have $\nu = \mu_D \bind \lambda r. \delta(\tuple{\gamma[x \mapsto r],\iskip,K,\alpha})$.
  Then we set $\mu = \mu_D \bind \lambda r. \delta(\tuple{\Gamma',Q',\gamma[x \mapsto r],\iskip,K,\alpha})$.
  For all $r \in \mathrm{supp}(\mu_D)$, because $\gamma \models \Forall{x \in \mathrm{supp}(\mu_D)} \Gamma'$, we know that $\gamma[x \mapsto r] \models \Gamma'$.
  By \textsc{(Q-Skip)}, we have $\Delta \vdash \{\Gamma';Q'\}~\iskip~\{\Gamma';Q'\}$.
  Then by the assumption, we have $\Delta \vdash \{\Gamma';Q'\}~K$.
  It remains to show that $\phi_Q(\gamma) \sqsupseteq \expe_{r \sim \mu_D}[\aone \otimes \phi_{Q'}(\gamma[x \mapsto r])]$.
  Indeed, because $Q = \expe_{x \sim \mu_D}[Q']$, we know that $\phi_Q(\gamma) = \phi_{\expe_{x \sim \mu_D}[Q']}(\gamma) = \expe_{r \sim \mu_D}[\phi_{Q'}(\gamma[x \mapsto r])]$.
  
  \item $\small\Rule{Q-Call}{ \Forall{i} (\Gamma;Q_i,\Gamma';Q_i') \in \Delta(f) }{ \Delta \vdash \{\Gamma; \textstyle \bigoplus_i Q_i \}~\iinvoke{f}~\{\Gamma'; \textstyle \bigoplus_i Q_i' \} }$
  
  We have $\nu = \delta(\tuple{\gamma,\scrD(f),K,\alpha})$.
  Then we set $\mu = \delta(\tuple{\Gamma, \bigoplus_i Q_i ,\gamma,\scrD(f),K,\alpha})$.
  By the premise, we know that $\Delta \vdash \{\Gamma;Q_i\}~\scrD(f)~\{\Gamma';Q_i'\}$ for each $i$.
  By \cref{Lem:TypingRelax} and simple induction, we know that $\Delta \vdash \{\Gamma; \bigoplus_i Q_i \}~\scrD(f)~\{\Gamma'; \bigoplus_i Q_i' \} $.
  Because $\gamma$ and $\bigoplus_i Q_i$ do not change, we conclude that $\phi_{\bigoplus_i Q_i}(\gamma) = \aone \otimes \phi_{\bigoplus_i Q_i}(\gamma)$.
  
  \item $\small\Rule{Q-Prob}{ \Delta \vdash \{\Gamma;Q_1\}~S_1~\{\Gamma';Q'\} \\ \Delta \vdash \{\Gamma;Q_2\}~S_2~\{\Gamma';Q'\} \\ Q = P \oplus R \\ P = \tuple{[p,p],[0,0],\cdots,[0,0]} \otimes Q_1 \\ R = \tuple{[1-p,1-p],[0,0],\cdots,[0,0]} \otimes Q_2 }{ \Delta \vdash \{\Gamma;Q\}~\iprob{p}{S_1}{S_2} ~\{\Gamma';Q'\} }$
  
  We have $\nu = p \cdot \delta(\tuple{\gamma,S_1,K,\alpha}) + (1-p) \cdot \delta(\tuple{\gamma,S_2,K,\alpha})$.
  Then we set $\mu = p \cdot \delta(\tuple{\Gamma,Q_1,\gamma,S_1,K,\alpha} + (1-p) \cdot \delta(\tuple{\Gamma,Q_2,\gamma,S_2,K,\alpha})$.
  From the assumption and the premise, we know that $\gamma \models \Gamma$, $\Delta \vdash \{\Gamma';Q'\}~K$, and $\Delta \vdash \{\Gamma;Q_1\}~S_1~\{\Gamma';Q'\}$, $\Delta \vdash \{\Gamma;Q_2\}~S_2~\{\Gamma';Q'\}$.
  It remains to show that $\phi_Q(\gamma)_k \ge_{\calI} (p \cdot \phi_{Q_1}(\gamma)_k) +_\calI ((1-p) \cdot \phi_{Q_2}(\gamma)_k)$, where the scalar product $p \cdot [a,b] \defeq [pa,pb]$ for $p \ge 0$.
  On the other hand, from the premise, we have $Q_k = P_k +_{\calP\calI} R_k$ and $P_k = (([p,p],\cdots,[0,0]) \otimes Q_1)_k = \binom{k}{0} \times_{\calP\calI} ([p,p] \cdot_{\calP\calI} (Q_1)_k) = p \cdot (Q_1)_k$, as well as $R_k = (1-p) \cdot (Q_2)_k$.
  Therefore, we have $\phi_Q(\gamma)_k = \phi_{\many{\tuple{P_k +_{\calP\calI} R_k}_{0 \le k \le m}}}(\gamma)_k = p \cdot \phi_{Q_1}(\gamma)_k +_\calI (1-p) \cdot \phi_{Q_2}(\gamma)_k$.
  
  \item $\small\Rule{Q-Cond}{ \Delta \vdash \{\Gamma \wedge L;Q\}~S_1~\{\Gamma';Q'\} \\ \Delta \vdash \{\Gamma \wedge \neg L; Q\}~S_2~\{\Gamma';Q'\} }{ \Delta \vdash \{\Gamma;Q\}~\icond{L}{S_1}{S_2}~\{\Gamma';Q'\} }$
  
  Let $b \in \{\top,\bot\}$ be such that $\gamma \vdash L \Downarrow b$.
  
  If $b = \top$, then $\nu = \delta(\tuple{\gamma,S_1,K,\alpha})$.
  We set $\mu = \delta(\tuple{\Gamma \wedge L,Q,\gamma,S_1,K,\alpha})$.
  In this case, we know that $\gamma \models \Gamma \wedge L$.
  By the premise and the assumption, we know that $\Delta \vdash \{\Gamma \wedge L; Q\}~S_1~\{\Gamma';Q'\}$ and $\Delta \vdash \{\Gamma';Q'\}~K$.
  
  If $b = \bot$, then $\nu = \delta(\tuple{\gamma,S_2,K,\alpha})$.
  We set $\mu = \delta(\tuple{\Gamma \wedge \neg L,Q,\gamma,S_2,K,\alpha})$.
  In this case, we know that $\gamma \models \Gamma \wedge \neg L$.
  By the premise and the assumption, we know that $\Delta \vdash \{\Gamma \wedge \neg L; Q\}~S_2~\{\Gamma';Q'\}$ and $\Delta \vdash \{\Gamma';Q'\}~K$.
  
  In both cases, $\gamma$ and $Q$ do not change, thus we conclude that $\phi_Q(\gamma) = \aone \otimes \phi_Q(\gamma)$.
  
  \item $\small\Rule{Q-Loop}{ \Delta \vdash \{\Gamma \wedge L;Q\}~S~\{\Gamma;Q\} }{ \Delta \vdash \{\Gamma;Q\}~\iloop{L}{S_1}~\{\Gamma \wedge \neg L;Q \} }$
  
  We have $\nu = \delta(\tuple{\gamma,\iskip,\kloop{L}{S}{K},\alpha})$.
  Then we set $\mu = \delta(\tuple{\Gamma,Q,\gamma,\iskip,\kloop{L}{S}{K},\alpha})$.
  By \textsc{(Q-Skip)}, we have $\Delta \vdash \{\Gamma;Q\}~\iskip~\{\Gamma;Q\}$.
  Then by the assumption $\Delta \vdash \{\Gamma \wedge \neg L;Q\}~K$ and the premise, we know that $\Delta \vdash \{\Gamma;Q\}~\kloop{L}{S}{K}$ by \textsc{(QK-Loop)}.
  Because $\gamma$ and $Q$ do not change, we conclude that $\phi_Q(\gamma) = \aone \otimes \phi_Q(\gamma)$.
  
  \item $\small\Rule{Q-Seq}{ \Delta \vdash \{\Gamma;Q\}~S_1~\{\Gamma';Q'\} \\ \Delta \vdash \{\Gamma';Q'\}~S_2~\{\Gamma'';Q''\} }{ \Delta \vdash \{\Gamma;Q\}~S_1;S_2~\{\Gamma'';Q''\} }$
  
  We have $\nu = \delta(\tuple{\gamma,S_1,\kseq{S_2}{K},\alpha})$.
  Then we set $\mu = \delta(\tuple{\Gamma,Q,\gamma,S_1,\kseq{S_2}{K},\alpha})$.
  By the first premise, we have $\Delta \vdash \{\Gamma;Q\}~S_1~\{\Gamma';Q'\}$.
  By the assumption $\Delta \vdash \{\Gamma'';Q''\}~K$ and the second premise, we know that $\Delta \vdash \{\Gamma';Q'\}~\kseq{S_2}{K}$ by \textsc{(QK-Seq)}.
  Because $\gamma$ and $Q$ do not change, we conclude that $\phi_Q(\gamma) = \aone \otimes \phi_Q(\gamma)$. 
  
  \item $\small\Rule{Q-Weaken}{ \Delta \vdash \{\Gamma_0;Q_0\}~S~\{\Gamma_0';Q_0'\} \\ \Gamma \models \Gamma_0 \\ \Gamma_0' \models \Gamma' \\ \Gamma \models Q \sqsupseteq Q_0 \\ \Gamma_0' \models Q_0' \sqsupseteq Q' }{ \Delta \vdash \{\Gamma;Q\}~S~\{\Gamma';Q'\} }$
  
  By $\gamma \models \Gamma$ and $\Gamma \models \Gamma_0$, we know that $\gamma \models \Gamma_0$.
  By the assumption $\Delta \vdash \{\Gamma';Q'\}~K$ and the premise $\Gamma_0' \models \Gamma'$, $\Gamma_0' \models Q_0' \sqsupseteq Q'$, we derive $\Delta \vdash \{\Gamma_0';Q_0'\}~K$ by \textsc{(QK-Weaken)}.
  Thus let $\mu_0$ be obtained by the induction hypothesis on $\Delta \vdash \{\Gamma_0;Q_0\}~S~\{\Gamma_0';Q_0'\}$.
  Then $\phi_{Q_0}(\gamma) \sqsupseteq \expe_{\sigma' \sim \mu_0}[\many{\tuple{[(\alpha'-\alpha)^k,(\alpha'-\alpha)^k]}_{0 \le k \le m}} \otimes \phi_{Q''}(\gamma')]$, where $\sigma' = \tuple{\_,Q'',\gamma',\_,\_,\alpha'}$.
  We set $\mu = \mu_0$. 
  By the premise $\Gamma \models Q \sqsupseteq Q_0$ and $\gamma \models \Gamma$, we conclude that $\phi_{Q}(\gamma) \sqsupseteq \phi_{Q_0}(\gamma)$.
  \end{itemize}
\end{proof}

Therefore, we can use the annotated kernel $\kappa$ above to re-construct the trace-based moment semantics in \cref{Se:TraceBasedCostSemantics}.
Then we can define the potential function on annotated program configurations as $\phi(\sigma) \defeq \phi_Q(\gamma)$ where $\sigma=\tuple{\_,Q,\gamma,\_,\_,\_}$.

The next step is to apply the extended OST for interval bounds (\cref{The:IntervalOST}).
Recall that the theorem requires that for some $\ell \in \bbN$ and $C \ge 0$, $\norm{Y_n}_\infty \le C \cdot (n+1)^{\ell}$ almost surely for all $n \in \bbZ^+$.
One sufficient condition for the requirement is to assume the \emph{bounded-update} property, i.e., every (deterministic or probabilistic) assignment to a program variable updates the variable with a bounded change.
As observed by~\citet{PLDI:WFG19}, bounded updates are common in practice.
We formulate the idea as follows.

\begin{lemma}\label{Lem:BoundedUpdate}
  If there exists $C_0 \ge 0$ such that for all $n \in \bbZ^+$ and $x \in \mathsf{VID}$, it holds that $\prob[\abs{\gamma_{n+1}(x)- \gamma_n(x) } \le C_0] = 1$ where $\omega$ is an infinite trace, $\omega_n = \tuple{\gamma_n,\_,\_,\_}$, and $\omega_{n+1} = \tuple{\gamma_{n+1},\_,\_,\_}$,
  then there exists $C \ge 0$ such that  for all $n \in \bbZ^+$, $\norm{Y_n}_\infty \le C \cdot (n+1)^{md}$ almost surely.
\end{lemma}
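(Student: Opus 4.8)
The plan is to control separately the three ingredients that make up $Y_n = A_n \otimes \Phi_n$---the accumulated cost $\alpha_n$ (which determines $A_n = \many{\tuple{[\alpha_n^k,\alpha_n^k]}_{0 \le k \le m}}$), the valuation $\gamma_n$, and the annotation $Q$ labeling step $n$ (which together with $\gamma_n$ determines $\Phi_n = \phi_Q(\gamma_n)$)---and then to recombine them through the moment-semiring multiplication. First I would derive a linear-in-$n$ bound on the state. Since the initial valuation is $\lambda\_.0$ and the bounded-update hypothesis gives $\abs{\gamma_{n+1}(x) - \gamma_n(x)} \le C_0$ almost surely for every $x \in \mathsf{VID}$ at every step (a step that does not write $x$ contributes a zero difference, so the bound is uniform over steps), a telescoping sum yields $\abs{\gamma_n(x)} \le C_0 \cdot n$ almost surely. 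The same reasoning applies to the cost accumulator: each evaluation step increments $\alpha$ by at most $\max \abs{c}$ over the finitely many $\itick{c}$ statements in the program, so $\abs{\alpha_n} \le C_1 \cdot n$ almost surely for a suitable $C_1$.

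Next I would bound the potential. By the template construction, component $k$ of $\phi_Q(\gamma_n)$ is the interval $[\gamma_n(L_k),\gamma_n(U_k)]$ with $L_k,U_k \in \bbR_{kd}[\mathsf{VID}]$ of degree at most $kd$. Evaluating a degree-$kd$ polynomial at a point whose coordinates are each at most $C_0 n$ in magnitude gives $\abs{\gamma_n(L_k)},\abs{\gamma_n(U_k)} \le C_3 \cdot (n+1)^{kd}$, \emph{provided} the polynomial coefficients are uniformly bounded by a constant independent of $n$ and of which annotation labels the step. Securing this uniform coefficient bound is the main obstacle: it relies on the observation that, because the program is finite and the derivation uses only finitely many function specifications (moment-polymorphic recursion), only finitely many distinct annotations $Q$---and hence finitely many polynomials $L_k,U_k$---can appear along any trace, so their coefficients range over a finite set.

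Finally I would combine the pieces through the definition of $\otimes$ in \cref{De:MomSemiring}. Under the interval instantiation the $k$-th component of $Y_n$ equals $\sum_{i=0}^k \binom{k}{i} \times_\calI \big([\alpha_n^i,\alpha_n^i] \cdot_\calI [\gamma_n(L_{k-i}),\gamma_n(U_{k-i})]\big)$, and each summand has magnitude at most $\binom{k}{i}\,(C_1 n)^i \cdot C_3 (n+1)^{(k-i)d}$. The decisive arithmetic fact is that $d \ge 1$ forces $i \le id$, so $(C_1 n)^i \le C_1^i (n+1)^{id}$ and the exponents satisfy $id + (k-i)d = kd \le md$; summing over $i$ by the binomial theorem then bounds the $k$-th component by $C_3 (1+C_1)^k (n+1)^{kd} \le C_3 (1+C_1)^m (n+1)^{md}$, using $n+1 \ge 1$ and $k \le m$. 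Taking $C \defeq C_3(1+C_1)^m$ gives $\norm{Y_n}_\infty \le C \cdot (n+1)^{md}$ almost surely, as required; the remaining manipulations with $\cdot_\calI$ and $\times_\calI$ are routine interval bookkeeping.
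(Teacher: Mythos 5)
Your proposal is correct and follows essentially the same route as the paper's proof: linear-in-$n$ bounds on $\abs{\gamma_n(x)}$ (by telescoping from the zero initial valuation) and on $\abs{\alpha_n}$ (via the finitely many $\itick{c}$ constants), a $C_3\,(n+1)^{kd}$ bound on the $k$-th potential component from the degree-$kd$ templates with uniformly bounded coefficients, and the binomial combination through $\otimes$ using $d \ge 1$ to absorb the cost powers into $(n+1)^{kd} \le (n+1)^{md}$. Your explicit justification of the uniform coefficient bound (finitely many annotations along any trace) is a slightly more detailed version of the paper's appeal to the construction in the annotated-kernel lemma, not a different argument.
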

\begin{proof}
  Let $C_1 \ge 0$ be such that for all $\itick{c}$ statements in the program, $|c| \le C_1$.
  Then for all $\omega$, if $\omega_n = \tuple{\_,\_,\_,\alpha_n}$, then $|\alpha_n| \le n \cdot C_1$.
  On the other hand, we know that $\prob[|\gamma_n(x) - \gamma_0(x)| \le C_0 \cdot n] = 1$ for any variable $x$.
  As we assume all the program variables are initialized to zero, we know that $\prob[|\gamma_n(x)| \le C_0 \cdot n] = 1$.
  From the construction in the proof of \cref{Lem:AnnotatedKernel}, we know that all the templates used to define the interval-valued potential function should have almost surely bounded coefficients.
  Let $C_2 \ge 0$ be such a bound.
  Also, the $k$-th component in a template is a polynomial in $\bbR_{kd}[\mathsf{VID}]$.
  Therefore, $\Phi_n(\omega) = \phi(\omega_n) = \phi_{Q_n}(\gamma_n)$, and
  \[
  |\phi_{Q_n}(\gamma_n)_k| \le \sum_{i=0}^{kd} C_2 \cdot |\mathsf{VID}|^{i} \cdot |C_0 \cdot n|^i \le C_3 \cdot (n+1)^{kd}, \textrm{a.s.},
  \]
  for some sufficiently large constant $C_3$.
  Thus
  \begin{align*}
  & |(Y_n)_k| = |(A_n \otimes \Phi_n)_k| \\
  ={} & |\mathop{{\sum}_\calI}_{i=0}^k \binom{k}{i} \times_\calI ((A_n)_i \cdot_\calI (\Phi_n)_{k-i})| \\
  {}\le{} & \sum_{i=0}^k \binom{k}{i} \cdot (n \cdot C_1)^i \cdot (C_3 \cdot (n+1))^{(k-i)d} \\
  {}\le{} & C_4 \cdot (n+1)^{kd}, \textrm{a.s.}, 
  \end{align*}
  for some sufficiently large constant $C_4$.
  Therefore $\norm{Y_n}_\infty \le C_5 \cdot (n+1)^{md}$, a.s., for some sufficiently large constant $C_5$.
\end{proof}

Now we prove the soundness of bound inference.

\begin{theorem*}[\cref{The:Soundness}]
  Suppose $\Delta \vdash \{\Gamma;Q\}~S_{\mathsf{main}}~\{\Gamma';\aone\}$ and $\vdash \Delta$.
  Then $\expe[A_T] \aord \phi_Q(\lambda\_.0)$, i.e., the moments $\expe[A_T]$ of the accumulated cost upon program termination are bounded by intervals in $\phi_Q(\lambda\_.0)$ where $Q$ is the quantitative context and $\lambda\_.0$ is the initial valuation, if \emph{both} of the following properties hold:
  \begin{enumerate}[(i)]
    \item $\expe[T^{md}] < \infty$, and
    
    \item there exists $C_0 \ge 0$ such that for all $n \in \bbZ^+$ and $x \in \mathsf{VID}$, it holds almost surely that $|\gamma_{n+1}(x) - \gamma_n(x)| \le C_0$ where $\tuple{\gamma_n,\_,\_,\_} = \omega_n$ and $\tuple{\gamma_{n+1},\_,\_,\_} = \omega_{n+1}$ of an infinite trace $\omega$.
  \end{enumerate}
\end{theorem*}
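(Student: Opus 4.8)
The plan is to reduce the statement to the extended Optional Stopping Theorem for interval-valued processes (\cref{The:IntervalOST}), which is exactly what the two side conditions \ref{Item:SoundnessTermination} and \ref{Item:SoundnessBounded} are engineered to unlock. First I would invoke \cref{Lem:AnnotatedKernel}: from $\vdash \Delta$ and $\Delta \vdash \{\Gamma;Q\}~S_{\mathsf{main}}~\{\Gamma';\aone\}$ it yields a probability kernel $\kappa$ over \emph{annotated} configurations $\tuple{\Gamma,Q,\gamma,S,K,\alpha}$ that (i) projects down to the ordinary evaluation relation $\mapsto$ and (ii) carries the one-step potential inequality $\phi_Q(\gamma) \sqsupseteq \expe_{\sigma'\sim\kappa(\sigma)}[\many{\tuple{[(\alpha'-\alpha)^k,(\alpha'-\alpha)^k]}_{0\le k\le m}} \otimes \phi_{Q'}(\gamma')]$. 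Using $\kappa$ I rebuild the Markov-chain semantics of \cref{Se:TraceBasedCostSemantics} over annotated traces and set $\phi(\sigma) \defeq \phi_Q(\gamma)$, so that $\phi$ is a genuine expected-potential function in the sense of \cref{De:IntervalRankingFunction} (clause (ii) of that definition is precisely property (ii) above). The construction of $\kappa$ is the labor-intensive step, proceeding by induction over the derivation with an appeal to the relaxation result \cref{Lem:TypingRelax} at the function-call rule, but I would treat it as already established.

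Next I would form the moment invariant $Y_n \defeq A_n \otimes \Phi_n$, with $A_n = \many{\tuple{[\alpha_n^k,\alpha_n^k]}_{0\le k\le m}}$ and $\Phi_n = \phi(\omega_n)$, on these annotated traces. By \cref{Lem:MomentInvariant} it satisfies $\expe[Y_{n+1}\mid\calF_n] \aord Y_n$ almost surely, i.e.\ each upper end is a supermartingale and each lower end a submartingale. I then discharge the two hypotheses of \cref{The:IntervalOST} with the choice $\ell \defeq md$: the stopping-time bound $\expe[T^{md}] < \infty$ is supplied directly by condition \ref{Item:SoundnessTermination}, while the polynomial growth bound $\norm{Y_n}_\infty \le C\cdot(n+1)^{md}$ almost surely follows from the bounded-update hypothesis \ref{Item:SoundnessBounded} via \cref{Lem:BoundedUpdate}. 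The remaining integrability requirement $\expe[\norm{Y_n}_\infty]<\infty$ for each fixed $n$ is then immediate, since that same growth bound dominates $\norm{Y_n}_\infty$ by the constant $C\cdot(n+1)^{md}$.

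Applying \cref{The:IntervalOST} gives $\expe[Y_T] \aord \expe[Y_0]$, and it only remains to evaluate the two boundary terms. At $n=0$ the initial configuration has $\alpha_0=0$, so $A_0 = \tuple{[1,1],[0,0],\dots,[0,0]} = \aone$ and $Y_0 = \aone\otimes\Phi_0 = \phi_Q(\lambda\_.0)$, a constant, whence $\expe[Y_0] = \phi_Q(\lambda\_.0)$. At the stopping time the termination configuration $\tuple{\_,\iskip,\kstop,\_}$ has $\Phi_T = \aone$ by clause (i) of \cref{De:IntervalRankingFunction}, so $Y_T = A_T \otimes \aone = \many{\tuple{[\alpha_T^k,\alpha_T^k]}_{0\le k\le m}}$ and $\expe[Y_T] = \many{\tuple{[\expe[\alpha_T^k],\expe[\alpha_T^k]]}_{0\le k\le m}}$. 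Chaining these yields $\expe[A_T] \aord \phi_Q(\lambda\_.0)$, which is the claim.

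The genuinely hard content sits in two places I would treat as black boxes. The first is the extended OST (\cref{The:IntervalOST}) itself, whose new sufficient condition—polynomial growth of $\norm{Y_n}_\infty$ against a \emph{matching} polynomial moment $\expe[T^{\ell}]$ of the stopping time—is what makes higher moments tractable where the classical OST fails; its proof is the dominated-convergence argument whose justification the companion paper~\cite{Companion} develops. The second, and the subtlety I would flag most carefully, is the exponent bookkeeping: because the $k$-th component of a template is a degree-$kd$ polynomial and bounded updates let each variable grow like $O(n)$, the $k$-th moment component of $Y_n$ grows like $O(n^{kd})$, so $\norm{Y_n}_\infty = O(n^{md})$ and the \emph{correct} matching stopping-time moment is $\expe[T^{md}]$, not $\expe[T^{m}]$. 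Getting this degree match right—linking condition \ref{Item:SoundnessBounded}, \cref{Lem:BoundedUpdate}, and the choice $\ell = md$—is the crux that ties the whole reduction together.
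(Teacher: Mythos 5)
Your proposal is correct and follows essentially the same route as the paper: the paper's own proof likewise constructs the annotated kernel via \cref{Lem:AnnotatedKernel} (with \cref{Lem:TypingRelax} at call sites), defines $\phi(\sigma) \defeq \phi_Q(\gamma)$ on annotated configurations, invokes \cref{Lem:BoundedUpdate} to obtain $\norm{\mathbf{Y}_n}_\infty \le C\cdot(n+1)^{md}$ from the bounded-update hypothesis, and then applies the extended interval-valued OST (\cref{The:IntervalOST}) with $\ell = md$ to conclude $\expe[\mathbf{Y}_T] \aord \expe[\mathbf{Y}_0] = \phi_Q(\lambda\_.0)$. Your identification of the $md$ degree-matching as the crux, and your evaluation of the boundary terms $\mathbf{Y}_0$ and $\mathbf{Y}_T$, match the paper's argument exactly.
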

\begin{proof}
  By \cref{Lem:BoundedUpdate}, there exists $C \ge 0$ such that $\norm{Y_n}_\infty \le C \cdot (n+1)^{md}$ almost surely for all $n \in \bbZ^+$.
  By the assumption, we also know that $\expe[T^{md}] < \infty$.
  Thus by \cref{The:IntervalOST}, we conclude that $\expe[Y_T] \aord \expe[Y_0]$, i.e., $\expe[A_T] \aord \expe[\Phi_0] = \phi_Q(\lambda\_.0)$.
\end{proof}



\section{Termination Analysis}
\label{Se:TerminationAnalysis}

In this section, we develop a technique to reason about \emph{upper} bounds on higher moments $\expe[T^m]$ of the stopping time $T$.
We adapt the idea of expected-potential functions, but rely on a simpler convergence proof.
In this section, we assume $\calR = ([0,\infty], {\le},{+},{\cdot},0,1)$ to be a partially ordered semiring on extended nonnegative real numbers.

\begin{definition}
  A map $\psi : \Sigma \to \calM_\calR^{(m)}$ is said to be an \emph{expected-potential function} for upper bounds on stopping time if
  \begin{enumerate}[(i)]
    \item $\psi(\sigma)_0 = 1$ for all $\sigma \in \Sigma$,
    \item $\psi(\sigma) = \aone$ if $\sigma = \tuple{\_,\iskip,\kstop,\_}$, and
    \item $\psi(\sigma) \sqsupseteq \expe_{\sigma' \sim {\mapsto}(\sigma)}[\tuple{1,1,\cdots,1} \otimes \psi(\sigma')]$ for all non-terminating configuration $\sigma \in \Sigma$.
  \end{enumerate}
\end{definition}

Intuitively, $\psi(\sigma)$ is an upper bound on the moments of the evaluation steps upon termination for the computation that \emph{continues from} the configuration $\sigma$.
We define $A_n$ and $\Psi_n$ where $n \in \bbZ^+$ to be random variables on the probability space $(\Omega,\calF,\prob)$ of the trace semantics as
$A_n(\omega) \defeq \many{\tuple{n^k}_{0 \le k \le m}}$ and $\Psi_n(\omega) \defeq \psi(\omega_n)$.
Then we define $A_T(\omega) \defeq A_{T(\omega)}(\omega)$.
Note that $A_T = \many{\tuple{T^k}_{0 \le k \le m}}$.

We now show that a valid potential function for stopping time \emph{always} gives a sound upper bound.

\begin{theorem}
  $\expe[A_T] \aord \expe[\Psi_0]$.
\end{theorem}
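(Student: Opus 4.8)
The plan is to mirror the stochastic-process argument behind \cref{The:Soundness}, but to replace the appeal to the (extended) Optional Stopping Theorem---which would require $\expe[T^{md}]<\infty$ and would therefore be circular when the whole point is to bound moments of $T$---with an \emph{unconditional} monotone-convergence argument that exploits the nonnegativity of $\calR=[0,\infty]$. First I would record the one-step composition identity $A_{n+1}=A_n\otimes\tuple{1,1,\dots,1}$: since $A_n=\many{\tuple{n^k}_{0\le k\le m}}$ and $\tuple{1,\dots,1}=\many{\tuple{1^k}_{0\le k\le m}}$, \cref{Lem:EisnerBasicProperty} with $u=n,v=1$ yields $\many{\tuple{(n+1)^k}_{0\le k\le m}}=A_n\otimes\tuple{1,\dots,1}$. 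I would also note the two boundary facts $A_0=\aone$ (because $0^0=1$), so $Y_0\defeq A_0\otimes\Psi_0=\Psi_0$, and $\Psi_T=\aone$ at the stopping time, whence $Y_T=A_T\otimes\aone=A_T$.

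The key step is to show that the \emph{stopped} process $\{Y_{\min(T,n)}\}_{n\in\bbZ^+}$ is a supermartingale, i.e. $\expe[Y_{\min(T,n+1)}\mid\calF_n]\aord Y_{\min(T,n)}$ almost surely. On the event $\{T\le n\}$ the configuration $\omega_n$ is terminating, the operational semantics self-loops, and both sides equal $Y_T$. On $\{T>n\}$ the configuration $\omega_n$ is non-terminating and $Y_{\min(T,n)}=Y_n$, $Y_{\min(T,n+1)}=Y_{n+1}$, so it suffices to prove the one-step drift $\expe[Y_{n+1}\mid\calF_n]\aord Y_n$ at non-terminating configurations. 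I would derive this by pulling the deterministic factor $A_{n+1}$ out of the conditional expectation (the analogue of \cref{Lem:CondExpeMoveOutside} for the nonnegative semiring, justified by \cref{Prop:ConditionalExpectations}), using the Markov property to rewrite $\expe[\Psi_{n+1}\mid\calF_n]=\expe_{\sigma'\sim{\mapsto}(\omega_n)}[\psi(\sigma')]$, substituting $A_{n+1}=A_n\otimes\tuple{1,\dots,1}$, pushing $\tuple{1,\dots,1}$ back inside the expectation by linearity, and finally invoking the defining inequality (iii), $\expe_{\sigma'\sim{\mapsto}(\omega_n)}[\tuple{1,\dots,1}\otimes\psi(\sigma')]\aord\psi(\omega_n)=\Psi_n$, together with monotonicity of $\otimes$ (\cref{Lem:MomentMonoidMonotone}, whose hypotheses hold since $+$ and $\cdot$ on $[0,\infty]$ are monotone). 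Iterating the drift inequality and taking expectations---always well-defined in $[0,\infty]$, with no integrability side conditions---gives $\expe[Y_{\min(T,n)}]\aord\expe[Y_0]=\expe[\Psi_0]$ for every $n$.

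Finally I would lift this uniform bound to $A_T$. Writing $\hat A_n\defeq A_{\min(T,n)}=\many{\tuple{\min(T,n)^k}_{0\le k\le m}}$, property (i) ($\psi(\sigma)_0=1$) and nonnegativity give componentwise $(\hat A_n\otimes\Psi_{\min(T,n)})_k\ge(\hat A_n)_k\,(\Psi_{\min(T,n)})_0=(\hat A_n)_k$, that is $\hat A_n\aord Y_{\min(T,n)}$, so $\expe[\hat A_n]\aord\expe[\Psi_0]$ for all $n$. Since $\min(T,n)$ is non-decreasing in $n$, the sequence $\{\hat A_n\}$ is pointwise non-decreasing and converges to $A_T$ (reading $\infty^k=\infty$ on $\{T=\infty\}$), so \cref{Prop:MON} yields $\expe[A_T]=\lim_{n\to\infty}\expe[\hat A_n]\aord\expe[\Psi_0]$. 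The main obstacle to anticipate is that the \emph{unstopped} process $\{Y_n\}$ is not a supermartingale: after termination $\Psi_n$ freezes at $\aone$ while $A_n=\many{\tuple{n^k}_{0\le k\le m}}$ keeps growing, so the drift turns upward. The argument therefore hinges on stopping the process at $T$ and on the fact that, because $\calR=[0,\infty]$ is nonnegative, one may pass to the limit by monotone convergence instead of by OST---which is exactly what makes the bound hold unconditionally and sidesteps the circular assumption that some moment of $T$ is finite.
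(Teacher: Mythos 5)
Your proof is correct and is essentially the paper's own argument in different clothing: the paper's truncated product $\bigotimes_{i=0}^{n-1} C_i$ (with $C_i=\tuple{1,\dots,1}$ for $i<T$ and $\aone$ afterwards) is exactly your stopped process $A_{\min(T,n)}$, its inductive bound $\expe[\bigotimes_{i=0}^n C_i\otimes\Psi_{n+1}]\aord\expe[\Psi_0]$ is your stopped-supermartingale inequality, and both proofs finish by discarding $\Psi$ via $\Psi\sqsupseteq\aone$ (equivalently your $(\Psi)_0=1$ observation) and passing to the limit with monotone convergence. Your explicit remark that the unstopped $\{Y_n\}$ fails to be a supermartingale, and that nonnegativity of $[0,\infty]$ is what lets MON replace the OST, correctly identifies why this theorem needs no side conditions.
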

\begin{proof}
  Let $C_n(\omega) \defeq \tuple{1,1,\cdots,1}$ if $n < T(\omega)$, otherwise $C_n(\omega) \defeq \tuple{1,0,\cdots,0}$.
  Then $A_T = \bigotimes_{i=0}^\infty C_i$.
  By \cref{Prop:MON}, we know that $\expe[A_T] = \lim_{n \to \infty} \expe[\bigotimes_{i=0}^n C_i]$.
  Thus it suffices to show that for all $n \in \bbZ^+$, $\expe[\bigotimes_{i=0}^n C_i] \aord \expe[\Psi_0]$.
  
  Observe that $\{C_n\}_{n \in \bbZ^+}$ is adapted to $\{\calF_n\}_{n \in \bbZ^+}$, because the event $\{ T \le n\}$ is $\calF_n$-measurable.
  Then we have
  \begin{align*}
    & \expe[\bigotimes\nolimits_{i=0}^n C_i \otimes \Psi_{n+1} \mid \calF_n] \\
    ={} & \bigotimes\nolimits_{i=0}^{n-1} C_i \otimes \expe[C_n \otimes \Psi_{n+1} \mid \calF_n], a.s., \\
    {}\aord{} & \bigotimes\nolimits_{i=0}^{n-1} C_i \otimes \Psi_n. 
  \end{align*}
  Therefore, $\expe[\bigotimes_{i=0}^n C_i \otimes \Psi_{n+1}] \aord \expe[\Psi_0]$ for all $n \in \bbZ^+$ by a simple induction.
  Because $\Psi_{n+1} \sqsupseteq \aone$, we conclude that
  \[
  \expe[\bigotimes\nolimits_{i=0}^n C_i] \aord \expe[\bigotimes\nolimits_{i=0}^n C_i \otimes \Psi_{n+1}] \aord \expe[\Psi_0].
  \]
\end{proof}



\section{Experimental Evaluation}
\label{Se:ExperimentDetails}

\cref{Ta:FullComparisonWithTACAS} and \cref{Fi:CompleteComparisonWithTACAS} are the complete versions of the statistics in \cref{Ta:ComparisonWithTACAS} and the plots in \cref{Fi:ComparisonWithTACAS}, respectively.

\tikzexternalenable
\begin{figure*}
  \centering
  \begin{tabular}{c@{\hspace{1ex}}c@{\hspace{1ex}}c@{\hspace{1ex}}c}
    \includegraphics{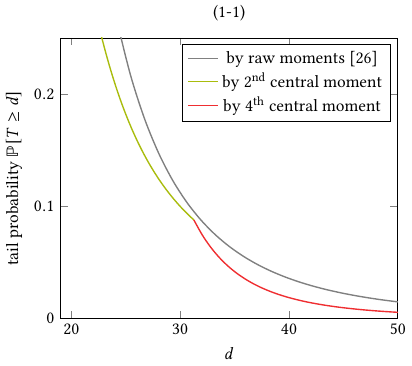}
    &
    \includegraphics{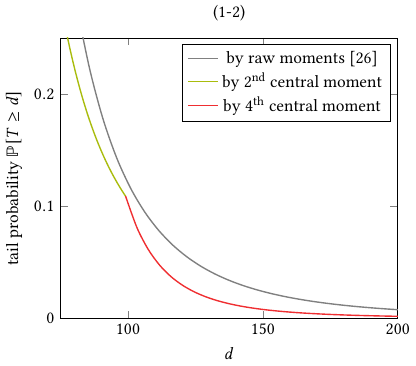}
    &
    \includegraphics{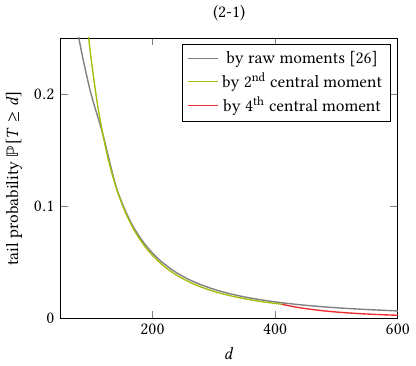}
    &
    \includegraphics{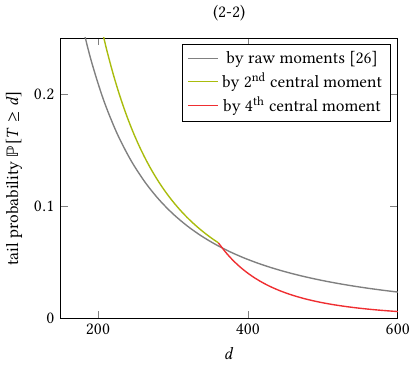}
    \\
    \includegraphics{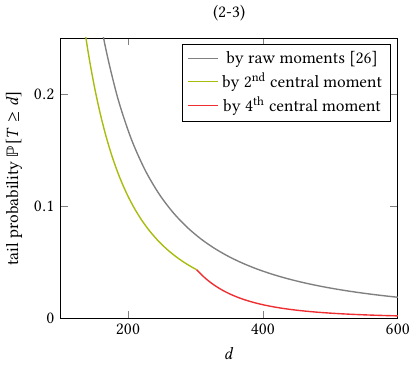}
    &
    \includegraphics{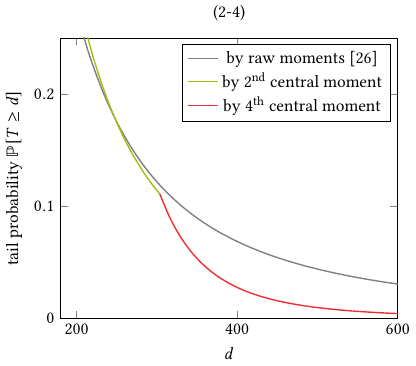}
    &
    \includegraphics{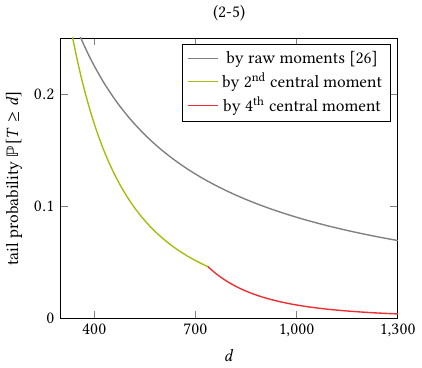}
  \end{tabular}
  \caption{Upper bound of the tail probability $\prob[T \ge d]$ as a function of $d$, with comparison to~\citet{TACAS:KUH19}. Each gray line is the minimum of tail bounds obtained from the raw moments of degree up to four inferred by~\citet{TACAS:KUH19}. Green lines and red lines are the tail bounds given by 2\textsuperscript{nd} and 4\textsuperscript{th} central moments inferred by our tool, respectively.}
  \label{Fi:CompleteComparisonWithTACAS}
\end{figure*}
\tikzexternaldisable

\cref{Tab:CompareToTACAS} compares our tool with~\citet{TACAS:KUH19} on their benchmarks for upper bounds on the first moments of runtimes.
The programs (1-1) and (1-2) are coupon-collector problems with a total of two and four coupons, respectively.
The other five are variants of random walks.
The first three are one-dimensional random walks: (2-1) is integer-valued, (2-2) is real-valued with continuous sampling, and (2-3) exhibits adversarial nondeterminism.
The programs (2-4) and (2-5) are two-dimensional random walks.
In the parentheses after the bounds, we record the degree of polynomials for the templates and the running time of the analysis.
%
All analyses completed is less than half a second, and our tool can derive better bounds than the compared tool~\cite{TACAS:KUH19}.

\begin{table}
  \centering
  \caption{Upper bounds of the expectations of runtimes, with comparison to~\citet{TACAS:KUH19}.}
  \label{Tab:CompareToTACAS}
  \resizebox{\columnwidth}{!}{%
  \begin{tabular}{cH||c||c}
    \hline
    program & pre-condition & upper bound by our tool & upper bound by~\citet{TACAS:KUH19} \\ \hline
    (1-1) & $\top$ & $13$ (deg 2, 0.106s) & $13$ (deg 1, 0.011s)  \\
    (1-2) & $\top$ & $44.6667$ (deg 4, 0.417s) & $68$ (deg 1, 0.020s) \\
    (2-1) & $x \ge 1$ & $20$ (deg 1, 0.096s) & $20$ (deg 1, 0.011s)  \\
    (2-2) & $x \ge 0$ & $75$ (deg 1, 0.116s) & $75$ (deg 1, 0.015s) \\
    (2-3) & $x > 0$ & $42$ (deg 1, 0.260s) & $62$ (deg 1, 0.017s) \\
    (2-4) & $x > 0 \wedge y > 0$ & $73$ (deg 1, 0.143s) & $96$ (deg 1, 0.013s) \\
    (2-5) & $x > y$ & $90$ (deg 1, 0.244s) & $90$ (deg 1, 0.017s) \\ \hline
  \end{tabular}%
  }
\end{table}

\begin{table}
  \centering
  \caption{Upper bounds on the raw/central moments of runtimes, with comparison to~\citet{TACAS:KUH19}. ``T/O'' stands for timeout after 30 minutes. ``N/A'' means that the tool is not applicable. ``-'' indicates that the tool fails to infer a bound. Entries with more precise results or less analysis time are marked in bold.}
  \label{Ta:FullComparisonWithTACAS}
  \resizebox{\columnwidth}{!}{%
  \begin{tabular}{@{\hspace{1pt}}c@{\hspace{1pt}}|@{\hspace{1pt}}c@{\hspace{1pt}}||@{\hspace{1pt}}c@{\hspace{1pt}}|@{\hspace{1pt}}c@{\hspace{1pt}}||@{\hspace{1pt}}c@{\hspace{1pt}}|@{\hspace{1pt}}c@{\hspace{1pt}}}
    \hline
    \multirow{2}{*}{program} & \multirow{2}{*}{moment} & \multicolumn{2}{c@{\hspace{1pt}}||@{\hspace{1pt}}}{this work} & \multicolumn{2}{c}{\citet{TACAS:KUH19}} \\ \hhline{~~----}
    & & upper bnd. & time (sec) & upper bnd. & time (sec) \\ \hline
    \multirow{5}{*}{(1-1)} & 2\textsuperscript{nd} raw & \bd{201} & 0.019 & \bd{201
    } & \bd{0.015}  \\ \hhline{~*{5}{-}}
    & 3\textsuperscript{rd} raw & \bd{3829} & \bd{0.019} & \bd{3829} & 0.020 \\ \hhline{~*{5}{-}}
    & 4\textsuperscript{th} raw & \bd{90705} & \bd{0.023} & \bd{90705} & 0.027 \\ \hhline{~*{5}{-}}
    & 2\textsuperscript{nd} central & \bd{32} & \bd{0.029} & N/A & N/A \\ \hhline{~*{5}{-}}
    & 4\textsuperscript{th} central & \bd{9728} & \bd{0.058} & N/A & N/A \\ \hline
    \multirow{5}{*}{(1-2)} & 2\textsuperscript{nd} raw & \bd{2357} & 1.068 & 3124 & \bd{0.037} \\ \hhline{~*{5}{-}}
    & 3\textsuperscript{rd} raw & \bd{148847} & 1.512 & 171932 & \bd{0.062} \\ \hhline{~*{5}{-}}
    & 4\textsuperscript{th} raw  & \bd{11285725} & 1.914 & 12049876 & \bd{0.096} \\ \hhline{~*{5}{-}}
    & 2\textsuperscript{nd} central & \bd{362} & \bd{3.346} & N/A & N/A \\ \hhline{~*{5}{-}}
    & 4\textsuperscript{th} central & \bd{955973} & \bd{9.801} & N/A & N/A \\ \hline
    \multirow{5}{*}{(2-1)} & 2\textsuperscript{nd} raw & \bd{2320} & \bd{0.016} & \bd{2320} & 11.380 \\ \hhline{~*{5}{-}}
    & 3\textsuperscript{rd} raw & \bd{691520} & \bd{0.018} & - & 16.056 \\ \hhline{~*{5}{-}}
    & 4\textsuperscript{th} raw & \bd{340107520} & \bd{0.021} & - & 23.414 \\ \hhline{~*{5}{-}}
    & 2\textsuperscript{nd} central & \bd{1920} & \bd{0.026} & N/A & N/A \\ \hhline{~*{5}{-}}
    & 4\textsuperscript{th} central & \bd{289873920} & \bd{0.049} & N/A & N/A \\ \hline
    \multirow{5}{*}{(2-2)} & 2\textsuperscript{nd} raw & \bd{8375} & \bd{0.022} & \bd{8375} & 38.463 \\ \hhline{~*{5}{-}}
    & 3\textsuperscript{rd} raw & \bd{1362813} & \bd{0.028} & - & 73.408  \\ \hhline{~*{5}{-}}
    & 4\textsuperscript{th} raw & \bd{306105209} & \bd{0.035} & - & 141.072 \\ \hhline{~*{5}{-}}
    & 2\textsuperscript{nd} central &  \bd{5875} & \bd{0.029} & N/A & N/A \\ \hhline{~*{5}{-}}
    & 4\textsuperscript{th} central & \bd{447053126} & \bd{0.086} & N/A & N/A \\ \hline
    \multirow{5}{*}{(2-3)} & 2\textsuperscript{nd} raw & \bd{3675} & \bd{0.039} & 6710 & 48.662 \\ \hhline{~*{5}{-}}
    & 3\textsuperscript{rd} raw & \bd{618584} & 0.049 & 19567045 & \bd{0.039} \\ \hhline{~*{5}{-}}
    & 4\textsuperscript{th} raw & \bd{164423336} & \bd{0.055}  & - & T/O \\ \hhline{~*{5}{-}}
    & 2\textsuperscript{nd} central & \bd{3048} & \bd{0.053} & N/A & N/A \\ \hhline{~*{5}{-}}
    & 4\textsuperscript{th} central & \bd{196748763} & \bd{0.123} & N/A & N/A \\ \hline
    \multirow{5}{*}{(2-4)} & 2\textsuperscript{nd} raw & \bd{6625} & \bd{0.035} & 10944 & 216.352 \\ \hhline{~*{5}{-}}
    & 3\textsuperscript{rd} raw & \bd{742825} & \bd{0.048} & - & 453.435 \\ \hhline{~*{5}{-}}
    & 4\textsuperscript{th} raw & \bd{101441320} & \bd{0.072} & - & 964.579 \\ \hhline{~*{5}{-}}
    & 2\textsuperscript{nd} central & \bd{6624} & \bd{0.051} & N/A & N/A \\ \hhline{~*{5}{-}}
    & 4\textsuperscript{th} central & \bd{313269063} &  \bd{0.215} & N/A & N/A \\ \hline
    \multirow{5}{*}{(2-5)} & 2\textsuperscript{nd} raw & \bd{21060} & \bd{0.045} & - & 216.605 \\ \hhline{~*{5}{-}}
    & 3\textsuperscript{rd} raw & \bd{9860940} & \bd{0.063} & - & 467.577 \\ \hhline{~*{5}{-}}
    & 4\textsuperscript{th} raw & \bd{7298339760} & \bd{0.101} & - & 1133.947 \\ \hhline{~*{5}{-}}
    & 2\textsuperscript{nd} central & \bd{20160} & \bd{0.068} & N/A & N/A \\ \hhline{~*{5}{-}}
    & 4\textsuperscript{th} central & \bd{8044220161} & \bd{0.271} & N/A & N/A \\ \hline
  \end{tabular}%
  }
\end{table}

\cref{Tab:CompareToAbsynth} compares our tool with \textsc{Absynth} by~\citet{PLDI:NCH18} on their benchmarks for upper bounds on the first moments of monotone cost accumulators.
Both tools are able to infer symbolic polynomial bounds.
\textsc{Absynth} uses a finer-grained set of base functions, and it supports bounds of the form $\maxp{x,y}$, which is defined as $\max(0,y-x)$.
Our tool is as precise as, and sometimes more precise than \textsc{Absynth},\footnote{We compared the precision by first simplifying the bounds derived by \textsc{Absynth} using the pre-conditions shown in \cref{Tab:CompareToAbsynth}, and then comparing the polynomials.} but it is less efficient than \textsc{Absynth}.
%
One reason for this could be that we use a full-fledged numerical abstract domain to derive the logical contexts, while \textsc{Absynth} employs less powerful but more efficient heuristics to do that step.
Nevertheless, all the analyses completed in less than one second.

\begin{table*}
  \centering
  \caption{Upper bounds of the expectations of monotone costs, with comparison to~\citet{PLDI:NCH18}.}
  \label{Tab:CompareToAbsynth}
  \resizebox{\textwidth}{!}{%
  \begin{tabular}{@{\hspace{1pt}}c@{\hspace{1pt}}|@{\hspace{1pt}}c@{\hspace{1pt}}||H@{\hspace{1pt}}c@{\hspace{1pt}}H||@{\hspace{1pt}}c}
    \hline
    program & pre-condition  & termination check & upper bound by our tool & lower bound & upper bound by \textsc{Absynth}~\cite{PLDI:NCH18} \\ \hline
    \textsf{2drdwalk} & $d<n$ & $\bbE[T] < \infty$ (deg 1, 0.267s)  & $2(n-d+1)$ (deg 1, 0.269s) & $n-d$ (deg 1, 0.256s) & $2\maxp{d,n+1}$ (deg 1, 0.170s) \\
    \textsf{C4B\_t09} & $x>0$ & $\bbE[T]<\infty$ (deg 1, 0.060s) & $17x$ (deg 1, 0.061s) & $1.6667 (x-1)$ (deg 1, 0.060s) & $17\maxp{0,x}$ (deg 1, 0.014s)  \\
    \textsf{C4B\_t13} & $x>0 \wedge y >0$ & $\bbE[T]<\infty$ (deg 1, 0.063s) & $1.25x+y$ (deg 1, 0.060s) & $x$ (deg 1, 0.061s) & $1.25 \maxp{0,x} + \maxp{0,y}$ (deg 1, 0.008s) \\
    \textsf{C4B\_t15} & $x > y \wedge y > 0$ & $\bbE[T]<\infty$ (deg 1, 0.144s) & $1.1667x$ (deg 1, 0.072s) &  $0.6667 (x -  y)$ (deg 1, 0.086s) & $1.3333\maxp{0,x}$ (deg 1, 0.014s)  \\
    \textsf{C4B\_t19} & $i>100 \wedge k >0$  & $\bbE[T] < \infty$ (deg 1, 0.059s) & $k+2i-49$ (deg 1, 0.059s) &  $k+2i-49$ (deg 1, 0.058s) & $\maxp{0,51+i+k} + 2 \maxp{0,i}$ (deg 1, 0.010s) \\
    \textsf{C4B\_t30} & $x>0 \wedge y > 0$  & $\bbE[T]<\infty$ (deg 1, 0.063s) & $0.5x+0.5y+2$ (deg 1, 0.044s) & $0$ (deg 1, 0.051s) & $0.5\maxp{0,x+2} + 0.5\maxp{0,y+2}$ (deg 1, 0.007s) \\
    \textsf{C4B\_t61} & $l \ge 8$ & $ \bbE[T] < \infty$ (deg 1, 0.058s) & $1.4286l$ (deg 1, 0.046s) & $ 1.4286 l -3$ (deg 1, 0.054s) & $\maxp{0,l} + 0.5 \maxp{1,l}$ (deg 1, 0.007s) \\
    \textsf{bayesian\_network} & $n>0$ &   $\bbE[T]<\infty$ (deg 1, 0.309s)  & $4n$ (deg 1, 0.264s) & $4n$ (deg 1, 0.292s) & $4\maxp{0,n}$ (deg 1, 0.057s) \\
    \textsf{ber} & $x<n$  & $\bbE[T]<\infty$ (deg 1, 0.041s) & $2(n-x)$ (deg 1, 0.035s) & $2(n-x)$ (deg 1, 0.041s) & $2\maxp{x,n}$ (deg 1, 0.004s) \\
    \textsf{bin} & $n>0$ & $\bbE[T]<\infty$ (deg 1, 0.042s) & $0.2(n+9)$ (deg 1, 0.036s) & $0.2n$ (deg 1, 0.039s) & $0.2 \maxp{0,n+9}$ (deg 1, 0.030s) \\
    \textsf{condand} & $n>0\wedge m>0$ &  $\bbE[T]<\infty$ (deg 1, 0.044s) & $2m$ (deg 1, 0.042s) & $0$ (deg 1, 0.043s) & $2\maxp{0,m}$ (deg 1, 0.004s) \\
    \textsf{cooling} & $mt > st \wedge t >0$ &  $\bbE[T]<\infty$ (deg 1, 0.080s) & $mt-st+0.42t+2.1$ (deg 1, 0.071s) & $0$ (deg 1, 0.076s) & $0.42\maxp{0,t+5} + \maxp{st,mt}$ (deg 1, 0.017s) \\
    \textsf{coupon} & $\top$ &  $\bbE[T^4]<\infty$ (deg 4, 0.118s) & $11.6667$ (deg 4, 0.066s) & $6.0624$ (deg 4, 0.063s) & $15$ (deg 1, 0.016s) \\
    \textsf{cowboy\_duel} & $\top$ &  $\bbE[T]<\infty$ (deg 1, 0.041s) & $1.2$ (deg 1, 0.030s) & $1.2$ (deg 1, 0.036s) & $1.2$ (deg 1, 0.004s) \\
    \textsf{cowboy\_duel\_3way} & $\top$ &  $\bbE[T]<\infty$ (deg 1, 0.127s) & $2.0833$ (deg 1, 0.110s) & $1$ (deg 1, 0.130s) & $2.0833$ (deg 1, 0.142s) \\
    \textsf{fcall} & $x < n$ &  $\bbE[T^2]<\infty$ (deg 2, 0.066s) & $2(n-x)$ (deg 1, 0.053s) & $2(n-x)$ (deg 1, 0.059s) & $2\maxp{x,n}$ (deg 1, 0.004s) \\
    \textsf{filling\_vol} & $volTF > 0$ &  $ \bbE[T]<\infty$ (deg 1, 0.096s) & $0.6667 volTF + 7$ (deg 1, 0.082s) & $ 0.1 volTF+0.05$ (deg 1, 0.092s) & $0.3333 \maxp{0,volTF+10} + 0.3333 \maxp{0, volTF + 11}$ (deg 1, 0.079s) \\
    \textsf{geo} & $\top$ &  $\bbE[T]<\infty$ (deg 1, 0.072s) & $5$ (deg 1, 0.061s) & $5$ (deg 1, 0.068s) & $5$ (deg 1, 0.003s) \\
    \textsf{hyper} & $x<n$ &  $\bbE[T]<\infty$ (deg 1, 0.041s) & $5(n-x)$ (deg 1, 0.035s) & $5 (n-x-1)$ (deg 1, 0.042s) & $5 \maxp{x,n}$ (deg 1, 0.005s) \\
    \textsf{linear01} & $x>2$ &  $\bbE[T]<\infty$ (deg 1, 0.040s) & $0.6x$ (deg 1, 0.034s) & $0.6(x-1)$ (deg 1, 0.038s) & $0.6\maxp{0,x}$ (deg 1, 0.008s) \\
    \textsf{prdwalk} & $x<n$ &  $\bbE[T]<\infty$ (deg 1, 0.047s) & $1.1429(n-x+4)$ (deg 1, 0.037s) & $1.1429 (n -x)$ (deg 1, 0.051s) & $1.1429 \maxp{x,n+4}$ (deg 1, 0.011s) \\
    \textsf{prnes} & $y>0 \wedge n <0$ &  $\bbE[T]<\infty$ (deg 1, 0.082s) & $- 68.4795n+0.0526(y-1)$ (deg 1, 0.071s)  & $-10 n$ (deg 1, 0.084s) & $68.4795 \maxp{n,0} + 0.0526\maxp{0,y}$ (deg 1, 0.016s) \\
    \textsf{prseq} & $y > 9 \wedge x - y > 2$ &  $\bbE[T]<\infty$ (deg 1, 0.073s) & $1.65x-1.5y$ (deg 1, 0.061s) & $1.65 x - 1.5 y -4.65$ (deg 1, 0.072s) & $1.65 \maxp{y,x} + 0.15 \maxp{0,y}$ (deg 1, 0.011s) \\
    \textsf{prspeed} & $x+1<n \wedge y<m$ &  $\bbE[T]<\infty $ (deg 1, 0.076s) & $2(m-y)+0.6667(n-x)$ (deg 1, 0.065s)  & $0.6667(n-x)-1.3333$ (deg 1, 0.074s) & $2\maxp{y,m} + 0.6667 \maxp{x,n}$ (deg 1, 0.010s) \\
    \textsf{race} & $h<t$ &  $\bbE[T]<\infty $ (deg 1, 0.052s) & $0.6667(t-h+9)$ (deg 1, 0.039s) & $0.6667 (t-h+1) $ (deg 1, 0.048s) & $0.6667 \maxp{h, t+9}$ (deg 1, 0.027s) \\
    \textsf{rdseql} & $x>0 \wedge y>0$ &  $\bbE[T]<\infty$ (deg 1, 0.069s) & $2.25x+y$ (deg 1, 0.059s)&  $2.25x$ (deg 1, 0.069s) & $2.25\maxp{0,x}+\maxp{0,y}$ (deg 1, 0.008s) \\
    \textsf{rdspeed} & $x+1<n \wedge y <m$ &  $\bbE[T]<\infty$ (deg 1, 0.065s) & $2(m-y)+0.6667(n-x)$ (deg 1, 0.062s) &  $0.6667(n-x)-1.3333$ (deg 1, 0.077s) & $2\maxp{y,m} + 0.6667\maxp{x,n}$ (deg 1, 0.010s) \\
    \textsf{rdwalk} & $x<n$ &  $\bbE[T]<\infty$ (deg 1, 0.039s) & $2(n-x+1)$ (deg 1, 0.035s) & $2(n-x)$ (deg 1, 0.041s) & $2 \maxp{x,n+1}$ (deg 1, 0.004s) \\
    \textsf{rejection\_sampling} & $n>0$ &  $\bbE[T^2]<\infty$ (deg 3, 0.171s) & $2n$ (deg 2, 0.071s) & $n$ (deg 2, 0.080s) & $2\maxp{0,n}$ (deg 1, 0.350s) \\
    \textsf{rfind\_lv} & $\top$ &  $\bbE[T]<\infty$ (deg 1, 0.033s) & $2$ (deg 1, 0.033s) & $2$ (deg 1, 0.035s) & $2$ (deg 1, 0.004s) \\
    \textsf{rfind\_mc} & $k>0$ &  $\bbE[T]<\infty$ (deg 1, 0.048s) & $2$ (deg 1, 0.047s) & $0$ (deg 1, 0.052s) & $\maxp{0,k}$ (deg 1, 0.007s) \\
    \textsf{robot} & $n>0$ &  $\bbE[T] < \infty$ (deg 1, 0.054s) & $0.2778(n+7)$ (deg 1, 0.047s) & $0.2778 n$ (deg 1, 0.051s) & $0.3846 \maxp{0,n+6}$ (deg 1, 0.017s) \\
    \textsf{roulette} & $n<10$ &  $ \bbE[T]<\infty $ (deg 1, 0.067s) & $-4.9333n+98.6667$ (deg 1, 0.057s) & $ - 4.9333 n + 54.2667 $ (deg 1, 0.064s) & $4.9333 \maxp{n,20}$ (deg 1, 0.073s) \\
    \textsf{sprdwalk} & $x<n$ &  $\bbE[T]<\infty$ (deg 1, 0.042s) & $2(n-x)$ (deg 1, 0.036s) & $2(n-x)$ (deg 1, 0.042s) & $2\maxp{x,n}$ (deg 1, 0.004s) \\
    \textsf{trapped\_miner} & $n>0$ &  $\bbE[T]<\infty$ (deg 1, 0.095s) & $7.5n$ (deg 1, 0.081s) & $7.5n$ (deg 1, 0.097s) & $7.5\maxp{0,n}$ (deg 1, 0.015s) \\
    \textsf{complex} & $y,w,n,m>0$ & $\bbE[T]<\infty$ (deg 2, 0.145s) & $4mn+2n+w+0.6667(y+1)$ (deg 2, 0.142s) & $6n+0.6667y$ (deg 1, 0.142s) & $(4\maxp{0,m}+2)\maxp{0,n}+(\maxp{0,w} +0.3333\maxp{0,y})\maxp{0,y+1}+0.6667$ (deg 2, 0.451s) \\
    \textsf{multirace} & $n>0 \wedge m>0$ &  $\bbE[T^2] < \infty$ (deg 4, 0.506s) & $2mn+4n$ (deg 2, 0.080s) &    $2mn+2n$ (deg 2, 0.083s) & $2\maxp{0,m}\maxp{0,n} + 4\maxp{0,n}$ (deg 2, 0.692s) \\
    \textsf{pol04} & $x>0$ &  $ \bbE[T^2] < \infty$ (deg 4, 0.062s)  & $4.5x^2+10.5x$ (deg 2, 0.052s) & $4.5 x^2 + 1.5 x $ (deg 2, 0.052s) & $1.5\maxp{0,x}^2+3\maxp{1,x}\maxp{0,x} + 10.5 \maxp{0,x}$ (deg 2, 0.101s) \\
    \textsf{pol05} & $x>0$ &  $ \bbE[T^2] < \infty$ (deg 4, 0.085s)  & $0.6667(x^2+3x)$ (deg 2, 0.059s) &  $ 0.6667 (x^2- x)$ (deg 2, 0.059s) & $2\maxp{0,x+1}\maxp{0,x}$ (deg 2, 0.133s) \\
    \textsf{pol07} & $n>1$ &  $\bbE[T^2] < \infty$ (deg 4, 0.100s) & $1.5n^2-4.5n+3$ (deg 2, 0.057s) & $1.5n^2-4.5n+3$ (deg 2, 0.058s) & $1.5\maxp{1,n} \maxp{2,n}$ (deg 2, 0.203s) \\
    \textsf{rdbub} & $n>0$ &  $\bbE[T^2] < \infty$ (deg 4, 0.105s) & $3n^2$ (deg 2, 0.055s) & $3 n^2$ (deg 2, 0.059s) & $3\maxp{0,n}^2$ (deg 2, 0.030s) \\
    \textsf{recursive} & $arg_l < arg_h$ & T/O & $0.25 (arg_h -arg_l)^2 + 1.75 (arg_h  - arg_l)$ (deg 2, 0.313s) & $2 (arg_h -  arg_l)$ (deg 1, 0.233s)  & $0.25\maxp{arg_l,arg_h}^2+1.75\maxp{arg_l,arg_h}$ (deg 2, 0.398s) \\
    \textsf{trader} & $mP > 0 \wedge sP > mP$ & $\bbE[T^2]<\infty$ (deg 2, 0.079) & $1.5(sP^2-mP^2+sP - mP)$ (deg 2, 0.073s) & $1.5(sP^2-mP^2+sP-mP)$ (deg 2, 0.074s) & $1.5\maxp{mP,sP}^2+3 \maxp{mP,sP} \maxp{0,mP} +1.5\maxp{mP,sP} $ (deg 2, 0.192s) \\ \hline
  \end{tabular}%
  }
\end{table*}

\cref{Tab:CompareToPLDI} compares our tool with~\citet{PLDI:WFG19} on their benchmarks for lower and upper bounds on the first moments of accumulated costs.
All the benchmark programs satisfy the bounded-update property.
To ensure soundness, our tool has to perform an extra termination check required by \cref{The:Soundness}.
Our tool derives similar symbolic lower and upper bounds, compared to~the results of \citet{PLDI:WFG19}.
Meanwhile, our tool is more efficient than the compared tool.
One source of slowdown for~\citet{PLDI:WFG19} could be that they use Matlab as the LP backend, and the initialization of Matlab has significant overhead.

\begin{table*}
  \centering
  \caption{Upper and lower bounds of the expectation of (possibly) non-monotone costs, with comparison to~\citet{PLDI:WFG19}.}
  \label{Tab:CompareToPLDI}
  \resizebox{\textwidth}{!}{%
  \begin{tabular}{@{\hspace{1pt}}c@{\hspace{1pt}}|@{\hspace{1pt}}c@{\hspace{1pt}}||@{\hspace{1pt}}c@{\hspace{1pt}}|@{\hspace{1pt}}c@{\hspace{1pt}}|@{\hspace{1pt}}c@{\hspace{1pt}}||@{\hspace{1pt}}c@{\hspace{1pt}}}
    \hline
    program & pre-cond.  & termination &  & bounds by our tool & bounds by~\citet{PLDI:WFG19} \\ \hline
    Bitcoin & \multirow{2}{*}{$x \ge 1$} &  $\bbE[T]<\infty$  & ub. & $-1.475x$ (deg 1, 0.078s)   & $-1.475 x + 1.475$ (deg 2, 3.705s)  \\ \hhline{~|~||~|-|-||-}
    Mining & & (deg 1, 0.080s) & lb. & $-1.5x$ (deg 1, 0.088s)  & $-1.5x$ (deg 2, 3.485s) \\ \hline
    Bitcoin & \multirow{2}{*}{$y \ge 0$} &  $\bbE[T^2] < \infty$  & ub. & $-7.375y^2 -66.375y$ (deg 2, 0.127s)  & $-7.375 y^2 - 41.625 y + 49$ (deg 2, 5.936s)  \\ \hhline{~|~||~|-|-||-}
    Mining Pool & & (deg 4, 0.168s) & lb. & $- 7.5y^2-67.5 y$ (deg 2, 0.134s)  & $-7.5y^2 -67.5y$ (deg 2, 6.157s) \\ \hline
    Queueing & \multirow{2}{*}{$ n > 0$}  & $\bbE[T^2] < \infty$  & ub. &  $0.0531 n$ (deg 2, 0.134s)  & $0.0492n$ (deg 3, 69.669s)  \\ \hhline{~|~||~|-|-||-}
    Network & & (deg 2, 0.208s) & lb. & $0.028n$ (deg 2, 0.139s)  & $0.0384n$ (deg 3, 68.849s)\\ \hline
    Running & \multirow{2}{*}{$ x \ge 0$} &  $\bbE[T^2] < \infty$  & ub. & $0.3333 (x^2 + x)$ (deg 2, 0.063s)  & $0.3333 x^2 + 0.3333 x$ (deg 2, 3.766s)  \\ \hhline{~|~||~|-|-||-}
    Example & &  (deg 2, 0.064s)& lb. & $0.3333( x^2 + x)$ (deg 2, 0.059s)  & $0.3333 x^2+0.3333x -0.6667$ (deg 2, 3.555s) \\ \hline
    Nested & \multirow{2}{*}{$i \ge 0$} &  $\bbE[T^2] < \infty$  & ub. & $0.3333i^2+i$ (deg 2, 0.117s)  & $0.3333 i^2 + i$ (deg 2, 28.398s)  \\ \hhline{~|~||~|-|-||-}
    Loop & & (deg 4, 0.127s) & lb. &  $0.3333 i^2 + i$ (deg 2, 0.115s)  & $0.3333 i^2-i$ (deg 2, 7.299s) \\ \hline
    Random & \multirow{2}{*}{$x \le n$} &  $\bbE[T]<\infty$ & ub.  & $2.5x-2.5n-2.5$ (deg 1, 0.064s)   & $2.5x-2.5n$ (deg 2, 4.536s)  \\ \hhline{~|~||~|-|-||-}
    Walk & & (deg 1, 0.063s)& lb. & $2.5 x-2.5n-2.5$ (deg 1, 0.068s)  & $2.5x-2.5n-2.5$ (deg 2, 4.512s) \\ \hline
    \multirow{2}{*}{2D Robot} & \multirow{2}{*}{$ x \ge y$} & $\bbE[T^2] < \infty$  & ub. &  $1.7280(x-y)^2 +31.4539(x-y)+126.5167$ (deg 2, 0.132s)  & $1.7280(x-y)^2 + 31.4539 (x- y)+126.5167$ (deg 2, 7.133s)  \\ \hhline{~|~||~|-|-||-}
    & & (deg 2, 0.145s)& lb. & $1.7280(x-y)^2+31.4539(x -y)+29.7259$ (deg 2, 0.121s)  &  $1.7280(x-y)^2 + 31.4539 (x -y)$ (deg 2, 7.040s) \\ \hline
    Good  &  {$d \le 30 \wedge {}$} &   $\bbE[T^2] < \infty$  &  ub. & $-0.5n - 3.6667d + 117.3333$ (deg 2, 0.093s)   & $0.0067 dn -0.7n-3.8035 d+0.0022d^2 +119.4351$ (deg 2, 5.272s)  \\ \hhline{~|~||~|-|-||-}
    Discount & $n \ge 1$ & (deg 2, 0.093s)& lb. & $ - 0.005 n^2-0.5 n$ (deg 2, 0.092s)  & $0.0067 dn - 0.7133n-3.8123d + 0.0022d^2+112.3704$ (deg 2, 5.323s) \\ \hline
    Pollutant  & \multirow{2}{*}{$n \ge 0$} &    $\bbE[T^2] <\infty$  & ub. & $-0.2n^2 +50.2 n$ (deg 2, 0.092s) & $-0.2 n^2 + 50.2n$ (deg 2, 5.851s)  \\  \hhline{~|~||~|-|-||-}
    Disposal & &(deg 2, 0.091s) & lb. & $ -0.2 n^2+50.2n -435.6$ (deg 2, 0.094s)  & $-0.2n^2 +50.2n-482$ (deg 2, 5.215s) \\ \hline
    Species & {$a \ge 5 \wedge {}$} & $\bbE[T^2] < \infty$  & ub. &  $40ab-180a-180b+810$ (deg 2, 0.045s)  & $40ab-180a-180b+810$ (deg 3, 5.545s) \\ \hhline{~|~||~|-|-||-}
    Fight & $b \ge 5$ & (deg 2, 0.042s) &  lb. & N/A  & N/A \\ \hline
  \end{tabular}%
  }
\end{table*}



\section{Case Study: Timing-Attack Analysis}
\label{Se:TimingAttackAnalysis}

We motivate our work on central-moment analysis using a probabilistic program
with a \emph{timing-leak} vulnerability, and demonstrate how the results from an analysis
can be used to bound the success rate of an attack program that attempts to exploit the vulnerability.
The program is extracted and modified from a web application provided by DARPA
during engagements as part of the STAC program~\cite{misc:STAC}.
In essence, the program models a password checker that compares an input \id{guess} with
an internally stored password \id{secret}, represented as two $N$-bit vectors.
The program in \cref{Fi:PasswordChecker}(a) is the interface of the checker,
and \cref{Fi:PasswordChecker}(b) is the comparison function \textsf{compare},
which carries out most of the computation.
The statements of the form ``$\kw{tick}(\cdot)$'' represent a cost model for the
running time of \textsf{compare}, which is assumed to be observable by the
attacker.
\textsf{compare} iterates over the bits from high-index to low-index,
and the running time expended during the processing of bit $i$ depends on
the current comparison result (stored in $\id{cmp}$), as well as on the
values of the $i$-th bits of $\id{guess}$ and $\id{secret}$.
Because the running time of \textsf{compare} might \emph{leak} information about the
relationship between $\id{guess}$ and $\id{secret}$, \textsf{compare} introduces some \emph{random delays}
to add noise to its running time.
However, we will see shortly that such a countermeasure does \emph{not} protect the
program from a timing attack.

\begin{figure*}
\centering
{\small
\begin{subfigure}{0.54\textwidth}
    \begin{pseudo}
    \kw{func} \textsf{compare}(\id{guess}, \id{secret}) \kw{begin} \\+
      $i \coloneqq N; \id{cmp} \coloneqq 0;$ \\
      \kw{while} $i>0$ \kw{do} \\+
        \kw{tick}($2$); \\ 
        \kw{while} $i>0$ \kw{do} \\+
          \kw{if} \kw{prob}($0.5$) \kw{then} \kw{break} \kw{fi} \\
          \kw{tick}($5$); \\
          \kw{if} $\id{cmp} > 0 \vee (\id{cmp} = 0 \wedge \id{guess}[i] > \id{secret}[i])$ \\
          \kw{then} $\id{cmp} \coloneqq 1$ \\
          \kw{else} \\+
            \kw{tick}($5$); \\
            \kw{if} $\id{cmp} < 0 \vee (\id{cmp} = 0 \wedge \id{guess}[i] < \id{secret}[i])$ \\
            \kw{then} $\id{cmp} \coloneqq {-1}$ \\
            \kw{fi} \\-
          \kw{fi} \\
          \kw{tick}($1$); \\
          $i \coloneqq i - 1$ \\-
        \kw{od} \\-
      \kw{od}; \\
      \kw{return} \id{cmp} \\-
    \kw{end}
  \end{pseudo}
  \caption*{(b)}
\end{subfigure}
\hfill
\begin{subfigure}{0.42\textwidth}
  \centering
  \begin{tabular}{l}
    \begin{pseudo}
      \kw{func} \textsf{check}(\id{guess}) \kw{begin} \\+
        $\id{cmp} \coloneqq \mathsf{compare}(\id{guess}, \id{secret});$ \\
        \kw{if} $\id{cmp} = 0$ \kw{then} \\+
          $\mathsf{login}()$ \\-
        \kw{fi} \\-
      \kw{end}
    \end{pseudo}
  \\
   \multicolumn{1}{c}{(a)} \\
   \\
    \begin{pseudo}
    $\id{guess} \coloneqq \vec{0}$; \\
    $i \coloneqq N$; \\
    \kw{while} $i > 0$ \kw{do} \\+
      $\id{next} \coloneqq \id{guess};$ \\
      $\id{next}[i] \coloneqq 1;$ \\
      $\id{est} \coloneqq \kw{estimateTime}(K,\mathsf{check}(\id{next}));$ \\
      \kw{if} $\id{est} \le 13N - 1.5i$ \kw{then} \\+
        $\id{guess}[i] \coloneqq 0$ \\-
      \kw{else} \\+
        $\id{guess}[i] \coloneqq 1$ \\-
      \kw{fi}; \\
      $i \coloneqq i - 1$ \\-
    \kw{od}
  \end{pseudo}
   \\
  \multicolumn{1}{c}{(c)}
  \end{tabular}
  \caption*{}
\end{subfigure}
}
\caption{\label{Fi:PasswordChecker}
(a) The interface of the password checker.
(b) A function that compares two bit vectors, adding some random noise.
(c) An attack program that attempts to exploit the timing properties of \textsf{compare} to
find the value of the password stored in $\id{secret}$.
}
\end{figure*}

We now show how the moments of the running time of \textsf{compare}---the
kind of information provided by our central-moment analysis (\cref{Se:DerivationSystem})---are
useful for analyzing the success probability of the attack program given in
\cref{Fi:PasswordChecker}(c).
Let $T$ be the random variable for the running time of \textsf{compare}.
A standard timing attack for such programs is to guess the bits of $\mathit{secret}$ successively.
%
The idea is the following: Suppose that we have successfully obtained the bits $\mathit{secret}[i+1]$ through
$\mathit{secret}[N]$;
we now guess that the next bit, $\mathit{secret}[i]$, is $1$ and set $\mathit{guess}[i] \coloneqq 1$.
Theoretically, if the following two conditional expectations
\begin{align}
  \begin{split} \expe[T_1] & \defeq \expe[T \mid \textstyle{\bigwedge_{j=i+1}^N (\id{secret}[j] = \id{guess}[j])} \\ & \qquad\qquad {} \wedge (\id{secret}[i] = 1 \wedge \id{guess}[i] = 1)] \end{split} \label{Eq:GuessCorrect} \\
  \begin{split} \expe[T_0] & \defeq \expe[T \mid \textstyle{\bigwedge_{j=i+1}^N (\id{secret}[j] = \id{guess}[j])}\\ & \qquad \qquad {} \wedge (\id{secret}[i] = 0 \wedge \id{guess}[i] = 1)] \end{split} \label{Eq:GuessWrong}
\end{align}
have a significant difference, then there is an opportunity to check our guess by
running the program multiple times, using the \emph{average} running time as
an estimate of $\expe[T]$, and choosing the value of $\mathit{guess}[i]$ according
to whichever of \eqref{Eq:GuessCorrect} and \eqref{Eq:GuessWrong} is closest to our
estimate.
However, if the difference between $\expe[T_1]$ and $\expe[T_0]$ is not significant enough,
or the program produces a large amount of noise in its running time, the attack might not be
realizable in practice.
To determine whether the timing difference represents an exploitable vulnerability,
we need to reason about the attack program's success rate.

Toward this end, we can analyze the failure probability for setting $\mathit{guess}[i]$
incorrectly, which happens when, due to an unfortunate fluctuation, the running-time
estimate $\id{est}$ is closer to one of $\expe[T_1]$ and $\expe[T_0]$, but the truth is actually
the other.
For instance, suppose that $\expe[T_0] < \expe[T_1]$ and $est < \frac{\expe[T_0]+\expe[T_1]}{2}$;
the attack program would pick $T_0$ as the truth, and set $\mathit{guess}[i]$ to $0$.
If such a choice is \emph{incorrect}, then the actual distribution of $\id{est}$
on the $i$-th round of the attack program satisfies $\expe[\id{est}] = \expe[T_1]$,
and the probability of this failure event is
\begin{align*}
& \prob\lrsq{\id{est} < \frac{\expe[T_0]+\expe[T_1]}{2}} \\
 ={} & \prob\lrsq{ \id{est} - \expe[T_1] < \frac{\expe[T_0]-\expe[T_1] }{2} } \\ 
={} & \prob\lrsq{ \id{est} - \expe[\id{est}] < \frac{\expe[T_0]-\expe[T_1] }{2} } 
\end{align*}
under the condition given by the conjunction in \eqref{Eq:GuessCorrect}.
This formula has exactly the same shape as a \emph{tail probability},
which makes it possible to utilize moments and \emph{concentration-of-measure} inequalities~\cite{book:Dubhashi09} to bound the probability.

The attack program is parameterized by $K > 0$, which represents the number of trials it performs
for each bit position to obtain an estimate of the running time.
Assume that we have applied our central-moment-analysis technique
(\cref{Se:DerivationSystem}),
and obtained the following inequalities on the
mean (i.e., the first moment), the second moment, and the variance (i.e., the second
central moment) of the quantities \eqref{Eq:GuessCorrect} and
\eqref{Eq:GuessWrong}.
\Omit{
\twr{Missing: add the second moment for use later to contrast the attack-program bounds we
obtain via variances versus the weaker bounds one would obtain using Markov's inequality.}
}
\begin{align}
  \expe[T_1] & \ge 13N,      & \expe[T_1] & \le 15N,    \notag  \\
   \vari[T_1] & \le 26N^2 + 42N, \label{Eq:TimeCorrect}  \\
  \expe[T_0] & \ge 13N - 5i, & \expe[T_0] & \le 13N - 3i, \notag \\
   \vari[T_0] & \le 8N-36i^2+52Ni+24i. \label{Eq:TimeWrong}  
\end{align}
To bound the probability that the attack program makes an incorrect
guess for the $i$-th bit, we do case analysis:

\begin{itemize}
 \item
  Suppose that $\id{secret}[i] = 1$, but the attack program assigns $\id{guess}[i] \coloneqq  0$.
  The truth---with respect to the actual distribution of the running time $T$ of \textsf{compare}
  for the $i$-th bit---is that $\expe[\id{est}] = \expe[T_1]$,
  but the attack program in \cref{Fi:PasswordChecker}(c)
  executes the then-branch of the conditional statement.
  Thus, our task reduces to that of bounding $\prob[\id{est} < 13N - 1.5i]$.
  The estimate $\id{est}$ is the average of $K$ i.i.d.\ random variables drawn from a distribution with
  mean $\expe[T_1]$ and variance $\vari[T_1]$.
  We derive the following, using the inequalities from \eqref{Eq:TimeCorrect}:
  \begin{align}
    \expe[\id{est}] & = \expe[T_1] \ge 13N, \notag \\
     \vari[\id{est}] & = \frac{\vari[T_1]}{K} \le \frac{26N^2+42N}{K}. \label{Eq:EstMoments}
  \end{align}
  We now use Cantelli's inequality.
  \begin{proposition*}[Cantelli's inequality]
  If $X$ is a random variable and $a>0$,
  then we have $\prob[X - \expe[X] \le -a] \le \frac{\vari[X]}{\vari[X]+a^2}$
  and $\prob[X - \expe[X] \ge a] \le \frac{\vari[X]}{\vari[X]+a^2}$.
  \end{proposition*}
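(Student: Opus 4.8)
The plan is to reduce both inequalities to Markov's inequality (\cref{Prop:Markov} with $k=1$) via a shift-and-square argument, then obtain the lower-tail bound from the upper-tail bound by symmetry. Write $\mu \defeq \expe[X]$ and $\sigma^2 \defeq \vari[X]$, and set $Y \defeq X - \mu$, so that $\expe[Y] = 0$ and $\expe[Y^2] = \sigma^2$. It suffices to prove the upper-tail bound $\prob[Y \ge a] \le \frac{\sigma^2}{\sigma^2 + a^2}$, since the lower-tail statement $\prob[Y \le -a] \le \frac{\sigma^2}{\sigma^2+a^2}$ then follows by applying that bound to $-X$, whose mean is $-\mu$ and whose variance is again $\sigma^2$.

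The key idea I would use is to introduce a free shift parameter $t \ge 0$ and note that, because $a + t > 0$, the event $\{Y \ge a\}$ is contained in $\{(Y+t)^2 \ge (a+t)^2\}$; this inclusion is where the one-sided nature of the inequality is exploited, since shifting by $t$ before squaring preserves the single direction of the tail. Applying Markov's inequality to the nonnegative random variable $(Y+t)^2$ then gives
\[
\prob[Y \ge a] \le \prob[(Y+t)^2 \ge (a+t)^2] \le \frac{\expe[(Y+t)^2]}{(a+t)^2} = \frac{\sigma^2 + t^2}{(a+t)^2},
\]
where I have used $\expe[(Y+t)^2] = \expe[Y^2] + 2t\,\expe[Y] + t^2 = \sigma^2 + t^2$.

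The remaining step is to minimize the right-hand side over $t \ge 0$. Differentiating $f(t) \defeq \frac{\sigma^2 + t^2}{(a+t)^2}$ yields $f'(t) = \frac{2(ta - \sigma^2)}{(a+t)^3}$, so the unique critical point is $t^\star = \sigma^2/a$, which is feasible since $t^\star \ge 0$; substituting $t^\star$ back collapses the expression to exactly $\frac{\sigma^2}{\sigma^2 + a^2}$. I expect the only genuinely non-routine step to be the recognition that optimizing over the shift $t$ is precisely what sharpens the naive one-sided Chebyshev estimate $\sigma^2/a^2$ (obtained by taking $t = 0$) into Cantelli's tighter bound $\sigma^2/(\sigma^2 + a^2)$. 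Once the parametrized family of Markov bounds is written down, the minimization is a routine one-variable calculus computation, and the passage to the lower tail via $X \mapsto -X$ is immediate.
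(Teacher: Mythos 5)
Your proof is correct: the shift-and-square reduction to Markov's inequality, the computation $\expe[(Y+t)^2]=\sigma^2+t^2$, the optimization at $t^\star=\sigma^2/a$, and the passage to the lower tail via $X\mapsto -X$ are all sound, and this is the standard textbook derivation of Cantelli's inequality. Note, however, that the paper does not prove this proposition at all --- it is invoked as a known concentration-of-measure inequality from the probability literature --- so there is no in-paper argument to compare against; your writeup simply supplies the standard proof that the paper omits.
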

  We are now able to derive an upper bound on $\prob[\id{est} < 13N - 1.5i]$ as follows:
  \begin{align*}
    & \prob[\id{est} \le 13N - 1.5i] \\
    ={} & \prob[\id{est} - 13N \le -1.5i]  \\
    {}\le{} & \prob[\id{est} - \expe[\id{est}] \le -1.5i]  & \reason{$\expe[\id{est}] \ge 13N$ by \eqref{Eq:EstMoments}} \\
    {}\le{} & \frac{\vari[\id{est}]}{\vari[\id{est}]+(1.5i)^2} & \reason{Cantelli's inequality} \\
    ={} & \frac{26N^2+42N}{26N^2+42N+2.25Ki^2}. 
  \end{align*}
  
\item
  The other case, in which $\id{secret}[i]=0$ but the attack program
  chooses to set $\id{guess}[i] \coloneqq 1$, can be analyzed in a similar
  way to the previous case, and the bound obtained is the following:
  \begin{align*}
    & \prob[\id{est} > 13N - 1.5i] \\
    {} \le {} & \prob[\id{est} \ge 13N - 1.5i] \\ 
    {}\le{} & \frac{8N-36i^2+52Ni+24i }{8N-36i^2+52Ni+24i + 2.25Ki^2 }. 
  \end{align*}
\end{itemize}
Let $F^i_1$ and $F^i_0$, respectively, denote the two upper bounds on the failure probabilities
for the $i$-th bit.

For the attack program to succeed, it has to succeed for all bits.
If the number of bits is $N=32$, and in each iteration the number of trials that the
attack program uses to estimate the running time is $K=10^4$, we derive a
\emph{lower} bound on the success rate of the attack program from the
upper bounds on the failure probabilities derived above:
\[
  \prob[\textsc{Success}] \ge \prod_{i=1}^{32} (1 - \max(F^i_1, F^i_0)) \ge 0.219413,
\]
which is low, but not insignificant.
However, the somewhat low probability is caused by a property of \textsf{compare}:
if $\id{guess}$ and $\id{secret}$ share a very long prefix, then the running-time behavior on different values of $guess$
becomes indistinguishable.
However, if instead we bound the success rate for all but the last six bits, we obtain:
\[
  \prob[\textsc{Success for all but the last six bits}] \ge 0.830561,
\]
which is a much higher probability!
The attack program can enumerate all the possibilities to resolve the last six bits.
(Moreover, by brute-forcing the last six bits, a total of $10,\!000 \times 26 + 64 = 260,\!064$
calls to \textsf{check} would be performed, rather than $320,\!000$.)

Overall, our analysis concludes that the \textsf{check} and \textsf{compare} procedures in
\cref{Fi:PasswordChecker} are vulnerable to a timing attack.



\fi

\end{document}